\documentclass[pdflatex,sn-mathphys-num]{sn-jnl}

\usepackage{graphicx}%
\usepackage{physics}
\usepackage{multirow}%
\usepackage{amsmath,amssymb,amsfonts}%
\usepackage{amsthm}%
\usepackage{mathrsfs}%
\usepackage[title]{appendix}%
\usepackage{xcolor}%
\usepackage{textcomp}%
\usepackage{manyfoot}%
\usepackage{booktabs}%
\usepackage{algorithm}%
\usepackage{algorithmicx}%
\usepackage{algpseudocode}%
\usepackage{listings}%
\usepackage{comment}

\newcommand{\Sym}{\mathrm{Sym}}

\newcommand{\id}{I}

\newcommand{\GL}{\mathrm{GL}}
\newcommand{\U}{\mathrm{U}}

\newcommand{\CC}{\mathbb{C}}
\newcommand{\RR}{\mathbb{R}}

\newcommand{\cL}{\mathcal{L}}
\newcommand{\cS}{\mathcal{S}}

\newcommand{\cT}{\mathcal{T}}

\newcommand{\End}{\mathrm{End}}
\newcommand{\Comm}{\mathrm{Comm}}
\DeclareMathOperator{\Hom}{Hom}

\theoremstyle{thmstyleone}%
\newtheorem{theorem}{Theorem}
\newtheorem{proposition}[theorem]{Proposition}%
\newtheorem{lemma}[theorem]{Lemma}
\newtheorem{corollary}[theorem]{Corollary}

\theoremstyle{thmstyletwo}%
\newtheorem{example}{Example}%
\newtheorem{remark}{Remark}%

\theoremstyle{thmstylethree}%
\newtheorem{definition}{Definition}%

\begin{document}

\title{Admissible Subspaces for the Collective Haar Twirl: Schur-Weyl Classification and Heredity under Tensor-Factor Reduction}


\author*[1,2]{\fnm{Sayantan} \sur{Ganguly}}\email{sayantan.ganguly.90@tcgcrest.org}

\affil*[1]{\orgdiv{Institute for Advancing Intelligence (IAI)}, \orgname{TCG CREST}, \orgaddress{\street{EP \& GP Block, Sector - V}, \city{Kolkata}, \postcode{700091}, \state{West Bengal}, \country{India}}}

\affil[2]{\orgname{Ramakrishna
Mission Vivekananda Educational and Research Institute (RKMVERI)}, \orgaddress{\city{Belur}, \state{West Bengal}, \country{India}}}

\abstract{
Quantum private broadcasting (\textsf{QPB}) asks whether a single quantum message can be encrypted once, broadcast to several recipients, and decrypted locally by each recipient while the joint ciphertext reveals no information about the plaintext. With a counterexample,  Broadbent, Gonz\'alez-Guill\'en, and Schuknecht showed that the full space \(\mathcal{H}^{\otimes t}_{d}\) cannot be considered as a message space for \textsf{QPB}, and a symmetric subspace \(\mathrm{Sym}^{t}(\mathcal{H}_{d})\) of a Hilbert space \(\mathcal{H}^{\otimes t}_{d}\) provides a natural plaintext space for this task and introduced symmetric unitary \(t\)-designs. They further asked for a structural understanding of other plaintext subspaces supporting private broadcast.

Using Schur-Weyl decomposition we obtain necessary and sufficient conditions for
classification of all subspaces
\(
    \mathcal S\subset \mathcal H_d^{\otimes t}
\)
for which the collective Haar-twirl \(\mathcal T^{(t)}\) is a replacement channel. Namely, the Schur-Weyl block weights must be independent of the encoded state, and, within every active block \(V_\lambda\otimes W_\lambda\), the reduced state on the multiplicity space \(W_\lambda\) must be fixed. The classification permits extension to multiple Schur-Weyl blocks; we give explicit two-block and general multi-block constructions. We also identify the classification as the Schur-Weyl form of the Knill-Laflamme conditions for a complementary channel of the collective Haar twirl.

We next introduce a canonical reduction of plaintext subspaces under deletion of a recipient. We define and characterize privacy heredity, design heredity and \textsf{QPB}-heredity for exact restricted designs. The symmetric tower is fully \textsf{QPB}-hereditary, whereas the antisymmetric tower is fully privacy-hereditary but is design-hereditary from level \(s\) to \(s-1\) when
\(
    s\leq(d+1)/2
\).
For canonical carriers in an arbitrary Schur-Weyl block, hereditary reduction is governed by the multiplicity-free branching rule along the Young graph together with an inclusion of Peter-Weyl coefficient spaces.
}

\keywords{collective Haar twirl, private quantum subspaces, Schur-Weyl duality, quantum private broadcasting, complementary channels, Knill-Laflamme conditions, branching rules, matrix-coefficient spaces}


\pacs[MSC Classification]{81P47, 81P94, 81P73, 81P45, 81R05, 20C30, 20G05, 05E10,43A77}

\maketitle

\section{Introduction}\label{sec:introduction}
In this paper, we primarily study an open problem that originates in quantum private broadcasting (\textsf{QPB}), introduced by Broadbent, Gonz\'alez-Guill\'en, and Schuknecht [Section VII,\cite{BroadbentGonzalezGuillenSchuknecht2022}]. In their protocol, A single quantum plaintext is encrypted under a classical key $k$, broadcast to $t$ recipients, and decrypted locally by every recipient with the same key, while the joint ciphertext is required to reveal no information about the plaintext. In the basic broadcasting setting, the plaintexts arise from repeated pure states
\(
    \ket{\psi}^{\otimes t},
\)
and hence lie in the symmetric subspace
\(
    \operatorname{Sym}^{t}(\mathcal H_d)
    \subset
    \mathcal H_d^{\otimes t}
\)[Section VI, \cite{BroadbentGonzalezGuillenSchuknecht2022}]. The collective Haar twirl is defined by 
\begin{equation}
\label{eq:intro-Haar-twirl}
    \mathcal T^{(t)}(X)
    :=
    \int_{\mathrm U(d)}
        U^{\otimes t}
        X
        (U^\dagger)^{\otimes t}
        \,\mathrm d\mu_{\mathrm{Haar}}(U),
    \qquad
    X\in\mathcal L(\mathcal H_d^{\otimes t}),
\end{equation}
where
\(
    \mathcal H_d\cong\mathbb C^d
\)
and
\(
    \mu_{\mathrm{Haar}}
\)
is normalized Haar measure on \(\mathrm U(d)\). As a completely positive, trace-preserving, idempotent map, \(\mathcal T^{(t)}\) is the canonical conditional expectation onto the commutant of the collective \(\mathrm U(d)\)-action. The first central structural question of this paper is the following:

\begin{quote}
We ask for which nonzero subspaces \(\mathcal S\subset \mathcal H_d^{\otimes t}\) there exists a state \(\sigma_{\mathcal S} \in \mathcal{D}(\mathcal{H}^{\otimes t}_{d})\) such that
\begin{equation}\label{eq:admissibility-intro}
  \mathcal T^{(t)}\!\left(\mathcal J_{\mathcal S}(X)\right)
  =\operatorname{Tr}(X)\sigma_{\mathcal S}
  \qquad\text{for every }X\in\mathcal D(\mathcal S).
\end{equation}
Here \(\mathcal J_{\mathcal S}(X)=\iota_{\mathcal S}X
\iota_{\mathcal S}^{\dagger}\), with
\(\iota_{\mathcal S}:\mathcal S\hookrightarrow
\mathcal H_d^{\otimes t}\) the inclusion isometry.
\end{quote}
We call such a subspace \emph{Haar-twirl admissible}; in the terminology of private quantum channels, they are exactly the private admissible subspaces of $\mathcal T^{(t)}$. Although the question is motivated by quantum cryptography, Equation~\eqref{eq:admissibility-intro} is a purely operator-theoretic condition on a compact-group twirling channel.

The Haar-twirling channel $\mathcal T^{(t)}$ is a replacement channel on $\operatorname{Sym}^t(\mathcal H_d)$. This led Broadbent et al.\ to
introduce \emph{symmetric unitary $t$-designs} [Section VI,\cite{BroadbentGonzalezGuillenSchuknecht2022} ], which reproduce Haar moments only on $\operatorname{Sym}^t(\mathcal H_d)$ and therefore allow shorter keys than full unitary $t$-designs, and to ask whether designs can be relaxed in
the same way for other plaintext subspaces~[Section VII,\cite{BroadbentGonzalezGuillenSchuknecht2022}].

The symmetric-subspace guarantee does not extend automatically to unrestricted plaintexts in the full tensor power. This motivates the classification of other subspaces on which collective encryption can provide perfect privacy. A second structural question arises when a recipient leaves: whether the plaintext space, the privacy guarantee, and the same key ensemble descend to the remaining $t-1$ recipients.

For every encryption of the form $\mathrm{Enc}_k(X)=U_k^{\otimes t}
\mathcal J_{\mathcal S}(X)(U_k^{\dagger})^{\otimes t}$, where \(X \in \mathcal{D}(\mathcal{S})\), with local decryption $\mathrm{Dec}_k(Y)=U_k^{\dagger}YU_k$, the averaged ciphertext channel satisfies $\mathcal T^{(t)}\circ\Phi^{(t)}_{\mathfrak U,w}=\mathcal T^{(t)}$.
We first show that every Haar-twirl admissible subspace supports a \textsf{QPB} scheme (Theorem~\ref{thm:A}) and that, conversely, every plaintext subspace supporting a perfectly private collective scheme of the form considered here is Haar-twirl admissible. The classification of Haar-twirl admissible subspaces is therefore same as the classification of plaintext spaces acceptable for quantum private broadcasting.

\subsection{Main results}
\label{sec:main-results}

We state the main results in the notation of
Section~\ref{sec:preliminaries}. 

Let
\(
  \Lambda(d,t):=\{\lambda\vdash t:\ell(\lambda)\leq d\}.
\)
We fix, once and for all, a unitary Schur--Weyl decomposition
\begin{equation}\label{eq:schur-weyl-decomposition}
  \mathcal H_d^{\otimes t}
  =\bigoplus_{\lambda\in\Lambda(d,t)}
   \iota_\lambda(V_\lambda\otimes W_\lambda),
\end{equation}
where \(V_\lambda\) is the irreducible polynomial
\(\mathrm U(d)\)-module of highest weight \(\lambda\), and
\(W_\lambda\) is the Specht module of shape \(\lambda\). with isometric inclusions $\iota_\lambda:V_\lambda\otimes W_\lambda \longrightarrow \mathcal H_d^{\otimes t}$ and
orthogonal block projectors $\Pi_\lambda=\iota_\lambda\iota_\lambda^{\dagger}$. A finite weighted ensemble $(\mathfrak U,w)$, $\mathfrak U=\{U_k:k\in\mathcal K\}\subset\mathrm U(d)$, with moment channel $\Phi^{(t)}_{\mathfrak U,w}(X)=\sum_{k\in\mathcal K}w_kU_k^{\otimes t}X(U_k^{\dagger})^{\otimes t}$,
and $\mathfrak A_t:=\operatorname{im}\mathcal T^{(t)}$ denotes the image of the Haar-twirl. Each of the theorem stated below are the stated collectively here and proved later; the precise statements are cited after it.

\subsubsection*{Private plaintext spaces of the collective Haar-twirl}

Our first result shows that Haar-twirl admissibility is the right notion for collective encryption, and it classifies all admissible subspaces.

\begin{theorem}\label{thm:A}
For a nonzero subspace $\mathcal S \subset \mathcal H_d^{\otimes t}$ the following are equivalent.
\begin{enumerate}
\item There is a finite weighted ensemble $(\mathfrak U,w)$ for which the
  $t$-recipient \textsf{QPB} scheme with
  $\mathrm{Enc}^{(t)}_k(X)=U_k^{\otimes t}\mathcal J_{\mathcal S}(X)(U_k^{\dagger})^{\otimes t}$
  and $\mathrm{Dec}_k(Y)=U_k^{\dagger}YU_k$ is perfectly correct and perfectly
  private on $\mathcal S$.
\item $\mathcal S$ is Haar-twirl admissible, i.e.\
  $\mathcal T^{(t)}\circ\mathcal J_{\mathcal S}=\mathcal R_{\sigma_{\mathcal S},\mathcal S}$ for
  some $\sigma_{\mathcal S}\in\mathcal D(\mathcal H_d^{\otimes t})$.
\item There exist numbers $c_\lambda\ge0$ with $\sum_{\lambda\in\Lambda(d,t)}c_\lambda=1$
  and, for every $\lambda$ with $c_\lambda>0$, a state
  $\omega_\lambda\in\mathcal D(W_\lambda)$ such that, for all
  $|\psi\rangle,|\phi\rangle\in\mathcal S$,
  \begin{align}
    \langle\phi|\Pi_\lambda|\psi\rangle&=c_\lambda\langle\phi|\psi\rangle,
      \label{eq:main-weight}\\
    \operatorname{Tr}_{V_\lambda}\bigl(\iota_\lambda^{\dagger}|\psi\rangle\langle\phi|\iota_\lambda\bigr)
      &=c_\lambda\langle\phi|\psi\rangle\,\omega_\lambda .\label{eq:main-marginal}
  \end{align}
\item $\Pi_{\mathcal S}A\Pi_{\mathcal S}\in\mathbb C\,\Pi_{\mathcal S}$ for every
  $A\in\mathfrak A_t$.
\item $\mathcal S$ is exactly correctable for a complementary channel of
  $\mathcal T^{(t)}$.
\end{enumerate}
In this case the replacement state is
\begin{equation}\label{eq:main-sigma}
  \sigma_{\mathcal S}=\sum_{\lambda:\,c_\lambda>0}c_\lambda\,\iota_\lambda
  \Bigl[\frac{I_{V_\lambda}}{\dim_{\mathbb C}V_\lambda}\otimes\omega_\lambda\Bigr]\iota_\lambda^{\dagger}.
\end{equation}
Moreover, for every finite weighted ensemble $(\mathfrak U,w)$ and every
$\sigma\in\mathcal D(\mathcal H_d^{\otimes t})$,
\[
  \bigl\|\mathcal T^{(t)}\circ\mathcal J_{\mathcal S}-\mathcal R_{\mathcal T^{(t)}(\sigma),\mathcal S}\bigr\|_{1\to1}
  \le\bigl\|\Phi^{(t)}_{\mathfrak U,w}\circ\mathcal J_{\mathcal S}-\mathcal R_{\sigma,\mathcal S}\bigr\|_{1\to1},
\]
and the same inequality holds in the diamond norm.
\end{theorem}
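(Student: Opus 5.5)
The plan is to route every equivalence through condition (2), with (4) serving as the algebraic core. The basic structural fact is Schur--Weyl duality: the commutant of the collective action is \(\mathfrak A_t=\bigoplus_\lambda \iota_\lambda(I_{V_\lambda}\otimes \mathcal L(W_\lambda))\iota_\lambda^\dagger\). The Haar twirl is the trace-preserving conditional expectation onto \(\mathfrak A_t\), so
\[
\mathcal T^{(t)}(X)=\sum_\lambda \iota_\lambda\Bigl[\tfrac{I_{V_\lambda}}{\dim V_\lambda}\otimes \operatorname{Tr}_{V_\lambda}(\iota_\lambda^\dagger X\iota_\lambda)\Bigr]\iota_\lambda^\dagger .
\]
This follows from Schur's lemma together with the fact that \(\mathcal T^{(t)}\) is self-adjoint with respect to the Hilbert--Schmidt inner product. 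All subsequent steps are read off from this formula.

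For \((2)\Leftrightarrow(3)\), I would polarize. By linearity, \(\mathcal T^{(t)}\circ\mathcal J_{\mathcal S}=\mathcal R_{\sigma,\mathcal S}\) is equivalent to \(\mathcal T^{(t)}(|\psi\rangle\langle\phi|)=\langle\phi|\psi\rangle\sigma\) for all \(\psi,\phi\in\mathcal S\), because rank-one operators span \(\mathcal L(\mathcal S)\) and states span it as well. Applying the formula block by block, this becomes the statement that \(\operatorname{Tr}_{V_\lambda}(\iota_\lambda^\dagger|\psi\rangle\langle\phi|\iota_\lambda)=\langle\phi|\psi\rangle\,\tau_\lambda\) for a fixed positive operator \(\tau_\lambda\). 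Writing \(c_\lambda=\operatorname{Tr}\tau_\lambda\) and \(\omega_\lambda=\tau_\lambda/c_\lambda\), and taking the trace over \(W_\lambda\), yields \eqref{eq:main-weight} and \eqref{eq:main-marginal}. The same computation produces \eqref{eq:main-sigma}. The converse direction is obtained by reading these identities backwards.

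For \((2)\Leftrightarrow(4)\), I would use duality. For \(A\in\mathfrak A_t\) we have \(\langle\phi|A|\psi\rangle=\operatorname{Tr}(A\,\mathcal T^{(t)}(|\psi\rangle\langle\phi|))\), so (2) gives \(\langle\phi|A|\psi\rangle=\langle\phi|\psi\rangle\operatorname{Tr}(A\sigma)\), which is exactly (4). Conversely, (4) provides a linear functional \(f(A)\) with \(\Pi_{\mathcal S}A\Pi_{\mathcal S}=f(A)\Pi_{\mathcal S}\). This functional is positive and unital, hence of the form \(\operatorname{Tr}(A\sigma)\) for some state \(\sigma\in\mathfrak A_t\). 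Since \(\mathcal T^{(t)}(|\psi\rangle\langle\phi|)\) lies in \(\mathfrak A_t\) and is determined by its pairings with \(\mathfrak A_t\), it equals \(\langle\phi|\psi\rangle\sigma\).

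For \((4)\Leftrightarrow(5)\), I would take a Stinespring dilation of \(\mathcal T^{(t)}\), for instance through a Kraus decomposition built from the block-wise formula (a Haar integral can be replaced by a finite design). Then I invoke the complementary form of the Knill--Laflamme conditions: \(\mathcal S\) is correctable for \(\mathcal T^c\) iff \(\mathcal T\circ\mathcal J_{\mathcal S}\) is a replacement channel. Equivalently, writing \(\Pi_{\mathcal S}K_i^\dagger K_j\Pi_{\mathcal S}\propto\Pi_{\mathcal S}\) with the roles of Kraus operators and environment interchanged, the relevant operators span \(\mathfrak A_t\). I would cite the standard private/correctable duality (Kretschmann--Kribs--Spekkens) rather than reprove it.

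For \((1)\Leftrightarrow(2)\), correctness is automatic, since local decryption inverts \(U_k\) on each tensor factor. For \((1)\Rightarrow(2)\), privacy of the ensemble gives \(\Phi\circ\mathcal J_{\mathcal S}=\mathcal R_{\sigma}\). Composing with \(\mathcal T^{(t)}\) and using \(\mathcal T^{(t)}\circ\Phi=\mathcal T^{(t)}\), which holds by invariance of Haar measure, yields (2) with replacement state \(\mathcal T^{(t)}(\sigma)\). For \((2)\Rightarrow(1)\), I need a finite ensemble whose \(t\)-th moment channel equals \(\mathcal T^{(t)}\). An exact unitary \(t\)-design suffices, and such designs exist by Seymour--Zaslavsky (finite support with positive weights, via Carathéodory on the finite-dimensional moment space).

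The final inequality follows from the same composition. Writing
\[
\mathcal T^{(t)}\circ\mathcal J_{\mathcal S}-\mathcal R_{\mathcal T^{(t)}(\sigma),\mathcal S}=\mathcal T^{(t)}\circ\bigl(\Phi\circ\mathcal J_{\mathcal S}-\mathcal R_{\sigma,\mathcal S}\bigr),
\]
and using that \(\mathcal T^{(t)}\) is CPTP, hence contractive in the \(1\to1\) and diamond norms, gives the bound.

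I expect the main obstacle to be \((5)\). Its precise meaning depends on which complementary channel is chosen and how the infinite Haar mixture is handled. My first attempt would be to replace the Haar mixture by a finite design ensemble, which gives a finite Kraus representation, and then apply the duality theorem directly.
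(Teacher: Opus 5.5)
Your proposal is correct and follows essentially the paper's route: the blockwise twirl formula for (2)$\Leftrightarrow$(3) and the replacement-state formula, the identity $\mathcal T^{(t)}\circ\Phi^{(t)}_{\mathfrak U,w}=\mathcal T^{(t)}$ plus contractivity of $\mathcal T^{(t)}$ for (1)$\Rightarrow$(2) and the norm inequality, an exact design for (2)$\Rightarrow$(1), and Kretschmann--Kribs--Spekkens complementarity with $\operatorname{span}\{F_a^\dagger F_b\}=\operatorname{im}(\mathcal T^{(t)})^\dagger=\mathfrak A_t$ for (5). The one real variation is that you get (2)$\Leftrightarrow$(4) directly from $\operatorname{Tr}(A\,\mathcal T^{(t)}(X))=\operatorname{Tr}(AX)$ for $A\in\mathfrak A_t$, whereas the paper passes through the Knill--Laflamme equations for the complementary Kraus operators and links (4) to (3) block by block; this is the same duality with the dilation removed, and your worry about which complementary channel is chosen is harmless because $\operatorname{span}\{F_a^\dagger F_b\}=\operatorname{im}(\mathcal T^{(t)})^\dagger$ for every Stinespring dilation.
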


Condition~\eqref{eq:main-weight} says that every observable surviving the twirl has scalar compression to $\mathcal S$, and resolving it block by block gives~\eqref{eq:main-weight}--\eqref{eq:main-marginal}.
The final inequality shows that an $\varepsilon$-private collective scheme forces $\mathcal T^{(t)}$ itself to be $\varepsilon$-close to a replacement channel on $\mathcal S$.

The equivalence (2)$\Leftrightarrow$(3) and formula~\eqref{eq:main-sigma} are Theorem~\ref{thm:qpb-classification}. The implication (2)$\Rightarrow$(1) is Proposition~\ref{prop:qpb-from-restricted-design}, applied to an exact unitary $(\mathcal S,t)$-design, which always exists. The implication (1)$\Rightarrow$(2) and the inequality are Proposition~\ref{prop:necessity}.
The equivalences (2)$\Leftrightarrow$(4)$\Leftrightarrow$(5) are
Proposition~\ref{prop:classification-KL}, which specializes the privacy-correctability complementarity of~\cite{KretschmannKribsSpekkens2008} (Theorem~\ref{thm:privacy-complementarity}).

Two consequences are immediate. For every $\lambda$ with $c_\lambda>0$, the
restricted block projection $\Pi_\lambda|_{\mathcal S}$ satisfies
$(\Pi_\lambda|_{\mathcal S})^{\dagger}(\Pi_\lambda|_{\mathcal S})=c_\lambda I_{\mathcal S}$ and is
therefore injective (Corollary~\ref{cor:active-block-dimension-bound}). A full
isotypic block $\iota_\lambda(V_\lambda\otimes W_\lambda)$ is Haar-twirl admissible if
and only if $\dim_{\mathbb C}W_\lambda=1$; hence, for $t\ge2$, the only admissible
full blocks are $\operatorname{Sym}^t(\mathcal H_d)$ and, when $t\le d$,
$\bigwedge^t\mathcal H_d$ (Corollary~\ref{cor:active-block-injectivity}).

\subsubsection*{Multi-block Haar-twirl admissible subspaces}

Admissibility does not force a subspace into a single isotypic block. Our second result describes all Haar-twirl admissible subspaces explicitly.

\begin{theorem}[Multi-block normal form]\label{thm:B}
Let $n\ge1$ and $\emptyset\neq\Lambda_0\subseteq\Lambda(d,t)$, and let
$c_\lambda>0$ with $\sum_{\lambda\in\Lambda_0}c_\lambda=1$ and $\omega_\lambda\in\mathcal D(W_\lambda)$ with $r_\lambda:=\operatorname{rank}\omega_\lambda$
be given for $\lambda\in\Lambda_0$. An $n$-dimensional Haar-twirl admissible subspace with active blocks $\Lambda_0$, weights $c_\lambda$ and multiplicity
states $\omega_\lambda$ exists if and only if
\[
  n\,r_\lambda\le\dim_{\mathbb C}V_\lambda \quad \text{where}\; \lambda\in\Lambda_0.
\]
Every such subspace is $\mathcal S=J(\mathbb C^n)$ for an isometry
\[
  J|x\rangle=\sum_{\lambda\in\Lambda_0}\sqrt{c_\lambda}\,\iota_\lambda
  \bigl(U_\lambda\otimes I_{W_\lambda}\bigr)\bigl(|x\rangle\otimes|\Omega_\lambda\rangle\bigr),
  \qquad|x\rangle\in\mathbb C^n,
\]
where $|\Omega_\lambda\rangle\in\mathcal K_\lambda\otimes W_\lambda$ is a minimal
purification of $\omega_\lambda$ with $\dim_{\mathbb{C}}\mathcal K_\lambda=r_\lambda$ and
$U_\lambda:\mathbb C^n\otimes\mathcal K_\lambda \longrightarrow  V_\lambda$ is an isometry; conversely,
every isometry of this form has Haar-twirl admissible range, and
$\|\Pi_\lambda\psi\|^2=c_\lambda\|\psi\|^2$ for all $\psi\in J(\mathbb C^n)$. In
particular, the maximum dimension is
\[
  n_{\max}=\min_{\lambda\in\Lambda_0}
  \Bigl\lfloor\frac{\dim_{\mathbb C}V_\lambda}{\operatorname{rank}\omega_\lambda}\Bigr\rfloor .
\]
\end{theorem}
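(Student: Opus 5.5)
The plan is to reduce Theorem~\ref{thm:B} to the block-wise characterization (3) of Theorem~\ref{thm:A}. First we write a general element of an admissible $\mathcal S$ in block components, and then turn conditions \eqref{eq:main-weight}--\eqref{eq:main-marginal} into statements about isometries. Fix an orthonormal basis $\{|x\rangle\}$ of $\mathbb C^n$ and let $K:\mathbb C^n\to\mathcal H_d^{\otimes t}$ be an isometry with range $\mathcal S$. For each $\lambda\in\Lambda_0$ set $K_\lambda:=c_\lambda^{-1/2}\iota_\lambda^{\dagger}K:\mathbb C^n\to V_\lambda\otimes W_\lambda$. Condition \eqref{eq:main-weight} says exactly that $K_\lambda^\dagger K_\lambda=I_n$, so each $K_\lambda$ is an isometry. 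For $\lambda\notin\Lambda_0$ we have $c_\lambda=0$, and \eqref{eq:main-weight} forces $\iota_\lambda^\dagger K=0$. Condition \eqref{eq:main-marginal} says
\[
\operatorname{Tr}_{V_\lambda}\bigl(K_\lambda|x\rangle\langle y|K_\lambda^\dagger\bigr)=\delta_{xy}\,\omega_\lambda .
\]
The key reformulation is this: consider the vector $|\Psi_\lambda\rangle:=n^{-1/2}\sum_x |x\rangle_R\otimes K_\lambda|x\rangle\in R\otimes V_\lambda\otimes W_\lambda$ with $R\cong\mathbb C^n$. The condition above says precisely that the reduced state of $|\Psi_\lambda\rangle$ on $R\otimes W_\lambda$ equals $\frac{I_n}{n}\otimes\omega_\lambda$.

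Next we apply the uniqueness of purifications, and this is the main step. The state $\frac{I_n}{n}\otimes\omega_\lambda$ has a minimal purification $n^{-1/2}\sum_x|x\rangle_R|x\rangle_{R'}\otimes|\Omega_\lambda\rangle$ on $R\otimes(R'\otimes\mathcal K_\lambda)\otimes W_\lambda$, with purifying system $\mathbb C^n\otimes\mathcal K_\lambda$ of dimension $n r_\lambda$. Since $|\Psi_\lambda\rangle$ is another purification, with purifying system $V_\lambda$, there is an isometry $U_\lambda:\mathbb C^n\otimes\mathcal K_\lambda\to V_\lambda$ carrying one purification to the other. Comparing coefficients of $|x\rangle_R$ gives $K_\lambda|x\rangle=(U_\lambda\otimes I_{W_\lambda})(|x\rangle\otimes|\Omega_\lambda\rangle)$. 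Because $\iota_\lambda^\dagger K=0$ off $\Lambda_0$, we get $K=\sum_\lambda\iota_\lambda\iota_\lambda^\dagger K$, which is the stated form of $J$. The existence of such a $U_\lambda$ forces $n r_\lambda\le\dim V_\lambda$, which proves necessity.

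For the converse, we take any isometries $U_\lambda$. These exist precisely when $n r_\lambda\le\dim V_\lambda$, which gives sufficiency. We then check three things directly:
\begin{itemize}
\item $J^\dagger J=\sum_\lambda c_\lambda I_n=I_n$, using orthogonality of the blocks and $\langle\Omega_\lambda|\Omega_\lambda\rangle=1$;
\item $\iota_\lambda^\dagger J=\sqrt{c_\lambda}(U_\lambda\otimes I)(\,\cdot\,\otimes|\Omega_\lambda\rangle)$, which gives \eqref{eq:main-weight} with this $c_\lambda$ and hence $\|\Pi_\lambda\psi\|^2=c_\lambda\|\psi\|^2$;
\item the partial trace over $V_\lambda$ of $(U_\lambda\otimes I)(|x\rangle\langle y|\otimes|\Omega_\lambda\rangle\langle\Omega_\lambda|)(U_\lambda^\dagger\otimes I)$ equals $\delta_{xy}\operatorname{Tr}_{\mathcal K_\lambda}|\Omega_\lambda\rangle\langle\Omega_\lambda|=\delta_{xy}\omega_\lambda$, using $U_\lambda^\dagger U_\lambda=I$. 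Extending by sesquilinearity gives \eqref{eq:main-marginal}.
\end{itemize}
Theorem~\ref{thm:A}, (3)$\Rightarrow$(2), then gives admissibility. The formula for $n_{\max}$ follows immediately from the dimension inequality.

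The main obstacle is the purification step. We have to make sure the isometry relating the two purifications can be chosen to act only on the purifying system and to be an isometry into $V_\lambda$; this requires the canonical purification to be minimal, which is why $\dim\mathcal K_\lambda=r_\lambda$ is used. A secondary point is the precise form of \eqref{eq:main-marginal} used above. Taking $\psi\ne\phi$ orthogonal, the off-diagonal marginal must vanish, since $c_\lambda\langle\phi|\psi\rangle=0$. That vanishing is exactly what makes the joint reduced state a product with the maximally mixed state on $R$.
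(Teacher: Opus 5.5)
Your proof is correct. Its skeleton matches the paper's proof, which is Theorem~\ref{thm:general-multiblock-construction} together with Corollary~\ref{cor:exact-maximal-dimension-fixed-active-data}: normalize the block components $c_\lambda^{-1/2}\iota_\lambda^\dagger J$ to isometries, verify the forward direction by direct computation, and conclude admissibility from Theorem~\ref{thm:qpb-classification}.

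The routes differ only in the converse. The paper works by hand. It diagonalizes $\omega_\lambda=\sum_a p_{\lambda,a}|w_{\lambda,a}\rangle\langle w_{\lambda,a}|$ and slices $J_\lambda$ into $A_{\lambda,a}=(I_{V_\lambda}\otimes\langle w_{\lambda,a}|)J_\lambda$. From the marginal condition it reads off $A_{\lambda,b}^\dagger A_{\lambda,a}=\delta_{ab}\,p_{\lambda,a}I_{\mathbb C^n}$ for $a,b\le r_\lambda$, and $A_{\lambda,a}=0$ otherwise. It then assembles $U_\lambda$ from the $r_\lambda$ isometries $T_{\lambda,a}=p_{\lambda,a}^{-1/2}A_{\lambda,a}$, which have mutually orthogonal ranges, and builds $\Omega_\lambda$ from the eigenbasis. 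The bound $n r_\lambda\le\dim_{\mathbb C}V_\lambda$ is then a direct count of those orthogonal ranges.

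You instead adjoin a reference system $R\cong\mathbb C^n$. You observe that the marginal condition, including the vanishing of the off-diagonal marginals, says exactly that $|\Psi_\lambda\rangle$ has reduced state $\tfrac{I_n}{n}\otimes\omega_\lambda$ on $R\otimes W_\lambda$. You then invoke uniqueness of purifications against a minimal one, and the dimension bound comes from the Schmidt rank $n r_\lambda$.

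The paper's computation is essentially the proof of purification uniqueness unrolled in this setting, so the two arguments have the same content. Yours is shorter and more conceptual, and it applies directly to any prescribed minimal purification $\Omega_\lambda$. The paper's is self-contained, needs no appeal to the purification theorem (in particular, to the fact that a minimal reference yields a genuine isometry), and produces $U_\lambda$ and $\Omega_\lambda$ explicitly from the spectral data of $\omega_\lambda$.
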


Theorem~\ref{thm:B} is Theorem~\ref{thm:general-multiblock-construction}
together with Corollary~\ref{cor:exact-maximal-dimension-fixed-active-data}. If $|\Lambda_0|\ge2$, every nonzero vector of $\mathcal S$ has a nonzero component in every active block,
so no nonzero vector lies in a single isotypic component.
Proposition~\ref{prop:two-block-construction} is the two-block case with pure multiplicity states.

\subsubsection*{Canonical reduction and two heredity criteria}

When a recipient leaves, a plaintext subspace must be reduced to $s-1$
tensor factors. Since a subspace cannot itself be partially traced, we use
the canonical reduction
\[
  \operatorname{Red}_j(\mathcal S):=\operatorname{supp}\bigl(\operatorname{Tr}_j\Pi_{\mathcal S}\bigr),
\]
which equals the span of all one-factor contractions of vectors of $\mathcal S$ (Lemma~\ref{lem:Red-support}). Partial-trace covariance gives $\mathcal T^{(s-1)}(\operatorname{Tr}_j\rho)=\operatorname{Tr}_j\sigma_{\mathcal S}$ for every $\rho\in\mathcal D(\mathcal S)$ (Proposition~\ref{prop:automatic-marginal-descent}), but this concerns only reduced states of $\mathcal S$, not all states on $\operatorname{Red}_j(\mathcal S)$. We therefore distinguish \emph{privacy
heredity} ($\operatorname{Red}_j(\mathcal S)$ is Haar-twirl admissible), a property of the Haar channel, from \emph{design heredity} (every exact unitary
$(\mathcal S,s)$-design is, with the same unitaries and weights, an exact unitary $(\operatorname{Red}_j(\mathcal S),s-1)$-design), a property of the moment constraints imposed on a finite ensemble; their conjunction is \textsf{QPB}-heredity (Definition~\ref{qpb-heriditary}).

\begin{theorem}[Heredity criteria]\label{thm:C}
Let $s\ge2$, let $\mathcal S\subseteq\mathcal H_d^{\otimes s}$ be nonzero, fix $j\in\{1,\dots,s\}$ and put $\mathcal R_j:=\operatorname{Red}_j(\mathcal S)$.
\begin{enumerate}
\item If $\mathcal S$ is Haar-twirl admissible, then $\mathcal S$ is privacy-hereditary in coordinate $j$ if and only if there exist positive operators $\tau_\mu\in\mathcal L(W_\mu)$, $\mu\in\Lambda(d,s-1)$, with $\sum_\mu\operatorname{Tr}\tau_\mu=1$ and
  \[
    \operatorname{Tr}_{V_\mu}\Bigl[\bigl(\iota^{(s-1)}_\mu\bigr)^{\dagger}
    \mathcal J_{\mathcal R_j}(Y)\,\iota^{(s-1)}_\mu\Bigr]=\operatorname{Tr}(Y)\,\tau_\mu
    \qquad(Y\in\mathcal L(\mathcal R_j)),
  \]
  that is, if and only if $\mathcal R_j$ satisfies condition~(3) of
  Theorem~\ref{thm:A} at level $s-1$ with $c_\mu=\operatorname{Tr}\tau_\mu$.
\item $\mathcal S$ is design-hereditary in coordinate $j$ if and only if
  $\mathfrak C_{s-1}(\mathcal R_j)\subseteq\mathfrak C_s(\mathcal S)$.
\item Neither of these properties implies the other.
\item If $\mathcal S$ is \textsf{QPB}-hereditary in coordinate $j$ and
  $(\mathfrak U,w)$ is an exact unitary $(\mathcal S,s)$-design, then the
  $(s-1)$-recipient scheme on $\mathcal R_j$ with
  $\mathrm{Enc}^{(s-1)}_k(\rho)=U_k^{\otimes(s-1)}\rho(U_k^{\dagger})^{\otimes(s-1)}$
  and the same $\mathrm{Dec}_k$ is perfectly correct and perfectly private,
  with replacement state $\sigma_{\mathcal R_j}=\operatorname{Tr}_j\sigma_{\mathcal S}$.
\end{enumerate}
\end{theorem}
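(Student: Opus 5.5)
Item (1) will follow from Theorem~\ref{thm:A} applied at level $s-1$. Privacy heredity in coordinate $j$ means, by definition, that $\mathcal R_j$ is Haar-twirl admissible in $\mathcal H_d^{\otimes(s-1)}$. The equivalence (2)$\Leftrightarrow$(3) of Theorem~\ref{thm:A} with $t=s-1$ then applies directly. Set $\tau_\mu:=c_\mu\omega_\mu$ when $c_\mu>0$ and $\tau_\mu:=0$ otherwise. The two conditions \eqref{eq:main-weight}--\eqref{eq:main-marginal}, polarized from rank-one operators $|\psi\rangle\langle\phi|$ to all $Y\in\mathcal L(\mathcal R_j)$, then become the single displayed identity. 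Conversely, taking $\operatorname{Tr}$ of the displayed identity recovers \eqref{eq:main-weight}, with $c_\mu=\operatorname{Tr}\tau_\mu$. The hypothesis that $\mathcal S$ is admissible is used only to fix the setting; Proposition~\ref{prop:automatic-marginal-descent} identifies the candidate replacement state as $\operatorname{Tr}_j\sigma_{\mathcal S}$.

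For item (2), I will express exact $(\mathcal S,s)$-designs through the coefficient space $\mathfrak C_s(\mathcal S)$. This should be the span of the matrix coefficients $U\mapsto\langle\phi|U^{\otimes s}|\psi\rangle\langle\psi'|(U^{\dagger})^{\otimes s}|\phi'\rangle$ with $\psi,\psi'\in\mathcal S$, equivalently of the functions $U\mapsto\operatorname{Tr}[A\,U^{\otimes s}\mathcal J_{\mathcal S}(X)U^{\dagger\otimes s}]$. An ensemble is an exact $(\mathcal S,s)$-design if and only if $\sum_k w_k f(U_k)=\int f\,d\mu_{\mathrm{Haar}}$ for every $f\in\mathfrak C_s(\mathcal S)$.

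Sufficiency of the inclusion is then immediate: every function in $\mathfrak C_{s-1}(\mathcal R_j)$ is already averaged correctly by any $(\mathcal S,s)$-design. For necessity, suppose $f\in\mathfrak C_{s-1}(\mathcal R_j)\setminus\mathfrak C_s(\mathcal S)$. Both spaces are finite-dimensional subspaces of polynomial functions on $\mathrm U(d)$ and closed under complex conjugation. I can therefore choose a linear functional $L$ of the form $\sum_k w_k\delta_{U_k}-\mu_{\mathrm{Haar}}$ that vanishes on $\mathfrak C_s(\mathcal S)$ but not on $f$. Concretely, I take a known exact $(\mathcal S,s)$-design, which exists by Theorem~\ref{thm:A}, and perturb it by a signed measure annihilating $\mathfrak C_s(\mathcal S)$. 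The positive weights of a Carathéodory-type design give room for such a perturbation. Here I would use Lemma~\ref{lem:Red-support} to see that $\mathfrak C_{s-1}(\mathcal R_j)$ is spanned by partial-trace contractions. \textbf{This construction of a separating design is the main obstacle.} I would handle it by a Carathéodory/convexity argument in the finite-dimensional space $\mathfrak C_s(\mathcal S)+\mathfrak C_{s-1}(\mathcal R_j)$: the Haar moment vector lies in the relative interior of the convex hull of point evaluations, so designs matching on $\mathfrak C_s(\mathcal S)$ but differing on $f$ exist.

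For item (3) I will give two examples. The antisymmetric tower $\bigwedge^s\mathcal H_d$ with $s>(d+1)/2$ is privacy-hereditary, since $\operatorname{Red}_j=\bigwedge^{s-1}\mathcal H_d$ is a full admissible block by Corollary~\ref{cor:active-block-injectivity}, but it fails design heredity; this is the paper's antisymmetric computation via coefficient-space inclusion, which I assume. In the other direction I need a design-hereditary but not privacy-hereditary subspace. A candidate is a one-dimensional $\mathcal S$ spanned by a vector whose reduction $\mathcal R_j$ contains a full non-admissible block. For example, take $\mathcal S=\bigwedge^d$ at $s=d$ with $d\ge3$. Then $\mathcal R_j=\bigwedge^{d-1}$, which is admissible, so this choice fails and a different example is required. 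Instead I would take $\mathcal S=\mathbb C|\psi\rangle$ with $\psi$ a generic state, for which $\mathfrak C_s(\mathcal S)$ is large and contains $\mathfrak C_{s-1}$ of its reduction. That reduction is a non-admissible mixed-block space, which I would check via Corollary~\ref{cor:active-block-dimension-bound}.

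Item (4) combines the pieces. Design heredity makes $(\mathfrak U,w)$ an exact $(\mathcal R_j,s-1)$-design, and privacy heredity makes $\mathcal R_j$ admissible. Proposition~\ref{prop:qpb-from-restricted-design} then gives correctness and privacy. The replacement state is $\mathcal T^{(s-1)}(\operatorname{Tr}_j\rho)=\operatorname{Tr}_j\sigma_{\mathcal S}$ by Proposition~\ref{prop:automatic-marginal-descent}, evaluated on some $\rho\in\mathcal D(\mathcal S)$, and uniqueness of the replacement state identifies $\sigma_{\mathcal R_j}=\operatorname{Tr}_j\sigma_{\mathcal S}$.
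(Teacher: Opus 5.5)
Items (1), (2) and (4) are sound. Item (1) is the level-$(s-1)$ classification rewritten with $\tau_\mu=c_\mu\omega_\mu$, as in Remark~\ref{rem:privacy-criterion-classification}. In item (2) your relative-interior argument is a legitimate alternative to the paper's density perturbation $(1+\varepsilon h)\,\mathrm d\mu_{\mathrm{Haar}}$. You should, however, say why it works: Haar measure has full support, so its moment vector is relatively interior. Also, $1\in\mathfrak C_s(\mathcal S)$ while $g\notin\mathfrak C_s(\mathcal S)_{\mathbb R}$, so the direction that changes only the $g$-coordinate lies in the affine hull of the moment set. Your first sketch, perturbing a given finite design by a signed measure, is not justified as stated: the support of a small design need not admit such a perturbation.

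The genuine gap is the second half of item (3), ``design heredity does not imply privacy heredity''. Your candidate $\mathcal S=\mathbb C|\psi\rangle$ is admissible, since every one-dimensional subspace is, and for $s\ge3$ its reduction is typically not. But the decisive inclusion $\mathfrak C_{s-1}(\mathcal R_j)\subseteq\mathfrak C_s(\mathbb C\psi)$ is only asserted, and the intuition that $\mathfrak C_s(\mathbb C\psi)$ is ``large'' is backwards. This space is $\{U\mapsto\operatorname{Tr}[X\,U^{\otimes s}|\psi\rangle\langle\psi|(U^{\dagger})^{\otimes s}]\}$, i.e.\ coefficient functions with the single right vector $|\psi\rangle\langle\psi|$. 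Hence its component in $\operatorname{Coeff}(\pi_\alpha)$ has dimension at most $\dim\pi_\alpha$ times the multiplicity of $\pi_\alpha$ in $\mathcal L(\mathcal H_d^{\otimes s})$. By contrast, $\mathfrak C_{s-1}(\mathcal R_j)$ uses every right vector in the (generically $d^2$-dimensional) space $\mathcal L(\mathcal R_j)$.

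Concretely, take $d=4$, $s=3$ and $\alpha=[2,0,0,-2]$. Then $\pi_\alpha$ occurs once in $\mathcal L(\mathcal H_4^{\otimes2})$ (inside $\mathcal L(\operatorname{Sym}^2(\mathcal H_4))$) and only nine times in $\mathcal L(\mathcal H_4^{\otimes3})$. The map $\mathcal L(\mathcal R_3)\to\pi_\alpha$ already has rank $16$ at $\psi=\sum_k e_k^{\otimes3}$, hence also for generic $\psi$. So the $\alpha$-part of $\mathfrak C_2(\mathcal R_3)$ has dimension $16\dim\pi_\alpha$, while that of $\mathfrak C_3(\mathbb C\psi)$ has dimension at most $9\dim\pi_\alpha$, and design heredity fails for your generic example.

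The missing idea is to take $\mathcal S$ to be $\mathrm U(d)$-invariant. Then $\mathfrak C_s(\mathcal S)$ is a full sum of Peter--Weyl blocks, and the inclusion reduces to comparing irreducible constituents. The paper does this in Example~\ref{ex:design-without-privacy}. It takes $d=s=3$ and the carrier $\mathcal S_{(2,1),\eta}$ with $\eta=a|T_1\rangle+b|T_2\rangle$, $ab\neq0$. Its reduction is all of $\mathcal H_3^{\otimes2}$, hence not admissible. Yet $V_{(2,1)}\otimes\overline{V_{(2,1)}}$ and $\mathcal L(\mathcal H_3^{\otimes2})$ have the same constituents $\mathbf 1,A,D,\overline D,E$, so design heredity holds.
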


Part~(1) is Theorem~\ref{thm:privacy-heredity}. Part~(2) is Theorem~\ref{thm:design-heredity-criterion}, where $\mathfrak C_r(\cdot)$ is the finite-dimensional space of functions on $\mathrm U(d)$ defined in~\eqref{eq:coefficient-space}; its necessity part constructs an exact restricted design that fails to descend, by perturbing Haar measure orthogonally to $\mathfrak C_s(\mathcal S)$ and applying Carath\'eodory's theorem. Their conjunction is
Corollary~\ref{cor:qpb-heredity}. Part~(3) is shown by Example~\ref{ex:top-exterior-failure} (privacy without design), and Example~\ref{ex:design-without-privacy} shows that design heredity does not imply privacy heredity. Part~(4) is Theorem~\ref{thm:QPB-hereditary-security} together with Proposition~\ref{prop:replacement-state-partial-trace}: the reduced protocol uses the same unitary ensemble, the same weights and hence the same key set.

\subsubsection*{The symmetric and exterior towers}

\begin{theorem}\label{thm:D}
\begin{enumerate}
\item For every $s\ge2$ and $j\in\{1,\dots,s\}$,
  \[
  \operatorname{Red}_j\bigl(\operatorname{Sym}^s(\mathcal H_d)\bigr)=\operatorname{Sym}^{s-1}(\mathcal H_d),
  \]
  and, for $1\le r\le s-1$, every exact unitary
  $(\operatorname{Sym}^s(\mathcal H_d),s)$-design is also an exact unitary
  $(\operatorname{Sym}^r(\mathcal H_d),r)$-design. Hence the symmetric tower is fully \textsf{QPB}-hereditary.
\item For $2\le s\le d$ and $j\in\{1,\dots,s\}$,
  $\operatorname{Red}_j\bigl(\bigwedge^s\mathcal H_d\bigr)=\bigwedge^{s-1}\mathcal H_d$, so
  the exterior tower is fully privacy-hereditary; and $\bigwedge^s\mathcal H_d$
  is design-hereditary from level $s$ to level $s-1$ if and only if
  \[
    s\le\frac{d+1}{2}.
  \]
  Consequently, $\bigwedge^s\mathcal H_d$ is \textsf{QPB}-hereditary if and only
  if $s\le(d+1)/2$.
\end{enumerate}
\end{theorem}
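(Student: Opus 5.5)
The plan is to handle the two towers separately. In each case I first compute the canonical reduction directly, and then reduce design heredity to the coefficient-space inclusion of Theorem~\ref{thm:C}(2).

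\textbf{Step 1: the reductions.} By Lemma~\ref{lem:Red-support}, $\operatorname{Red}_j(\mathcal S)$ is the span of the one-factor contractions $(\langle e|\otimes I)\psi$ taken in slot $j$.
\begin{itemize}
\item Symmetric case. Contracting $|\psi\rangle^{\otimes s}$ with $\langle e|$ in any slot gives $\langle e|\psi\rangle\,|\psi\rangle^{\otimes(s-1)}$. Such vectors span $\operatorname{Sym}^{s-1}(\mathcal H_d)$, and every contraction lies in $\operatorname{Sym}^{s-1}$, so the reduction is exactly $\operatorname{Sym}^{s-1}(\mathcal H_d)$.
\item Exterior case. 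Contracting $e_{i_1}\wedge\cdots\wedge e_{i_s}$ with $\langle e_{i_k}|$ gives $\pm\frac{1}{\sqrt s}\,e_{i_1}\wedge\cdots\widehat{e_{i_k}}\cdots\wedge e_{i_s}$. These span $\bigwedge^{s-1}\mathcal H_d$, and every contraction is antisymmetric in the remaining slots.
\end{itemize}
Both $\operatorname{Sym}^{s-1}$ and $\bigwedge^{s-1}$ are single full blocks with one-dimensional $W_\lambda$. So by Corollary~\ref{cor:active-block-injectivity} they are Haar-twirl admissible. Iterating this gives full privacy heredity of both towers.

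\textbf{Step 2: coefficient spaces.} I will identify $\mathfrak C_s(\mathcal S)$ for a single full block with $\dim W_\lambda=1$. It should be the span of the matrix coefficients of $V_\lambda\otimes\overline{V_\lambda}$, i.e.\ the functions $U\mapsto\langle a|\pi_\lambda(U)|b\rangle\overline{\langle c|\pi_\lambda(U)|e\rangle}$. By Peter--Weyl this equals $\bigoplus_{\nu\subset V_\lambda\otimes V_\lambda^*}\mathrm{Coef}(\nu)$, where the sum runs over the irreducible constituents $\nu$ (signatures of $\mathrm U(d)$). Theorem~\ref{thm:C}(2) then reduces the question to deciding whether the constituents of $V_{\lambda'}\otimes V_{\lambda'}^*$ all occur in $V_\lambda\otimes V_\lambda^*$.

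\textbf{Step 3: the symmetric tower.} For $\operatorname{Sym}^r$, the constituents of $\operatorname{Sym}^r\otimes(\operatorname{Sym}^r)^*$ are the signatures $(k,0,\dots,0,-k)$ for $0\le k\le\min(r,\cdot)$ (with $k\le r$, and $d\ge2$). These sets are nested in $r$, so $\mathfrak C_r\subseteq\mathfrak C_s$ for all $r\le s$.

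An alternative avoids $\mathfrak C$ altogether: $\operatorname{Tr}_{s-r}$ of the design identity on $\operatorname{Sym}^s$ gives it on $\operatorname{Sym}^r$, because the states $(|\psi\rangle\langle\psi|)^{\otimes r}$ span $\mathcal L(\operatorname{Sym}^r)$ and are partial traces of $(|\psi\rangle\langle\psi|)^{\otimes s}$. I would present this as the cleaner argument. Combined with Step 1 and Corollary~\ref{cor:qpb-heredity}, it gives full QPB-heredity.

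\textbf{Step 4: the exterior tower, the main obstacle.} Using $\bigwedge^s\otimes(\bigwedge^s)^*\cong\bigwedge^s\otimes\bigwedge^{d-s}\otimes\det^{-1}$ and the Pieri rule, the constituents have signatures $(1^k,0^{d-2k},(-1)^k)$ for $0\le k\le\min(s,d-s)$. Hence the design-heredity inclusion $\mathfrak C_{s-1}(\bigwedge^{s-1})\subseteq\mathfrak C_s(\bigwedge^s)$ holds if and only if
\[
\min(s-1,\,d-s+1)\le\min(s,\,d-s).
\]
This requires $d-s+1\le d-s$ (impossible) or $s-1\le d-s$, i.e.\ it holds exactly when $s-1\le d-s$, which is $s\le(d+1)/2$. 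For the converse direction, when $s-1>d-s$ the coefficients of $(1^{d-s+1},0,\dots,(-1)^{d-s+1})$ lie in $\mathfrak C_{s-1}$ but not in $\mathfrak C_s$.

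The delicate points are:
\begin{itemize}
\item verifying that $\mathfrak C_s(\mathcal S)$ for a full block is exactly the stated Peter--Weyl sum, including that multiplicity-free tensor products give whole coefficient spaces;
\item getting the decomposition boundary $k\le\min(s,d-s)$ right.
\end{itemize}
Combining this with privacy heredity via Corollary~\ref{cor:qpb-heredity} yields the QPB statement.
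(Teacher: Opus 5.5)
Your proof is correct, and it departs from the paper only in the symmetric design descent. The reductions and the whole exterior-tower argument coincide with the paper's: the Pieri decomposition of $\mathcal L(\bigwedge^s\mathcal H_d)$, Peter--Weyl orthogonality, and the comparison $\min\{s-1,d-s+1\}\le\min\{s,d-s\}$. You write the constituents as $\mathrm U(d)$-signatures $(1^k,0^{d-2k},(-1)^k)$ rather than as $\mathrm{SU}(d)$-weights $\omega_k+\omega_{d-k}$. That is only notational, and it spares you the paper's detour through the central $\mathrm U(1)$.

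\textbf{Symmetric design descent.} The paper builds an explicit linear lift $\mathcal L_s^+=\mathcal J_s^+\circ(\mathcal K_s^+)^{-1}$ with $\operatorname{Tr}_1\circ\mathcal L_s^+=\mathrm{id}$. This needs injectivity of $\mathcal J_s^+$, proved by a density argument on pure powers, so that $\mathcal K_s^+=(\mathcal J_s^+)^*\mathcal J_s^+$ is invertible. The paper then descends one level at a time via Proposition~\ref{prop:covariant-lifting}.

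Your argument needs no linear lift. It suffices that each element of a spanning set of $\mathcal L(\operatorname{Sym}^r(\mathcal H_d))$ is a partial trace of an element of $\mathcal L(\operatorname{Sym}^s(\mathcal H_d))$, and it reaches every $r<s$ in one step. Two things should be made explicit:
\begin{enumerate}
\item You are using partial-trace covariance (Lemma~\ref{lem:partial-trace-covariance}) for both $\Phi^{(s)}_{\mathfrak U,w}$ and $\mathcal T^{(s)}$; say so.
\item The spanning fact is the polarization statement $\mathcal D(\operatorname{Sym}^t)\subseteq\operatorname{span}_{\mathbb R}\{(|\phi\rangle\langle\phi|)^{\otimes t}\}$, already recorded in the preliminaries. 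Cite it, or give the bihomogeneous-polynomial argument.
\end{enumerate}
What the paper's route buys is a concrete lift whose norm $\|\mathscr L\|_{1\to1}$ controls approximate descent (Remark~\ref{rem:approximate-heredity}). Your spanning-set argument gives no such quantitative handle.

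\textbf{Your coefficient-space backup.} Your alternative argument for $\operatorname{Sym}$ uses the constituents $(k,0,\dots,0,-k)$. The range ``$\min(r,\cdot)$'' should simply read $0\le k\le r$ for $d\ge2$. This route puts both towers under one criterion and makes the contrast transparent. The symmetric range grows with $r$ and has no $d$-dependent cap. The exterior cap $\min\{s,d-s\}$ shrinks once $s>(d+1)/2$. That is also exactly when $\binom ds<\binom d{s-1}$, so no surjective partial trace, and hence no lift, can exist there.
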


Part~(1) is Theorem~\ref{thm:symmetric-tower-heredity}. Its design part was
first proved in~[Lemma~VI.3,\cite{BroadbentGonzalezGuillenSchuknecht2022}]; we recover it from an
equivariant-lifting criterion (Proposition~\ref{prop:covariant-lifting}),
which isolates the mechanism that fails for exterior powers. Part~(2) is
Theorems~\ref{thm:antisymmetric-privacy-heredity}
and~\ref{thm:antisymmetric-design-heredity}. The threshold comes from the
multiplicity-free decomposition of $\mathrm{SU}(d)$-modules
\[
\mathcal L\Bigl(\textstyle\bigwedge^s\mathcal H_d\Bigr)\cong\bigoplus_{r=0}^{\min\{s,d-s\}}L_{\varpi_r+\varpi_{d-r}}:
\] the inclusion of coefficient spaces in
Theorem~\ref{thm:C}(2) holds when
$\min\{s-1,d-s+1\}\le\min\{s,d-s\}$. For $s=d$ the top exterior power is a
one-dimensional representation on which every ensemble, even the identity
alone, is an exact restricted design, although it is not an exact design on
$\bigwedge^{d-1}\mathcal H_d$ (Example~\ref{ex:top-exterior-failure}).

\subsubsection*{Canonical carriers and deletion of an arbitrary recipient}

For $\lambda\in\Lambda(d,s)$ and a unit vector $|\eta\rangle\in W_\lambda$, the
\emph{canonical carrier}
\[
  \mathcal S_{\lambda,\eta}:=\iota^{(s)}_\lambda\bigl(V_\lambda\otimes\operatorname{span}_{\mathbb{C}}\{|\eta\rangle\}\bigr)
\]
is Haar-twirl admissible. Let $q_{\mu\leftarrow\lambda}:W_\lambda\longrightarrow W_\mu$ be the branching projections and
$\mathcal B(\lambda,\eta)=\{\mu\nearrow\lambda:q_{\mu\leftarrow\lambda}|\eta\rangle\neq0\}$ the active branch set. For $j\in\{1,\dots,s\}$ let $\mathsf J^{(j)}:=\sum_{q\neq j}(q\,j)\in\mathbb C[\mathfrak S_s]$; thus $\mathsf J^{(s)}=\mathsf J_s$ is the Jucys-Murphy element.

\begin{theorem}[Canonical carriers]\label{thm:E}
\begin{enumerate}
\item \[
        \operatorname{Red}_s(\mathcal S_{\lambda,\eta})
       =\bigoplus_{\mu\in\mathcal B(\lambda,\eta)}\iota^{(s-1)}_\mu
       \bigl(V_\mu\otimes\operatorname{span}_{\mathbb{C}}\{q_{\mu\leftarrow\lambda}|\eta\rangle\}\bigr).
      \]
  Hence $\mathcal S_{\lambda,\eta}$ is privacy-hereditary in coordinate $s$ if
  and only if $|\mathcal B(\lambda,\eta)|=1$; and if $\mathcal B(\lambda,\eta)=\{\mu\}$,
  it is design-hereditary in coordinate $s$ if and only if
  $\operatorname{Coeff}(V_\mu\otimes\overline{V_\mu})\subseteq
  \operatorname{Coeff}(V_\lambda\otimes\overline{V_\lambda})$.
\item For every $j\in\{1,\dots,s\}$, $\mathcal S_{\lambda,\eta}$ is
  privacy-hereditary in coordinate $j$ if and only if $|\eta\rangle$ is an
  eigenvector of $W_\lambda(\mathsf J^{(j)})$.
\item If $\lambda$ is a rectangle, every canonical carrier
  $\mathcal S_{\lambda,\eta}$ is privacy-hereditary in every coordinate. If
  $\lambda=(2,1)$, no canonical carrier is; each Young--Yamanouchi carrier is
  privacy-hereditary in coordinate $3$ but not in coordinates $1$ and $2$.
\end{enumerate}
\end{theorem}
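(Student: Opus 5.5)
The plan has three stages: compute the reduction in coordinate $s$ explicitly, then transport the result to an arbitrary coordinate $j$, then check the two families in part~(3). Throughout, the admissibility test is Theorem~\ref{thm:A}(3) applied at level $s-1$, and the design test is Theorem~\ref{thm:C}(2).

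\emph{Part (1).} The canonical carrier has projector $\Pi_{\mathcal S_{\lambda,\eta}}=\iota_\lambda(I_{V_\lambda}\otimes|\eta\rangle\langle\eta|)\iota_\lambda^\dagger$. The partial trace $\operatorname{Tr}_s$ intertwines the $\mathrm U(d)\times\mathfrak S_{s-1}$ actions. Restricting $V_\lambda\otimes W_\lambda$ to $\mathrm U(d)\times\mathfrak S_{s-1}$ and using $\mathcal H_d^{\otimes s}=\mathcal H_d^{\otimes(s-1)}\otimes\mathcal H_d$, I will use the Pieri rule $V_\mu\otimes\mathcal H_d=\bigoplus_{\lambda\searrow\mu}V_\lambda$ together with multiplicity-free Specht branching $W_\lambda|_{\mathfrak S_{s-1}}=\bigoplus_{\mu\nearrow\lambda}W_\mu$. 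These give
\[
\operatorname{Tr}_s\Pi_{\mathcal S_{\lambda,\eta}}=\sum_{\mu\nearrow\lambda}\frac{\dim V_\lambda}{\dim V_\mu}\,\iota_\mu\bigl(I_{V_\mu}\otimes q_{\mu\leftarrow\lambda}|\eta\rangle\langle\eta|q_{\mu\leftarrow\lambda}^\dagger\bigr)\iota_\mu^\dagger .
\]
The $V_\mu$ factor is the identity by Schur's lemma, since the operator is $\mathrm U(d)$-invariant and $\mu$ occurs once in $V_\lambda|$. The scalar is fixed by comparing traces. Taking the support yields the stated formula, via Lemma~\ref{lem:Red-support}.

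For privacy heredity, a direct sum of several full carriers $\iota_\mu(V_\mu\otimes\mathbb C\eta_\mu)$ over distinct $\mu$ fails condition~(3). The reason is that the vectors lying in a single block violate the uniform-weight condition~\eqref{eq:main-weight}. A single carrier, on the other hand, is admissible. So privacy heredity holds if and only if $|\mathcal B|=1$. When $\mathcal B=\{\mu\}$, I will apply Theorem~\ref{thm:C}(2). For this I need to identify $\mathfrak C_s(\mathcal S_{\lambda,\eta})$ with the matrix coefficients of $V_\lambda\otimes\overline{V_\lambda}$: compressing $U^{\otimes s}$ to the carrier gives $\rho_\lambda(U)\otimes|\eta\rangle\langle\eta|$. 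The same identification gives $\mathfrak C_{s-1}(\mathcal R_s)=\operatorname{Coeff}(V_\mu\otimes\overline{V_\mu})$.

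\emph{Part (2).} I conjugate by the transposition $(j\,s)$, which maps coordinate $j$ to $s$ and sends $\mathcal S_{\lambda,\eta}$ to $\mathcal S_{\lambda,W_\lambda((j\,s))\eta}$. This reduces the question to whether $\eta':=W_\lambda((j\,s))\eta$ lies in a single branching component $W_\mu\subset W_\lambda$. The branching components are exactly the eigenspaces of the Jucys–Murphy element $\mathsf J_s$, with eigenvalue the content of the removed box $\lambda/\mu$, and these contents are distinct for distinct $\mu$. Hence $|\mathcal B|=1$ holds if and only if $\eta'$ is an eigenvector of $\mathsf J_s$. Since $(j\,s)\mathsf J_s(j\,s)=\mathsf J^{(j)}$, this is equivalent to $\eta$ being an eigenvector of $\mathsf J^{(j)}$.

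\emph{Part (3).} If $\lambda$ is a rectangle, it has a single removable corner, so $W_\lambda(\mathsf J^{(j)})$ is conjugate to $W_\lambda(\mathsf J_s)$, which acts as a scalar. Every $\eta$ is therefore an eigenvector, for every $j$. For $\lambda=(2,1)$, I will compute $\mathsf J^{(1)},\mathsf J^{(2)},\mathsf J^{(3)}$ on the two-dimensional $W_{(2,1)}$ in the Young–Yamanouchi basis. Each has eigenvalues $\pm1$, and the three sets of eigenvectors are pairwise distinct because of the conjugations by transpositions. So no single $\eta$ is an eigenvector of all three. The Young–Yamanouchi vectors are eigenvectors of $\mathsf J_3$ only. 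The main obstacle is keeping the normalization in part~(1) precise, and making the identification $\mathfrak C_s(\mathcal S_{\lambda,\eta})=\operatorname{Coeff}(V_\lambda\otimes\overline{V_\lambda})$ rigorous against the definition in~\eqref{eq:coefficient-space}.
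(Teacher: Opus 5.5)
Your proposal is correct. For parts (2) and (3) it follows the paper's route: conjugation by $(j\,s)$, the fact that the eigenspaces of $\mathsf J_s$ are the branch summands with distinct contents, and the scalar action of $\mathsf J_s$ for rectangles. Your privacy argument in part (1), that vectors confined to a single block violate the uniform-weight condition, is equivalent to the paper's orthogonal-support argument.

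The real difference is how you obtain the reduction formula. The paper works with vectors:
\begin{itemize}
\item it builds the equivariant joint contraction $C_s$;
\item it factors $C_s$ through the Pieri maps as $\sum_\mu\iota_\mu^{(s-1)}\circ(A_{\mu\leftarrow\lambda}\otimes q_{\mu\leftarrow\lambda})$;
\item it proves only an inclusion into $\bigoplus_{\mu\in\mathcal B(\lambda,\eta)}\mathcal E_\mu$, then upgrades this to equality by a Schur-lemma argument on that multiplicity-free module, using $A_{\mu\leftarrow\lambda}\neq0$.
\end{itemize}
You work with operators instead. You compute $\operatorname{Tr}_s\Pi_{\mathcal S_{\lambda,\eta}}$ in closed form and read off its support via Lemma~\ref{lem:Red-support}, which gives equality in one step.

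Your formula, including the weights $\dim V_\lambda/\dim V_\mu$, is right. Let $b_\mu\colon V_\lambda\to V_\mu\otimes\mathcal H_d$ be the isometric Pieri inclusion of branch $\mu$. The $(\mu,\nu)$ block of $\operatorname{Tr}_s\Pi_{\mathcal S_{\lambda,\eta}}$ is $\operatorname{Tr}_{\mathcal H_d}(b_\mu b_\nu^\dagger)\otimes q_{\mu\leftarrow\lambda}|\eta\rangle\langle\eta|q_{\nu\leftarrow\lambda}^\dagger$. The partial trace lies in $\operatorname{Hom}_{\mathrm U(d)}(V_\nu,V_\mu)$, so it vanishes for $\nu\neq\mu$. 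For $\nu=\mu$ it is a scalar, fixed by $\operatorname{Tr}(b_\mu^\dagger b_\mu)=\dim V_\lambda$. It is this per-branch trace comparison, not a single global one, that determines each scalar, and your sketch should say so.

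Your route also has a by-product. Since $\sigma_{\mathcal S_{\lambda,\eta}}=\Pi_{\mathcal S_{\lambda,\eta}}/\dim V_\lambda$, you get $\operatorname{Tr}_s\sigma_{\mathcal S_{\lambda,\eta}}$ explicitly, with block weights $\|q_{\mu\leftarrow\lambda}|\eta\rangle\|^2$. The paper's route instead describes each individual contraction $C^{(s)}_{s,v}|\psi\rangle$.

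Two small points to tighten.
\begin{itemize}
\item In part (2), after applying the transposition and contracting slot $s$, the surviving factors come out cyclically shifted. So the correct statement is $\operatorname{Red}_j(\mathcal S_{\lambda,\eta})=Q_j^\dagger\operatorname{Red}_s(\mathcal S_{\lambda,\eta_j})$, not literal equality. This is harmless because $Q_j$ commutes with $U^{\otimes(s-1)}$, but it should be stated.
\item In part (3), being conjugate by transpositions does not by itself rule out shared eigenvectors among the $W_\lambda(\mathsf J^{(j)})$. Your promised explicit computation does settle it, since the six eigenlines are distinct. The paper's argument is cleaner: $Z_3=\mathsf J_2+\mathsf J_3$ acts on $W_{(2,1)}$ by total content $0$, so $W_\lambda(\mathsf J^{(j)})$ is minus the transposition fixing $j$. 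A common eigenvector of $(1\,2)$ and $(2\,3)$ would then span an $\mathfrak S_3$-invariant line, contradicting irreducibility.
\end{itemize}
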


Part~(1) is Theorem~\ref{thm:canonical-carrier-branching}: contracting the
last factor leaves the standard subgroup $\mathfrak S_{s-1}$, and the
multiplicity-free branching rule determines the reduced space. Following a
Young--Yamanouchi vector along its path in the Young graph gives towers of
canonical carriers that are fully privacy-hereditary under deletion of the
last factor (Corollary~\ref{cor:young-yamanouchi-heredity}). Parts~(2)
and~(3) are Theorem~\ref{thm:any-recipient} and
Corollary~\ref{cor:rectangular}: deleting factor $j$ is equivalent, after a
relabelling of the remaining factors, to deleting the last factor of
$\mathcal S_{\lambda,\eta_j}$ with $|\eta_j\rangle=W_\lambda((j\,s))|\eta\rangle$, and
the branch summands are the eigenspaces of $W_\lambda(\mathsf J_s)$. Since any
recipient may leave a broadcast, part~(3) shows that the choice of carrier
matters: rectangular shapes, which include the symmetric and exterior
towers, are robust under every deletion, while Young--Yamanouchi carriers
are in general robust only under deletion of the last recipient.

\subsection{Related work}
\label{subsec:related-work}
Quantum private broadcast was introduced by Broadbent,
Gonz\'alez-Guill\'en, and Schuknecht
\cite{BroadbentGonzalezGuillenSchuknecht2022}.
They compared independent quantum one-time-pad encryption, ordinary unitary \(t\)-designs, and symmetric unitary \(t\)-designs, and showed that exploiting the symmetric plaintext sector can substantially reduce the classical key length. They also raised the two questions
that are central here: the extension of symmetric-design ideas to more general plaintext subspaces and the reduction of the circuit depth needed to realize symmetric designs.

\medskip
The Knill-Laflamme conditions give the fundamental recovery-independent characterization of exactly correctable quantum subspaces \cite{KnillLaflamme1997}. Quantum error correction and subsystem formulations were developed subsequently; see, for example, \cite{KribsLaflammePoulin2005}. Kretschmann, Kribs, and Spekkens established the complementarity between private codes for a channel and correctable codes for a complementary channel \cite{KretschmannKribsSpekkens2008}.
Private quantum subsystems and Kraus-operator analogues of the Knill-Laflamme conditions were further investigated in \cite{JochymOConnorKribsLaflammePlosker2013,
JochymOConnorKribsLaflammePlosker2014}.

Private codes for group-averaging channels also arise in private reference-frame communication, where irreducible representation factors provide decoherence-full subsystems~\cite{BartlettRudolphSpekkens2004}. Church, Kribs, Pereira, and Plosker characterized private pure states of trace-preserving conditional expectations by trace vectors relative to the output state~\cite{ChurchKribsPereiraPlosker2011}.
Our classification resolves this condition in Schur-Weyl decomposition and gives an explicit normal form. The subsequent results study canonical tensor-factor reduction and the separate descent requirements for privacy and restricted designs.

\medskip
Unitary designs provide finite or efficiently samplable substitutes for Haar randomness at the level of finitely many moments. Structural and constructive aspects of exact and approximate designs were developed in, among others, \cite{GrossAudenaertEisert2007,dankert2009exact,RoyScott2009}.
Local random circuits as approximate polynomial designs were studied by Brand\~ao, Harrow, and Horodecki \cite{brandao2016local}.
In the present paper we use restricted designs only through their moment channels, which makes the distinction between full designs and designs restricted to a Haar-twirl admissible subspace particularly transparent.

\subsection{Organization of the paper}
\label{subsec:paper-organization}

The remainder of the paper is organized as follows.

In Section~\ref{sec:preliminaries} we develop the representation-theoretic and quantum-information preliminaries used throughout the paper. We review Schur-Weyl duality, the corresponding isotypic projectors, the blockwise form of the collective Haar twirl, symmetric and antisymmetric tensor powers, unitary designs, and the Knill-Laflamme theorem.  We also state the branching characterization, Pieri rule, and Peter-Weyl theorems needed later in the hereditary analysis.

In Section~\ref{sec:qpb-admissibility} we formulate \textsf{QPB} on a general plaintext subspace
\(
    \mathcal S\subset \mathcal H_d^{\otimes t}
\)
and separate intrinsic Haar privacy from approximation by a finite unitary ensemble.

In subsections~\ref{subsec:qpb-classification} and ~\ref{subsubsec:general-multi-block-construction}, we derive the structural classification of Haar-twirl admissible subspaces. We prove Theorem~\ref{thm:qpb-classification}, derive its Knill-Laflamme interpretation in Proposition~\ref{prop:classification-KL}, classify admissible full Schur-Weyl components, and construct explicit two-block and general multi-block families in Proposition~\ref{prop:two-block-construction} and Theorem~\ref{thm:general-multiblock-construction}.

In subsection~\ref{subsec:hereditary-qpb} we develop canonical reduction under deletion of a recipient. We distinguish privacy heredity from design heredity and prove the exact criteria Theorems~\ref{thm:privacy-heredity}--\ref{thm:design-heredity-criterion} and Corollary~\ref{cor:qpb-heredity}. We then analyze the symmetric and antisymmetric towers and, using the branching and Pieri rules together with Peter-Weyl theory, obtain the general canonical-carrier classification of Theorem~\ref{thm:canonical-carrier-branching}.

Finally, in Section~\ref{sec:conclusion} we summarize the structural results and formulate several open problems. Technical proofs concerning the blockwise Haar twirl, auxiliary representation-theoretic identities are presented in the appendices.

\section{Preliminaries}\label{sec:preliminaries}

\subsection{Notation and operator embeddings}
Let \(d,t\geq 1\) be positive integers. The set of positive integers \(\{1,\ldots,d\}\) is defined by the notation \([d]\). All Hilbert spaces are finite-dimensional and complex.  For Hilbert spaces
\(\mathcal X\) and \(\mathcal Y\), let
\(\mathcal L(\mathcal X,\mathcal Y)\) be the space of linear maps from
\(\mathcal X\) to \(\mathcal Y\), write
\(\mathcal L(\mathcal X):=\mathcal L(\mathcal X,\mathcal X)\), and let
\[
  \mathcal D(\mathcal X)
  :=\{\rho\in\mathcal L(\mathcal X):\rho\geq0,
       \ \operatorname{Tr}\rho=1\}.
\]
The identity operator on \(\mathcal X\) is \(I_{\mathcal X}\); the identity
superoperator on \(\mathcal L(\mathcal X)\) is
\(\operatorname{id}_{\mathcal L(\mathcal X)}\).  We use
\(\|X\|_1=\operatorname{Tr}\sqrt{X^\dagger X}\) and, for a linear map
\(\Delta\),
\(
  \|\Delta\|_{1\to1}
  :=\sup_{X\neq0}\frac{\|\Delta(X)\|_1}{\|X\|_1}.
\)
The diamond norm is is
\(
\|\Phi\|_{\diamond}
:=
\sup_{r\geq 1}
\left\|
\Phi\otimes
\operatorname{id}_{\mathcal L(\mathbb C^{r})}
\right\|_{1\to 1}.
\)
If \( \Phi \) is a completely positive trace-preserving (CPTP) map, then the following inequality holds:
\begin{equation}
\|\Phi\|_{1 \rightarrow 1} \leq \|\Phi\|_\diamond.
\end{equation}

If \(\mathcal S\subseteq\mathcal X\) is a subspace, then
\(\iota_{\mathcal S}:\mathcal S\hookrightarrow\mathcal X\) denotes its
inclusion isometry,
\(\Pi_{\mathcal S}=\iota_{\mathcal S}\iota_{\mathcal S}^\dagger\), and
\begin{equation}\label{eq:operator-embedding}
  \mathcal J_{\mathcal S}:\mathcal L(\mathcal S)\longrightarrow
  \mathcal L(\mathcal X),
  \qquad
  \mathcal J_{\mathcal S}(X)
  :=\iota_{\mathcal S}X\iota_{\mathcal S}^\dagger.
\end{equation}
Thus a channel \(\Phi \colon \mathcal L(\mathcal X) \longrightarrow \mathcal L(\mathcal Y)\)
restricted to inputs supported on \(\mathcal S\) is the composition map \(\Phi\circ\mathcal J_{\mathcal S}\). For \(\sigma\in\mathcal D(\mathcal Y)\), the replacement channel with domain \(\mathcal L(\mathcal S)\) is
\begin{equation}\label{eq:replacement-channel}
  \mathcal R_{\sigma,\mathcal S}(X)
  :=\operatorname{Tr}(X)\sigma.
\end{equation}

Throughout, \(U^\dagger\) denotes the Hilbert-space adjoint and is used in conjugation channels. The notation \(\overline U\) is reserved for the entrywise-conjugate representation; for example, the vectorized conjugation representation is \(U\otimes\overline U\).

The symmetric subspace of $\mathcal{H}^{\otimes t}_{d}$ is denoted by $\Sym^{t}(\mathcal{H}_{d})$. The antisymmetric subspace of \(\mathcal{H}^{\otimes t}_{d}\) is denoted by \(\bigwedge^{t}(\mathcal{H}_{d})\). Let \(\GL_{d}(\mathbb{C})\) denote the general linear group of invertible \(d \times d\) matrices, \(\U(d)\) be the group of all \(d \times d\) unitary matrices with entries in \(\mathbb{C}\) and \(\mathfrak{S}_{n}\) be the symmetric group defined on a set \([n]\) of \(n\) distinct elements. The notation \(\lambda \vdash t\) denotes that \(\lambda\) is a partition of \(t\), and \(\ell(\lambda)\) is its number of nonzero parts. We denote \(\Lambda(d,t)\) to be the set of \(\lambda \vdash t\) with \(\ell(\lambda) \leq d\), and when \(\mathcal{S}\) is fixed, \(\Lambda_{0}(d,t):=\{\lambda \in \Lambda(d,t): c_{\lambda} >0\}\). We call \(\Lambda_{0}(d,t)\) as an active set. For a compact group \(\mathrm{G}\), we denote the set of all continuous functions from \(\mathrm{G}\) to \(\mathbb{C}\) by \(C(\mathrm{G})\).

\subsection{Schur-Weyl duality and Schur-Weyl decomposition}\label{sec:Schurweyl}
To establish the framework for our constructions, we briefly review the necessary background in the representation theory of finite groups and compact Hausdorff topological groups; standard references include \cite{serre1977linear,fulton2013representation,simon1996representations}.
Throughout this paper, all vector spaces are finite-dimensional complex vector spaces. If \(K\) is a compact topological group, every representation of \(K\) is understood to be continuous. Since every finite-dimensional complex representation of a finite group, and every finite-dimensional continuous complex representation of a compact group, is unitarizable and completely reducible, we shall work throughout with unitary realizations of the representations under consideration.

Let \(V\) be a finite-dimensional complex vector space and \((V,\Gamma_{\mathrm{G}})\) be a representation of a compact Hausdorff topological group \(\mathrm{G}\). For an element \(g \in \mathrm{G}\), we denote \(\Gamma_{G}(g)\) to be the associated linear operator in \(\mathrm{GL}(V)\).

\subsubsection{Action of general linear group and the symmetric group on the tensor product space.}\label{sec:gpaction}

Fix integers \(d,t\geq1\). Let 
\(
    \mathcal H^{\otimes t}_{d}:=(\mathbb C^d)^{\otimes t},
\)
and let \(\{\ket{e_1},\ldots,\ket{e_d}\}\) be the standard orthonormal
basis of \(\mathbb C^d\). Then
\(
 \mathcal B_{\mathcal{H}^{\otimes t}_{d}}
 :=\{\ket{e_{j_1}}\otimes\cdots\otimes\ket{e_{j_t}}:
          (j_1,\ldots,j_t)\in[d]^t\}
\)
is an orthonormal basis of \(\mathcal H^{\otimes t}_{d}\), and therefore
\(\dim_{\mathbb{C}}\mathcal H^{\otimes t}_{d}=d^t\).

\smallskip
 Given a finite-dimensional vector space \(V\) over an algebraically closed field \(\mathbb{F}\) with \(\mathrm{char}(\mathbb{F})=0\), the Schur-Weyl duality and the Schur-Weyl is defined based on the action of the general linear group \(\GL(V)\) and the action of the symmetric group \(\mathfrak{S}_{t}\) on the vector space \(V^{\otimes t}\).

\smallskip 
In this paper, we consider the finite-dimensional vector space \(\mathbb{C}^{d}\) over the field \(\mathbb{C}\), and we first describe the action of \(\GL((\mathbb{C}^{d}))\cong \GL_{d}(\mathbb{C})\) and \(\mathfrak{S}_{t}\) on the Hilbert space \(\mathcal{H}^{\otimes t}_{d}=(\mathbb{C}^{d})^{\otimes t}\). Then, for the purpose of this paper, we restrict the action of \(\GL_{d}(\mathbb{C})\) to \(\U(d)\), and consider the Schur-Weyl duality based on the action of \(\U(d)\) and \(\mathfrak{S}_{t}\) on the Hilbert space \(\mathcal{H}^{\otimes t}_{d}\).
 
\smallskip
\noindent \(\bullet\) \textbf{The general linear group action.} For each $U \in \GL_{d}(\mathbb{C})$, define the linear map
\begin{align}\label{eq:glaction}
\Gamma_{\GL_{d}(\mathbb{C})} \colon &\GL_{d}(\mathbb{C}) \longrightarrow \GL\bigl((\CC^d)^{\otimes t}\bigr)\; \text{defined by} \\ \notag
&U \longmapsto \Gamma_{\GL_{d}(\mathbb{C})}(U):=U^{\otimes t} 
\end{align}
The action of \(\Gamma_{U}\) on a basis vector is defined by:
\[
U^{\otimes t} \bigl( |e_{j_1}\rangle \otimes \cdots \otimes |e_{j_t}\rangle \bigr) = (U|e_{j_1}\rangle) \otimes \cdots \otimes (U|e_{j_t}\rangle).
\]
The map defined in Equation \ref{eq:glaction} is a group homomorphism and hence a representation.

\begin{remark}
    When we consider the representation of \(\U(d)\) on \((\CC^d)^{\otimes t}\), the action of \(\U(d)\) is similar to the previous group action of \(\GL_{d}(\mathbb{C})\) on \((\CC^d)^{\otimes t}\), that is, \(\Gamma_{U(d)}(U):=U^{\otimes t}\).

\end{remark}

\medskip
\noindent \(\bullet\) \textbf{The Permutation Action.}
For each permutation $\sigma \in \mathfrak{S}_t$, define the linear map $\Gamma_{\mathfrak{S}_{t}}(\sigma) \in \mathrm{GL}\bigl((\CC^d)^{\otimes t}\bigr)$ by its action on basis vectors:
\begin{equation}\label{eq:permaction}
\Gamma_{\mathfrak{S}_{t}}(\sigma)\bigl(|e_{j_1}\rangle \otimes |e_{j_2}\rangle \otimes \cdots \otimes |e_{j_t}\rangle\bigr) := |e_{j_{\sigma^{-1}(1)}}\rangle \otimes |e_{j_{\sigma^{-1}(2)}}\rangle \otimes \cdots \otimes |e_{j_{\sigma^{-1}(t)}}\rangle.
\end{equation}
The operator $\Gamma_{\mathfrak{S}_{t}}(\sigma)$ moves the tensor factor in position $k$ to position $\sigma(k)$:
the content of the $k$-th slot in the output is $|e_{j_{\sigma^{-1}(k)}}\rangle$, which was originally the $\sigma^{-1}(k)$-th input slot.

\begin{remark}\label{convention}
There are two common conventions in the literature:
\begin{enumerate}
\item $\Gamma_{\mathfrak{S}_{t}}(\sigma)(|v_1\rangle \otimes \cdots \otimes |v_t\rangle) = |v_{\sigma^{-1}(1)}\rangle \otimes \cdots \otimes |v_{\sigma^{-1}(t)}\rangle$,
\item $\Gamma_{\mathfrak{S}_{t}}(\sigma)(|v_1\rangle \otimes \cdots \otimes |v_t\rangle) = |v_{\sigma(1)}\rangle \otimes \cdots \otimes |v_{\sigma(t)}\rangle$.
\end{enumerate}
Convention (a) gives a \emph{left} representation: $\Gamma_{\mathfrak{S}_{t}}(\sigma \circ \tau) = \Gamma_{\mathfrak{S}_{t}}(\sigma) \Gamma_{\mathfrak{S}_{t}}(\tau)$. Convention (b) gives a \emph{right} representation or equivalently a left representation of the opposite group. We use convention (a) throughout. 

With this convention, it is easy to check that $\Gamma_{\mathfrak{S}_{t}} \colon \mathfrak{S}_t \longrightarrow \GL((\CC^d)^{\otimes t})$ is a group homomorphism.
\end{remark}

\medskip
\begin{proposition}\label{prop:commute}
For all $U \in \GL_{d}(\mathbb{C})$ and $\sigma \in \mathfrak{S}_t$, the action \(\GL_{d}(\mathbb{C})\) and \(\mathfrak{S}_{t}\) on $(\CC^d)^{\otimes t}$ commute:
\[
\Gamma_{\GL_{d}(\mathbb{C})}(U) \circ \Gamma_{\mathfrak{S}_{t}}(\sigma) = \Gamma_{\mathfrak{S}_{t}}(\sigma) \circ \Gamma_{\GL_{d}(\mathbb{C})}(U).
\]
\end{proposition}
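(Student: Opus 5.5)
Since both sides are linear maps on $(\CC^d)^{\otimes t}$, it suffices to check the identity on the product basis $\mathcal B_{\mathcal H^{\otimes t}_d}$. Fix a basis vector $\ket{e_{j_1}}\otimes\cdots\otimes\ket{e_{j_t}}$ and a pair $U\in\GL_d(\CC)$, $\sigma\in\mathfrak S_t$. For convenience, work with general product vectors $\ket{v_1}\otimes\cdots\otimes\ket{v_t}$. By multilinearity of the tensor product, \eqref{eq:permaction} extends to
\[
\Gamma_{\mathfrak S_t}(\sigma)\bigl(\ket{v_1}\otimes\cdots\otimes\ket{v_t}\bigr)=\ket{v_{\sigma^{-1}(1)}}\otimes\cdots\otimes\ket{v_{\sigma^{-1}(t)}}
\]
for arbitrary $\ket{v_k}\in\CC^d$. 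To justify this, expand each $\ket{v_k}$ in the standard basis and apply \eqref{eq:permaction} term by term. The coefficients $\prod_k (v_k)_{j_k}$ are simply reindexed by $k\mapsto\sigma^{-1}(k)$, and a product of scalars is invariant under that reordering.

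With this extension, the two compositions follow directly. For $U\circ\sigma$, first apply $\Gamma_{\mathfrak S_t}(\sigma)$ to get $\ket{v_{\sigma^{-1}(1)}}\otimes\cdots\otimes\ket{v_{\sigma^{-1}(t)}}$. Then apply $U^{\otimes t}$ factorwise, which gives $U\ket{v_{\sigma^{-1}(1)}}\otimes\cdots\otimes U\ket{v_{\sigma^{-1}(t)}}$. For $\sigma\circ U$, first apply $U^{\otimes t}$ to get $\bigotimes_k \ket{w_k}$ with $\ket{w_k}:=U\ket{v_k}$. Applying the extended permutation formula to this product vector then yields $\ket{w_{\sigma^{-1}(1)}}\otimes\cdots\otimes\ket{w_{\sigma^{-1}(t)}}=U\ket{v_{\sigma^{-1}(1)}}\otimes\cdots\otimes U\ket{v_{\sigma^{-1}(t)}}$. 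The two expressions coincide. Taking $\ket{v_k}=\ket{e_{j_k}}$ and extending linearly gives the operator identity.

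The main point of care is the extension step. One should not apply \eqref{eq:permaction} to $U^{\otimes t}$ of a basis vector directly, because that vector is not itself a basis vector. It is therefore cleanest to prove the product-vector formula first. I would also remark that the argument uses no property of $U$ beyond linearity, so it holds for $\GL_d(\CC)$ and hence for $\U(d)$. The convention fixed in Remark~\ref{convention} does not matter here, since it only changes which reindexing appears.
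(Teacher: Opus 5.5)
Your argument is correct and complete. Extending \eqref{eq:permaction} by multilinearity to arbitrary product vectors, via the relabelling \(j_k = i_{\sigma(k)}\) of the multi-index sum over \([d]^t\), is exactly the step needed, since \(U^{\otimes t}\) does not send basis vectors to basis vectors. After that, both compositions visibly send \(v_1\otimes\cdots\otimes v_t\) to \(Uv_{\sigma^{-1}(1)}\otimes\cdots\otimes Uv_{\sigma^{-1}(t)}\).

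The paper states this proposition without proof, treating it as a standard direct verification. Your proposal supplies that routine check, and, as you note, it holds for every linear \(U\), not only for invertible or unitary ones.
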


\noindent \(\bullet\) \textbf{Partitions, Young diagrams, and dimensions.}
\begin{definition}
A \emph{partition} of a positive integer $t$ is a weakly decreasing sequence of positive integers $\lambda = (\lambda_1, \lambda_2, \ldots, \lambda_\ell)$ with $\lambda_1 \geq \lambda_2 \geq \cdots \geq \lambda_\ell > 0$ and $|\lambda| := \lambda_1 + \cdots + \lambda_\ell = t$. We denote it by $\lambda \vdash t$. The \emph{length} $\ell(\lambda)$ is the number of parts. We denote the set of partitions of $t$ with at most $d$ parts by:
\[
\Lambda(d, t) := \{ \lambda \vdash t : \ell(\lambda) \leq d \}.
\]
\end{definition}

A partition \(\lambda\vdash t\) is identified with its Young diagram. A
\emph{Young tableau} of shape \(\lambda\) is a filling of the boxes of
\(\lambda\) by \(1,\ldots,t\). It is \emph{standard} if its entries increase
strictly along every row and every column. We write
\[
f^\lambda:=\#\operatorname{SYT}(\lambda)
\]
for the number of standard Young tableaux of shape \(\lambda\).

\subsubsection{Irreducible representations indexed by partitions}

The partitions of \(t\) index both the irreducible representations of
\(\mathfrak{S}_t\) and the homogeneous polynomial irreducible
representations of \(\GL_d(\CC)\) of degree \(t\).

\begin{enumerate}
\item \textbf{The symmetric group.}
For each partition \(\lambda\vdash t\), let \(S^\lambda\) denote the Specht module of shape \(\lambda\). The modules
\(
\{S^\lambda:\lambda\vdash t\}
\)
form a complete set of pairwise non-isomorphic irreducible complex representations of \(\mathfrak{S}_t\).

More concretely, if \(T\) is a Young tableau of shape \(\lambda\), with row
stabilizer \(R_T\) and column stabilizer \(C_T\), the associated Young
symmetrizer is
\[
c_T:=
\left(\sum_{\sigma\in R_T}\sigma\right)
\left(\sum_{\tau\in C_T}\operatorname{sgn}(\tau)\tau\right)
\in\CC[\mathfrak{S}_t].
\]
Up to the standard choice of left versus right convention, the Specht module
is realized as the left ideal
\[
S^\lambda \cong \CC[\mathfrak{S}_t]c_T.
\]

Let
\(
 [\lambda]:=\{(i,j)\in\mathbb N^2:
          1\leq i\leq\ell(\lambda),\ 1\leq j\leq\lambda_i\}
\) be the Young diagram of \(\lambda\).

For a box \(x=(i,j)\in[\lambda]\), its hook length is
\[
 h_\lambda(i,j)
 :=\lambda_i-j+\lambda_j'-i+1,
\]
where \(\lambda_j':=|\{r:\lambda_r\geq j\}|\) is the \(j\)-th part of
the conjugate partition.

Over \(\mathbb C\), the irreducible representations of \(\mathfrak S_t\)
are indexed by partitions \(\lambda\vdash t\). We write
\(W_\lambda:=S^\lambda\) for the corresponding Specht module. Its
dimension is
\begin{equation}
 \dim_{\mathbb{C}} W_\lambda
 =\frac{t!}{\prod_{(i,j)\in[\lambda]}h_\lambda(i,j)}.
 \label{eq:hook-length-formula-corrected}
\end{equation}

\medskip
\item \textbf{The general linear and unitary groups.}
For \(\lambda\in\Lambda(d,t)\), let
\(
\mathbb{S}_\lambda(\CC^d)
\)
denote the Schur module of shape \(\lambda\). It is the irreducible
polynomial \(\GL_d(\CC)\)-representation of highest weight \(\lambda\) and
homogeneous degree \(t\). Its restriction to \(\U(d)\) is irreducible.

Let
\(
V_\lambda:=\mathbb{S}_\lambda(\CC^d).
\)
Its dimension is given by the Weyl dimension formula
\[
\dim_{\mathbb{C}} V_\lambda
=
\prod_{1\leq i<j\leq d}
\frac{\lambda_i-\lambda_j+j-i}{j-i}.
\]
\end{enumerate}

\begin{definition}[Removable box and predecessor]
\label{def:removable-prelim}
A box
\(
    b\in[\lambda]
\)
is \emph{removable} if
\(
    [\lambda]\setminus\{b\}
\)
is again a Young diagram.  If
\(
    \mu\vdash s-1
\)
is obtained by removing one removable box from \(\lambda\), we write
\(
    \mu\nearrow\lambda
\)
or equivalently
\(
    \lambda\searrow\mu.
\)
\end{definition}

\begin{definition}[Young graph]
\label{def:Young-graph-prelim}
The \emph{Young graph} is the graded graph whose vertices at level
\(s\) are the partitions of \(s\), with an edge
\(
    \mu\nearrow\lambda
\)
whenever \(\lambda\) is obtained from \(\mu\) by adding a single
box.
\end{definition}

For \(2\leq r\leq s\), define the Jucys--Murphy element
\begin{equation}
\label{eq:JM-prelim}
    J_r
    :=
    \sum_{q=1}^{r-1}(q\,r)
    \in
    \mathbb C[\mathfrak{S}_s],
    \qquad
    J_1:=0.
\end{equation}

\begin{theorem}[Young--Yamanouchi basis]
\label{thm:YY-prelim}
For every
\(
    \lambda\vdash s
\),
the Specht module \(W_\lambda\) admits an orthonormal basis
\begin{equation}
\label{eq:YY-basis-prelim}
    \left\{
       |T\rangle:
       T\in\operatorname{SYT}(\lambda)
    \right\}
\end{equation}
such that
\begin{equation}
\label{eq:JM-diagonal-prelim}
    J_r|T\rangle
    =
    c_T(r)|T\rangle,
    \qquad
    1\leq r\leq s.
\end{equation}
We call
\eqref{eq:YY-basis-prelim}
the Young--Yamanouchi, or Young seminormal, basis.
\end{theorem}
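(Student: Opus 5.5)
The plan is the Okounkov--Vershik (Gelfand--Tsetlin) construction. It uses the multiplicity-free branching rule along the chain $\mathfrak S_1\subset\mathfrak S_2\subset\cdots\subset\mathfrak S_s$, where $\mathfrak S_r$ is the standard subgroup fixing $r+1,\dots,s$. We interpret $c_T(r)$ as the content $j-i$ of the box $(i,j)$ of $T$ that contains $r$.

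First I fix an $\mathfrak S_s$-invariant inner product on $W_\lambda$, so that $W_\lambda$ is a unitary representation. The branching rule gives
\[
  \operatorname{Res}^{\mathfrak S_r}_{\mathfrak S_{r-1}}W_\nu\cong\bigoplus_{\mu\nearrow\nu}W_\mu \qquad (\nu\vdash r),
\]
with every summand occurring with multiplicity one. Iterating this from $r=s$ down to $r=1$ decomposes $W_\lambda$ canonically into subspaces indexed by chains $\varnothing\nearrow\nu^{(1)}\nearrow\cdots\nearrow\nu^{(s)}=\lambda$ in the Young graph. Because each multiplicity is one, these subspaces are canonical, and each is a copy of $W_{\nu^{(1)}}=W_{(1)}$, hence one-dimensional. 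Such chains correspond bijectively to standard Young tableaux: put the entry $r$ in the box $\nu^{(r)}\setminus\nu^{(r-1)}$.

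Next I check orthogonality. At each stage the summands are pairwise non-isomorphic $\mathfrak S_{r-1}$-isotypic components of a unitary representation, so they are mutually orthogonal. Inductively, the lines attached to distinct tableaux are orthogonal. Choosing a unit vector $|T\rangle$ in each line therefore gives an orthonormal basis of $W_\lambda$ indexed by $\operatorname{SYT}(\lambda)$.

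Then I diagonalize the Jucys--Murphy elements. Let $C_r:=\sum_{1\le q<p\le r}(q\,p)$ be the class sum of transpositions in $\mathbb C[\mathfrak S_r]$, with $C_1=0$. Then $J_r=C_r-C_{r-1}$. Since $C_r$ is central in $\mathbb C[\mathfrak S_r]$, Schur's lemma makes it act on the irreducible module $W_\nu$ ($\nu\vdash r$) by a scalar $\kappa(\nu)$. By construction $|T\rangle$ lies in an $\mathfrak S_r$-submodule isomorphic to $W_{\nu^{(r)}}$ and in an $\mathfrak S_{r-1}$-submodule isomorphic to $W_{\nu^{(r-1)}}$. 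Hence
\[
  J_r|T\rangle=\bigl(\kappa(\nu^{(r)})-\kappa(\nu^{(r-1)})\bigr)|T\rangle .
\]

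The main step, and the expected obstacle, is the identity $\kappa(\nu)=\sum_{(i,j)\in[\nu]}(j-i)$. Given this identity, the difference above is exactly the content of the box containing $r$, which proves $J_r|T\rangle=c_T(r)|T\rangle$, including the case $J_1=0$. I would prove the identity by Frobenius's character formula, $\kappa(\nu)=\binom r2\chi^\nu((1\,2))/\dim W_\nu$, whose evaluation is the classical expression $\sum_i\bigl[\binom{\nu_i}{2}-\binom{\nu'_i}{2}\bigr]=\sum_{(i,j)\in[\nu]}(j-i)$. Alternatively, I would proceed by induction on $r$ using the Young symmetrizer $c_T$ from the preliminaries: row transpositions fix $c_T$ up to sign conventions and column transpositions contribute $-1$. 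If neither route closes cleanly, I would simply cite this standard fact (Jucys; Okounkov--Vershik), since the rest of the argument is formal once the branching rule is assumed.
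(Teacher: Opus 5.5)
The paper does not prove Theorem~\ref{thm:YY-prelim}. It quotes it in the preliminaries as a standard fact, alongside the branching rule (Theorem~\ref{thm:branching-prelim}) that you also take as input, and it never defines $c_T(r)$. The later use in the proof of Theorem~\ref{thm:any-recipient}, where the eigenvalue on a Young--Yamanouchi vector is ``the content of that box'', confirms your reading of $c_T(r)$ as the content of the box of $T$ containing $r$.

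Your Gelfand--Tsetlin/Okounkov--Vershik argument is a correct proof. Compared with the paper's bare citation, it isolates the single nonformal input. Multiplicity-free branching makes the path lines canonical and mutually orthogonal. The telescoping $J_r=C_r-C_{r-1}$ with central class sums makes diagonality automatic. Everything then reduces to the identity $\kappa(\nu)=\sum_{(i,j)\in[\nu]}(j-i)$.

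The top step of your argument is already enough for what the paper uses later. $J_s=C_s-C_{s-1}$ acts on the $W_\mu$-summand of $W_\lambda\!\downarrow_{\mathfrak S_{s-1}}$ by $\kappa(\lambda)-\kappa(\mu)=\kappa_{\lambda/\mu}$, using your content identity. This gives $W_\lambda(\mathsf J_s)=\sum_{\mu\nearrow\lambda}\kappa_{\lambda/\mu}\,q_{\mu\leftarrow\lambda}^{\dagger}q_{\mu\leftarrow\lambda}$ in the proof of Theorem~\ref{thm:any-recipient} directly, without the full basis.

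Citing Frobenius's formula for $\chi^\nu((1\,2))$ is fine. Your Young-symmetrizer alternative, however, is stated too loosely (``induction on $r$'', ``up to sign conventions''). It closes directly, with no induction:
\begin{itemize}
\item Since $c_T\in\mathbb C[\mathfrak S_r]c_T\cong W_\nu$ and $C_r$ is central, $C_rc_T=\kappa(\nu)c_T$.
\item The coefficient of $e$ in $c_T=\sum_{\sigma\in R_T,\,\tau\in C_T}\operatorname{sgn}(\tau)\,\sigma\tau$ is $1$, because $R_T\cap C_T=\{e\}$.
\item Suppose a transposition satisfies $(p\,q)=\sigma\tau$ with $\sigma\in R_T$ and $\tau\in C_T$. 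For every $x\notin\{p,q\}$, the entry $\tau(x)=\sigma^{-1}(x)$ lies both in the column and in the row of $x$, so it equals $x$. Hence $\sigma,\tau\in\{e,(p\,q)\}$.
\item Compare coefficients of $e$ in $C_rc_T=\kappa(\nu)c_T$. Each row transposition contributes $+1$, each column transposition contributes $-1$, and all other transpositions contribute $0$.
\end{itemize}
This gives $\kappa(\nu)=\sum_i\binom{\nu_i}{2}-\sum_j\binom{\nu'_j}{2}=\sum_{(i,j)\in[\nu]}(j-i)$.
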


If \(T\in\operatorname{SYT}(\lambda)\), the largest entry \(s\)
necessarily occupies a removable box.  Deleting that box gives a
standard tableau
\(
    T\!\downarrow_{s-1}
\)
of a uniquely determined shape
\(
    \mu\nearrow\lambda.
\)

The occurrence of the same partition \(\lambda\) on both sides of
Schur-Weyl duality is explained by the fact that the two actions below are
mutual commutants.

The image algebras corresponding to \(\GL_{d}(\mathbb{C})\) and \(\mathfrak{S}_{t}\) are defined by
\begin{align}
&\mathcal{A}_{\GL_{d}(\mathbb{C})}
:= 
\operatorname{span}_{\CC}\{\Gamma_{\GL_{d}(\mathbb{C})}(U):U\in \GL_{d}(\mathbb{C})\},
\\
&\mathcal{A}_{\mathfrak{S}_{t}}
:=\Gamma_{\mathfrak{S}_{t}}\bigl(\CC[\mathfrak{S}_t]\bigr)
=\operatorname{span}_{\CC}\{\Gamma_{\mathfrak{S}_{t}}(\sigma):\sigma\in\mathfrak{S}_t\}.
\end{align}
For a subalgebra \(\mathcal{A}\subseteq\End(\mathcal{H}^{\otimes t}_{d})\), let
\(
\Comm(\mathcal{A})
:=
\{T\in\End(\mathcal{H}^{\otimes t}_{d}):TA=AT
\text{ for every }A\in\mathcal{A}\}.
\)

\begin{theorem}[Schur-Weyl duality]
\label{thm:SW-duality}
The two image algebras are full mutual commutants:
\[
\Comm(\mathcal{A}_{\GL_{d}(\mathbb{C})})=\mathcal{A}_{\mathfrak{S}_{t}}=\operatorname{End}_{\GL_{d}(\mathbb{C})}(\mathcal{H}_{d}^{\otimes t}),
\qquad
\Comm(\mathcal{A}_{\mathfrak{S}_{t}})=\mathcal{A}_{\GL_{d}(\mathbb{C})}=\operatorname{End}_{\mathfrak{S}_t}(\mathcal{H}_{d}^{\otimes t}).
\]
An operator
\(T\in\operatorname{End}(\mathcal H_d^{\otimes t})\)
commutes with \(U^{\otimes t}\) for every \(U\in\mathrm U(d)\)
if and only if
\(
T\in
\operatorname{span}_{\mathbb C}
\{\Gamma_{\mathfrak S_t}(\sigma):\sigma\in\mathfrak S_t\}.
\)
Conversely, \(T\) commutes with
\(\Gamma_{\mathfrak S_t}(\sigma)\) for every
\(\sigma\in\mathfrak S_t\) if and only if
\(
T\in
\operatorname{span}_{\mathbb C}
\{U^{\otimes t}:U\in\mathrm U(d)\}.
\)
\end{theorem}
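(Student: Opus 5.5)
The statement is Schur--Weyl duality (Theorem~\ref{thm:SW-duality}). I would prove it with the double commutant theorem for semisimple algebras. Throughout I use the commutation of Proposition~\ref{prop:commute}.

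\textbf{Step 1: the commutant of $\mathcal A_{\mathfrak S_t}$.} The algebra $\mathcal A_{\mathfrak S_t}$ is the image of $\CC[\mathfrak S_t]$, which is semisimple by Maschke, and it is closed under adjoints. Its commutant $\operatorname{End}_{\mathfrak S_t}(\mathcal H_d^{\otimes t})$ is exactly the set of operators $T$ whose matrix entries $T^{j_1\ldots j_t}_{k_1\ldots k_t}$ are invariant under simultaneous permutation of the index pairs. Identifying $\operatorname{End}(\mathcal H_d^{\otimes t})\cong \operatorname{End}(\CC^d)^{\otimes t}$, this commutant is $\Sym^t(\operatorname{End}(\CC^d))$.

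\textbf{Step 2: showing $\Sym^t(\operatorname{End}(\CC^d))=\mathcal A_{\GL_d(\CC)}$.} The symmetric power $\Sym^t(\operatorname{End}(\CC^d))$ is spanned by the powers $A^{\otimes t}$, $A\in\operatorname{End}(\CC^d)$, by polarization. Concretely, $A_1\otimes\cdots\otimes A_t$ symmetrized is a linear combination of $(\sum_i \epsilon_i A_i)^{\otimes t}$ with $\epsilon_i\in\{0,1\}$, by the inclusion--exclusion identity. To pass from arbitrary $A$ to invertible ones, note that $z\mapsto (A+zI)^{\otimes t}$ is polynomial and is invertible for all but finitely many $z$. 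Any linear functional vanishing on the span of $\{U^{\otimes t}:U\in\GL_d(\CC)\}$ therefore vanishes on a polynomial with infinitely many zeros, hence at $z=0$. This gives $\Comm(\mathcal A_{\mathfrak S_t})=\mathcal A_{\GL_d(\CC)}$.

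\textbf{Step 3: the other commutant.} Applying the double commutant theorem to the semisimple $*$-closed algebra $\mathcal A_{\mathfrak S_t}$ gives $\Comm(\mathcal A_{\GL_d(\CC)})=\Comm(\Comm(\mathcal A_{\mathfrak S_t}))=\mathcal A_{\mathfrak S_t}$. The identification with $\operatorname{End}_{\GL_d(\CC)}$ is the definition of the commutant.

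\textbf{Step 4: the unitary statements.} Here I must show that $\operatorname{span}\{U^{\otimes t}:U\in\U(d)\}=\operatorname{span}\{A^{\otimes t}:A\in\GL_d(\CC)\}$. A linear functional $f$ vanishing on all $U^{\otimes t}$, $U\in\U(d)$, gives $g(X)=f(e^{X\,\cdot}{}^{\otimes t})$, i.e.\ $X\mapsto f\bigl((e^{X})^{\otimes t}\bigr)$, which is holomorphic on $\mathfrak{gl}_d(\CC)$ and vanishes on the totally real subspace $\mathfrak u(d)$, where $\mathfrak{gl}_d=\mathfrak u(d)\oplus i\,\mathfrak u(d)$. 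Such a function vanishes identically by the identity theorem (Weyl's unitary trick). The exponential map does not hit every invertible matrix directly, but $\GL_d(\CC)$ is connected and generated by $\exp(\mathfrak{gl}_d)$, so products of exponentials extend the vanishing; alternatively, every invertible $A$ equals $e^X$ for some $X$ via a matrix logarithm. Hence $f$ vanishes on $\mathcal A_{\GL_d(\CC)}$. With this, both unitary ``if and only if'' statements follow from Steps 2 and 3.

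I expect the main obstacle to be the density and spanning arguments in Steps 2 and 4; the rest is formal.
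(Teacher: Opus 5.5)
Your proof is correct and more complete than the paper's treatment. The paper states Theorem~\ref{thm:SW-duality} as cited background and gives no proof. The only argument it records is Remark~\ref{rem:AW-double-centralizer}, which uses Maschke, Artin--Wedderburn and the double-centralizer theorem to obtain one commutant equality from the other, and it simply asserts that the $\U(d)$ version ``similarly holds''.

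Your Step~3 is that same reduction, but you also prove the input it needs:
\begin{itemize}
\item Steps~1--2 compute $\Comm(\mathcal A_{\mathfrak S_t})$ directly as the permutation-invariant tensors in $\End(\CC^d)^{\otimes t}$. You then show these are spanned by $A^{\otimes t}$ with $A$ invertible, using polarization plus a one-variable polynomial argument.
\item Step~4 justifies passing from $\GL_d(\CC)$ to $\U(d)$ by the unitary trick. This matters, because the $\U(d)$ form is what the paper actually uses for the Haar twirl and the algebra $\mathfrak A_t$.
\end{itemize}

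Your ordering is also the one the bicommutant identity supports. The identity $\Comm(\Comm(\mathcal A_{\mathfrak S_t}))=\mathcal A_{\mathfrak S_t}$ turns $\Comm(\mathcal A_{\mathfrak S_t})=\mathcal A_{\GL_d(\CC)}$ into $\Comm(\mathcal A_{\GL_d(\CC)})=\mathcal A_{\mathfrak S_t}$, exactly as you use it. The paper's remark states the implication in the opposite direction, which would instead require the bicommutant property of the $*$-algebra spanned by the $U^{\otimes t}$.

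Step~4 needs two small repairs.
\begin{itemize}
\item Over $\CC$ the exponential map $\mathfrak{gl}_d(\CC)\to\GL_d(\CC)$ is surjective, since a logarithm exists via the Jordan form. Your remark that it ``does not hit every invertible matrix'' is therefore false, and the logarithm route alone finishes the step.
\item If you prefer the product route, argue through the fact that $B:=\operatorname{span}_{\CC}\{U^{\otimes t}:U\in\U(d)\}$ is closed under multiplication. Once each $(e^X)^{\otimes t}\in B$, also $(e^Xe^Y)^{\otimes t}=(e^X)^{\otimes t}(e^Y)^{\otimes t}\in B$. As currently phrased (``products of exponentials extend the vanishing''), the argument is not valid, because a functional vanishing on a set need not vanish on products of its elements.
\end{itemize}
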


\begin{remark}\label{rem:AW-double-centralizer}
When we consider the actions of the groups \(\U(d)\) and \(\mathfrak{S}_{t}\) on \(\mathcal{H}^{\otimes t}_{d}\), the Schur-Weyl duality similarly holds
\[
\Comm(\mathcal{A}_{\U(d)})=\mathcal{A}_{\mathfrak{S}_{t}}=\operatorname{End}_{\U(d)}(\mathcal{H}_{d}^{\otimes t}),
\qquad
\Comm(\mathcal{A}_{\mathfrak{S}_{t}})=\mathcal{A}_{\U(d)}=\operatorname{End}_{\mathfrak{S}_t}(\mathcal{H}_{d}^{\otimes t}).
\]
Since \(\CC[\mathfrak{S}_t]\) is semisimple by Maschke's theorem, its image \(\mathcal{A}_{\mathfrak{S}_{t}}\) is also semisimple. The Artin--Wedderburn theorem and the double-centralizer theorem then imply that
\(
\Comm\bigl(\Comm(\mathcal{A}_{\mathfrak{S}_{t}})\bigr)
=
\mathcal{A}_{\mathfrak{S}_{t}}.
\)
Thus the second commutant equality follows from the first together with semisimplicity.
\end{remark}

\subsubsection{From Schur-Weyl duality to the Schur-Weyl decomposition}
\label{subsec:SW-decomposition}

Let
\[
\mathcal{A}_{\mathfrak{S}_{t}}
:=
\operatorname{span}_{\CC}
\{\Gamma_{\mathfrak{S}_{t}}(\sigma):\sigma\in\mathfrak{S}_t\},
\qquad
\mathcal{A}_{\U(d)}
:=
\operatorname{span}_{\CC}
\{U^{\otimes t}:U\in\U(d)\}.
\]
By Theorem~\ref{thm:SW-duality} and Remark \ref{rem:AW-double-centralizer},
\[
\Comm(\mathcal{A}_{\mathfrak{S}_{t}})
=
\mathcal{A}_{\U(d)},
\qquad
\Comm(\mathcal{A}_{\U(d)})
=
\mathcal{A}_{\mathfrak{S}_{t}},
\]
where the commutants are taken inside
\(\End(\mathcal{H}^{\otimes t}_{d})\).

We recall the finite-dimensional double-centralizer theorem. Let \(A\subseteq\End(W)\) be a finite-dimensional semisimple algebra and let
\(
B:=\End_A(W).
\)
If \(\{U_\alpha\}_{\alpha\in I}\) is a complete set of pairwise
non-isomorphic simple \(A\)-modules occurring in \(W\), then
\[
W\cong\bigoplus_{\alpha\in I}U_\alpha\otimes M_\alpha,
\qquad
M_\alpha:=\Hom_A(U_\alpha,W),
\]
as an \(A\)-\(B\)-bimodule. Moreover,
\[
B\cong\bigoplus_{\alpha\in I}\End(M_\alpha),
\qquad
\End_B(W)=A,
\]
and the nonzero multiplicity spaces \(M_\alpha\) are pairwise
non-isomorphic irreducible \(B\)-modules.

Since \(\CC[\mathfrak{S}_t]\) is semisimple by Maschke's theorem, its image
\(\mathcal{A}_{\mathfrak{S}_{t}}\) is a semisimple algebra. Applying the
double-centralizer theorem to
\[
A=\mathcal{A}_{\mathfrak{S}_{t}},
\qquad
B=\Comm(\mathcal{A}_{\mathfrak{S}_{t}})
=
\mathcal{A}_{\U(d)},
\]
gives
\[
\mathcal{H}^{\otimes t}_{d}
\cong
\bigoplus_{\lambda}
S^\lambda\otimes
\Hom_{\mathfrak{S}_t}
\bigl(S^\lambda,\mathcal{H}^{\otimes t}_{d}\bigr).
\label{eq:SW-isotypic}
\]
Here the sum ranges over those partitions \(\lambda\vdash t\) for which
\(S^\lambda\) occurs in \(\mathcal{H}^{\otimes t}_{d}\).

The multiplicity space
\(
\Hom_{\mathfrak{S}_t}
\bigl(S^\lambda,\mathcal{H}^{\otimes t}_{d}\bigr)
\)
is an irreducible \(\mathcal{A}_{\U(d)}\)-module, hence an irreducible
\(\U(d)\)-module. It is naturally identified with the Schur module
\(
\mathbb{S}_\lambda(\CC^d).
\)
Furthermore,
\[
\mathbb{S}_\lambda(\CC^d)\neq 0
\quad\Longleftrightarrow\quad
\ell(\lambda)\leq d.
\]
If \(\ell(\lambda)>d\), then the Young diagram of \(\lambda\) has a column of length \(d+1\), and the corresponding column antisymmetrization vanishes because
\(
\bigwedge^{d+1}\CC^d=0.
\)

Consequently, as a representation of
\(\U(d)\times\mathfrak{S}_t\),
\begin{equation}
(\CC^d)^{\otimes t}
\cong
\bigoplus_{\lambda\in\Lambda(d,t)}
\big(\mathbb{S}_\lambda(\CC^d)\otimes S^\lambda \big)= \bigoplus_{\lambda\in\Lambda(d,t)} \big(V_{\lambda} \otimes W_{\lambda} \big)
\label{eq:SW-decomp}
\end{equation}
Each summand occurs exactly once. Thus,
\eqref{eq:SW-decomp} is multiplicity-free as a representation of the
product group \(\U(d)\times\mathfrak{S}_t\), though not generally after
forgetting either factor.

Taking dimensions yields
\[
d^t
=
\sum_{\lambda\in\Lambda(d,t)}
\dim_{\mathbb{C}} V_\lambda \cdot  \dim_{\mathbb{C}} W_\lambda.
\]

Moreover, after choosing invariant Hermitian inner products on the
irreducible factors, there exists a unitary
\((\U(d)\times\mathfrak{S}_t)\)-equivariant isomorphism
\begin{equation}
\Phi:\mathcal{H}^{\otimes t}_{d}
\xrightarrow{\ \cong\ }
\bigoplus_{\lambda\in\Lambda(d,t)}
V_\lambda\otimes W_\lambda,
\label{eq:schur-weyl-decomposition}
\end{equation}
where
\[
V_\lambda\cong\mathbb{S}_\lambda(\CC^d),
\qquad
W_\lambda\cong S^\lambda.
\]
Under \(\Phi\), the two actions have the block forms
\begin{align}
\Phi \Gamma_{\U(d)}(U)\Phi^{-1}
&=
\bigoplus_{\lambda\in\Lambda(d,t)}
\left(V_\lambda(U)\otimes I_{W_\lambda}\right),
\qquad U\in\U(d),                                      \label{eq:U-block-action}\\
\Phi \Gamma_{\mathfrak{S}_{t}}(\sigma)\Phi^{-1}
&=
\bigoplus_{\lambda\in\Lambda(d,t)}
\left(I_{V_\lambda}\otimes W_\lambda(\sigma)\right),
\qquad \sigma\in\mathfrak{S}_t,                        \label{eq:S-block-action}
\end{align}
where
\[
\Gamma_{\U(d),V_\lambda} \colon \U(d)\longrightarrow \GL(V_\lambda)
\]
and 
\[
\Gamma_{\mathfrak{S}_{t},W_\lambda} \colon \mathfrak{S}_{t} \longrightarrow\GL(W_\lambda)
\]
are finite-dimensional irreducible representations, and with abuse of notations, we denote \(\Gamma_{\U(d),V_\lambda}(U)=V_{\lambda}(U)\) and \(\Gamma_{\mathfrak{S}_{t},W_\lambda}(\sigma)=W_{\lambda}(\sigma)\) for notational simplicity.

\smallskip
The decomposition
\eqref{eq:schur-weyl-decomposition} is multiplicity-free as a representation of \(\U(d)\times\mathfrak{S}_t\). Upon restriction to either factor of
\(\U(d)\times\mathfrak{S}_t\), the decomposition becomes
\begin{align}
\mathcal{H}^{\otimes t}_{d}\big|_{\U(d)}
&\cong
\bigoplus_{\lambda\in\Lambda(d,t)}
V_\lambda^{\oplus \dim_{\mathbb{C}} W_\lambda},                         \label{eq:U-isotypic}\\
\mathcal{H}^{\otimes t}_{d}\big|_{\mathfrak{S}_t}
&\cong
\bigoplus_{\lambda\in\Lambda(d,t)}
W_\lambda^{\oplus \dim_{\mathbb{C}} V_\lambda}.                         \label{eq:S-isotypic}
\end{align}
Thus \(W_\lambda\) is the multiplicity space of \(V_\lambda\) in the
\(\U(d)\)-representation \(\mathcal{H}^{\otimes t}_{d}\), whereas \(V_\lambda\) is
the multiplicity space of \(W_\lambda\) in the
\(\mathfrak{S}_t\)-representation \(\mathcal{H}^{\otimes t}_{d}\).

\begin{remark}
For every \(\lambda\in\Lambda(d,t)\), \(U\in\U(d)\), and
\(\sigma\in\mathfrak{S}_t\),
\[
\bigl(V_\lambda(U)\otimes I_{W_\lambda}\bigr)
\bigl(I_{V_\lambda}\otimes W_\lambda(\sigma)\bigr)
=
V_\lambda(U)\otimes W_\lambda(\sigma)
=
\bigl(I_{V_\lambda}\otimes W_\lambda(\sigma)\bigr)
\bigl(V_\lambda(U)\otimes I_{W_\lambda}\bigr).
\]
\end{remark}

\begin{theorem}\label{thm:finite-character-orthogonality}
Let \(\mathrm{G}\) be a finite group, and let
\(\{\chi_\lambda\}_{\lambda\in\widehat{\mathrm{G}}}\) be its irreducible complex
characters. With respect to the Hermitian inner product on the space of
class functions,
\[
\langle \varphi,\psi\rangle_{\mathrm{G}}
:=
\frac{1}{|\mathrm{\mathrm{G}}|}
\sum_{g\in \mathrm{G}}\varphi(g)\overline{\psi(g)},
\]
the irreducible characters form an orthonormal basis. In particular,
\begin{equation}
\frac{1}{|\mathrm{G}|}
\sum_{g\in \mathrm{G}}
\chi_\lambda(g)\overline{\chi_\mu(g)}
=
\delta_{\lambda\mu}.
\label{eq:row-character-orthogonality}
\end{equation}

Moreover, for all \(g,h\in \mathrm{G}\),
\begin{equation}
\sum_{\lambda\in\widehat{\mathrm{G}}}
\chi_\lambda(g)\overline{\chi_\lambda(h)}
=
\begin{cases}
|C_{\mathrm{G}}(g)|, & \text{if \(g\) and \(h\) are conjugate in \(\mathrm{G}\)},\\
0, & \text{otherwise},
\end{cases}
\label{eq:column-character-orthogonality}
\end{equation}
where
\[
C_{\mathrm{G}}(g):=\{x\in \mathrm{G}:xg=gx\}.
\]
Since
\(\overline{\chi_\lambda(h)}=\chi_\lambda(h^{-1})\),
\[
\sum_{\lambda\in\widehat{\mathrm{G}}}
\chi_\lambda(g)\chi_\lambda(h^{-1})
=
\begin{cases}
|C_{\mathrm{G}}(g)|, & \text{if \(g\) and \(h\) are conjugate in \(\mathrm{G}\)},\\
0, & \text{otherwise}.
\end{cases}
\]
\end{theorem}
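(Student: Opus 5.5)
The plan is to derive both character orthogonality relations from Schur's lemma applied to averaged intertwiners. For the row relation~\eqref{eq:row-character-orthogonality}, take irreducible unitary representations $\rho_\lambda$ on $V_\lambda$ and $\rho_\mu$ on $V_\mu$. For an arbitrary linear map $A:V_\mu\to V_\lambda$, form the group average
\[
\widetilde A:=\frac{1}{|\mathrm G|}\sum_{g}\rho_\lambda(g)A\rho_\mu(g)^{-1}.
\]
This $\widetilde A$ is an intertwiner, so by Schur's lemma it vanishes when $\lambda\neq\mu$ and equals $\frac{\operatorname{Tr}A}{\dim V_\lambda}I$ when $\lambda=\mu$. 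Choosing $A$ to be a matrix unit $E_{ij}$ gives the Schur orthogonality relations for matrix coefficients. Summing the diagonal coefficients, together with unitarity ($\overline{\rho(g)_{ij}}=\rho(g^{-1})_{ji}$), yields $\langle\chi_\lambda,\chi_\mu\rangle_{\mathrm G}=\delta_{\lambda\mu}$.

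Orthonormality alone does not give a basis of class functions, so that is the next step. Suppose a class function $f$ is orthogonal to every $\chi_\lambda$. For each irreducible $\rho_\lambda$, form the operator $T_f:=\sum_g\overline{f(g)}\rho_\lambda(g)$. Because $f$ is a class function, $T_f$ commutes with $\rho_\lambda$, so it is a scalar, and its trace is $|\mathrm G|\,\overline{\langle f,\chi_\lambda\rangle}=0$. Hence $T_f=0$ in every irreducible representation. By complete reducibility $T_f$ then vanishes in the regular representation, and applying it to the basis vector $e_1$ gives $\sum_g\overline{f(g)}e_g=0$, so $f=0$. The characters therefore form an orthonormal basis. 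As a by-product, the number of irreducibles equals the number of conjugacy classes.

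The column relation~\eqref{eq:column-character-orthogonality} follows by a matrix argument. Let $C_1,\dots,C_k$ be the conjugacy classes with representatives $g_i$, and define the square matrix $M_{\lambda i}:=\sqrt{|C_i|/|\mathrm G|}\,\chi_\lambda(g_i)$. It is square because there are $k$ irreducibles and $k$ classes. Row orthonormality, rewritten as a sum over classes, says $MM^\dagger=I$. Since $M$ is square, this implies $M^\dagger M=I$, which reads
\[
\sum_\lambda\overline{\chi_\lambda(g_i)}\chi_\lambda(g_j)=\frac{|\mathrm G|}{|C_i|}\delta_{ij}=|C_{\mathrm G}(g_i)|\,\delta_{ij},
\]
using the orbit-stabilizer theorem for the conjugation action. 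Taking complex conjugates gives the stated form. The last display then follows from $\overline{\chi_\lambda(h)}=\operatorname{Tr}\rho_\lambda(h)^\dagger=\chi_\lambda(h^{-1})$, by unitarity.

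The main obstacle is the completeness step, which is exactly what makes $M$ square. The route above via the regular representation is the one I would commit to; the rest is routine bookkeeping with Schur's lemma.
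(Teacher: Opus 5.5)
Your argument is correct and complete. The paper gives no proof of this theorem. It appears in the preliminaries as standard background, with the surrounding text pointing to Serre and Fulton--Harris, so there is no in-paper argument to compare against.

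What you wrote is essentially the textbook proof in Serre (\S\S2.2--2.5). For the row relation you average $\rho_\lambda(g)A\rho_\mu(g)^{-1}$ and apply Schur's lemma. For completeness you show that $\sum_g\overline{f(g)}\rho_\lambda(g)$ is a scalar of trace $|\mathrm G|\,\overline{\langle f,\chi_\lambda\rangle_{\mathrm G}}$ in every irreducible, hence vanishes on the regular representation, and you then evaluate at $e_1$.

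You depart from that source only in the column relation. Serre expands the indicator function of the conjugacy class of $g$ in the orthonormal basis of characters and reads off the coefficients. You instead use that the square matrix $M_{\lambda i}=\sqrt{|C_i|/|\mathrm G|}\,\chi_\lambda(g_i)$ with $MM^\dagger=I$ must also satisfy $M^\dagger M=I$. The two routes carry the same content. Yours packages both relations as the unitarity of one square matrix, while Serre's derives the column relation directly from the basis property. In both, everything rests on the completeness step, which you correctly identified as the crux and handled correctly.
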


The following standard result is due to Serre; see
\cite[Theorem~8, \S2.6]{serre1977linear}.

\begin{theorem}[Canonical isotypic decomposition and character projectors~\cite{serre1977linear,fulton2013representation,sagan2001symmetric}]
\label{thm:serre-canonical-projection}
Let
\[
\Gamma_{\mathrm{G}}\colon \mathrm{G}\longrightarrow\GL(V)
\]
be a finite-dimensional complex representation of a finite group \(\mathrm{G}\).
Let \(W_1,\ldots,W_k\) be representatives of the distinct irreducible
\(\mathrm{G}\)-modules occurring in \(V\), and let \(\chi_i\) denote the character of
\(W_i\).

For \(1\leq i\leq k\), define the \(W_i\)-isotypic component of \(V\) by
\[
V^{(i)}
:=
\sum_{\substack{L\subseteq V\\L\cong W_i}}L.
\]
Then the following hold.

\begin{enumerate}[(i)]
\item The subspaces \(V^{(i)}\) are canonical \(\mathrm{G}\)-submodules of \(V\), and
\[
V=\bigoplus_{i=1}^{k}V^{(i)},
\]
where
\[
V^{(i)}
\cong
W_i\otimes\Hom_{\mathrm{G}}(W_i,V).
\]
Thus the isotypic decomposition is independent of any choice of a
decomposition of \(V\) into irreducible subrepresentations.

\item The \(\mathrm{G}\)-equivariant projection onto \(V^{(i)}\) is
\begin{equation}
\Pi_i
=
\frac{\dim_{\mathbb{C}} W_i}{|\mathrm{G}|}
\sum_{g\in \mathrm{G}}
\overline{\chi_i(g)}\,\Gamma_{\mathrm{G}}(g)
=
\frac{\dim_{\mathbb{C}} W_i}{|\mathrm{G}|}
\sum_{g\in \mathrm{G}}
\chi_i(g^{-1})\,\Gamma_{\mathrm{G}}(g).
\label{eq:finite-group-isotypic-projector}
\end{equation}
In particular,
\[
\Pi_i^2=\Pi_i,
\qquad
\operatorname{im}(\Pi_i)=V^{(i)},
\qquad
\Pi_i\Pi_j=0 \quad (i\neq j).
\]

\item If \(V\) is equipped with a \(\mathrm{G}\)-invariant Hermitian inner product,
then \(\Pi_i\) is the orthogonal projection of \(V\) onto \(V^{(i)}\).
\end{enumerate}
\end{theorem}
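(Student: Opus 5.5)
The statement is Serre's canonical isotypic decomposition, and I would prove it via characters and Schur's lemma. Define \(\Pi_i\) by the formula in \eqref{eq:finite-group-isotypic-projector}. The first step is to show \(\Pi_i\) is \(\mathrm{G}\)-equivariant. Since \(\chi_i\) is a class function, the substitution \(g\mapsto hgh^{-1}\) gives \(\Gamma_{\mathrm{G}}(h)\Pi_i\Gamma_{\mathrm{G}}(h)^{-1}=\Pi_i\).

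Next, on any irreducible submodule \(L\cong W_j\), Schur's lemma makes \(\Pi_i|_L\) a scalar \(c\,I_L\). Taking traces gives
\[
c\dim_{\mathbb C}W_j=\frac{\dim_{\mathbb C}W_i}{|\mathrm{G}|}\sum_g\overline{\chi_i(g)}\chi_j(g)=\dim_{\mathbb C}W_i\,\delta_{ij},
\]
using \eqref{eq:row-character-orthogonality}. Hence \(\Pi_i\) is the identity on every copy of \(W_i\) and zero on every copy of \(W_j\) with \(j\neq i\).

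Now decompose \(V\) into irreducibles by Maschke; this is available since an invariant inner product exists by averaging. Writing \(V=\bigoplus_\alpha L_\alpha\), the previous step shows \(\Pi_i\) projects onto \(\bigoplus_{L_\alpha\cong W_i}L_\alpha\) along the remaining summands. This immediately gives \(\Pi_i^2=\Pi_i\), \(\Pi_i\Pi_j=0\) for \(i\neq j\), and \(\sum_i\Pi_i=I\).

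To see that \(\operatorname{im}\Pi_i=V^{(i)}\) and that this space is independent of the chosen decomposition, argue in two directions. Every \(L\subseteq V\) with \(L\cong W_i\) is fixed pointwise by \(\Pi_i\), by the scalar computation above, so \(V^{(i)}\subseteq\operatorname{im}\Pi_i\). Conversely, \(\operatorname{im}\Pi_i\) is a sum of copies of \(W_i\), so \(\operatorname{im}\Pi_i\subseteq V^{(i)}\). Since \(\Pi_i\) is defined intrinsically by the character formula, canonicity follows. The direct sum \(V=\bigoplus_i V^{(i)}\) then comes from the projector relations just established.

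For the identification \(V^{(i)}\cong W_i\otimes\Hom_{\mathrm{G}}(W_i,V)\), use the evaluation map \(w\otimes f\mapsto f(w)\). It is \(\mathrm{G}\)-equivariant and lands in \(V^{(i)}\). To see it is an isomorphism, note that \(\Hom_{\mathrm{G}}(W_i,V)=\Hom_{\mathrm{G}}(W_i,V^{(i)})\). If \(V^{(i)}\cong W_i^{\oplus m}\), Schur's lemma makes this Hom space \(m\)-dimensional, with basis the coordinate inclusions, and evaluation sends the corresponding tensors bijectively onto the summands.

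For part (iii), with a \(\mathrm{G}\)-invariant inner product each \(\Gamma_{\mathrm{G}}(g)\) is unitary. Then
\[
\Pi_i^\dagger=\frac{\dim_{\mathbb C}W_i}{|\mathrm{G}|}\sum_g\chi_i(g)\Gamma_{\mathrm{G}}(g^{-1})=\Pi_i
\]
after reindexing \(g\mapsto g^{-1}\). A Hermitian idempotent is an orthogonal projection, so \(\Pi_i\) is the orthogonal projection onto \(V^{(i)}\).

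The only step requiring real care is the Schur-lemma trace computation, where the conjugate and inverse conventions in \(\overline{\chi_i(g)}=\chi_i(g^{-1})\) must match \eqref{eq:row-character-orthogonality}. Everything else is bookkeeping.
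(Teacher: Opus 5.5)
Your proof is correct and is essentially Serre's argument for Theorem~8 of \S2.6, which the paper cites in place of giving its own proof. The key step, that $\Pi_i$ acts on any irreducible submodule $L\cong W_j$ as $\delta_{ij}I_L$, is exactly Serre's: $\Pi_i$ lies in the span of the $\Gamma_{\mathrm G}(g)$ and so preserves $L$, and Schur's lemma plus character orthogonality give the scalar. You then correctly read off everything else from a Maschke decomposition, and your extra arguments for $V^{(i)}\cong W_i\otimes\Hom_{\mathrm G}(W_i,V)$ via the evaluation map and for $\Pi_i^\dagger=\Pi_i$ under a $\mathrm G$-invariant inner product are also sound.
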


\begin{theorem}[Character projector for the \(\mathfrak{S}_t\)-action~\cite{sagan2001symmetric}]
\label{thm:St-character-projector}
Let
\(
\mathcal H^{\otimes t}_{d}:=(\CC^d)^{\otimes t},
\)
and fix \(\lambda\vdash t\) with \(\ell(\lambda)\leq d\). Let
\(\chi_{W_\lambda}\) be the irreducible character of the Specht module
\[
W_\lambda\cong S^\lambda.
\]
Then the orthogonal projection of \(\mathcal H^{\otimes t}_{d}\) onto the
\(W_\lambda\)-isotypic component
\(
\Phi^{-1}(V_\lambda\otimes W_\lambda),
\)
is
\begin{equation}\label{eq:charprojector}
\Pi_\lambda^{(\mathfrak{S}_t)}
=
\frac{\dim_{\mathbb{C}} W_\lambda}{t!}
\sum_{\sigma\in\mathfrak{S}_t}
\chi_{W_\lambda}(\sigma^{-1})\Gamma_{\mathrm{\mathfrak{S}_{t}}}(\sigma).
\end{equation}
Since every irreducible character of \(\mathfrak{S}_t\) is real-valued,
\[
\chi_{W_\lambda}(\sigma^{-1})
=
\chi_{W_\lambda}(\sigma),
\]
and hence
\[
\Pi_\lambda^{(\mathfrak{S}_t)}
=
\frac{\dim_{\mathbb{C}} W_\lambda}{t!}
\sum_{\sigma\in\mathfrak{S}_t}
\chi_{W_\lambda}(\sigma)\Gamma_{\mathrm{\mathfrak{S}_{t}}}(\sigma).
\]
\end{theorem}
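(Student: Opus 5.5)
The plan is to reduce the statement to the general isotypic-projector formula of Theorem~\ref{thm:serre-canonical-projection}(ii)–(iii), applied to the representation $\Gamma_{\mathfrak S_t}$ on $\mathcal H_d^{\otimes t}$. We must then identify the image of that projector with $\Phi^{-1}(V_\lambda\otimes W_\lambda)$.

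\textbf{Step 1 (the operator is an orthogonal projection onto the isotypic component).} By the convention of Remark~\ref{convention}, $\Gamma_{\mathfrak S_t}$ permutes the standard product basis $\mathcal B_{\mathcal H_d^{\otimes t}}$. Hence each $\Gamma_{\mathfrak S_t}(\sigma)$ is a permutation matrix, in particular unitary, and the standard inner product is $\mathfrak S_t$-invariant. Theorem~\ref{thm:serre-canonical-projection}(ii) with $\mathrm G=\mathfrak S_t$, $V=\mathcal H_d^{\otimes t}$ and $W_i=S^\lambda$ then shows that the operator in \eqref{eq:charprojector} is an idempotent with image the $S^\lambda$-isotypic component $V^{(\lambda)}$. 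Part (iii) shows that it is the orthogonal projection.

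\textbf{Step 2 (identification with the Schur--Weyl block).} Under the unitary equivariant isomorphism $\Phi$ of \eqref{eq:schur-weyl-decomposition}, the permutation action takes the block form \eqref{eq:S-block-action}. So $\Phi^{-1}(V_\lambda\otimes W_\lambda)$ is a direct sum of $\dim_{\mathbb C}V_\lambda$ copies of $W_\lambda\cong S^\lambda$, by \eqref{eq:S-isotypic}, and it lies in $V^{(\lambda)}$. Conversely, $\Phi^{-1}(V_\mu\otimes W_\mu)$ for $\mu\neq\lambda$ contains only copies of $S^\mu\not\cong S^\lambda$. By Schur's lemma, any $\mathfrak S_t$-map from $S^\lambda$ into $\mathcal H_d^{\otimes t}$ composed with the block projections onto $\mu\neq\lambda$ vanishes. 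Therefore every irreducible $L\cong S^\lambda$ lies in the $\lambda$-block, and $V^{(\lambda)}=\Phi^{-1}(V_\lambda\otimes W_\lambda)$. An alternative is to compute directly in the block picture: $\Phi\Pi_\lambda^{(\mathfrak S_t)}\Phi^{-1}=\bigoplus_\mu I_{V_\mu}\otimes P_{\lambda}(W_\mu)$, where $P_\lambda(W_\mu)=\delta_{\lambda\mu}I_{W_\mu}$ by the orthogonality relations \eqref{eq:row-character-orthogonality} together with Schur's lemma. The hypothesis $\ell(\lambda)\le d$ guarantees $V_\lambda\neq0$, so the block is present.

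\textbf{Step 3 (reality of characters).} For the second form of the projector we need $\chi(\sigma^{-1})=\chi(\sigma)$. Since $\sigma^{-1}$ has the same cycle type as $\sigma$, the two are conjugate in $\mathfrak S_t$, so characters, being class functions, agree on them. Substituting this into \eqref{eq:charprojector} gives the second formula.

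The only step requiring any care is Step 2, namely matching the abstractly defined isotypic component with the fixed block $\iota_\lambda(V_\lambda\otimes W_\lambda)$. The Schur's-lemma argument above handles it cleanly. The remaining steps are direct invocations of earlier results.
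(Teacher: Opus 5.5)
Your proposal is correct. The paper gives no proof of its own: it cites the result from Sagan and places it immediately after Serre's canonical-projector theorem (Theorem~\ref{thm:serre-canonical-projection}), of which it is the direct specialization to \(\Gamma_{\mathfrak{S}_t}\). Your Steps 1--3 carry out exactly that standard derivation, namely unitarity of the permutation operators, the Schur's-lemma identification of the isotypic component with \(\Phi^{-1}(V_\lambda\otimes W_\lambda)\), and the conjugacy of \(\sigma\) and \(\sigma^{-1}\) in \(\mathfrak{S}_t\).
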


\begin{theorem}[Character projector for the \(\U(d)\)-action~\cite{fulton2013representation,sagan2001symmetric}]
\label{thm:Ud-character-projector}
Fix \(\lambda\vdash t\) with \(\ell(\lambda)\leq d\), and let
\[
\chi_{V_\lambda}(U)
:=
\operatorname{Tr}(V_\lambda(U))
\]
be the character of the irreducible \(\U(d)\)-module
\[
V_\lambda\cong\mathbb{S}_\lambda(\CC^d).
\]
Let \(d\mu_{\mathrm{Haar}}\) denote normalized Haar probability measure on
\(\U(d)\). Then the orthogonal projection of
\(\mathcal H^{\otimes t}_{d}=(\CC^d)^{\otimes t}\) onto the
\(V_\lambda\)-isotypic component
\(
\Phi^{-1}(V_\lambda\otimes W_\lambda),
\)
is
\begin{align}
\Pi_\lambda^{(\U(d))}
&=
\dim_{\mathbb{C}}(V_\lambda)
\int_{\U(d)}
\chi_{V_\lambda}(U^{-1})\,
U^{\otimes t}\,
d\mu_{\mathrm{Haar}}(U) \notag\\
&=
\dim_{\mathbb{C}}(V_\lambda)
\int_{\U(d)}
\overline{\chi_{V_\lambda}(U)}\,
U^{\otimes t}\,
d\mu_{\mathrm{Haar}}(U).
\label{eq:Ud-character-projector-conjugate}
\end{align}
\end{theorem}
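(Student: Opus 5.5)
We prove that the Haar integral in \eqref{eq:Ud-character-projector-conjugate} is the orthogonal projector $\Pi_\lambda$ onto $\Phi^{-1}(V_\lambda\otimes W_\lambda)$. The plan is to work inside the unitary Schur--Weyl decomposition \eqref{eq:schur-weyl-decomposition} and reduce everything to Schur orthogonality for matrix coefficients of irreducible $\U(d)$-modules.

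First, conjugate the candidate operator by $\Phi$. Set $P:=\dim V_\lambda\int\overline{\chi_{V_\lambda}(U)}\,U^{\otimes t}\,d\mu_{\mathrm{Haar}}(U)$. By the block form \eqref{eq:U-block-action},
\[
\Phi P\Phi^{-1}=\bigoplus_{\nu\in\Lambda(d,t)}\Bigl(\dim V_\lambda\int\overline{\chi_{V_\lambda}(U)}\,V_\nu(U)\,d\mu_{\mathrm{Haar}}(U)\Bigr)\otimes I_{W_\nu}.
\]
It therefore suffices to show that the operator $Q_\nu:=\dim V_\lambda\int\overline{\chi_{V_\lambda}(U)}\,V_\nu(U)\,d\mu$ on $V_\nu$ equals $\delta_{\lambda\nu}I_{V_\nu}$.

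Second, compute $Q_\nu$ via Schur orthogonality. Fix orthonormal bases of $V_\lambda$ and $V_\nu$ in which the representations are unitary. Then $\chi_{V_\lambda}(U)=\sum_i V_\lambda(U)_{ii}$, and the standard relation is
\[
\int \overline{V_\lambda(U)_{ii}}\,V_\nu(U)_{ab}\,d\mu=\frac{\delta_{\lambda\nu}\delta_{ia}\delta_{ib}}{\dim V_\lambda}.
\]
This relation holds for pairwise inequivalent irreducible unitary representations of a compact group. The modules $V_\nu$, $\nu\in\Lambda(d,t)$, are pairwise inequivalent: they have distinct highest weights, and their restrictions to $\U(d)$ remain irreducible and inequivalent. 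Summing over $i$ gives $Q_\nu=\delta_{\lambda\nu}I_{V_\nu}$. Alternatively, one can argue without coordinates. Since the character is a class function, $Q_\nu$ commutes with every $V_\nu(g)$, so Schur's lemma gives $Q_\nu=c\,I$. Its trace is $\dim V_\lambda\,\langle\chi_{V_\nu},\chi_{V_\lambda}\rangle$, which by compact-group character orthogonality equals $\dim V_\lambda\,\delta_{\lambda\nu}$, so $c=\delta_{\lambda\nu}$. I prefer this trace argument because it is basis-free.

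Third, conclude. We obtain $\Phi P\Phi^{-1}=I_{V_\lambda}\otimes I_{W_\lambda}$ on the $\lambda$ summand and $0$ on every other summand. Hence $P$ is idempotent, has range $\Phi^{-1}(V_\lambda\otimes W_\lambda)$, and is self-adjoint because $\Phi$ is unitary, so it is the orthogonal projector. The first expression in the statement agrees with the second because $\chi(U^{-1})=\chi(U^\dagger)=\overline{\chi(U)}$ for unitary representations. We also note that $P$ coincides with $\Pi^{(\mathfrak S_t)}_\lambda$ from Theorem~\ref{thm:St-character-projector}, since both are the orthogonal projector onto the same subspace.

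The main point requiring care is the inequivalence of the $V_\nu$ as $\U(d)$-modules rather than as $\GL_d$-modules, together with the orthogonality of characters on $\U(d)$. For this we would invoke the stated irreducibility of the restrictions and the fact that distinct highest weights give distinct characters. Two characters differing on the diagonal torus differ as functions on $\U(d)$, and the Peter--Weyl/Schur orthogonality relations then give $\langle\chi_{V_\nu},\chi_{V_\lambda}\rangle=\delta_{\lambda\nu}$.
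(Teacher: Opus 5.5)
Your proof is correct. The paper gives no proof of this theorem: it imports the result by citation from Fulton--Harris and Sagan. Your argument is the standard textbook one. You conjugate by $\Phi$ using \eqref{eq:U-block-action}, then show that each block operator $Q_\nu$ equals $\delta_{\lambda\nu}I_{V_\nu}$ by Schur's lemma and character orthogonality. This is the compact-group analogue of the finite-group projector formula in Theorem~\ref{thm:serre-canonical-projection}, and it is the same blockwise Schur's-lemma mechanism the paper uses in Lemma~\ref{lem:blockwise-collective-Haar-twirl}.

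One simplification is available. The point you flag as delicate is that the $V_\nu$, $\nu\in\Lambda(d,t)$, are irreducible and pairwise inequivalent as $\U(d)$-modules. This needs no highest-weight or torus-character argument. The double-centralizer statement recalled in \S\ref{subsec:SW-decomposition} already says that the multiplicity spaces $V_\nu\cong\Hom_{\mathfrak{S}_t}(S^\nu,\mathcal{H}^{\otimes t}_{d})$ are pairwise non-isomorphic irreducible modules over $\mathcal{A}_{\U(d)}=\operatorname{span}_{\CC}\{U^{\otimes t}\}$. A $\U(d)$-intertwiner, or a $\U(d)$-invariant subspace, is exactly an $\mathcal{A}_{\U(d)}$-module map, or submodule. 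So both facts follow directly from Schur--Weyl duality as stated in the paper.
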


\begin{corollary}~\cite{fulton2013representation,sagan2001symmetric}\label{cor:equalityofprojectors}
For every \(\lambda\in\Lambda(d,t)\),
\[
\Pi_\lambda^{(\mathfrak{S}_t)}
=
\Pi_\lambda^{(\U(d))}.
\]
Both operators are the orthogonal projection of
\(\mathcal H^{\otimes t}_{d}\) onto the Schur-Weyl summand
\[
\mathcal H_\lambda
:=
\Phi^{-1}(V_\lambda\otimes W_\lambda)
\subseteq\mathcal H^{\otimes t}_{d}.
\]
Consequently,
\begin{align}\label{eq:projsum}
I_{\mathcal H^{\otimes t}_{d}}
=
\sum_{\lambda\in\Lambda(d,t)}\Pi_\lambda,
\qquad
\Pi_\lambda\Pi_\mu
=
\delta_{\lambda\mu}\Pi_\lambda.
\end{align}
\end{corollary}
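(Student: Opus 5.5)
The statement just before this point is Corollary~\ref{cor:equalityofprojectors}. It asserts that the $\mathfrak S_t$ character projector and the $\mathrm U(d)$ character projector coincide, and that both equal the orthogonal projection onto $\mathcal H_\lambda=\Phi^{-1}(V_\lambda\otimes W_\lambda)$. I would prove it by conjugating both projectors through the unitary equivariant isomorphism $\Phi$ of~\eqref{eq:schur-weyl-decomposition}. I would then use the block forms~\eqref{eq:U-block-action} and~\eqref{eq:S-block-action} to compute each projector one summand $V_\mu\otimes W_\mu$ at a time.

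\textbf{Symmetric-group side.} By~\eqref{eq:S-block-action},
\[
\Phi\,\Pi^{(\mathfrak S_t)}_\lambda\,\Phi^{-1}=\bigoplus_{\mu\in\Lambda(d,t)} I_{V_\mu}\otimes\Bigl(\tfrac{\dim W_\lambda}{t!}\sum_{\sigma}\chi_{W_\lambda}(\sigma^{-1})W_\mu(\sigma)\Bigr).
\]
The operator in parentheses commutes with $W_\mu(\mathfrak S_t)$, so by Schur's lemma it is a scalar on the irreducible $W_\mu$. Taking its trace and applying row orthogonality~\eqref{eq:row-character-orthogonality}, that scalar is $\delta_{\lambda\mu}$. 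Hence the right-hand side is $I_{V_\lambda}\otimes I_{W_\lambda}$ on the $\lambda$ summand and $0$ elsewhere. Equivalently, Theorem~\ref{thm:serre-canonical-projection}(iii) applies directly, because $\Gamma_{\mathfrak S_t}$ is unitary and the $W_\lambda$-isotypic component is exactly $\mathcal H_\lambda$ by~\eqref{eq:S-isotypic}.

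\textbf{Unitary-group side.} I run the same argument with $\mathrm U(d)$ in place of $\mathfrak S_t$. Using~\eqref{eq:U-block-action}, the block on the $\mu$ summand is
\[
\dim V_\lambda\int\overline{\chi_{V_\lambda}(U)}\,V_\mu(U)\,d\mu_{\mathrm{Haar}}(U)\otimes I_{W_\mu}.
\]
By invariance of Haar measure under conjugation, this integral commutes with $V_\mu(\mathrm U(d))$, so Schur's lemma makes it a scalar. The trace reduces to the Schur orthogonality relation for compact-group characters, $\int\overline{\chi_{V_\lambda}}\chi_{V_\mu}\,d\mu_{\mathrm{Haar}}=\delta_{\lambda\mu}$. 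This relation holds because the $V_\mu$ for distinct $\mu\in\Lambda(d,t)$ are pairwise non-isomorphic irreducible $\mathrm U(d)$-modules, having distinct highest weights. So the scalar is $\delta_{\lambda\mu}$, and $\Phi\Pi^{(\mathrm U(d))}_\lambda\Phi^{-1}$ is again the identity on the $\lambda$ summand and zero elsewhere.

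\textbf{Conclusion and main obstacle.} Since $\Phi$ is unitary, both operators equal $\Phi^{-1}(0\oplus\cdots\oplus I_{V_\lambda\otimes W_\lambda}\oplus\cdots\oplus 0)\Phi$, which is the orthogonal projection onto $\mathcal H_\lambda$. The relations~\eqref{eq:projsum} then follow from the direct-sum decomposition: summing over $\lambda$ gives $\Phi^{-1}I\,\Phi=I$, and the products of distinct blocks vanish. The main point needing care is the compact-group orthogonality on the $\mathrm U(d)$ side, in particular the non-isomorphism of the $V_\mu$ for distinct $\mu$. I would cite Peter--Weyl / Schur orthogonality from~\cite{simon1996representations} rather than rederive it.
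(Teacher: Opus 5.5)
Your proof is correct. The paper gives no separate argument. It cites \cite{fulton2013representation,sagan2001symmetric} and treats the statement as immediate from Theorems~\ref{thm:St-character-projector} and~\ref{thm:Ud-character-projector}. Each of those theorems already asserts that its character projector is the orthogonal projection onto $\Phi^{-1}(V_\lambda\otimes W_\lambda)$. Two orthogonal projections with the same range coincide, and \eqref{eq:projsum} is then just the orthogonal direct-sum decomposition of $\mathcal H_d^{\otimes t}$.

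Your block computation through $\Phi$ uses Schur's lemma on each summand and character orthogonality to pin the scalar to $\delta_{\lambda\mu}$. In effect it reproves those two theorems rather than invoking them. Your aside that Theorem~\ref{thm:serre-canonical-projection}(iii) applies directly is exactly the paper's route on the $\mathfrak S_t$ side.

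Your version buys two things. First, it derives, rather than asserts, that the $W_\lambda$-isotypic component for $\mathfrak S_t$ and the $V_\lambda$-isotypic component for $\mathrm U(d)$ are the same summand. Second, it shows exactly where pairwise inequivalence enters:
\begin{itemize}
\item inequivalence of the $W_\mu$ is needed for row orthogonality;
\item inequivalence of the $V_\mu$ is needed for $\int\overline{\chi_{V_\lambda}}\,\chi_{V_\mu}\,d\mu_{\mathrm{Haar}}=\delta_{\lambda\mu}$, which follows by taking traces in Theorem~\ref{thm:Schur-orthogonality-prelim}.
\end{itemize}
The paper's route is shorter but rests entirely on the cited isotypic-projector theorems.
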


\begin{corollary}\label{cor:schur-integral}
Let
\[
\Gamma_{\U(d),V_\lambda} \colon \U(d)\longrightarrow\U(V_\lambda)
\]
be a finite-dimensional irreducible unitary representation, and let
\[
V_\lambda(U):=\Gamma_{\U(d),V_\lambda}(U).
\]
Let \(d\mu_{\mathrm{Haar}}\) be normalized Haar probability measure on
\(\U(d)\). Then, for every \(A\in\End(V_\lambda)\),
\begin{equation}
\int_{\U(d)}
V_\lambda(U)A V_\lambda(U)^\dagger\,
d\mu_{\mathrm{Haar}}(U)
=
\frac{\Tr(A)}{\dim V_\lambda}I_{V_\lambda}.
\label{eq:schur-integral}
\end{equation}
\end{corollary}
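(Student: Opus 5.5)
The statement to prove is Corollary~\ref{cor:schur-integral}: for an irreducible unitary representation, the Haar average of $V_\lambda(U)AV_\lambda(U)^\dagger$ equals $\frac{\Tr(A)}{\dim V_\lambda}I_{V_\lambda}$. I would prove it by Schur's lemma applied to the averaged operator.

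Fix $A\in\End(V_\lambda)$ and set
\[
M:=\int_{\U(d)}V_\lambda(U)AV_\lambda(U)^\dagger\,d\mu_{\mathrm{Haar}}(U).
\]
This integral is well defined, since the integrand is continuous in $U$ and $\U(d)$ is compact.

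First I show that $M$ commutes with the representation. For any $W\in\U(d)$,
\[
V_\lambda(W)MV_\lambda(W)^\dagger=\int V_\lambda(WU)AV_\lambda(WU)^\dagger\,d\mu_{\mathrm{Haar}}(U),
\]
using the homomorphism property and unitarity, so that $V_\lambda(W)V_\lambda(U)=V_\lambda(WU)$ and $V_\lambda(U)^\dagger V_\lambda(W)^\dagger=V_\lambda(WU)^\dagger$. Left invariance of Haar measure, via the substitution $U\mapsto W^{-1}U$, shows that this integral equals $M$. Hence $V_\lambda(W)M=MV_\lambda(W)$ for all $W$.

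Second, since $V_\lambda$ is irreducible over $\CC$ and finite-dimensional, Schur's lemma gives $M=cI_{V_\lambda}$ for some scalar $c$. The argument is that any eigenspace of $M$ is a nonzero invariant subspace, so it must be all of $V_\lambda$.

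Third, I determine $c$ by taking traces. By linearity of the trace through the integral and cyclicity, $\Tr(V_\lambda(U)AV_\lambda(U)^\dagger)=\Tr(A)$ for each $U$. Since $\mu_{\mathrm{Haar}}$ is a probability measure, $\Tr M=\Tr A$. On the other hand $\Tr(cI)=c\dim V_\lambda$, so $c=\Tr(A)/\dim V_\lambda$.

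No step is a real obstacle. The only points to check are the invariance of Haar measure under translation, and that the representation is unitary, so that $V_\lambda(U)^\dagger=V_\lambda(U^{-1})$. Both are part of the setup.
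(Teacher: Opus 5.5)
Your proof is correct and matches the paper's appendix argument step for step: left invariance of Haar measure shows the averaged operator commutes with every $V_\lambda(W)$, Schur's lemma makes it a scalar, and taking traces gives the scalar $\operatorname{Tr}(A)/\dim V_\lambda$. Your eigenspace argument for Schur's lemma is a harmless self-contained addition that the paper simply cites.
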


\begin{proof}
See Appendix~\ref{cor:schur-integral}.
\end{proof}

\subsection{Symmetric, antisymmetric, and irreducible subspaces}
\label{subsec:symmetric-antisymmetric-irrep-subspaces}

\subsubsection{The symmetric subspace}
\label{subsubsec:symmetric-subspace-corrected}

We use the permutation action \(\Gamma_{\mathfrak{S}_{t}}(\sigma)\) from
\eqref{eq:permaction}. The \(t\)-fold symmetric tensor power
is
\begin{equation}
 \operatorname{Sym}^t(\mathbb C^d)
 :=\{\psi\in\mathcal H^{\otimes t}_{d}\colon\Gamma_{\mathfrak{S}_{t}}(\sigma)\psi=\psi
                            \text{ for every }\sigma\in\mathfrak S_t\}.
 \label{eq:symmetric-subspace-definition-corrected}
\end{equation}
Define the symmetrizer
\begin{equation}
 \Pi_{\mathrm{sym}}^{(d,t)}
 :=\frac{1}{t!}\sum_{\sigma\in\mathfrak S_t}\Gamma_{\mathfrak{S}_{t}}(\sigma).
 \label{eq:symmetrizer-corrected}
\end{equation}

\begin{proposition}~\cite{fulton2013representation,simon1996representations}
\label{prop:symmetrizer-corrected}
The operator \(\Pi_{\mathrm{sym}}^{(d,t)}\) is the orthogonal projector
onto the symmetric subspace. In addition,
\begin{equation}
 \dim_{\mathbb{C}}\big(\operatorname{Sym}^t(\mathbb C^d)\big)
 =\binom{d+t-1}{t}.
 \label{eq:symmetric-dimension-corrected}
\end{equation}
\end{proposition}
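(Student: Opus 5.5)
The statement to prove is Proposition~\ref{prop:symmetrizer-corrected}. It has two parts: $\Pi_{\mathrm{sym}}^{(d,t)}$ is the orthogonal projector onto $\operatorname{Sym}^t(\mathbb C^d)$, and the dimension formula \eqref{eq:symmetric-dimension-corrected}. I will first verify the three defining properties of an orthogonal projector with the prescribed range, and then count dimensions using an explicit basis.

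\emph{Projector properties.} Convention (a) of Remark~\ref{convention} makes $\Gamma_{\mathfrak S_t}$ a group homomorphism. For any $\tau\in\mathfrak S_t$, reindexing the sum by $\sigma\mapsto\tau\sigma$ gives
\[
\Gamma_{\mathfrak S_t}(\tau)\,\Pi_{\mathrm{sym}}^{(d,t)}=\Pi_{\mathrm{sym}}^{(d,t)} .
\]
Three consequences follow.
\begin{enumerate}
\item Every vector in the range of $\Pi_{\mathrm{sym}}^{(d,t)}$ is invariant under all permutations, so the range lies in $\operatorname{Sym}^t(\mathbb C^d)$.
\item Averaging the displayed identity over $\tau$ gives $(\Pi_{\mathrm{sym}}^{(d,t)})^2=\Pi_{\mathrm{sym}}^{(d,t)}$.
\item If $\psi$ is symmetric, each term of the sum fixes $\psi$, so $\Pi_{\mathrm{sym}}^{(d,t)}\psi=\psi$. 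Hence the range is exactly the symmetric subspace.
\end{enumerate}
For self-adjointness, each $\Gamma_{\mathfrak S_t}(\sigma)$ permutes the orthonormal product basis $\mathcal B_{\mathcal H_d^{\otimes t}}$, so it is unitary. Therefore $\Gamma_{\mathfrak S_t}(\sigma)^\dagger=\Gamma_{\mathfrak S_t}(\sigma^{-1})$. Because $\sigma\mapsto\sigma^{-1}$ is a bijection of $\mathfrak S_t$, the sum is Hermitian. A Hermitian idempotent is an orthogonal projector, which proves the first claim.

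\emph{Dimension.} The permutation group $\mathfrak S_t$ acts on $[d]^t$, and $\Gamma_{\mathfrak S_t}(\sigma)$ maps basis vectors to basis vectors according to this action. A vector $\psi=\sum_{\mathbf j}a_{\mathbf j}\,e_{\mathbf j}$ is therefore symmetric if and only if its coefficient function $a_{\mathbf j}$ is constant on each $\mathfrak S_t$-orbit of $[d]^t$.

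It follows that the orbit sums $\sum_{\mathbf j\in O}e_{\mathbf j}$, one for each orbit $O$, form a basis of $\operatorname{Sym}^t(\mathbb C^d)$. They are linearly independent (indeed mutually orthogonal) because distinct orbits are disjoint.

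Each orbit is determined by its occupation numbers $(n_1,\dots,n_d)$ with $n_i\ge0$ and $\sum_i n_i=t$. By the stars-and-bars count there are $\binom{d+t-1}{t}$ such tuples, which gives \eqref{eq:symmetric-dimension-corrected}.

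The only point needing care is the homomorphism property, which makes the reindexing legitimate; under convention (a) it holds by Remark~\ref{convention}. As a cross-check, the dimension also equals the Weyl dimension of $V_{(t)}$ in the Schur--Weyl decomposition, since $\dim W_{(t)}=1$.
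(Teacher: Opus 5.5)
Your argument is correct and complete. The paper gives no proof of its own. It cites Fulton--Harris and Simon, and the surrounding text places the statement inside the Schur--Weyl framework: $\Pi_{\mathrm{sym}}^{(d,t)}$ is the character projector of Theorem~\ref{thm:St-character-projector} for $\lambda=(t)$, where the character is trivial and $\dim W_{(t)}=1$, and $\operatorname{Sym}^t(\mathbb C^d)=\mathcal H_{(t)}\cong V_{(t)}$ by \eqref{eq:symmetric-schur-weyl-block}.

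On that route, the projector claim comes from Theorem~\ref{thm:serre-canonical-projection}(ii)--(iii). The trivial-isotypic component is exactly the space of invariant vectors, and the standard inner product is invariant because permutations act unitarily. The dimension is then $\dim V_{(t)}$ from the Weyl dimension formula, $\prod_{j=2}^{d}\frac{t+j-1}{j-1}=\binom{d+t-1}{t}$.

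Your route is more elementary. Group averaging gives idempotence, Hermiticity and the exact range directly. The dimension comes from the $\mathfrak S_t$-orbit decomposition of $[d]^t$ and a stars-and-bars count, with no appeal to characters, Schur--Weyl duality or Weyl's formula. It also produces an explicit orthogonal basis of orbit sums, the occupation-number basis.

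What the paper's framing buys is the identification $\Pi_{\mathrm{sym}}^{(d,t)}=\Pi_{(t)}$ with a Schur--Weyl block projector, which is what is used later. Examples are Corollary~\ref{cor:active-block-injectivity} and Proposition~\ref{prop:symmetric-privacy-heredity}. Your closing cross-check is precisely that identification.
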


In Schur-Weyl notation, \(W_{(t)}\) is the one-dimensional trivial
representation of \(\mathfrak S_t\), and
\begin{equation}
 \operatorname{Sym}^t(\mathbb C^d)
 =\mathcal H_{(t)}
 \cong V_{(t)}\otimes W_{(t)}
 \cong V_{(t)}.
 \label{eq:symmetric-schur-weyl-block}
\end{equation}
Formula \eqref{eq:symmetrizer-corrected} is the character
projector \eqref{eq:charprojector} for \(\lambda=(t)\).

If \(\mathcal D(\operatorname{Sym}^t(\mathbb C^d))\) denotes the density
operators supported on the symmetric subspace, then
\[
 \mathcal D(\operatorname{Sym}^t(\mathbb C^d))
 \subseteq\operatorname{span}_{\mathbb R}
 \{(\ket{\phi}\!\bra{\phi})^{\otimes t}:
             \phi\in\mathbb C^d,\ \|\phi\|=1\}.
\]
The coefficients in such a real linear combination need not be nonnegative.

\subsubsection{The antisymmetric subspace}
\label{subsubsec:antisymmetric-subspace-corrected}

Let \(\operatorname{sgn}:\mathfrak S_t \longrightarrow \{\pm1\}\) be the sign
character. The \(t\)-fold antisymmetric tensor power is
\begin{equation}
 \operatorname{Alt}^t(\mathbb C^d)
 :=\{\psi\in\mathcal H^{\otimes t}_{d}:\Gamma_{\mathfrak{S}_{t}}(\sigma)\psi
       =\operatorname{sgn}(\sigma)\psi
       \text{ for every }\sigma\in\mathfrak S_t\}.
 \label{eq:antisymmetric-subspace-definition}
\end{equation}
It is canonically identified with the exterior power
\(\bigwedge^t\mathbb C^d\). Define the antisymmetrizer
\begin{equation}
 \Pi_{\mathrm{alt}}^{(d,t)}
 :=\frac{1}{t!}\sum_{\sigma\in\mathfrak S_t}
       \operatorname{sgn}(\sigma)\Gamma_{\mathfrak{S}_{t}}(\sigma).
 \label{eq:antisymmetrizer}
\end{equation}

\begin{proposition}~\cite{fulton2013representation,simon1996representations}
\label{prop:antisymmetrizer}
The operator \(\Pi_{\mathrm{alt}}^{(d,t)}\) is the orthogonal projector
onto the antisymmetric subspace. Moreover,
\begin{equation}
 \dim_{\mathbb{C}}\operatorname{Alt}^t(\mathbb C^d)
 =
 \begin{cases}
   \binom dt,&1\leq t\leq d,\\
   0,&t>d.
 \end{cases}
 \label{eq:antisymmetric-dimension}
\end{equation}
When \(t\leq d\), define, for \(1\leq i_1<\cdots<i_t\leq d\),
\begin{equation}
\begin{split}
 \ket{e_{i_1}\wedge\cdots\wedge e_{i_t}}
 :={}&\frac{1}{\sqrt{t!}}\sum_{\sigma\in\mathfrak S_t}
       \operatorname{sgn}(\sigma){}\times
       \ket{e_{i_{\sigma^{-1}(1)}}}\otimes\cdots\otimes
       \ket{e_{i_{\sigma^{-1}(t)}}}.
\end{split}
 \label{eq:normalized-wedge-vector}
\end{equation}
Then an orthonormal basis is
\begin{equation}
 \left\{\ket{e_{i_1}\wedge\cdots\wedge e_{i_t}}:
       1\leq i_1<\cdots<i_t\leq d\right\}.
 \label{eq:wedge-orthonormal-basis}
\end{equation}
\end{proposition}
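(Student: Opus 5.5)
Everything reduces to the definition of \(\operatorname{Alt}^t(\mathbb C^d)\) as the sign-isotypic vectors together with the homomorphism property of \(\Gamma_{\mathfrak S_t}\) (convention (a), Remark~\ref{convention}). For the projector claim, put \(P:=\Pi_{\mathrm{alt}}^{(d,t)}\). For \(\tau\in\mathfrak S_t\), reindex \(\sigma\mapsto\tau\sigma\) and use \(\operatorname{sgn}(\tau\sigma)=\operatorname{sgn}(\tau)\operatorname{sgn}(\sigma)\). This gives \(\Gamma_{\mathfrak S_t}(\tau)P=\operatorname{sgn}(\tau)P\), so \(\operatorname{im}P\subseteq\operatorname{Alt}^t(\mathbb C^d)\). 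Conversely, if \(\psi\) is antisymmetric then \(P\psi=\frac1{t!}\sum_\sigma\operatorname{sgn}(\sigma)^2\psi=\psi\). Hence \(P^2=P\) and \(\operatorname{im}P=\operatorname{Alt}^t(\mathbb C^d)\). For orthogonality, note that \(\Gamma_{\mathfrak S_t}(\sigma)\) permutes the orthonormal product basis \(\mathcal B_{\mathcal H_d^{\otimes t}}\) and is therefore unitary. Thus \(\Gamma_{\mathfrak S_t}(\sigma)^\dagger=\Gamma_{\mathfrak S_t}(\sigma^{-1})\), and since \(\operatorname{sgn}(\sigma^{-1})=\operatorname{sgn}(\sigma)\), reindexing gives \(P^\dagger=P\). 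Alternatively, \(P\) is the character projector of Theorem~\ref{thm:serre-canonical-projection}(ii)--(iii) for the one-dimensional sign character, and the invariant inner product makes it orthogonal.

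For the basis and the dimension, I would first observe that \(P\) maps basis vectors to basis vectors up to sign. Concretely, for a multi-index \(J=(j_1,\dots,j_t)\), applying a transposition that swaps two equal entries gives \(P\ket{e_J}=-P\ket{e_J}\) whenever some entry repeats, so \(P\ket{e_J}=0\) in that case. If all entries are distinct, sort them as \(i_1<\dots<i_t\) with \(J\) obtained from \((i_1,\dots,i_t)\) by some \(\pi\). Then \(P\ket{e_J}=\pm P\ket{e_{i_1}\otimes\cdots\otimes e_{i_t}}=\pm\frac1{\sqrt{t!}}\ket{e_{i_1}\wedge\cdots\wedge e_{i_t}}\), where I compare directly with \eqref{eq:normalized-wedge-vector}. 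Since \(P\) is surjective onto the antisymmetric subspace, these wedge vectors span it. For \(t>d\) every multi-index repeats, so \(P=0\) and the dimension is \(0\).

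It remains to show orthonormality for \(t\le d\). Each wedge vector is a sum of \(t!\) distinct basis tensors with coefficients \(\pm1/\sqrt{t!}\), so its norm is \(1\). Two different increasing tuples have different underlying sets, so their supports in the product basis are disjoint, and the corresponding wedge vectors are orthogonal. Counting the increasing tuples gives \(\binom dt\).

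The only step I expect to need care is the sign bookkeeping under convention (a). I must check that the ordering \(\ket{e_{i_{\sigma^{-1}(1)}}}\otimes\cdots\otimes\ket{e_{i_{\sigma^{-1}(t)}}}\) in \eqref{eq:normalized-wedge-vector} is exactly \(\Gamma_{\mathfrak S_t}(\sigma)(\ket{e_{i_1}}\otimes\cdots\otimes\ket{e_{i_t}})\). That identity gives \(\ket{e_{i_1}\wedge\cdots\wedge e_{i_t}}=\sqrt{t!}\,P(\ket{e_{i_1}}\otimes\cdots\otimes\ket{e_{i_t}})\) on the nose, and it is immediate from \eqref{eq:permaction}. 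Everything else is routine.
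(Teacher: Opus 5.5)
Your proof is correct and complete. The paper does not prove this proposition. It cites Fulton--Harris and Simon, and afterwards only remarks that \(\Pi_{\mathrm{alt}}^{(d,t)}\) is the character projector \eqref{eq:charprojector} for \(\lambda=(1^t)\); that is the route you list as an alternative via Theorem~\ref{thm:serre-canonical-projection}(ii)--(iii). That route gives idempotence, self-adjointness and \(\operatorname{im}P=\operatorname{Alt}^t(\mathbb C^d)\) at once, because the sign representation is one-dimensional and so its isotypic component is exactly \(\{\psi:\Gamma_{\mathfrak S_t}(\sigma)\psi=\operatorname{sgn}(\sigma)\psi\}\). However, it says nothing about the dimension count or the orthonormal basis. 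Your direct evaluation of \(P\) on the product basis supplies both and keeps the argument self-contained. One small bookkeeping point: two of your steps use the right-handed relation \(P\,\Gamma_{\mathfrak S_t}(\tau)=\operatorname{sgn}(\tau)P\), not the left-handed one you derived. These are the vanishing of \(P\ket{e_J}\) for a repeated index and the identity \(P\ket{e_J}=\operatorname{sgn}(\pi)\,P\bigl(\ket{e_{i_1}}\otimes\cdots\otimes\ket{e_{i_t}}\bigr)\). The right-handed relation follows from the same reindexing \(\sigma\mapsto\sigma\tau\), or by taking adjoints with \(P^\dagger=P\) and \(\Gamma_{\mathfrak S_t}(\tau)^\dagger=\Gamma_{\mathfrak S_t}(\tau^{-1})\).
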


If \(t\leq d\), then \(W_{(1^t)}\) is the one-dimensional sign
representation and \(V_{(1^t)}\cong\bigwedge^t\mathbb C^d\). Hence
\begin{equation}
 \operatorname{Alt}^t(\mathbb C^d)
 =\mathcal H_{(1^t)}
 \cong V_{(1^t)}\otimes W_{(1^t)}
 \cong\bigwedge^t\mathbb C^d.
 \label{eq:antisymmetric-schur-weyl-block}
\end{equation}
Formula \eqref{eq:antisymmetrizer} is the character projector
\eqref{eq:charprojector} for \(\lambda=(1^t)\). If
\(t>d\), then \((1^t)\notin\Lambda(d,t)\), consistently with the vanishing
of the exterior power.

\begin{theorem}[Symmetric-group branching rule~\cite{sagan2001symmetric,fulton2013representation}]
\label{thm:branching-prelim}
For every
\(
    \lambda\vdash s
\),
the restriction of \(W_\lambda\) to the natural subgroup
\(
    \mathfrak{S}_{s-1}\subset \mathfrak{S}_s
\)
is multiplicity-free:
\begin{equation}
\label{eq:branching-prelim}
    W_\lambda
    \downarrow_{\mathfrak{S}_{s-1}}
    \cong
    \bigoplus_{\mu\nearrow\lambda}
       W_\mu .
\end{equation}
Under the Young--Yamanouchi basis, the summand associated with
\(
    \mu\nearrow\lambda
\)
is spanned by those
\(
    |T\rangle
\)
for which
\[
    \operatorname{shape}
    \left(
       T\!\downarrow_{s-1}
    \right)
    =
    \mu.
\]
\end{theorem}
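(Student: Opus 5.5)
The statement to prove is the branching rule (Theorem~\ref{thm:branching-prelim}): restricting $W_\lambda$ to $\mathfrak S_{s-1}$ gives a multiplicity-free sum of the $W_\mu$ with $\mu\nearrow\lambda$, and in the Young--Yamanouchi basis the $\mu$-summand is spanned by the $|T\rangle$ with $\operatorname{shape}(T\!\downarrow_{s-1})=\mu$. I will argue by induction on $s$, using the Jucys--Murphy elements and Theorem~\ref{thm:YY-prelim}. The key fact is that $J_1,\dots,J_{s-1}$ lie in $\mathbb C[\mathfrak S_{s-1}]$ and generate the Gelfand--Tsetlin subalgebra of $\mathfrak S_{s-1}$. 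The contents $c_T(1),\dots,c_T(s-1)$ depend only on the truncated tableau $T\!\downarrow_{s-1}$.

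\textbf{Step 1: each $T$ with a given truncation generates a copy of $W_\mu$.} Fix $\mu\nearrow\lambda$. Let $E_\mu:=\operatorname{span}\{|T\rangle : \operatorname{shape}(T\!\downarrow_{s-1})=\mu\}$. The restriction $T\mapsto T\!\downarrow_{s-1}$ is a bijection from this index set onto $\operatorname{SYT}(\mu)$. I would show $E_\mu$ is $\mathfrak S_{s-1}$-stable. The Coxeter generators $(k\,k{+}1)$ with $k\le s-2$ act in the seminormal form only through the contents $c_T(k)$ and $c_T(k+1)$. They map $|T\rangle$ into $\operatorname{span}\{|T\rangle,|(k\,k{+}1)T\rangle\}$, and swapping entries $k,k+1\le s-1$ does not move the box containing $s$. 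If I want to avoid quoting the explicit seminormal formulas, I would instead use the spectral characterization: a vector in $W_\lambda$ is determined up to scalar by its joint JM eigenvalues, the $J_r$ for $r\le s-1$ commute with $\mathfrak S_{s-1}$ only up to lower JM elements, and commutativity of $J_s$ with $\mathfrak S_{s-1}$ shows that the $J_s$-eigenspaces are $\mathfrak S_{s-1}$-invariant. This second route is cleaner. The eigenvalue of $J_s$ on $|T\rangle$ is the content of the box containing $s$. Distinct removable boxes have distinct contents, since they lie on distinct diagonals. Hence the $J_s$-eigenspaces are exactly the spaces $E_\mu$, and each is an $\mathfrak S_{s-1}$-submodule.

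\textbf{Step 2: identify $E_\mu\cong W_\mu$.} The joint spectrum of $J_1,\dots,J_{s-1}$ on $E_\mu$ is simple, with content vectors equal to those of $\operatorname{SYT}(\mu)$. By the induction hypothesis applied to $W_\mu$ via Theorem~\ref{thm:YY-prelim}, the irreducible $\mathfrak S_{s-1}$-modules are distinguished by their Gelfand--Tsetlin content spectra: the content vectors of distinct shapes are disjoint, because the shape can be read off the contents. Now decompose $E_\mu$ into irreducible constituents. Each constituent's content spectrum is a subset of the spectrum of $E_\mu$, and that spectrum equals the spectrum of $W_\mu$, in which every content vector occurs exactly once. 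Therefore $E_\mu\cong W_\mu$. The dimension check $f^\lambda=\sum_{\mu\nearrow\lambda}f^\mu$ is consistent with this.

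\textbf{Step 3: conclude multiplicity-freeness.} Different $\mu$ give non-isomorphic modules, so $W_\lambda\downarrow_{\mathfrak S_{s-1}}=\bigoplus_{\mu\nearrow\lambda}E_\mu$ is multiplicity-free.

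\textbf{Main obstacle.} The hard part is making Theorem~\ref{thm:YY-prelim} compatible across levels, so that the JM spectrum actually characterizes the isomorphism type. This is circular unless the Okounkov--Vershik result is taken as known: the Gelfand--Tsetlin algebra is maximal commutative, and its spectrum is indexed by paths in the Young graph. I would first try to cite that as part of the Young--Yamanouchi theorem. Failing that, I would fall back on the classical character proof: use Frobenius reciprocity and the Murnaghan--Nakayama or Pieri rule, $\operatorname{Ind}_{\mathfrak S_{s-1}}^{\mathfrak S_s}W_\mu=\bigoplus_{\lambda\searrow\mu}W_\lambda$, and deduce the basis statement afterwards from Step 1.
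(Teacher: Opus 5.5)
You take a genuinely different route from the paper. The paper gives no proof of Theorem~\ref{thm:branching-prelim} and only cites Sagan and Fulton--Harris. In those sources the module statement is proved classically. Sagan filters $W_\lambda$ by spans of standard polytabloids according to which removable box contains $s$, maps each successive quotient onto $W_\mu$ by deleting $s$, and closes with the count $f^\lambda=\sum_{\mu\nearrow\lambda}f^\mu$. Fulton--Harris derive it character-theoretically. The Young--Yamanouchi (Gelfand--Tsetlin) basis is then usually \emph{defined} by iterating this multiplicity-free restriction, so the second sentence of the theorem is essentially definitional there.

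You instead take Theorem~\ref{thm:YY-prelim} as the only input, and your Steps~1--3 are correct. $J_s$ commutes with $\mathfrak S_{s-1}$, its eigenvalue $c_T(s)$ is the content of the removable box holding $s$, and distinct removable boxes have distinct contents. Hence the $E_\mu$ are exactly the eigenspaces of $W_\lambda(J_s)$ and are $\mathfrak S_{s-1}$-submodules. Applying Theorem~\ref{thm:YY-prelim} at level $s-1$ to each irreducible constituent of $E_\mu$, and using that a content vector determines its tableau and hence its shape, forces a single constituent isomorphic to $W_\mu$.

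The gain is that both sentences of the theorem come out of one spectral statement. So does the identity $W_\lambda(J_s)=\sum_{\mu\nearrow\lambda}\kappa_{\lambda/\mu}\,q_{\mu\leftarrow\lambda}^{\dagger}q_{\mu\leftarrow\lambda}$, which the paper later invokes in the proof of Theorem~\ref{thm:any-recipient}. The cost is foundational independence. The standard proofs of Theorem~\ref{thm:YY-prelim} (e.g.\ Okounkov--Vershik) themselves pass through multiplicity-free branching. Your argument is non-circular only because the paper states Theorem~\ref{thm:YY-prelim} as an independent preliminary for every partition.

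Several of your side remarks should be trimmed.
\begin{itemize}
\item No induction is actually used: Step~2 only needs Theorem~\ref{thm:YY-prelim} at level $s-1$.
\item Neither the generation of the Gelfand--Tsetlin algebra by $J_1,\dots,J_{s-1}$ nor its maximal commutativity enters. The ``main obstacle'' you describe therefore does not arise, since Step~2 already shows that the joint content spectrum determines the isomorphism class.
\item The sentence claiming that $J_r$, $r\le s-1$, commute with $\mathfrak S_{s-1}$ ``up to lower JM elements'' is inaccurate; for example $(i\;i{+}1)J_i(i\;i{+}1)=J_{i+1}-(i\;i{+}1)$. It is also unused, since only $J_s$ needs to commute with $\mathfrak S_{s-1}$.
\item Your fallback is incomplete as stated. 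Frobenius reciprocity plus Step~1 shows that each $E_\mu$ is a single isotypic component of dimension $f^\mu$, but not which one; for $\lambda=(2,1)$ both branches are one-dimensional. You would still need Step~2, or the identity $J_s=Z_s-Z_{s-1}$, where $Z_n$ is the central sum of all transpositions of $\mathfrak S_n$ and acts on $W_\nu$ by the total content of $\nu$.
\end{itemize}
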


Because
\eqref{eq:branching-prelim}
is multiplicity-free, every
\(W_\mu\)-summand is canonical as an \(\mathfrak{S}_{s-1}\)-isotypic subspace of the restricted module. We therefore fix the corresponding orthogonal branching projection
\begin{equation}
\label{eq:q-branch-prelim}
    q_{\mu\leftarrow\lambda}
    :
    W_\lambda
    \longrightarrow
    W_\mu,
    \qquad
    \mu\nearrow\lambda.
\end{equation}

\begin{remark}
\label{rem:tableau-path-prelim}
A standard Young tableau
\(
    T\in\operatorname{SYT}(\lambda)
\)
can be realized as a path
\[
    (1)
    =
    \lambda^{(1)}
    \nearrow
    \lambda^{(2)}
    \nearrow
    \cdots
    \nearrow
    \lambda^{(s)}
    =
    \lambda
\]
in the Young graph, where
\[
    \lambda^{(r)}
    =
    \operatorname{shape}
    \left(
       T\!\downarrow_r
    \right).
\]
This interpretation will be used in
Corollary~\ref{cor:young-yamanouchi-heredity}.
\end{remark}

For \(\lambda\vdash s\), \(W_\lambda\) denotes the irreducible
Specht module of shape \(\lambda\); see, e.g.,
\cite{sagan2001symmetric,fulton2013representation}.

\subsection{Schur functors and the Pieri rule}
\label{subsec:Pieri-prelim}

For a partition \(\lambda\) with
\(
    \ell(\lambda)\leq d
\),
we write
\(
    \mathbb S_\lambda(\mathcal H_d)
\)
for the corresponding Schur functor.  In the notation of
Schur-Weyl duality,
\(
    V_\lambda
    =
    \mathbb S_\lambda(\mathcal H_d).
\)

\begin{definition}[Horizontal and vertical strips]
\label{def:strips-prelim}
Let
\(
    \lambda\subseteq\nu
\)
be Young diagrams.

The skew diagram
\(
    \nu/\lambda
\)
is a \emph{horizontal \(r\)-strip} if it contains \(r\) boxes and
no two of them lie in the same column.

It is a \emph{vertical \(r\)-strip} if it contains \(r\) boxes and
no two of them lie in the same row.
\end{definition}

\begin{theorem}[Pieri rule~\cite{fulton2013representation,macdonald1998symmetric}]
\label{thm:Pieri-prelim}
Let
\(
    \lambda
\)
be a partition with
\(
    \ell(\lambda)\leq d
\).

For every \(r\geq0\),
\begin{equation}
\label{eq:Pieri-sym-prelim}
    V_\lambda
    \otimes
    \operatorname{Sym}^{r}(\mathcal H_d)
    \cong
    \bigoplus_{\substack{
       \nu:\,
       \nu/\lambda
       \text{ is a horizontal }r\text{-strip}\\
       \ell(\nu)\leq d
    }}
       V_\nu,
\end{equation}
and
\begin{equation}
\label{eq:Pieri-alt-prelim}
    V_\lambda
    \otimes
    \bigwedge\nolimits^{r}\mathcal H_d
    \cong
    \bigoplus_{\substack{
       \nu:\,
       \nu/\lambda
       \text{ is a vertical }r\text{-strip}\\
       \ell(\nu)\leq d
    }}
       V_\nu.
\end{equation}
Every irreducible summand in
\eqref{eq:Pieri-sym-prelim}
and
\eqref{eq:Pieri-alt-prelim}
occurs with multiplicity one.
\end{theorem}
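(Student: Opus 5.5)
The plan is to work entirely at the level of characters. A finite-dimensional polynomial representation of $\mathrm{U}(d)$ is determined up to isomorphism by its character restricted to the diagonal torus $\operatorname{diag}(x_1,\dots,x_d)$. Moreover, the characters of the irreducibles $V_\nu$ are the Schur polynomials $s_\nu(x_1,\dots,x_d)$, which are linearly independent. So it suffices to prove the two polynomial identities
\[
  s_\lambda\,h_r=\sum_{\nu/\lambda\ \text{horiz. }r\text{-strip},\ \ell(\nu)\le d}s_\nu,
  \qquad
  s_\lambda\,e_r=\sum_{\nu/\lambda\ \text{vert. }r\text{-strip},\ \ell(\nu)\le d}s_\nu .
\]
Here $h_r$ and $e_r$ are the characters of $\operatorname{Sym}^r(\mathcal H_d)$ and $\bigwedge^r\mathcal H_d$; they are the sums of all monomials $x^\alpha$ with $|\alpha|=r$, with $\alpha\in\mathbb N^d$ and $\alpha\in\{0,1\}^d$ respectively. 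Since every coefficient on the right is $0$ or $1$, the multiplicity-one statement follows at once. Both identities take place in $d$ variables, so the condition $\ell(\nu)\le d$ is built in.

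First I will use the Weyl character formula in bialternant form, $s_\lambda=a_{\lambda+\delta}/a_\delta$, where $\delta=(d-1,\dots,1,0)$ and $a_\gamma=\sum_{w\in\mathfrak S_d}\operatorname{sgn}(w)\,w(x^\gamma)$. If $f$ is any symmetric polynomial $\sum_\alpha m_\alpha x^\alpha$ with an $\mathfrak S_d$-invariant coefficient set, then $a_{\lambda+\delta}f=\sum_\alpha m_\alpha\,a_{\lambda+\delta+\alpha}$. This holds because $w(x^{\lambda+\delta})f=w(x^{\lambda+\delta}f)$, after reindexing $\alpha$ by $w$. Applying this with $f=e_r$ and $f=h_r$ gives
\[
  s_\lambda e_r=\sum_{\alpha\in\{0,1\}^d,\,|\alpha|=r}\frac{a_{\lambda+\alpha+\delta}}{a_\delta},
  \qquad
  s_\lambda h_r=\sum_{\alpha\in\mathbb N^d,\,|\alpha|=r}\frac{a_{\lambda+\alpha+\delta}}{a_\delta}.
\]
The term $a_{\gamma}$ vanishes whenever $\gamma$ has two equal entries.

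In the exterior case, $\lambda_i+\alpha_i+d-i$ is strictly decreasing in $i$ unless $\lambda_{i}=\lambda_{i+1}$, $\alpha_i=0$ and $\alpha_{i+1}=1$. In that case two entries coincide and the alternant vanishes. So the surviving terms are exactly those for which $\nu=\lambda+\alpha$ is a partition. Such $\nu$ are precisely the vertical $r$-strips over $\lambda$ with at most $d$ rows. Each contributes $s_\nu$ with coefficient $+1$, which proves the second identity.

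The symmetric case is the main obstacle. Here $\lambda+\alpha+\delta$ can be strictly decreasing without $\lambda+\alpha$ being interlaced with $\lambda$, and it can fail to be decreasing without vanishing. So a cancellation argument is needed. I would first try a sign-reversing involution on the set of $\alpha$ for which $\nu=\lambda+\alpha$ violates the interlacing $\lambda_{i-1}\ge\nu_i\ge\lambda_i$. Take the smallest $i$ with $\nu_i>\lambda_{i-1}$, and pair $\alpha$ with the composition $\alpha'$ obtained by swapping the entries at positions $i-1,i$ of $\lambda+\alpha+\delta$ and subtracting $\lambda+\delta$. One must check that $\alpha'$ is again nonnegative, has total $r$, and has the same smallest violating index. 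Then $a_{\lambda+\alpha'+\delta}=-a_{\lambda+\alpha+\delta}$, so these terms cancel in pairs, leaving exactly the horizontal strips.

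If the bookkeeping of this involution proves awkward, the fallback is to avoid it entirely. One can derive the $h_r$ identity from the $e_r$ identity through the involution $\omega$ on symmetric functions in enough variables, which sends $e_r\mapsto h_r$ and $s_\lambda\mapsto s_{\lambda'}$. Under conjugation, vertical strips become horizontal strips. Afterwards one specializes back to $d$ variables, where $s_\nu=0$ for $\ell(\nu)>d$.
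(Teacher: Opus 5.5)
The character reduction and the exterior case are correct. The identity $a_{\lambda+\delta}e_r=\sum_{\alpha\in\{0,1\}^d}a_{\lambda+\alpha+\delta}$, together with your observation that a repeated entry occurs exactly when $\lambda_i=\lambda_{i+1}$, $\alpha_i=0$, $\alpha_{i+1}=1$, gives the vertical-strip rule with multiplicity one.

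\textbf{The involution for the symmetric case fails as stated.} Pairing by the \emph{smallest} violating index is not an involution. After the swap one has $\nu'_{i-1}=\nu_i-1$, and this can exceed $\lambda_{i-2}$, creating an earlier violation.

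Here is a concrete counterexample. Take $d=3$, $\lambda=(1,1,0)$, $r=5$, $\alpha=(0,0,5)$.
\begin{itemize}
\item Then $\lambda+\alpha+\delta=(3,2,5)$, and the smallest violating index is $i=3$, since $\nu_2=1=\lambda_1$ but $\nu_3=5>\lambda_2$.
\item Swapping positions $2,3$ gives $(3,5,2)$, so $\alpha'=(0,3,2)$ and $\nu'=(1,4,2)$.
\item The smallest violating index of $\nu'$ is $2$, and applying your rule again yields $(5,3,2)\neq(3,2,5)$.
\end{itemize}

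\textbf{The fix is to use the \emph{largest} violating index $i$.}
\begin{itemize}
\item Violation at $k$ means $\nu_k>\lambda_{k-1}$, so it depends only on $\nu_k$. Since $\nu'$ differs from $\nu$ only in positions $i-1,i$, every $k>i$ stays non-violating.
\item Position $i$ still violates, because $\nu'_i=\nu_{i-1}+1>\lambda_{i-1}$.
\item Hence $i$ is again the largest violating index, and the map is a genuine sign-reversing involution. Its fixed points have $\gamma_{i-1}=\gamma_i$ and contribute $0$.
\item The remaining checks hold: $\alpha'_{i-1}=\nu_i-\lambda_{i-1}-1\geq0$, $\alpha'_i=\lambda_{i-1}-\lambda_i+\alpha_{i-1}+1\geq1$, and the total is preserved.
\item The survivors are exactly those $\alpha$ with $\lambda_{i-1}\geq\nu_i\geq\lambda_i$ for all $i\geq2$, i.e.\ the horizontal strips, each contributing $+s_\nu$.
\end{itemize}

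\textbf{Comparison with the paper.} The paper gives no proof of this statement; it simply cites Fulton--Harris and Macdonald. Your fallback, proving the vertical-strip rule by bialternants and then applying $\omega$, is the textbook derivation in Macdonald, and it is sound provided you:
\begin{itemize}
\item run it in $\Lambda$ (or in $N\geq|\lambda|+r$ variables);
\item apply the $e_r$ rule to $\lambda'$, then apply $\omega$;
\item finally specialize to $d$ variables, where $s_\nu=0$ for $\ell(\nu)>d$.
\end{itemize}
Its price is the identity $\omega(s_\lambda)=s_{\lambda'}$, which rests on the Jacobi--Trudi and dual Jacobi--Trudi formulas. The corrected involution avoids that input and never leaves $d$ variables, so it gives a more self-contained proof of the horizontal-strip rule.
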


The case \(r=1\) will be used repeatedly.

\begin{corollary}[One-box Pieri rule~\cite{fulton2013representation}]
\label{cor:one-box-Pieri-prelim}
For
\(
    \mu\vdash s-1
\),
\begin{equation}
\label{eq:one-box-Pieri}
    \mathcal H_d\otimes V_\mu
    \cong
    \bigoplus_{\substack{
       \lambda\vdash s\\
       \mu\nearrow\lambda\\
       \ell(\lambda)\leq d
    }}
       V_\lambda,
\end{equation}
and
\begin{equation}
\label{eq:contraction-Hom-Pieri}
    \dim_{\mathbb C}
    \operatorname{Hom}_{\mathrm U(d)}
    \left(
       \mathcal H_d^*
       \otimes V_\lambda,
       V_\mu
    \right)
    =
    \begin{cases}
    1,
       &\mu\nearrow\lambda,
    \\[1mm]
    0,
       &\text{otherwise}.
    \end{cases}
\end{equation}
\end{corollary}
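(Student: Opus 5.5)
The first display \eqref{eq:one-box-Pieri} is the case \(r=1\) of Theorem~\ref{thm:Pieri-prelim}, and the second display \eqref{eq:contraction-Hom-Pieri} follows from the first by a tensor--hom adjunction and Schur's lemma.

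\emph{Step 1: specialize Pieri to \(r=1\).} Put \(r=1\) in \eqref{eq:Pieri-sym-prelim} and note \(\operatorname{Sym}^{1}(\mathcal H_d)=\mathcal H_d\). A skew diagram \(\lambda/\mu\) is a horizontal \(1\)-strip exactly when it consists of a single box. This happens exactly when \(\lambda\vdash s\) is obtained from \(\mu\vdash s-1\) by adding one box, that is, when \(\mu\nearrow\lambda\) in the sense of Definition~\ref{def:removable-prelim}. The length condition \(\ell(\lambda)\le d\) is carried over unchanged. Since \(\mathcal H_d\otimes V_\mu\cong V_\mu\otimes\mathcal H_d\) by the flip map, which commutes with the diagonal \(\mathrm U(d)\)-action, this gives \eqref{eq:one-box-Pieri}. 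The multiplicity-one clause of Theorem~\ref{thm:Pieri-prelim} then says each \(V_\lambda\) with \(\mu\nearrow\lambda\) appears exactly once. (The vertical-strip version \eqref{eq:Pieri-alt-prelim} with \(r=1\) gives the same decomposition, which is a consistency check.)

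\emph{Step 2: the Hom-space dimension.} Take \(\lambda\in\Lambda(d,s)\) and \(\mu\vdash s-1\) with \(\ell(\mu)\le d\). Use the \(\mathrm U(d)\)-equivariant adjunction
\[
  \operatorname{Hom}_{\mathrm U(d)}\bigl(\mathcal H_d^{*}\otimes V_\lambda,\,V_\mu\bigr)
  \cong\operatorname{Hom}_{\mathrm U(d)}\bigl(V_\lambda,\,\mathcal H_d\otimes V_\mu\bigr).
\]
Concretely, \(f\) corresponds to \(v\mapsto\sum_i |e_i\rangle\otimes f(\langle e_i|\otimes v)\). This uses \((\mathcal H_d^*)^*\cong\mathcal H_d\) canonically, and equivariance is checked using the invariance of \(\sum_i|e_i\rangle\otimes\langle e_i|\in\mathcal H_d\otimes\mathcal H_d^*\).

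Now insert the decomposition from Step 1 into the right-hand side. The \(V_\nu\) are pairwise non-isomorphic irreducibles, and their restrictions to \(\mathrm U(d)\) remain irreducible. By Schur's lemma, the dimension of \(\operatorname{Hom}_{\mathrm U(d)}(V_\lambda,\bigoplus_{\nu}V_\nu)\) equals the multiplicity of \(V_\lambda\) among the \(V_\nu\). That multiplicity is \(1\) if \(\mu\nearrow\lambda\) and \(0\) otherwise, which is \eqref{eq:contraction-Hom-Pieri}.

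\emph{Expected obstacle.} No step is deep; the only point needing care is the adjunction in Step 2. The delicate part is making sure the dual \(\mathcal H_d^*\), which carries the representation \(\overline U\), is handled consistently with the convention fixed in the preliminaries. The identification should be phrased through the invariant vector \(\sum_i|e_i\rangle\otimes\langle e_i|\), so that no choice of basis affects equivariance.
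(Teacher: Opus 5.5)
Your proposal is correct and follows the paper's route. The paper also obtains the first display as the $r=1$ case of the Pieri rule. It then gets the second display from the adjunction $\operatorname{Hom}_{\mathrm U(d)}(\mathcal H_d^*\otimes V_\lambda,V_\mu)\cong\operatorname{Hom}_{\mathrm U(d)}(V_\lambda,\mathcal H_d\otimes V_\mu)$ together with Schur's lemma and multiplicity one. Your explicit equivariant form of the adjunction, via the invariant element $\sum_i|e_i\rangle\otimes\langle e_i|$, fills in a detail the paper simply calls ``natural.''
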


\begin{proof}
The decomposition
\eqref{eq:one-box-Pieri}
is
Theorem~\ref{thm:Pieri-prelim}
with \(r=1\).
Using the natural tensor--Hom adjunction,
\begin{align}
    \operatorname{Hom}_{\mathrm U(d)}
    \left(
       \mathcal H_d^*\otimes V_\lambda,
       V_\mu
    \right)
    &\cong
    \operatorname{Hom}_{\mathrm U(d)}
    \left(
       V_\lambda,
       \mathcal H_d\otimes V_\mu
    \right),
\end{align}
and Schur's lemma together with the multiplicity-one statement in
\eqref{eq:one-box-Pieri}
gives
\eqref{eq:contraction-Hom-Pieri}.
\end{proof}

\subsubsection{Exterior-power duality and fundamental weights}
\label{subsubsec:exterior-duality-prelim}

Let
\[
    \det:
    \mathrm U(d)
    \longrightarrow
    \mathrm U(1)
\]
denote the determinant character.

\begin{proposition}[Duality of exterior powers~\cite{fulton2013representation,GoodmanWallach1998}]
\label{prop:exterior-duality-prelim}
For
\(
    0\leq r\leq d
\),
there is a canonical \(\mathrm U(d)\)-equivariant isomorphism
\begin{equation}
\label{eq:exterior-duality-prelim}
    \left(
       \bigwedge\nolimits^{r}\mathcal H_d
    \right)^*
    \cong
    \bigwedge\nolimits^{d-r}\mathcal H_d
    \otimes
    (\det)^{-1}.
\end{equation}
\end{proposition}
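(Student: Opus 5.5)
The plan is to build the isomorphism from the nondegenerate, $\mathrm U(d)$-equivariant wedge pairing
\[
  \beta:\ \bigwedge\nolimits^{r}\mathcal H_d\times\bigwedge\nolimits^{d-r}\mathcal H_d
  \longrightarrow\bigwedge\nolimits^{d}\mathcal H_d,
  \qquad (\alpha,\gamma)\longmapsto\alpha\wedge\gamma ,
\]
and then identify the top exterior power with the determinant character. The first step is to note that $\bigwedge^d\mathcal H_d$ is one-dimensional, spanned by $e_1\wedge\cdots\wedge e_d$. By the antisymmetric description in Proposition~\ref{prop:antisymmetrizer}, $U^{\otimes d}$ acts on it by the scalar $\det U$. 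This is the standard Leibniz expansion: applying $U$ to each factor and antisymmetrizing produces $\sum_\sigma\operatorname{sgn}(\sigma)\prod_i U_{\sigma(i)i}$. Hence $\bigwedge^d\mathcal H_d\cong\det$ as a $\mathrm U(d)$-module.

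Next I would check equivariance and nondegeneracy of $\beta$. Equivariance is immediate, since $(U\alpha)\wedge(U\gamma)=U^{\otimes d}(\alpha\wedge\gamma)=\det(U)\,\alpha\wedge\gamma$. For nondegeneracy I would use the orthonormal wedge bases of Proposition~\ref{prop:antisymmetrizer}. For subsets $I$ with $|I|=r$ and $K$ with $|K|=d-r$, the product $e_I\wedge e_K$ is $\pm e_{[d]}$ if $K=[d]\setminus I$, and $0$ otherwise. So the Gram matrix of $\beta$ in these bases is a signed permutation matrix, hence invertible.

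Then I would define $\Psi:\bigwedge^{d-r}\mathcal H_d\otimes\det^{-1}\to(\bigwedge^r\mathcal H_d)^*$ by $\Psi(\gamma\otimes 1)(\alpha):=\beta(\alpha,\gamma)/(e_1\wedge\cdots\wedge e_d)$. It is injective by nondegeneracy, and both sides have dimension $\binom{d}{r}=\binom{d}{d-r}$, so it is an isomorphism. For equivariance, recall that the dual action is $(U\cdot f)(\alpha)=f(U^{-1}\alpha)$. Then
\[
  \Psi\bigl(U\gamma\otimes\det(U)^{-1}\bigr)(\alpha)
  =\det(U)^{-1}\beta(\alpha,U\gamma)
  =\det(U)^{-1}\det(U)\,\beta(U^{-1}\alpha,\gamma),
\]
which equals $\bigl(U\cdot\Psi(\gamma\otimes1)\bigr)(\alpha)$. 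The map is canonical because it depends only on the choice of the generator of the top power, and that generator is fixed up to the canonical identification $\bigwedge^d\mathcal H_d\cong\det$. The edge cases $r=0$ and $r=d$ reduce to $\mathbb C^*\cong\det\otimes\det^{-1}$ and $(\det)^*\cong\det^{-1}$.

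The main obstacle is purely bookkeeping. One has to track the sign and normalization conventions of the $1/\sqrt{t!}$ wedge vectors against convention (a) of Remark~\ref{convention}, to make sure the top power really carries $\det$ rather than $\overline{\det}$. Once that is settled, the rest is linear algebra.
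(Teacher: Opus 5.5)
Your argument is correct, and it is the standard perfect-pairing proof given in the references the paper cites (Fulton--Harris, Goodman--Wallach); the paper itself only cites the proposition and gives no proof, so there is no competing route to compare. The one point to make explicit in the paper's tensor realization is that \(\alpha\wedge\gamma\) means \(\Pi_{\mathrm{alt}}^{(d,d)}(\alpha\otimes\gamma)\), up to a nonzero constant. Equivariance then follows because \(\Pi_{\mathrm{alt}}^{(d,d)}\) commutes with \(U^{\otimes d}\). The Leibniz value \(\langle e_1\otimes\cdots\otimes e_d|\,\Pi_{\mathrm{alt}}^{(d,d)}U^{\otimes d}\,|e_1\otimes\cdots\otimes e_d\rangle=\det(U)/d!\), compared with \(1/d!\) at \(U=I\), settles the \(\det\) versus \(\overline{\det}\) bookkeeping you flagged.
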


Let
\(
    \omega_1,\ldots,\omega_{d-1}
\)
denote the fundamental dominant weights of
\(\mathrm{SU}(d)\), and set
\(
    \omega_0=\omega_d:=0.
\)
For a dominant weight \(\gamma\), we write
\(
    L_\gamma
\)
for the corresponding irreducible \(\mathrm{SU}(d)\)-module.

\subsubsection{The Cauchy decomposition}
\label{subsubsec:Cauchy-prelim}

\begin{theorem}[Cauchy identity for symmetric powers~\cite{fulton2013representation,macdonald1998symmetric}]\label{thm:Cauchy-prelim}
Let \(X\) and \(Y\) be finite-dimensional complex vector spaces.
Then, as a
\(
    \mathrm{GL}(X)\times\mathrm{GL}(Y)
\)-module,
\begin{equation}
\label{eq:Cauchy-prelim}
    \operatorname{Sym}^{s}(X\otimes Y)
    \cong
    \bigoplus_{\substack{
       \lambda\vdash s\\
       \ell(\lambda)\leq
          \min\{\dim X,\dim Y\}
    }}
       \mathbb S_\lambda(X)
       \otimes
       \mathbb S_\lambda(Y).
\end{equation}
Every summand occurs with multiplicity one.
\end{theorem}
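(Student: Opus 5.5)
The Cauchy identity is a standard consequence of Schur--Weyl duality as recalled above (Theorem~\ref{thm:SW-duality} and \eqref{eq:SW-decomp}). The plan is to derive it by two applications of the Schur--Weyl decomposition, one for each tensor factor, followed by symmetrization. Set \(a=\dim X\), \(b=\dim Y\). Using the canonical isomorphism \((X\otimes Y)^{\otimes s}\cong X^{\otimes s}\otimes Y^{\otimes s}\), the diagonal \(\mathfrak S_s\)-action on the left becomes the simultaneous permutation action on both factors. Hence
\[
  \operatorname{Sym}^s(X\otimes Y)=\bigl(X^{\otimes s}\otimes Y^{\otimes s}\bigr)^{\mathfrak S_s}.
\]
Here \(\mathfrak S_s\) acts diagonally, and \(\mathrm{GL}(X)\times\mathrm{GL}(Y)\) acts factorwise, commuting with the permutations by Proposition~\ref{prop:commute}.

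First, apply \eqref{eq:SW-decomp} (valid for \(\mathrm{GL}\) by Theorem~\ref{thm:SW-duality}) to each factor:
\[
  X^{\otimes s}\cong\bigoplus_{\ell(\lambda)\le a}\mathbb S_\lambda(X)\otimes S^\lambda,
  \qquad
  Y^{\otimes s}\cong\bigoplus_{\ell(\mu)\le b}\mathbb S_\mu(Y)\otimes S^\mu .
\]
Then
\[
  X^{\otimes s}\otimes Y^{\otimes s}\cong\bigoplus_{\lambda,\mu}\mathbb S_\lambda(X)\otimes\mathbb S_\mu(Y)\otimes\bigl(S^\lambda\otimes S^\mu\bigr),
\]
with \(\mathfrak S_s\) acting diagonally only on \(S^\lambda\otimes S^\mu\). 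Taking invariants commutes with the direct sum and ignores the \(\mathrm{GL}\) factors, so
\[
  \operatorname{Sym}^s(X\otimes Y)\cong\bigoplus_{\lambda,\mu}\mathbb S_\lambda(X)\otimes\mathbb S_\mu(Y)\otimes\bigl(S^\lambda\otimes S^\mu\bigr)^{\mathfrak S_s}.
\]

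Second, compute \(\dim(S^\lambda\otimes S^\mu)^{\mathfrak S_s}\). The invariant subspace is the image of the averaging projector, so its dimension equals
\[
  \frac1{s!}\sum_\sigma\chi_\lambda(\sigma)\chi_\mu(\sigma)=\langle\chi_\lambda,\overline{\chi_\mu}\rangle_{\mathfrak S_s}.
\]
Since the characters of \(\mathfrak S_s\) are real-valued (as noted after \eqref{eq:charprojector}), this equals \(\langle\chi_\lambda,\chi_\mu\rangle=\delta_{\lambda\mu}\) by \eqref{eq:row-character-orthogonality}. Equivalently, \((S^\lambda\otimes S^\mu)^{\mathfrak S_s}\cong\operatorname{Hom}_{\mathfrak S_s}((S^\lambda)^*,S^\mu)\) with \((S^\lambda)^*\cong S^\lambda\), and Schur's lemma applies. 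Only the diagonal terms \(\lambda=\mu\) survive, each with a one-dimensional invariant space. The surviving \(\lambda\) must satisfy both \(\ell(\lambda)\le a\) and \(\ell(\lambda)\le b\), i.e.\ \(\ell(\lambda)\le\min\{a,b\}\). This gives \eqref{eq:Cauchy-prelim}.

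Third, multiplicity one follows because each \(\mathbb S_\lambda(X)\otimes\mathbb S_\lambda(Y)\) is irreducible for \(\mathrm{GL}(X)\times\mathrm{GL}(Y)\). It is an outer tensor product of irreducibles over an algebraically closed field. Summands with distinct \(\lambda\) are non-isomorphic, since already the \(\mathrm{GL}(X)\)-highest weights \(\lambda\) differ.

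The only delicate point is the first reduction: one must check that the identification \((X\otimes Y)^{\otimes s}\cong X^{\otimes s}\otimes Y^{\otimes s}\) intertwines the permutation action on the left with the diagonal action on the right, and the \(\mathrm{GL}\) actions with the factorwise ones. Both follow directly from the basis formula \eqref{eq:permaction} under convention (a), applied simultaneously to the \(X\)- and \(Y\)-indices. Everything else is bookkeeping with Schur's lemma.
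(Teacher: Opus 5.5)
The paper gives no proof of this statement. It quotes the Cauchy identity as a standard fact, citing Fulton--Harris and Macdonald, so there is no in-paper argument to match.

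Your proof is correct. It is the standard Schur--Weyl derivation:
\begin{itemize}
\item After the shuffle isomorphism $(X\otimes Y)^{\otimes s}\cong X^{\otimes s}\otimes Y^{\otimes s}$, the problem reduces to the $\mathfrak S_s$-invariants of $\bigoplus_{\lambda,\mu}\mathbb S_\lambda(X)\otimes\mathbb S_\mu(Y)\otimes(S^\lambda\otimes S^\mu)$.
\item These invariants are controlled by $\dim(S^\lambda\otimes S^\mu)^{\mathfrak S_s}=\delta_{\lambda\mu}$. This uses exactly the reality of $\mathfrak S_s$-characters and the row orthogonality recalled in Section~2.
\item Irreducibility of the outer tensor products, together with distinct highest weights for distinct $\lambda$, gives multiplicity one.
\end{itemize}
The $\mathrm{GL}$-equivariance of the Schur--Weyl decomposition you invoke is also fine. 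The double-centralizer derivation realizes $\mathbb S_\lambda(X)$ as $\operatorname{Hom}_{\mathfrak S_s}(S^\lambda,X^{\otimes s})$, on which $\mathrm{GL}(X)$ acts directly.

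The symmetric-function proof behind the Macdonald citation takes a different route, through characters at diagonal matrices.
\begin{itemize}
\item There, the character of $\operatorname{Sym}^s(X\otimes Y)$ is the complete homogeneous polynomial $h_s$ in the products $x_iy_j$.
\item Equivalently, it is the degree-$s$ part of $\prod_{i,j}(1-x_iy_j)^{-1}=\sum_\lambda s_\lambda(x)s_\lambda(y)$. Schur polynomials vanish when $\ell(\lambda)$ exceeds the number of variables, which produces the length restriction.
\item The module isomorphism then follows because polynomial representations of $\mathrm{GL}(X)\times\mathrm{GL}(Y)$ are completely reducible and determined by their characters.
\end{itemize}

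What each route buys:
\begin{itemize}
\item The symmetric-function route yields a combinatorial identity, provable bijectively via RSK. It needs the extra input that characters determine polynomial modules.
\item Your route stays inside the tools the paper already recalls. It produces an explicit equivariant embedding: each summand is $\mathbb S_\lambda(X)\otimes\mathbb S_\lambda(Y)\otimes\mathbb C\,\xi_\lambda$, with $\xi_\lambda$ the invariant vector in $S^\lambda\otimes S^\lambda$, unique up to scale.
\item Your route also gives the dual identity $\bigwedge^s(X\otimes Y)\cong\bigoplus_\lambda\mathbb S_\lambda(X)\otimes\mathbb S_{\lambda'}(Y)$ at no extra cost. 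You replace invariants by sign-isotypic vectors, and Schur's lemma then forces $\mu=\lambda'$.
\end{itemize}
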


We shall also use the standard vanishing criterion
\begin{equation}
\label{eq:Schur-functor-vanishing}
    \mathbb S_\lambda(\mathbb C^R)
    =
    \{0\}
    \qquad
    \text{if and only if }
    \ell(\lambda)>R.
\end{equation}

For \(s=2\), the Cauchy identity reduces to
\begin{equation}
\label{eq:Cauchy-two-prelim}
\begin{split}
    \operatorname{Sym}^{2}(X\otimes Y)
    \cong&
    \left[
       \operatorname{Sym}^{2}(X)
       \otimes
       \operatorname{Sym}^{2}(Y)
    \right]
    \oplus
    \left[
       \bigwedge\nolimits^{2}X
       \otimes
       \bigwedge\nolimits^{2}Y
    \right].
\end{split}
\end{equation}

\subsection{Peter--Weyl theory and matrix-coefficient spaces}
\label{subsec:Peter-Weyl-prelim}

Let \(\mathrm{G}\) be a compact group and let
\[
    \pi_{\mathrm{G}}\colon \mathrm{G}\longrightarrow\mathrm U(V)
\]
be a finite-dimensional unitary representation.

\begin{definition}[Matrix coefficient and coefficient space~\cite{folland2016course,BrockerTomDieck1985}]
\label{def:coefficient-space-prelim}
For \(x,y\in V\), define
\begin{equation}
\label{eq:matrix-coefficient-prelim}
    f_{x,y}^{\pi_{\mathrm{G}}}(g)
    :=
    \langle x|\pi_{\mathrm{G}}(g)y\rangle,
    \qquad
    g\in G.
\end{equation}
The \emph{coefficient space} of \(\pi_{\mathrm{G}}\) is
\begin{equation}
\label{eq:Coeff-prelim}
    \operatorname{Coeff}(\pi_{\mathrm{G}})
    :=
    \operatorname{span}_{\mathbb C}
    \left\{
       f_{x,y}^{\pi}:
       x,y\in V
    \right\}
    \subseteq
    C(\mathrm{G}).
\end{equation}
\end{definition}

\begin{theorem}[Schur orthogonality~\cite{BrockerTomDieck1985}]
\label{thm:Schur-orthogonality-prelim}
Let
\[
    \pi_\alpha \colon \mathrm{G} \longrightarrow \mathrm U(V_\alpha),
    \qquad
    \pi_\beta \colon \mathrm{G} \longrightarrow \mathrm U(V_\beta)
\]
be irreducible unitary representations of a compact group \(\mathrm{G}\).
For orthonormal bases of \(V_\alpha,V_\beta\),
\begin{equation}
\label{eq:Schur-orthogonality-prelim}
\begin{split}
    \int_\mathrm{G}
       (\pi_\alpha(g))_{ij}
       \overline{
          (\pi_\beta(g))_{k\ell}
       }
       \,\mathrm d\mu_{\mathrm{Haar}}(g)
    =
    \frac{
       \delta_{\alpha\beta}
       \delta_{ik}
       \delta_{j\ell}
    }{
       \dim_{\mathbb C}V_\alpha
    }.
\end{split}
\end{equation}
Consequently,
\begin{equation}
\label{eq:Coeff-orthogonal-prelim}
    \operatorname{Coeff}(\pi_\alpha)
    \perp
    \operatorname{Coeff}(\pi_\beta)
\end{equation}
in
\(
    L^2(\mathrm{G},\mu_{\mathrm{Haar}})
\)
whenever
\(
    \pi_\alpha\not\cong\pi_\beta
\).
\end{theorem}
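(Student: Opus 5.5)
We would prove the Schur orthogonality relations by averaging an arbitrary linear map between the two representation spaces over the group and then applying Schur's lemma. Fix $A\in\mathcal L(V_\beta,V_\alpha)$ and put
\[
  \widetilde A:=\int_{\mathrm G}\pi_\alpha(g)\,A\,\pi_\beta(g)^{\dagger}\,\mathrm d\mu_{\mathrm{Haar}}(g).
\]
The first step is to show that $\widetilde A$ is an intertwiner, i.e.\ $\pi_\alpha(h)\widetilde A=\widetilde A\pi_\beta(h)$ for all $h\in\mathrm G$. Since $\pi_\beta(g)^\dagger=\pi_\beta(g^{-1})$, we have $\pi_\alpha(h)\widetilde A\,\pi_\beta(h)^{\dagger}=\int\pi_\alpha(hg)A\pi_\beta(hg)^{\dagger}\,\mathrm d\mu(g)$, and left invariance of the Haar measure under $g\mapsto hg$ shows this equals $\widetilde A$. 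Continuity of the representations makes the integral well defined entrywise.

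The second step applies Schur's lemma. If $\pi_\alpha\not\cong\pi_\beta$, then $\widetilde A=0$. If $\alpha=\beta$, where we take the same unitary realization for both indices, irreducibility forces $\widetilde A=c_AI_{V_\alpha}$. Taking traces and using cyclicity together with $\mu_{\mathrm{Haar}}(\mathrm G)=1$ gives $c_A\dim_{\mathbb C}V_\alpha=\operatorname{Tr}A$. Hence in both cases
\[
  \widetilde A=\delta_{\alpha\beta}\,\frac{\operatorname{Tr}A}{\dim_{\mathbb C}V_\alpha}\,I_{V_\alpha},
\]
which is the content of Corollary~\ref{cor:schur-integral} extended to pairs of representations.

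The third step specializes to $A=|j\rangle\langle\ell|$, with $|j\rangle\in V_\alpha$ and $|\ell\rangle\in V_\beta$ basis vectors. Then $\langle i|\widetilde A|k\rangle=\int(\pi_\alpha(g))_{ij}\overline{(\pi_\beta(g))_{k\ell}}\,\mathrm d\mu(g)$, because $\langle\ell|\pi_\beta(g)^\dagger|k\rangle=\overline{(\pi_\beta(g))_{k\ell}}$. On the other hand, the formula from the second step gives $\langle i|\widetilde A|k\rangle=\delta_{\alpha\beta}\delta_{ik}\delta_{j\ell}/\dim_{\mathbb C}V_\alpha$, since $\operatorname{Tr}|j\rangle\langle\ell|=\delta_{j\ell}$. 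This is~\eqref{eq:Schur-orthogonality-prelim}.

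For the consequence~\eqref{eq:Coeff-orthogonal-prelim}, expand $x,y$ in the orthonormal bases. Each $f^{\pi_\alpha}_{x,y}=\sum_{i,j}\overline{x_i}\,y_j\,(\pi_\alpha)_{ij}$ is a finite linear combination of matrix entries, and similarly for $\pi_\beta$. The $L^2$ inner product is sesquilinear, so it reduces to the integrals above, which all vanish when $\pi_\alpha\not\cong\pi_\beta$.

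The only delicate point is the bookkeeping in the isomorphic case. If $\pi_\alpha\cong\pi_\beta$ are realized differently, the averaged map equals $(\operatorname{Tr}(T^\dagger A)/\dim)\,T$ for the unitary intertwiner $T$, rather than a multiple of the identity. We avoid this by indexing by isomorphism classes and fixing one realization per class, which is the convention implicit in the displayed $\delta_{\alpha\beta}$.
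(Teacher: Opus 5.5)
Your proof is correct. The paper does not prove this statement itself; it cites Br\"ocker--tom Dieck. Your argument is the standard proof: Haar-average an arbitrary $A\in\mathcal L(V_\beta,V_\alpha)$ to obtain an intertwiner, apply Schur's lemma and a trace computation, and then read off matrix entries with $A=|j\rangle\langle\ell|$. It is also the same mechanism the paper uses in its appendix, both for Corollary~\ref{cor:schur-integral} and for the vanishing of the off-diagonal blocks $M^{(\lambda,\mu)}_{ab}$ in Lemma~\ref{lem:blockwise-collective-Haar-twirl}. Your convention of fixing one realization per isomorphism class correctly resolves the only ambiguity in the displayed $\delta_{\alpha\beta}$.
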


\begin{theorem}[Peter-Weyl~\cite{folland2016course}]
\label{thm:Peter-Weyl-prelim}
Let
\(
    \widehat {\mathrm{G}}
\)
be a complete set of equivalence classes of irreducible
finite-dimensional unitary representations of the compact group
\(\mathrm{G}\).  Then
\begin{equation}
\label{eq:Peter-Weyl-L2-prelim}
    L^2(\mathrm{G},\mu_{\mathrm{Haar}})
    =
    \widehat{\bigoplus}_{
       \alpha\in\widehat{\mathrm{G}}
    }
       \operatorname{Coeff}(\pi_\alpha).
\end{equation}
In particular, coefficient spaces corresponding to inequivalent
irreducible representations are mutually orthogonal.
\end{theorem}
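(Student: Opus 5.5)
The plan is the classical two-part argument: orthogonality of the summands, then completeness via compact convolution operators. Throughout, $\mathrm G$ is compact with normalized Haar measure. Write $L_h$ and $R_h$ for left and right translation on $L^2(\mathrm G,\mu_{\mathrm{Haar}})$, namely $(L_hf)(g)=f(h^{-1}g)$ and $(R_hf)(g)=f(gh)$.

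\textbf{Orthogonality.} For each $\alpha\in\widehat{\mathrm G}$, $\operatorname{Coeff}(\pi_\alpha)$ is finite-dimensional: it is spanned by the $(\dim V_\alpha)^2$ matrix entries. It is therefore a closed subspace of $L^2$. For $\alpha\neq\beta$, Theorem~\ref{thm:Schur-orthogonality-prelim} gives $\operatorname{Coeff}(\pi_\alpha)\perp\operatorname{Coeff}(\pi_\beta)$. Hence the closure $M$ of the algebraic sum $\sum_\alpha\operatorname{Coeff}(\pi_\alpha)$ is the Hilbert direct sum. It remains to show $M=L^2(\mathrm G)$, i.e.\ $M^\perp=0$.

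\textbf{Invariance.} I will use the identities $f^{\pi}_{x,y}(h^{-1}g)=f^{\pi}_{\pi(h)x,y}(g)$ and $f^{\pi}_{x,y}(gh)=f^{\pi}_{x,\pi(h)y}(g)$. They show that $M$ is invariant under all $L_h$ and $R_h$. Since these operators are unitary, $M^\perp$ is invariant as well. Any finite-dimensional unitary representation splits into irreducibles, so every matrix coefficient of it lies in $\sum_\alpha\operatorname{Coeff}(\pi_\alpha)\subseteq M$.

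\textbf{Completeness.} Suppose $0\neq f\in M^\perp$. Choose continuous $\varphi\ge0$ with $\int\varphi=1$, $\varphi(g^{-1})=\varphi(g)$, and support in a small neighbourhood of $e$. By uniform continuity of translations in $L^2$, this can be arranged so that $\|\varphi*f-f\|_2<\|f\|_2$, where $(\varphi*f)(g)=\int\varphi(h)f(h^{-1}g)\,d\mu(h)$; in particular $\varphi*f\neq0$.

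The operator $K:u\mapsto\varphi*u$ has continuous kernel $\varphi(gh^{-1})$ on the compact space $\mathrm G\times\mathrm G$. So $K$ is Hilbert--Schmidt, hence compact, and it is self-adjoint by the symmetry of $\varphi$. Since $K=\int\varphi(h)L_h\,d\mu(h)$, it preserves $M^\perp$. It also commutes with every $R_h$.

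The restriction $K|_{M^\perp}$ is a nonzero compact self-adjoint operator. By the spectral theorem it has a nonzero eigenvalue $c$ whose eigenspace $E\subseteq M^\perp$ is finite-dimensional and nonzero. Because $K$ commutes with right translations, $E$ is $R$-invariant, so $(E,R)$ is a finite-dimensional unitary representation. Each $u\in E$ satisfies $u=c^{-1}\varphi*u$ and is therefore continuous. Consequently $\ell(u):=u(e)$ is a linear functional on $E$, and $u(g)=(R_gu)(e)=\ell(R_gu)$. This writes $u$ as a matrix coefficient of $(E,R)$, so $u\in M$ by the invariance step. Thus $u\in M\cap M^\perp=0$ for every $u\in E$, contradicting $E\neq0$. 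Hence $M^\perp=0$, which is the Peter--Weyl decomposition.

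\textbf{Main obstacle.} The delicate step is the analytic one: producing the approximate identity $\varphi$ with $\varphi*f\neq0$ (this needs uniform continuity of $h\mapsto L_hf$ in $L^2$), and checking that eigenvectors are genuinely continuous functions, so that evaluation at $e$ is meaningful on $E$. I would handle both by first treating continuous $f$, using density of $C(\mathrm G)$ in $L^2$, and then using that convolution with a continuous kernel maps $L^2$ into $C(\mathrm G)$.
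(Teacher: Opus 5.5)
Your argument is correct. The paper gives no proof of this statement; it quotes it from Folland. The argument there reaches completeness by a different route:
\begin{enumerate}
\item the span \(\mathcal E\) of all matrix coefficients is a unital subalgebra of \(C(\mathrm G)\), closed under complex conjugation, because products and conjugates of coefficients are coefficients of \(\pi\otimes\pi'\) and \(\overline{\pi}\);
\item \(\mathcal E\) separates points, because irreducible unitary representations separate points (Gelfand--Raikov) and, for compact \(\mathrm G\), they are all finite-dimensional;
\item Stone--Weierstrass then gives uniform, hence \(L^2\), density, and Schur orthogonality turns density into the Hilbert direct sum.
\end{enumerate}

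You instead apply the spectral theorem to the compact self-adjoint operator \(u\mapsto\varphi*u\) on \(M^\perp\). Because it commutes with right translations, it produces a nonzero finite-dimensional right-invariant space of continuous functions inside \(M^\perp\), and evaluation at \(e\) exhibits each of its elements as a matrix coefficient.

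Your route is more self-contained: Folland proves Gelfand--Raikov for arbitrary locally compact groups, using positive-definite functions and Krein--Milman. It also effectively establishes the point-separation property for compact groups along the way. The Stone--Weierstrass route gives uniform density in \(C(\mathrm G)\) at no extra cost. It also makes explicit the algebra structure of coefficient functions that the paper later uses for \(V\otimes\overline V\). You can recover uniform density as well, since \(\|\varphi*u\|_\infty\le\|\varphi\|_2\|u\|_2\) and \(\varphi*\mathcal E\subseteq\mathcal E\).

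When you write it up, make two details explicit. First, state that \((E,R)\) is a continuous representation; this is needed to match its irreducible summands with elements of \(\widehat{\mathrm G}\). Second, write \(\ell=\langle x|\cdot\rangle\) for some \(x\in E\) by Riesz, so that \(u(g)=\langle x|R_g u\rangle\) is literally a coefficient in the paper's sense.
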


For the applications in this paper we shall use the following
finite-dimensional consequence.

\begin{proposition}[Coefficient spaces of reducible representations~\cite{BrockerTomDieck1985}]
\label{prop:Coeff-direct-sum-prelim}
Suppose
\[
    \rho
    \cong
    \bigoplus_{\alpha}
       \pi_\alpha\otimes I_{\mathbb C^{m_\alpha}},
\]
where the \(\pi_\alpha\)'s are pairwise inequivalent irreducible
representations.  Then
\begin{equation}
\label{eq:Coeff-reducible-prelim}
    \operatorname{Coeff}(\rho)
    =
    \bigoplus_{\alpha:m_\alpha>0}
       \operatorname{Coeff}(\pi_\alpha).
\end{equation}
Thus multiplicity does not create additional matrix-coefficient
functions.
\end{proposition}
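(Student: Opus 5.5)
The plan is to reduce everything to the block form of \(\rho\) and to use Schur orthogonality only at the very end, to show the sum is direct. First fix a unitary intertwiner
\[
T:V\longrightarrow\bigoplus_{\alpha}V_\alpha\otimes\mathbb C^{m_\alpha},
\qquad
T\rho(g)T^{\dagger}=\bigoplus_\alpha \pi_\alpha(g)\otimes I_{\mathbb C^{m_\alpha}} .
\]
Since \(T\) is unitary, \(f^{\rho}_{x,y}(g)=\langle Tx|(\oplus_\alpha\pi_\alpha(g)\otimes I)Ty\rangle\). Hence \(\operatorname{Coeff}(\rho)\) equals the coefficient space of the block-diagonal model, and we may assume \(\rho\) is already in that form. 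The same remark shows that \(\operatorname{Coeff}(\pi_\alpha)\) depends only on the equivalence class of \(\pi_\alpha\), so the right-hand side of \eqref{eq:Coeff-reducible-prelim} is well defined.

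For the inclusion \(\subseteq\), fix an orthonormal basis \(\{e_i\}\) of each \(\mathbb C^{m_\alpha}\). Expand \(x=\sum_{\alpha}\sum_{i}x_{\alpha,i}\otimes e_i\) and \(y=\sum_{\alpha}\sum_{i}y_{\alpha,i}\otimes e_i\) with \(x_{\alpha,i},y_{\alpha,i}\in V_\alpha\). Orthogonality of the blocks and of the \(e_i\) then gives
\[
f^{\rho}_{x,y}(g)=\sum_{\alpha}\sum_{i=1}^{m_\alpha}\langle x_{\alpha,i}|\pi_\alpha(g)y_{\alpha,i}\rangle
=\sum_{\alpha}\sum_{i}f^{\pi_\alpha}_{x_{\alpha,i},y_{\alpha,i}}(g).
\]
This lies in \(\sum_{\alpha:m_\alpha>0}\operatorname{Coeff}(\pi_\alpha)\). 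Summands with \(m_\alpha=0\) never appear, which is the sense in which multiplicity creates nothing new. For the inclusion \(\supseteq\), take \(\alpha\) with \(m_\alpha>0\) and \(u,v\in V_\alpha\), and put \(x=u\otimes e_1\) and \(y=v\otimes e_1\) in the \(\alpha\)-block. Then \(f^{\rho}_{x,y}=f^{\pi_\alpha}_{u,v}\). Taking spans gives the reverse inclusion.

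It remains to show that the sum is direct; this is the only non-formal step, and I expect it to be the main obstacle. Here I invoke Theorem~\ref{thm:Schur-orthogonality-prelim}: coefficient spaces of pairwise inequivalent irreducibles are mutually orthogonal in \(L^2(\mathrm G,\mu_{\mathrm{Haar}})\). Continuous functions that vanish in \(L^2\) vanish identically, because Haar measure has full support. Pairwise orthogonal finite-dimensional subspaces are linearly independent, so the sum in \(C(\mathrm G)\) is a direct (indeed orthogonal) sum. This establishes \eqref{eq:Coeff-reducible-prelim}.
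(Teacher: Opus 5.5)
Your proof is correct and is the standard argument behind this result; the paper gives no proof of its own and only cites Br\"ocker--tom Dieck. You reduce to the block model, expand coefficients blockwise to get both inclusions, and use Schur orthogonality together with full support of Haar measure to show the sum is direct. The one step that needs a word of justification is fixing a \emph{unitary} intertwiner $T$: it exists because equivalent unitary representations are unitarily equivalent, but it is not even needed, since for any invertible intertwiner $f^{\rho}_{x,y}(g)=\langle (T^{-1})^{\dagger}x|(\bigoplus_\alpha\pi_\alpha(g)\otimes I)\,Ty\rangle$ and $(T^{-1})^{\dagger}x$ still ranges over the whole space.
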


\begin{proposition}[Conjugation representation~\cite{BrockerTomDieck1985}]
\label{prop:conjugation-coeff-prelim}
Let
\(
    \pi_{\mathrm{G}} \colon \mathrm{G} \longrightarrow \mathrm U(V)
\)
be a unitary representation and let
\[
    \operatorname{Ad}_{\pi_{\mathrm{G}}}(g)(X)
    :=
    \pi_{\mathrm{G}}(g) X \pi_{\mathrm{G}}(g)^\dagger,
    \qquad
    X\in\mathcal L(V).
\]
Under the canonical identification
\[
    \mathcal L(V)
    \cong
    V\otimes\overline V,
\]
one has
\begin{equation}
\label{eq:Ad-tensor-prelim}
    \operatorname{Ad}_{\pi_{\mathrm{G}}}
    \cong
    \pi_{\mathrm{G}}\otimes\overline{\pi_{\mathrm{G}}}.
\end{equation}
Consequently,
\begin{equation}
\label{eq:Coeff-Ad-prelim}
    \operatorname{Coeff}
    (
       \operatorname{Ad}_{\pi_{\mathrm{G}}}
    )
    =
    \operatorname{Coeff}
    (
       \pi_{\mathrm{G}}\otimes\overline{\pi_{\mathrm{G}}}
    ).
\end{equation}
\end{proposition}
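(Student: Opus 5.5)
The plan is to construct the identification \(\mathcal L(V)\cong V\otimes\overline V\) explicitly and check that it intertwines the two actions. The coefficient statement \eqref{eq:Coeff-Ad-prelim} will then follow from the general fact that unitarily equivalent representations have the same coefficient space.

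\emph{Step 1: the identification.} Fix an orthonormal basis \(\{|e_i\rangle\}\) of \(V\), and let \(\overline V\) be the conjugate space, on which \(\overline{\pi_{\mathrm G}}(g)\) acts by the entrywise-conjugate matrix \(\overline{\pi_{\mathrm G}(g)}\). Define the vectorization map
\[
\operatorname{vec}:\mathcal L(V)\longrightarrow V\otimes\overline V,\qquad |e_i\rangle\langle e_j|\longmapsto |e_i\rangle\otimes|\overline{e_j}\rangle,
\]
extended linearly. It sends the matrix units to an orthonormal basis. It is therefore a unitary map when \(\mathcal L(V)\) carries the Hilbert--Schmidt inner product \(\langle X,Y\rangle=\operatorname{Tr}(X^\dagger Y)\). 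Intrinsically, it is the map \(|x\rangle\langle y|\mapsto x\otimes\overline y\).

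\emph{Step 2: intertwining.} Use the standard identity \(\operatorname{vec}(AXB^{T})=(A\otimes B)\operatorname{vec}(X)\) in the chosen basis, with \(B=\overline{\pi_{\mathrm G}(g)}\). Since \(\pi_{\mathrm G}(g)^\dagger=\overline{\pi_{\mathrm G}(g)}^{\,T}\), this gives
\[
\operatorname{vec}\bigl(\pi_{\mathrm G}(g)X\pi_{\mathrm G}(g)^\dagger\bigr)=\bigl(\pi_{\mathrm G}(g)\otimes\overline{\pi_{\mathrm G}(g)}\bigr)\operatorname{vec}(X).
\]
Equivalently, on rank-one operators, \(\pi_{\mathrm G}(g)|x\rangle\langle y|\pi_{\mathrm G}(g)^\dagger=|\pi_{\mathrm G}(g)x\rangle\langle\pi_{\mathrm G}(g)y|\), and this is mapped to \(\pi_{\mathrm G}(g)x\otimes\overline{\pi_{\mathrm G}(g)y}\). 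Hence \(\operatorname{vec}\) is a unitary intertwiner, which proves \eqref{eq:Ad-tensor-prelim}. The only delicate point in the whole argument is keeping the conjugation and transpose conventions consistent. The entrywise conjugate \(\overline U\) fixed in the notation section is exactly what makes \((\pi^\dagger)^T=\overline\pi\) hold.

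\emph{Step 3: coefficient spaces.} Let \(T:V_1\to V_2\) be a unitary intertwiner between representations \(\rho_1\) and \(\rho_2\), so that \(\rho_1(g)=T^\dagger\rho_2(g)T\). Then
\[
f^{\rho_1}_{x,y}(g)=\langle x|T^\dagger\rho_2(g)Ty\rangle=f^{\rho_2}_{Tx,Ty}(g).
\]
This gives \(\operatorname{Coeff}(\rho_1)\subseteq\operatorname{Coeff}(\rho_2)\). Applying the same argument to \(T^{-1}=T^\dagger\) gives the reverse inclusion. Taking \(T=\operatorname{vec}\), \(\rho_1=\operatorname{Ad}_{\pi_{\mathrm G}}\) and \(\rho_2=\pi_{\mathrm G}\otimes\overline{\pi_{\mathrm G}}\) yields \eqref{eq:Coeff-Ad-prelim}.
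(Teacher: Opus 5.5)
Your proof is correct. The paper gives no proof of this proposition; it quotes the result from Br\"ocker--tom Dieck. Your argument is the standard one and fills that gap: build the explicit unitary $|x\rangle\langle y|\mapsto x\otimes\overline{y}$, check that it intertwines $\operatorname{Ad}_{\pi_{\mathrm G}}$ with $\pi_{\mathrm G}\otimes\overline{\pi_{\mathrm G}}$, then use invariance of coefficient spaces under unitary equivalence. Your conventions also agree with the paper's entrywise-conjugate $\overline{U}$.

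The one implicit choice you make is harmless. Defining $\operatorname{Coeff}(\operatorname{Ad}_{\pi_{\mathrm G}})$ requires an inner product on $\mathcal L(V)$, and you take Hilbert--Schmidt. But a coefficient space is just the span of the functions $g\mapsto\ell(\rho(g)v)$ over all linear functionals $\ell$ and vectors $v$, so it does not depend on which inner product is used.
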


\subsection{Unitary \(t\)-designs}
In this section, we review some basic definitions and results on unitary \(t\)-designs.

\begin{definition}[\cite{kaznatcheev2010structure}]
Let \(\mathfrak{U}\) be a finite subset of \(\U(d)\). A  function \(w:\mathfrak{U} \longrightarrow (0,1]\) is said to be a weight function on \(\mathfrak{U}\) if for all \(U \in \mathfrak{U}\), \(w(U) > 0\), and \(\sum_{U \in \mathfrak{U}} w(U)=1\).
\end{definition}

\noindent For our construction, we consider the tensor product definition of unitary \(t\)-designs. 

\begin{definition}[Weighted tensor product definition of Unitary \(t\)-designs {\cite{kaznatcheev2010structure}}]\label{def:utd}
Let \(\mathfrak{U} \subset \U(d)\) be a finite set of unitary operators on a \(d\)-dimensional complex Hilbert space \(\mathbb{C}^d\), and let \(w: \mathfrak{U} \longrightarrow (0,1]\) be a probability distribution over \(\mathfrak{U}\), i.e.,
\[
\sum_{U \in \mathfrak{U}} w(U) = 1.
\]
Then the tuple \((\mathfrak{U}, w)\) is said to be an \emph{exact unitary \(t\)-design} if
\begin{equation}\label{eq:exact-unitary-design-moment}
\sum_{U\in\mathfrak U}
   w(U)\,
   U^{\otimes t}\otimes\overline{U}^{\otimes t}
=
\int_{\U(d)}
   U^{\otimes t}\otimes\overline{U}^{\otimes t}
   \,\mathrm d\mu_{\mathrm{Haar}}(U),
\end{equation}
i.e.,
\begin{equation}\label{eq:exact-unitary-design-channel}
    \Phi_{\mathfrak U,w}^{(t)}(X)
    =
    \mathcal T^{(t)}(X)
    \qquad
    \text{for every }X\in\mathcal L(\mathcal H_d^{\otimes t}),
\end{equation}
where the integral is taken with respect to the normalized Haar measure \(\mu_{\mathrm{Haar}}\) on the unitary group \(\U(d)\). That is, the tuple \((\mathfrak{U}, w)\) reproduces the \(t\)-th moment operator of the Haar distribution over \(\U(d)\).
\end{definition}

\begin{proposition}[\cite{kaznatcheev2010structure}]
Every unitary \(t\)-design is also a \((t-1)\)-design.
\end{proposition}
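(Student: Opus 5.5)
The plan is to reduce the claim to a statement about moment operators and then use the Schur--Weyl decomposition. By Definition~\ref{def:utd}, $(\mathfrak U,w)$ is an exact unitary $t$-design if and only if
\[
  \sum_{U}w(U)\,U^{\otimes t}\otimes\overline U^{\otimes t}
  =\int_{\U(d)}U^{\otimes t}\otimes\overline U^{\otimes t}\,d\mu_{\mathrm{Haar}}(U).
\]
Equivalently, $\sum_U w(U)f(U)=\int f\,d\mu_{\mathrm{Haar}}$ for every $f\in\operatorname{Coeff}\bigl(U\mapsto U^{\otimes t}\otimes\overline U^{\otimes t}\bigr)$. This holds because the matrix entries of this representation span exactly its coefficient space (Definition~\ref{def:coefficient-space-prelim}). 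The statement therefore follows once we prove the inclusion
\[
  \operatorname{Coeff}\bigl(U^{\otimes(t-1)}\otimes\overline U^{\otimes(t-1)}\bigr)\subseteq
  \operatorname{Coeff}\bigl(U^{\otimes t}\otimes\overline U^{\otimes t}\bigr).
\]
Equality of the two averages on the smaller space is then immediate.

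To prove the inclusion, I would first work with matrix entries directly. The entries of $U^{\otimes(t-1)}\otimes\overline U^{\otimes(t-1)}$ are the monomials
\[
  \prod_{a=1}^{t-1}U_{i_aj_a}\,\overline{U_{k_ak_a'}}.
\]
Multiplying such a monomial by the constant function
\[
  1=\sum_{m}|U_{m1}|^2=\sum_m U_{m1}\overline{U_{m1}},
\]
which uses unitarity of $U$ (columns of norm one), writes it as a linear combination of degree-$(t,t)$ monomials. Each of these is a matrix entry of $U^{\otimes t}\otimes\overline U^{\otimes t}$, namely the entry obtained by appending index pairs $(m,1)$ in the extra $U$ and $\overline U$ factors. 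This gives the inclusion at once.

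An alternative, representation-theoretic route runs through Proposition~\ref{prop:Coeff-direct-sum-prelim}. Note that $\mathcal H_d\otimes\overline{\mathcal H_d}\cong\mathcal L(\mathcal H_d)$ contains the trivial representation, spanned by $I$, which is $U\otimes\overline U$-invariant. Hence $U^{\otimes(t-1)}\otimes\overline U^{\otimes(t-1)}$ is a subrepresentation of $U^{\otimes t}\otimes\overline U^{\otimes t}$, via tensoring with the invariant vector $\mathrm{vec}(I)$. The coefficients of a subrepresentation are coefficients of the ambient one, since we may take $f_{x\otimes v,\,y\otimes v}$ with $v=\mathrm{vec}(I)/\sqrt d$.

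Finally, I would translate back to the channel form of the definition. Averaging $f$ over the design equals averaging it over Haar measure for all $f$ in the smaller coefficient space, hence for all matrix entries of the $(t-1)$-moment operator. So $\Phi^{(t-1)}_{\mathfrak U,w}=\mathcal T^{(t-1)}$.

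There is no real obstacle. The only point requiring care is unitarity: the invariant vector $\mathrm{vec}(I)$ is fixed by $U\otimes\overline U$ precisely because $U\overline U^{\mathsf T}\!=UU^\dagger=I$. The same argument would fail for general elements of $\GL_d(\mathbb C)$, which is consistent with designs being defined over $\U(d)$.
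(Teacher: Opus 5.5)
Your proof is correct. Either of your devices establishes the inclusion $\operatorname{Coeff}(U^{\otimes(t-1)}\otimes\overline U^{\otimes(t-1)})\subseteq\operatorname{Coeff}(U^{\otimes t}\otimes\overline U^{\otimes t})$: appending the factor $U_{m1}\overline{U_{m1}}$ via $\sum_m|U_{m1}|^2=1$, or tensoring with the invariant unit vector $\mathrm{vec}(I)/\sqrt d$. Your translation between cubature on coefficient spaces and equality of moment operators is also right.

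The paper does not prove this proposition; it quotes it from Kaznatcheev. Your argument is, however, exactly the sufficiency half of Theorem~\ref{thm:design-heredity-criterion} in the case $\mathcal S=\mathcal H_d^{\otimes t}$, $j=t$. There $\mathfrak C_t(\mathcal H_d^{\otimes t})$ is the coefficient space of $U^{\otimes t}\otimes\overline U^{\otimes t}$ and $\operatorname{Red}_t(\mathcal H_d^{\otimes t})=\mathcal H_d^{\otimes(t-1)}$. Your identity is what verifies $\mathfrak C_{t-1}(\mathcal H_d^{\otimes(t-1)})\subseteq\mathfrak C_t(\mathcal H_d^{\otimes t})$.

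The paper's other descent tool, Proposition~\ref{prop:covariant-lifting}, gives the same result at the level of channels, with no coefficient spaces. Take $\mathscr L(A)=A\otimes\beta$ for any density operator $\beta$ on the last factor. Lemma~\ref{lem:partial-trace-covariance} then gives $\operatorname{Tr}_t\Phi^{(t)}_{\mathfrak U,w}(A\otimes\beta)=\Phi^{(t-1)}_{\mathfrak U,w}(A)$ and $\operatorname{Tr}_t\mathcal T^{(t)}(A\otimes\beta)=\mathcal T^{(t-1)}(A)$. Hence equal $t$-th moment channels force equal $(t-1)$-th ones. With $\beta=|e_1\rangle\langle e_1|$ this is literally your identity $\operatorname{Tr}(U\beta U^\dagger)=\sum_m|U_{m1}|^2=1$. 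With $\beta=I_{\mathcal H_d}/d$ it is your $\mathrm{vec}(I)$ route.

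The lifting argument is the shortest proof for full designs. The coefficient-space formulation you chose is the one that carries over to restricted designs. There $A\otimes\beta$ need not be supported on the plaintext subspace, so the paper needs the constructed lift $\mathcal J_s^+\circ(\mathcal K_s^+)^{-1}$ for $\operatorname{Sym}^s(\mathcal H_d)$. It also needs the two-sided coefficient criterion, which produces the threshold $s\le(d+1)/2$ for $\bigwedge^s\mathcal H_d$.
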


The following theorem proves the existence of unitary \(t\)-designs.

\begin{theorem}[Seymour--Zaslavsky~\cite{seymour1984averaging}]\label{thm:seymour-zaslavsky}
Let $X$ be an infinite path-connected topological space, and let $\mu$ be a Borel probability measure on $X$ with full support, i.e., $\operatorname{supp}(\mu)=X$. Let $f \colon X \longrightarrow \mathbb{C}^{n \times m}$ be continuous and $\mu$-integrable componentwise.

Then there exists an integer $N_0 \in \mathbb{N}$ such that, for every integer $N \ge N_0$, there exists a finite subset $Y \subset X$ with $|Y| = N$ satisfying
\begin{equation}
    \frac{1}{|Y|} \sum_{x \in Y} f(x)
    =
    \int_{X} f(x) \, d\mu(x).
\end{equation}
\end{theorem}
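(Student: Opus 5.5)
The plan is to treat $f$ as a map into the real vector space $\mathbb R^{2nm}\cong\mathbb C^{n\times m}$ and set $m:=\int_X f\,d\mu$. Let $A$ be the affine hull of $f(X)$, of dimension $k$. The argument has four steps:
\begin{enumerate}
\item show that $m$ lies in the relative interior of $\operatorname{conv} f(X)$ inside $A$;
\item capture $m$ inside a nondegenerate simplex spanned by values $f(x_0),\dots,f(x_k)$;
\item realize $m$ exactly as an equal-weight average of $N$ values, using path-connectedness and a topological degree (Brouwer) argument;
\item arrange that the $N$ sample points are distinct, which is where the hypothesis that $X$ is infinite enters.
\end{enumerate}
If $k=0$ then $f$ is constant and any $N$ distinct points work, so we assume $k\ge1$.

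For the first step we argue by contradiction. If $m$ were on the relative boundary of $\operatorname{conv} f(X)$, there would be a linear functional $\ell$, nonconstant on $A$, with $\ell(f(x))\le \ell(m)$ for all $x$. The function $g:=\ell(m)-\ell\circ f$ is continuous, nonnegative and has $\int g\,d\mu=0$. Full support of $\mu$ then forces $g\equiv0$, so $f(X)$ lies in a hyperplane of $A$, contradicting the choice of $A$.

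Since $m$ is relatively interior, Carathéodory's theorem in $A$, together with a small perturbation, gives points $x_0,\dots,x_k\in X$ with $f(x_i)$ affinely independent and $m=\sum_i\lambda_i f(x_i)$, where every $\lambda_i>0$. In fact $m$ then lies in the interior of the simplex $\Delta=\operatorname{conv}\{f(x_i)\}$.

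The third step, passing from positive real weights to equal weights $1/N$, is the main obstacle; I would handle it by a degree argument. Fix $\varepsilon>0$ so that the ball $B_\varepsilon(m)\cap A$ lies inside $\Delta$. For each $i$, choose a path $\gamma_i$ in $X$ from $x_0$ to $x_i$. For $N$ large, choose integers $a_i\ge1$ with $\sum_i a_i=N$ and $|a_i/N-\lambda_i|$ small. Build a continuous map $F_N$ on the cube $[0,1]^{k}$ that averages $N$ values of $f$: all copies sit at the fixed points $x_i$ except one "moving" copy per direction $i$, placed at $\gamma_i(s_i)$. Because of the $1/N$ factor, a single moving copy changes the average by only $O(1/N)$, which cannot cover a full neighborhood of $m$. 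So instead I would let a block of roughly $\delta N$ copies move in each direction, parametrized so that $F_N$ is uniformly close to an affine map $L:[0,1]^k\to A$ whose image contains $m$ in its interior. The linear part of $L$ is built from the vectors $f(x_i)-f(x_0)$, which are linearly independent. Then $F_N$ and $L$ have the same degree relative to $m$ on the cube, for instance via the straight-line homotopy $(1-\tau)L+\tau F_N$, which never hits $m$ on the boundary of the cube. Hence $F_N$ hits $m$ exactly. This yields an $N$-point multiset $Y$ whose $1/N$-average equals $m$, and it does so for every $N\ge N_0$.

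Finally, $Y$ must be a genuine set, so the multiset has to be split into distinct points. All points except the moving blocks sit at the finitely many fixed $x_i$. Since $X$ is infinite and path-connected, every point has infinitely many points arbitrarily near it, and $f$ is continuous. I would therefore replace the repeated copies at each $x_i$ by distinct nearby points before running the degree argument. This shifts the fixed part of the average by an amount that can be made much smaller than $\varepsilon$. The moving blocks can also be taken to follow slightly different parallel paths, for example by concatenating $\gamma_i$ with short paths to distinct nearby points, so that all $N$ points remain distinct while the homotopy estimate, and hence the degree conclusion, is preserved.
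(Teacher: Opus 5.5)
The paper gives no proof of this statement; it quotes it from Seymour--Zaslavsky, so your outline has to stand on its own. Step~1 is fine.

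\textbf{Steps 2 and 3.} Step~2 is false as stated. Take the cross $X=[-1,1]\cup i[-1,1]\subset\mathbb C$, with $f$ the inclusion and $\mu$ normalized length. Then $m=0$ is interior to $\operatorname{conv} f(X)$, yet no triangle with vertices on the cross contains $0$ in its interior. A vertex at $0$ is not interior; otherwise two vertices lie on one axis, and either $0$ lies on the edge joining them or the triangle misses $0$. This is easily repaired. Write $m=\sum_l\lambda_l f(y_l)$ with all $\lambda_l>0$ and the finitely many $f(y_l)$ affinely spanning $A$; this follows from Carath\'eodory applied to the vertices of a small simplex around $m$. Then let block $i$ travel between two of the $y_l$ whose values differ by $d_i$, with $d_1,\dots,d_k$ linearly independent.

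Step~3 also leaves a needed idea implicit. If a block moves in unison, $F_N$ is close to $c+\delta\sum_i f(\gamma_i(s_i))$, which is not affine. The copies of a block must move one at a time, with copy $j$ active while $\delta N s_i\in[j,j+1]$. This gives $\|F_N-L\|_\infty=O(1/N)$ with $L(s)=c+\delta\sum_i s_i d_i$. With these changes the degree (or Brouwer) argument is correct, but it only produces an averaging \emph{multiset} for all large $N$.

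\textbf{Step 4 is the genuine gap.} The degree argument gives no control over the parameter $s^*$ with $F_N(s^*)=m$. You therefore need every configuration in the whole $k$-parameter family to consist of $N$ distinct points, and your construction does not provide this. All $\gamma_i$ start at $x_0$, so active copies of different blocks can meet, and an active copy can run through the stationary points you placed near $x_0$ and $x_i$. ``Parallel paths'' do not exist in general (take $X=[0,1]$ with $f$ a moment curve, or any space with cut points). Separating coincident points after solving destroys exactness.

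Separately, ``every point has infinitely many points arbitrarily near it'' fails in general topological spaces. An example is the point $p$ in $\{p\}\cup[0,1]$ with open sets $\emptyset$ and $\{p\}\cup V$, $V$ open in $[0,1]$. What is true, and what you actually need, is that $\{y:\|f(y)-f(x)\|<\eta\}$ is infinite.

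\textbf{One way to close the gap.} If $f$ is constant, any $N$ points work. Otherwise:
\begin{enumerate}
\item Replace $X$ by $C=f(X)\subset A\cong\mathbb R^k$. It is path-connected and Hausdorff, hence arc-connected, and distinct points of $C$ lift to distinct points of $X$.
\item Build a finite tree in $C$ through affinely independent points, subdivide its edges, and shrink pieces. This gives $k$ pairwise disjoint arcs $\sigma_i$ with linearly independent chords, avoiding the stationary points.
\item Give block $i$ start positions $u_{r-1}<\dots<u_0$ near one end of $\sigma_i$ and end positions $v_{r-1}<\dots<v_0$ near the other. Here the order is along $\sigma_i$, with $u_0<v_{r-1}$.
\item Let copy $j$ move from $u_j$ to $v_j$, in the order $j=0,1,\dots$.
\end{enumerate}
The single active copy then only crosses positions already vacated or not yet occupied. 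So every configuration is a genuine $N$-point set, and the clustering error is absorbed into the estimate for $\|F_N-L\|_\infty$.
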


\begin{corollary}[\cite{kaznatcheev2010structure}]
    A minimum unitary \(t\)-design \((\mathfrak{U},w)\) exists for all \(d,t \geq 1\).
\end{corollary}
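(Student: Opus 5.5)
The plan is to obtain existence of \emph{some} exact unitary \(t\)-design from Theorem~\ref{thm:seymour-zaslavsky}, and then get a minimum one by a well-ordering argument on cardinalities. I will apply Theorem~\ref{thm:seymour-zaslavsky} with \(X=\U(d)\) and \(\mu=\mu_{\mathrm{Haar}}\), and with the continuous matrix-valued function
\[
f\colon \U(d)\longrightarrow \mathbb C^{d^{2t}\times d^{2t}},\qquad f(U):=U^{\otimes t}\otimes\overline U^{\otimes t}.
\]

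The hypotheses are checked as follows. First, \(\U(d)\) is path-connected: every unitary is \(e^{iH}\) for a Hermitian \(H\), so \(s\mapsto e^{isH}\) joins it to \(I\). Second, \(\U(d)\) is infinite for \(d\geq1\), since it already contains the circle \(\{e^{i\theta}I\}\). Third, normalized Haar measure on a compact group has full support, because every nonempty open set has positive Haar measure. Finally, \(f\) is continuous, its entries are polynomials in the entries of \(U\) and \(\overline U\), and it is bounded on a compact set, hence componentwise integrable.

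Theorem~\ref{thm:seymour-zaslavsky} then yields a finite set \(Y=\{U_1,\dots,U_N\}\subset\U(d)\) with
\[
\frac1N\sum_{k}U_k^{\otimes t}\otimes\overline{U_k}^{\otimes t}=\int_{\U(d)} U^{\otimes t}\otimes\overline U^{\otimes t}\,\mathrm d\mu_{\mathrm{Haar}}(U).
\]
Taking \(\mathfrak U=Y\) with the uniform weight \(w\equiv 1/N\) gives exactly the moment identity~\eqref{eq:exact-unitary-design-moment}. Thus \((\mathfrak U,w)\) is an exact unitary \(t\)-design, and the family of weighted exact \(t\)-designs is nonempty.

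For minimality, consider the set of positive integers \(|\mathfrak U|\) as \((\mathfrak U,w)\) ranges over all weighted exact \(t\)-designs. This set is nonempty, so it has a least element, and any design attaining it is a minimum unitary \(t\)-design. As an optional refinement, Carath\'eodory's theorem bounds this minimum. The Haar moment lies in the convex hull of \(\{f(U)\}\), whose real affine span has dimension at most \(D=\dim_{\mathbb R}\) of the real span of these operators. Hence some design has at most \(D+1\) elements, with \(D\leq d^{4t}\).

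The only step needing care is the verification of the hypotheses of Theorem~\ref{thm:seymour-zaslavsky}, in particular full support of Haar measure and path-connectedness. Both are standard, so I expect no genuine obstacle. The degenerate case \(d=1\) is covered by the same argument, since \(\U(1)\) is still an infinite path-connected space.
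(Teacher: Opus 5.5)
Your proposal is correct and follows the paper's intended route. The paper places this corollary directly after Theorem~\ref{thm:seymour-zaslavsky}, so it is obtained by applying that theorem to \(\U(d)\) with Haar measure and \(f(U)=U^{\otimes t}\otimes\overline U^{\otimes t}\); a minimum-cardinality design then exists by well-ordering the nonempty set of achievable ensemble sizes, and your checks of the hypotheses and the optional Carath\'eodory bound are all sound.
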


For standard equivalent formulations and results, see \cite{kaznatcheev2010structure,dankert2009exact,RoyScott2009,seymour1984averaging}.

\begin{definition}[Restricted unitary design]
\label{def:restricted-unitary-design}
Let \(\mathcal S\subseteq\mathcal H_d^{\otimes t}\) be nonzero and let
\(\varepsilon\geq0\).  The ensemble \((\mathfrak U,w)\) is an
\(\varepsilon\)-approximate unitary \((\mathcal S,t)\)-design in induced
trace norm if
\begin{equation}\label{eq:restricted-design}
  \left\|
    \bigl(\Phi_{\mathfrak U,w}^{(t)}-\mathcal T^{(t)}\bigr)
    \circ\mathcal J_{\mathcal S}
  \right\|_{1\to1}
  \leq\varepsilon.
\end{equation}
It is exact when \(\varepsilon=0\).  Replacing \(\|\cdot\|_{1\to1}\) by
\(\|\cdot\|_\diamond\) gives the corresponding side-information-secure
approximation.
\end{definition}

When \(\mathcal S=\operatorname{Sym}^t(\mathcal H_d)\),
Definition~\ref{def:restricted-unitary-design} is the symmetric unitary
\(t\)-design notion of
\cite{BroadbentGonzalezGuillenSchuknecht2022}.  We will use their exact
descent result: an exact symmetric unitary \(t\)-design is an exact symmetric
unitary \((t-1)\)-design
\cite[Lemma~VI.3]{BroadbentGonzalezGuillenSchuknecht2022}.

Finally, we use without proof the Knill-Laflamme characterization of exactly correctable subspaces \cite{KnillLaflamme1997} and the complementarity between private subspaces of a channel and correctable subspaces of a complementary channel \cite{KretschmannKribsSpekkens2008}. Their precise specialization to the collective Haar twirl is recorded where it is applied.

\section{Admissible subspaces for Quantum Private Broadcasting(\textsf{QPB})}
\label{sec:qpb-admissibility}
In this section, for classification purposes, first we generalize and rewrite [Definition 3.1. \cite{BroadbentGonzalezGuillenSchuknecht2022}],a \(t\)-recipient \textsf{QPB}
scheme for general admissible subspaces \(\mathcal{S} \subset \mathcal{H}^{\otimes t}_{d}\).

\begin{definition}[Restricted channel distance]
\label{def:restricted-channel-distance}
Let \(\mathcal S\subset \mathcal X\) be nonzero and let
\(\Phi,\Psi \colon \mathcal L(\mathcal X) \longrightarrow \mathcal L(\mathcal Y)\) be linear.
Define
\begin{equation}\label{eq:restricted-channel-norms}
  \|\Phi-\Psi\|_{1\to1;\mathcal S}
  :=\| (\Phi-\Psi)\circ\mathcal J_{\mathcal S}\|_{1\to1},
  \qquad
  \|\Phi-\Psi\|_{\diamond;\mathcal S}
  :=\| (\Phi-\Psi)\circ\mathcal J_{\mathcal S}\|_\diamond.
\end{equation}
\end{definition}

\begin{definition}[Quantum private broadcast scheme]
\label{def:qpb-scheme}
Let \(0\leq\delta\leq 1\) and \(\varepsilon\geq0\). Let \(\mathcal S\subset \mathcal H_d^{\otimes t}\) be nonzero, and let \((\mathcal K,w)\) be a finite probability space. A \(t\)-recipient \textsf{QPB}
scheme on \(\mathcal S\) consists, for every \(k\in\mathcal K\), of CPTP
maps
\[
  \operatorname{Enc}^{(t)}_k:
    \mathcal L(\mathcal S)\longrightarrow
    \mathcal L(\mathcal H_d^{\otimes t}),
  \qquad
  \operatorname{Dec}_k:
    \mathcal L(\mathcal H_d) \longrightarrow\mathcal L(\mathcal H_d).
\]
It is \(\delta\)-correct on \(\mathcal S\) if
\begin{equation}\label{eq:qpb-correctness}
  \left\|
    \operatorname{Dec}_k^{\otimes t}\circ
    \operatorname{Enc}^{(t)}_k-\mathcal J_{\mathcal S}
  \right\|_\diamond
  \leq1-\delta
  \qquad(k\in\mathcal K).
\end{equation}
It has \(\varepsilon\)-indistinguishable ciphertexts in induced trace norm if
there exists \(\sigma\in\mathcal D(\mathcal H_d^{\otimes t})\) such that
\begin{equation}\label{eq:qpb-privacy}
  \left\|
   \mathbb{E}_{k\in \mathcal{K}}\left[\operatorname{Enc}^{(t)}_{k}\right]
    -\mathcal R_{\sigma,\mathcal S}
  \right\|_{1\to1}
  \leq\varepsilon.
\end{equation}
If the diamond norm is used in \eqref{eq:qpb-privacy}, the definition includes security against adversaries holding arbitrary side information. Perfect correctness and perfect privacy correspond to \(\delta=1\) and \(\varepsilon=0\), respectively.
\end{definition}

The plaintext in the generalized model is an already prepared joint state $\rho\in\mathcal D(\mathcal S)$. Correctness means recovery of this joint state by the product of the local decryption maps. In
particular, if $\Tr_{\widehat j}$ is the partial trace over all coordinates except $j$, collective encryption satisfies
\[
 \operatorname{Dec}_k\!\left(
  \Tr_{\widehat j}\operatorname{Enc}_k^{(t)}(\rho)\right)
 =\Tr_{\widehat j}\mathcal J_{\mathcal S}(\rho).
\]
For arbitrary $\mathcal S$, these local marginals need not be equal, nor need any one marginal contain the full logical state. Thus the generalized model studies private collective encryption with local decryption of a distributed plaintext. For an already available repeated pure plaintext $|\psi\rangle^{\otimes t}$ it recovers the usual same-message broadcast interpretation.

\begin{definition}[Haar-twirl admissible subspace]
\label{def:haar-admissible}
A nonzero subspace
\(\mathcal S\subseteq\mathcal H_d^{\otimes t}\) is
\emph{Haar-twirl admissible} if there is a state
\(\sigma_{\mathcal S}\in\mathcal D(\mathcal H_d^{\otimes t})\) such that
\begin{equation}\label{eq:haar-admissible}
  \mathcal T^{(t)}\circ\mathcal J_{\mathcal S}
  =\mathcal R_{\sigma_{\mathcal S},\mathcal S}
  \quad\text{on }\mathcal L(\mathcal S).
\end{equation}
The state \(\sigma_{\mathcal S}\) is uniquely determined and is called the
replacement state of \(\mathcal S\).
\end{definition}

The adjective ``admissible'' will always refer to
Definition~\ref{def:haar-admissible}; it is therefore a property of the Haar channel and not of a chosen encryption ensemble.

\begin{proposition}[\textsf{QPB} from a restricted design]
\label{prop:qpb-from-restricted-design}
Let \(\mathcal S\subseteq\mathcal H_d^{\otimes t}\) be Haar-twirl
admissible with replacement state \(\sigma_{\mathcal S}\), and let
\((\mathfrak U,w)\), \(\mathfrak U=\{U_k:k\in\mathcal K\}\), be an
\(\varepsilon\)-approximate unitary \((\mathcal S,t)\)-design.  Define
\begin{align}
  \operatorname{Enc}^{(t)}_k(X)
    &:=U_k^{\otimes t}\mathcal J_{\mathcal S}(X)
       (U_k^\dagger)^{\otimes t},
       \label{eq:qpb-encryption}\\
  \operatorname{Dec}_k(Y)
    &:=U_k^\dagger YU_k.
       \label{eq:qpb-decryption}
\end{align}
Then the resulting scheme is perfectly correct and satisfies
\begin{equation}\label{eq:qpb-design-security}
  \left\|
    \mathbb{E}_{k\in \mathcal{K}}\left[\operatorname{Enc}^{(t)}_{k}\right]
    -\mathcal R_{\sigma_{\mathcal S},\mathcal S}
  \right\|_{1\to1}
  \leq\varepsilon.
\end{equation}
If the design approximation is measured in diamond norm, the same estimate
holds in diamond norm.  In particular, an exact restricted design gives
perfect privacy, including against arbitrary side information.
\end{proposition}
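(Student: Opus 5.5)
Two things need proof: perfect correctness, which holds key by key, and the privacy estimate, which reduces to the design hypothesis once we use admissibility of \(\mathcal S\).

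\emph{Correctness.} Fix \(k\in\mathcal K\). The map \(\operatorname{Dec}_k^{\otimes t}\) acts on \(\mathcal L(\mathcal H_d^{\otimes t})\) by \(Y\mapsto (U_k^\dagger)^{\otimes t}Y U_k^{\otimes t}\). Composing with \eqref{eq:qpb-encryption} and using \(U_k^\dagger U_k=I\) in each factor gives
\[
\operatorname{Dec}_k^{\otimes t}\circ\operatorname{Enc}^{(t)}_k=\mathcal J_{\mathcal S}
\]
exactly. The left-hand side of \eqref{eq:qpb-correctness} is therefore \(0\), so \(\delta=1\). Each \(\operatorname{Enc}^{(t)}_k\) is CPTP because it is the isometric embedding \(\mathcal J_{\mathcal S}\) followed by a unitary conjugation. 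Each \(\operatorname{Dec}_k\) is a unitary channel.

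\emph{Privacy.} Averaging \eqref{eq:qpb-encryption} with the weights \(w\) gives
\[
\mathbb E_{k}\bigl[\operatorname{Enc}^{(t)}_k\bigr]=\Phi^{(t)}_{\mathfrak U,w}\circ\mathcal J_{\mathcal S}.
\]
Haar-twirl admissibility (Definition~\ref{def:haar-admissible}) says \(\mathcal R_{\sigma_{\mathcal S},\mathcal S}=\mathcal T^{(t)}\circ\mathcal J_{\mathcal S}\) as maps on \(\mathcal L(\mathcal S)\). Substituting both identities,
\[
\mathbb E_{k}\bigl[\operatorname{Enc}^{(t)}_k\bigr]-\mathcal R_{\sigma_{\mathcal S},\mathcal S}
=\bigl(\Phi^{(t)}_{\mathfrak U,w}-\mathcal T^{(t)}\bigr)\circ\mathcal J_{\mathcal S}.
\]
Its \(1\to1\) norm is at most \(\varepsilon\) by the restricted-design hypothesis \eqref{eq:restricted-design}, which is \eqref{eq:qpb-design-security}.

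\emph{Diamond norm and the exact case.} The identity above is an equality of linear maps on \(\mathcal L(\mathcal S)\). It therefore stays an equality after tensoring with \(\operatorname{id}_{\mathcal L(\mathbb C^r)}\) for every \(r\). If the design is \(\varepsilon\)-approximate in diamond norm, the same substitution gives the bound in diamond norm. For \(\varepsilon=0\) the difference map is identically zero, so its diamond norm vanishes as well; this is perfect privacy against arbitrary side information.

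No step is a real obstacle. The only point needing care is that the replacement state in \eqref{eq:qpb-design-security} is \(\sigma_{\mathcal S}\) itself. That is exactly what admissibility supplies: it lets us replace \(\mathcal R_{\sigma_{\mathcal S},\mathcal S}\) by \(\mathcal T^{(t)}\circ\mathcal J_{\mathcal S}\), after which the privacy bound is the design bound verbatim.
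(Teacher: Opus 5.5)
Your proposal is correct and follows the paper's proof essentially verbatim: the unitaries cancel key by key to give \(\operatorname{Dec}_k^{\otimes t}\circ\operatorname{Enc}^{(t)}_k=\mathcal J_{\mathcal S}\), and the identities \(\mathbb E_k[\operatorname{Enc}^{(t)}_k]=\Phi^{(t)}_{\mathfrak U,w}\circ\mathcal J_{\mathcal S}\) and \(\mathcal R_{\sigma_{\mathcal S},\mathcal S}=\mathcal T^{(t)}\circ\mathcal J_{\mathcal S}\) turn the privacy bound into the restricted-design bound, with the diamond-norm case handled identically. Your extra remarks are correct and slightly more explicit than the paper: that each \(\operatorname{Enc}^{(t)}_k\) is CPTP, and that in the exact case a vanishing \(1\to1\) norm forces the difference map to be zero, so its diamond norm also vanishes.
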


\begin{proof}
For every \(k \in \mathcal{K}\),
\(
  \operatorname{Dec}_k^{\otimes t}\circ
  \operatorname{Enc}^{(t)}_k
  =\mathcal J_{\mathcal S}.
\)
Moreover,
\[
  \mathbb{E}_{k\in \mathcal{K}}\left[\operatorname{Enc}^{(t)}_{k}\right]
  =\Phi_{\mathfrak U,w}^{(t)}\circ\mathcal J_{\mathcal S},
  \qquad
  \mathcal R_{\sigma_{\mathcal S},\mathcal S}
  =\mathcal T^{(t)}\circ\mathcal J_{\mathcal S}.
\]
Thus \eqref{eq:qpb-design-security} is exactly
\eqref{eq:restricted-design}; the diamond-norm assertion is identical.
\end{proof}

By Lemma~\ref{lem:blockwise-collective-Haar-twirl}, the classification of admissible subspaces reduces to determining when the block weights and the multiplicity-space marginals in [Lemma~\ref{lem:blockwise-collective-Haar-twirl}, \eqref{eq:appendix-twirl-direct-sum-form}] are independent of the input. 

The scheme of Proposition~\ref{prop:qpb-from-restricted-design} uses collective encryption. Conversely, every collective scheme that is private on $\mathcal S$ forces $\mathcal S$ to be Haar-twirl admissible.

\begin{proposition}[Collective privacy forces admissibility]\label{prop:necessity}
Let $\mathcal S\subseteq\mathcal H_d^{\otimes t}$ be nonzero, let $(\mathfrak U,w)$ be a
finite weighted ensemble and let $\sigma\in\mathcal D(\mathcal H_d^{\otimes t})$. Then
\[
  \bigl\|\mathcal T^{(t)}\circ\mathcal J_{\mathcal S}-\mathcal R_{\mathcal T^{(t)}(\sigma),\mathcal S}\bigr\|_{1\to1}
  \le\bigl\|\Phi^{(t)}_{\mathfrak U,w}\circ\mathcal J_{\mathcal S}-\mathcal R_{\sigma,\mathcal S}\bigr\|_{1\to1},
\]
and the same inequality holds with $\|\cdot\|_{\diamond}$ in place of
$\|\cdot\|_{1\to1}$. In particular, if the scheme of Proposition~\ref{prop:qpb-from-restricted-design} built from $(\mathfrak U,w)$ is perfectly private on $\mathcal S$ with replacement state $\sigma$, then $\mathcal S$ is Haar-twirl admissible and $\sigma_{\mathcal S}=\mathcal T^{(t)}(\sigma)$.
\end{proposition}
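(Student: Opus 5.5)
The plan is to use two structural facts about the Haar twirl. First, $\mathcal T^{(t)}$ absorbs any unitary conjugation $U^{\otimes t}(\cdot)(U^\dagger)^{\otimes t}$ by invariance of Haar measure. Hence $\mathcal T^{(t)}\circ\Phi^{(t)}_{\mathfrak U,w}=\mathcal T^{(t)}$, as already noted in the introduction. Second, $\mathcal T^{(t)}$ is a CPTP map, so it is contractive in the induced trace norm and in the diamond norm. Concretely, I will first verify
\[
\mathcal T^{(t)}\circ\Phi^{(t)}_{\mathfrak U,w}
=\sum_k w_k\int_{\mathrm U(d)}(VU_k)^{\otimes t}(\cdot)\bigl((VU_k)^{\dagger}\bigr)^{\otimes t}\,\mathrm d\mu_{\mathrm{Haar}}(V)=\mathcal T^{(t)},
\]
using right-invariance of $\mu_{\mathrm{Haar}}$ and $\sum_k w_k=1$. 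Next, for any $X\in\mathcal L(\mathcal S)$,
\[
\mathcal T^{(t)}\bigl(\mathcal R_{\sigma,\mathcal S}(X)\bigr)=\operatorname{Tr}(X)\,\mathcal T^{(t)}(\sigma)=\mathcal R_{\mathcal T^{(t)}(\sigma),\mathcal S}(X).
\]
Together these give the identity
\[
\mathcal T^{(t)}\circ\mathcal J_{\mathcal S}-\mathcal R_{\mathcal T^{(t)}(\sigma),\mathcal S}
=\mathcal T^{(t)}\circ\bigl(\Phi^{(t)}_{\mathfrak U,w}\circ\mathcal J_{\mathcal S}-\mathcal R_{\sigma,\mathcal S}\bigr).
\]

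The inequality then follows by bounding the right-hand side. For the $1\to1$ norm, I will use $\|\mathcal T^{(t)}(Y)\|_1\le\|Y\|_1$ for every operator $Y$, not only for positive ones. This holds for any positive trace-preserving map: its adjoint is unital and positive, so $\|\mathcal T^{(t)\dagger}(B)\|_\infty\le\|B\|_\infty$, and duality of the trace and operator norms gives the claim. Since $\mathcal T^{(t)}$ is self-adjoint, one can also argue directly. For the diamond norm, I will apply the same reasoning to $\mathcal T^{(t)}\otimes\operatorname{id}_{\mathcal L(\mathbb C^r)}$, which is again CPTP. This gives $\|(\mathcal T^{(t)}\otimes\operatorname{id})\circ\Delta\|_{1\to1}\le\|\Delta\otimes\operatorname{id}\|_{1\to1}$ for every $r$, and taking the supremum over $r$ finishes the bound.

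For the final assertion, assume perfect privacy, i.e.\ $\mathbb E_k[\operatorname{Enc}_k^{(t)}]=\Phi^{(t)}_{\mathfrak U,w}\circ\mathcal J_{\mathcal S}=\mathcal R_{\sigma,\mathcal S}$. The right-hand side of the inequality is then zero, so $\mathcal T^{(t)}\circ\mathcal J_{\mathcal S}=\mathcal R_{\mathcal T^{(t)}(\sigma),\mathcal S}$. Since $\mathcal T^{(t)}(\sigma)$ is a state, $\mathcal S$ is Haar-twirl admissible. Uniqueness of the replacement state follows by evaluating on any $X\in\mathcal D(\mathcal S)$, and it gives $\sigma_{\mathcal S}=\mathcal T^{(t)}(\sigma)$.

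There is no substantial obstacle here. The only point needing care is trace-norm contractivity on non-Hermitian inputs $X\in\mathcal L(\mathcal S)$, since the $1\to1$ supremum runs over all operators. I will handle it with the duality argument above, because the adjoint of $\mathcal T^{(t)}$ is a unital CP map and hence an operator-norm contraction by Russo–Dye.
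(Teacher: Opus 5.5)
Your proof is correct and follows the paper's argument. Both use the same factorization $\mathcal T^{(t)}\circ\mathcal J_{\mathcal S}-\mathcal R_{\mathcal T^{(t)}(\sigma),\mathcal S}=\mathcal T^{(t)}\circ(\Phi^{(t)}_{\mathfrak U,w}\circ\mathcal J_{\mathcal S}-\mathcal R_{\sigma,\mathcal S})$ from right invariance of Haar measure, followed by trace-norm contractivity of $\mathcal T^{(t)}$ and of $\mathcal T^{(t)}\otimes\operatorname{id}$. The one difference is that the paper gets contractivity on non-Hermitian inputs directly from the twirl being an average of unitary conjugations (convexity plus unitary invariance of $\|\cdot\|_1$), which avoids your duality/Russo--Dye step.
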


\begin{proof}
For $V\in\mathrm U(d)$, right invariance of $\mu_{\mathrm{Haar}}$ gives
$\mathcal T^{(t)}\bigl(V^{\otimes t}Y(V^{\dagger})^{\otimes t}\bigr)=\mathcal T^{(t)}(Y)$;
hence $\mathcal T^{(t)}\circ\Phi^{(t)}_{\mathfrak U,w}=\mathcal T^{(t)}$. Moreover
$\mathcal T^{(t)}\circ\mathcal R_{\sigma,\mathcal S}=\mathcal R_{\mathcal T^{(t)}(\sigma),\mathcal S}$, and
$\mathcal T^{(t)}(\sigma)\in\mathcal D(\mathcal H_d^{\otimes t})$ because $\mathcal T^{(t)}$ is a
channel. Therefore
\[
  \mathcal T^{(t)}\circ\mathcal J_{\mathcal S}-\mathcal R_{\mathcal T^{(t)}(\sigma),\mathcal S}
  =\mathcal T^{(t)}\circ\bigl(\Phi^{(t)}_{\mathfrak U,w}\circ\mathcal J_{\mathcal S}-\mathcal R_{\sigma,\mathcal S}\bigr).
\]
By convexity of the trace norm and its unitary invariance,
\[
  \bigl\|\mathcal T^{(t)}(Y)\bigr\|_1\le\int_{\mathrm U(d)}
  \bigl\|U^{\otimes t}Y(U^{\dagger})^{\otimes t}\bigr\|_1\,\mathrm d\mu_{\mathrm{Haar}}(U)
  =\|Y\|_1
\]
for every $Y$, which gives the $1\to1$ inequality. The same argument applied
to $\mathcal T^{(t)}\otimes\mathrm{id}_{\mathcal L(\mathbb C^r)}$, which is an average of
the unitary channels $Z\mapsto(U^{\otimes t}\otimes I)Z(U^{\otimes t}\otimes I)^{\dagger}$,
gives the diamond-norm inequality. If the scheme is perfectly private, the
right-hand side vanishes, and so does the left-hand side.
\end{proof}

\begin{remark}\label{rem:operational}
Combined with Proposition~\ref{prop:qpb-from-restricted-design} and the existence of
exact unitary $(\mathcal S,t)$-designs, Proposition~\ref{prop:necessity} shows that a nonzero subspace supports a perfectly correct and perfectly private collective \textsf{QPB} scheme if and only if it is Haar-twirl admissible. It also shows that an $\varepsilon$-private collective scheme on $\mathcal S$ forces $\mathcal T^{(t)}$ to be $\varepsilon$-close to a replacement channel on $\mathcal S$, which motivates the stability problem in Section~\ref{sec:conclusion}.
\end{remark}

\subsection{Classification of Haar-twirl admissible subspaces for \((\mathcal{S},t)-\mathsf{QPB}\)}\label{subsec:qpb-classification}
For an operator \(X\in\mathcal L(\mathcal{H}^{\otimes t}_{d})\), its
\(\lambda\)-diagonal Schur block is the operator
\[
    X_{\lambda\lambda}
    :=
    \iota_\lambda^\dagger X\iota_\lambda
    \in\mathcal L(V_\lambda\otimes W_\lambda).
\]

If \(\mathcal S\subset \mathcal{H}^{\otimes t}_{d}\) is a nonzero subspace, we identify an
operator on \(\mathcal S\) with its extension by zero to
\(\mathcal S^\perp\).  Under this identification,
\begin{equation}
\label{eq:density-operators-supported-on-S}
    \mathcal D(\mathcal S)
    :=
    \left\{
       \rho\in\mathcal L(\mathcal{S}):
       \rho\geq0,\ 
       \operatorname{Tr}(\rho)=1
    \right\}.
\end{equation}
Recall that \(\mathcal S\) is Haar-twirl admissible if there exists a
density operator \(\sigma\in\mathcal D(\mathcal{H}^{\otimes t}_{d})\), independent of the
input state, such that
\begin{equation}
\label{eq:qpb-admissibility-definition-recalled}
    \mathcal T^{(t)}(\mathcal J_{\mathcal S}(\rho))=\sigma
    \qquad
    \text{for every }\rho\in\mathcal D(\mathcal S).
\end{equation}

We now state our first main result on the classification of Haar-twirl admissible subspaces. 
\begin{theorem}[Classification of Haar-twirl admissible subspaces]
\label{thm:qpb-classification}
Let \(\mathcal S\subset \mathcal{H}^{\otimes t}_{d}\) be a nonzero linear subspace.  Then
\(\mathcal S\) is Haar-twirl admissible if and only if the following two conditions hold simultaneously.

\begin{enumerate}
\item\label{cond1}
For every \(\lambda\in\Lambda(d,t)\), there exists a constant
\(c_\lambda\geq0\), independent of \(\psi\) and \(\phi\), such that
\begin{equation}
\label{eq:condA}
    \langle\phi|\Pi_\lambda|\psi\rangle
    =
    c_\lambda\langle\phi|\psi\rangle
    \qquad
    \text{for all }|\psi\rangle,|\phi\rangle\in\mathcal S,
\end{equation}
and these constants satisfy
\begin{equation}
\label{eq:c-lambda-normalization}
    \sum_{\lambda\in\Lambda(d,t)}c_\lambda=1.
\end{equation}

\item\label{cond2}
For every \(\lambda\in\Lambda(d,t)\) for which \(c_\lambda>0\), there
exists a density operator \(\omega_\lambda\in\mathcal D(W_\lambda)\) such
that
\begin{equation}
\label{eq:condB}
\begin{split}
    &\operatorname{Tr}_{V_\lambda}
    \!\left(
       \iota_\lambda^\dagger
       |\psi\rangle\!\langle\phi|
       \iota_\lambda
    \right)                                                  =
    c_\lambda\langle\phi|\psi\rangle\,\omega_\lambda
    \qquad
    \text{for all }|\psi\rangle,|\phi\rangle\in\mathcal S.
\end{split}
\end{equation}
After identifying \(\operatorname{im}\Pi_\lambda\cong V_\lambda\otimes W_\lambda\), the left-hand side of \eqref{eq:condB} is
\(
    \operatorname{Tr}_{V_\lambda}
    \!\left(
       \Pi_\lambda|\psi\rangle\!\langle\phi|\Pi_\lambda
    \right).
\)
\end{enumerate}

When these conditions hold, the replacement state is
\begin{equation}
\label{eq:sigma}
    \sigma
    =
    \sum_{\substack{\lambda\in\Lambda(d,t)\\c_\lambda>0}}
    \iota_\lambda
    \left(
       c_\lambda
       \frac{I_{V_\lambda}}{\dim_{\mathbb C}V_\lambda}
       \otimes\omega_\lambda
    \right)
    \iota_\lambda^\dagger.
\end{equation}

In particular, suppose that
\begin{equation}
\label{eq:single-positive-block}
    c_{\lambda_0}=1,
    \qquad
    c_\lambda=0
    \quad\text{for every }\lambda\neq\lambda_0.
\end{equation}
Then Conditions~\ref{cond1} and~\ref{cond2} are equivalent to
\begin{enumerate}
\item
\(
    \mathcal S\subseteq V_{\lambda_0}\otimes W_{\lambda_0}
\), under the Schur-Weyl identification \eqref{eq:schur-weyl-decomposition};

\item
there exists \(\omega_{\lambda_0}\in\mathcal D(W_{\lambda_0})\) such that
\begin{equation}
\label{eq:single-block-fixed-marginal}
    \operatorname{Tr}_{V_{\lambda_0}}
    \!\left(|\psi\rangle\!\langle\phi|\right)
    =
    \langle\phi|\psi\rangle\,\omega_{\lambda_0}
    \qquad
    \text{for all }|\psi\rangle,|\phi\rangle\in\mathcal S.
\end{equation}
\end{enumerate}
Moreover, if \(\mathcal S\) is Haar-twirl admissible and contains a nonzero repeated product vector
\(
    |\varphi\rangle^{\otimes t}
\), then \eqref{eq:single-positive-block} is forced with
\(
    \lambda_0=(t)
\), the one-row partition of \(t\).
\end{theorem}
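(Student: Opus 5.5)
The plan is to first write the collective Haar twirl in block form and then read the admissibility condition off that form. By the Schur--Weyl block action \eqref{eq:U-block-action} and Corollary~\ref{cor:schur-integral}, the twirl $\mathcal T^{(t)}$ kills every off-diagonal block $\Pi_\lambda X\Pi_\mu$ with $\lambda\neq\mu$. The reason is that the integrand $V_\lambda(U)\,A\,V_\mu(U)^\dagger$ averages to zero by Schur orthogonality (Theorem~\ref{thm:Schur-orthogonality-prelim}), since $V_\lambda\not\cong V_\mu$. On a diagonal block, applying Corollary~\ref{cor:schur-integral} to the $V_\lambda$ factor (tensored with $\mathrm{id}_{W_\lambda}$) gives
\[
\mathcal T^{(t)}(X)=\sum_{\lambda\in\Lambda(d,t)}\iota_\lambda\Bigl[\frac{I_{V_\lambda}}{\dim_{\mathbb C}V_\lambda}\otimes\operatorname{Tr}_{V_\lambda}\bigl(\iota_\lambda^\dagger X\iota_\lambda\bigr)\Bigr]\iota_\lambda^\dagger .
\]
This is the blockwise lemma the paper cites. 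The map $X\mapsto\{\operatorname{Tr}_{V_\lambda}(\iota_\lambda^\dagger X\iota_\lambda)\}_\lambda$ is injective on the image of $\mathcal T^{(t)}$. Hence $\mathcal T^{(t)}(\mathcal J_{\mathcal S}(X))=\operatorname{Tr}(X)\sigma$ holds for all $X$ if and only if, for every $\lambda$, the linear map $X\mapsto\operatorname{Tr}_{V_\lambda}(\iota_\lambda^\dagger\mathcal J_{\mathcal S}(X)\iota_\lambda)$ equals $\operatorname{Tr}(X)\tau_\lambda$ for a fixed operator $\tau_\lambda\geq0$ with $\sum_\lambda\operatorname{Tr}\tau_\lambda=1$.

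Next I pass from density operators to rank-one operators $|\psi\rangle\langle\phi|$. Density operators on $\mathcal S$ span $\mathcal L(\mathcal S)$, so linearity gives the identity for all $X$, in particular for $X=|\psi\rangle\langle\phi|$. Taking the full trace over $W_\lambda$ turns this into $\langle\phi|\Pi_\lambda|\psi\rangle=\operatorname{Tr}(\tau_\lambda)\langle\phi|\psi\rangle$. I then set $c_\lambda:=\operatorname{Tr}\tau_\lambda$ and, when $c_\lambda>0$, $\omega_\lambda:=\tau_\lambda/c_\lambda$. This yields \eqref{eq:condA}, \eqref{eq:c-lambda-normalization} and \eqref{eq:condB}, and the displayed form of $\mathcal T^{(t)}$ gives \eqref{eq:sigma} directly.

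For the converse, I will use \eqref{eq:condB}, and when $c_\lambda=0$ the vanishing of the block: $\Pi_\lambda|\psi\rangle=0$ by \eqref{eq:condA} with $\phi=\psi$. Substituting these into the block formula and extending linearly from rank-one operators gives $\mathcal T^{(t)}\circ\mathcal J_{\mathcal S}=\mathcal R_{\sigma,\mathcal S}$. Positivity and unit trace of $\sigma$ follow from $c_\lambda\geq0$, $\omega_\lambda\in\mathcal D(W_\lambda)$ and the normalization. The nonnegativity of $c_\lambda$ also comes from \eqref{eq:condA}, which gives $c_\lambda=\|\Pi_\lambda\psi\|^2/\|\psi\|^2$.

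The single-block specialization is immediate. When $c_\lambda=0$ for all $\lambda\neq\lambda_0$, we get $\Pi_\lambda\psi=0$ for every $\psi\in\mathcal S$ and every such $\lambda$, so $\mathcal S\subseteq\operatorname{im}\Pi_{\lambda_0}$. Conversely, containment in that block forces $c_{\lambda_0}=1$ and $c_\lambda=0$ otherwise, and \eqref{eq:condB} reduces to \eqref{eq:single-block-fixed-marginal}.

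The last assertion needs an actual argument. The vector $|\varphi\rangle^{\otimes t}$ is permutation invariant, so it lies in $\operatorname{Sym}^t=\operatorname{im}\Pi_{(t)}$. Applying \eqref{eq:condA} with $\psi=\phi=|\varphi\rangle^{\otimes t}\neq0$ gives $c_{(t)}=1$, and normalization then forces $c_\lambda=0$ for all other $\lambda$. The main obstacle is the careful justification of the block formula, specifically the vanishing of the off-diagonal blocks and the identification of the diagonal blocks via $\operatorname{Tr}_{V_\lambda}$ under the fixed isometries $\iota_\lambda$. The remaining steps are bookkeeping with linearity.
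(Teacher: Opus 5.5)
Your proposal is correct and follows the paper's proof. The shared steps are the blockwise twirl formula (Lemma~\ref{lem:blockwise-collective-Haar-twirl}), sufficiency by substituting the block conditions into it, necessity by compressing the replacement identity to each Schur--Weyl block and taking $\operatorname{Tr}_{V_\lambda}$, and the same single-block and repeated-product-vector arguments. The one small difference is that you get the off-diagonal identities by extending the replacement identity linearly to $X=|\psi\rangle\langle\phi|$, whereas the paper first shows $B_\mu(\psi,\psi)=\tau_\mu$ for unit vectors and then recovers $B_\mu(\psi,\phi)=\langle\phi|\psi\rangle\tau_\mu$ by operator-valued polarization; the two steps are equivalent.
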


\begin{proof}
For \(\lambda\in\Lambda(d,t)\) and
\(|\psi\rangle,|\phi\rangle\in\mathcal{H}^{\otimes t}_{d}\), define the operator
\begin{equation}
\label{eq:B-lambda-definition}
    B_\lambda(\psi,\phi)
    :=
    \operatorname{Tr}_{V_\lambda}
    \!\left(
       \iota_\lambda^\dagger
       |\psi\rangle\!\langle\phi|
       \iota_\lambda
    \right)
    \in\mathcal L(W_\lambda).
\end{equation}
The map \(B_\lambda(\,\cdot\,,\,\cdot\,)\) is linear in its first
argument and conjugate-linear in its second argument.  Preservation of total
trace under the partial trace gives
\begin{align}
\operatorname{Tr}_{W_\lambda}B_\lambda(\psi,\phi)
&=
\operatorname{Tr}_{V_\lambda\otimes W_\lambda}
\!\left(
   \iota_\lambda^\dagger
   |\psi\rangle\!\langle\phi|
   \iota_\lambda
\right)
\notag\\
&=
\operatorname{Tr}
\!\left(
   |\psi\rangle\!\langle\phi|
   \iota_\lambda\iota_\lambda^\dagger
\right)
\notag\\
&=
\operatorname{Tr}
\!\left(
   |\psi\rangle\!\langle\phi|\Pi_\lambda
\right)
\notag\\
&=
\langle\phi|\Pi_\lambda|\psi\rangle.
\label{eq:trace-of-B-lambda}
\end{align}
In particular,
\begin{equation}
\label{eq:B-diagonal-trace}
    \operatorname{Tr}_{W_\lambda}B_\lambda(\psi,\psi)
    =
    \langle\psi|\Pi_\lambda|\psi\rangle
    =
    \|\Pi_\lambda\psi\|^2.
\end{equation}

\smallskip
To prove the sufficiency first, assume Conditions~\ref{cond1} and~\ref{cond2}. Observe that the
operator \(\sigma\) in \eqref{eq:sigma} is positive semidefinite, because each \(\omega_\lambda\) is positive semidefinite and each
\(c_\lambda\geq0\). Moreover,
\begin{align}
\operatorname{Tr}(\sigma)
&=
\sum_{\substack{\lambda\in\Lambda(d,t)\\c_\lambda>0}}
c_\lambda
\operatorname{Tr}_{V_\lambda\otimes W_\lambda}
\!\left(
   \frac{I_{V_\lambda}}{\dim_{\mathbb C}V_\lambda}
   \otimes\omega_\lambda
\right)
\notag\\
&=
\sum_{\substack{\lambda\in\Lambda(d,t)\\c_\lambda>0}}
c_\lambda
\frac{\operatorname{Tr}_{V_\lambda}(I_{V_\lambda})}{\dim_{\mathbb C}V_\lambda}
\operatorname{Tr}_{W_\lambda}(\omega_\lambda)
\notag\\
&=
\sum_{\substack{\lambda\in\Lambda(d,t)\\c_\lambda>0}}
c_\lambda \cdot 1 \cdot 1 \\
&=1.
\label{eq:sigma-has-unit-trace}
\end{align}
Consequently, \(\sigma\in\mathcal D(\mathcal{H}^{\otimes t}_{d})\).

\smallskip
Let \(\rho\in\mathcal D(\mathcal S)\) be arbitrary, and let
\(r_\rho:=\operatorname{rank}(\rho)\).  Since
\(\operatorname{supp}(\rho)\subseteq\mathcal S\), the finite-dimensional
spectral theorem gives
\begin{equation}
\label{eq:rho-spectral}
    \rho
    =
    \sum_{j=1}^{r_\rho}p_j|\psi_j\rangle\!\langle\psi_j|,
    \qquad
    p_j>0,
    \qquad
    \sum_{j=1}^{r_\rho}p_j=1,
\end{equation}
where
\(
    |\psi_1\rangle,\ldots,|\psi_{r_\rho}\rangle
\)
are orthonormal vectors belonging to \(\mathcal S\).

Fix \(\lambda\in\Lambda(d,t)\).  If \(c_\lambda>0\), then
Condition~\ref{cond2}, applied with \(\phi=\psi=\psi_j\), gives
\begin{equation}
\label{eq:B-positive-block-eigenvector}
    B_\lambda(\psi_j,\psi_j)
    =
    c_\lambda
    \langle\psi_j|\psi_j\rangle\omega_\lambda
    =
    c_\lambda\omega_\lambda.
\end{equation}
Using \eqref{eq:rho-spectral}, linearity of compression, and linearity of the
partial trace, we therefore obtain
\begin{align}
\operatorname{Tr}_{V_\lambda}
\!\left(
   \iota_\lambda^\dagger\rho\iota_\lambda
\right)
&=
\operatorname{Tr}_{V_\lambda}
\!\left(
   \iota_\lambda^\dagger
   \left(
      \sum_{j=1}^{r_\rho}p_j
      |\psi_j\rangle\!\langle\psi_j|
   \right)
   \iota_\lambda
\right)
\notag\\
&=
\sum_{j=1}^{r_\rho}p_j
\operatorname{Tr}_{V_\lambda}
\!\left(
   \iota_\lambda^\dagger
   |\psi_j\rangle\!\langle\psi_j|
   \iota_\lambda
\right)
\notag\\
&=
\sum_{j=1}^{r_\rho}p_j B_\lambda(\psi_j,\psi_j)
\notag\\
&=
\sum_{j=1}^{r_\rho}p_jc_\lambda\omega_\lambda
\notag\\
&=
c_\lambda
\left(\sum_{j=1}^{r_\rho}p_j\right)
\omega_\lambda
\notag\\
&=
c_\lambda\omega_\lambda.
\label{eq:positive-block-of-rho}
\end{align}

Suppose instead that \(c_\lambda=0\).  Taking \(\phi=\psi\) in
Condition~\ref{cond1} gives, for every \(|\psi\rangle\in\mathcal S\),
\begin{equation}
\label{eq:zero-c-implies-zero-projection}
    \|\Pi_\lambda\psi\|^2
    =
    \langle\psi|\Pi_\lambda|\psi\rangle
    =
    c_\lambda\|\psi\|^2
    =0.
\end{equation}
It follows that
\(
    \Pi_\lambda|\psi\rangle=0
\)
for every \(|\psi\rangle\in\mathcal S\).  In particular,
\(
    \iota_\lambda^\dagger|\psi_j\rangle=0
\)
for every eigenvector in \eqref{eq:rho-spectral}, and hence
\begin{equation}
\label{eq:zero-block-of-rho}
    \iota_\lambda^\dagger\rho\iota_\lambda
    =0,
    \qquad
    \operatorname{Tr}_{V_\lambda}
    \!\left(
       \iota_\lambda^\dagger\rho\iota_\lambda
    \right)
    =0.
\end{equation}

Substituting \eqref{eq:positive-block-of-rho} and
\eqref{eq:zero-block-of-rho} into the Haar-twirl formula
[Lemma~\ref{lem:blockwise-collective-Haar-twirl},\eqref{eq:appendix-twirl-direct-sum-form}], we obtain
\begin{align}
\mathcal T^{(t)}(\rho)
&=
\sum_{\lambda\in\Lambda(d,t)}
\iota_\lambda
\left[
   \frac{I_{V_\lambda}}{\dim_{\mathbb C}V_\lambda}
   \otimes
   \operatorname{Tr}_{V_\lambda}
   \!\left(
      \iota_\lambda^\dagger\rho\iota_\lambda
   \right)
\right]
\iota_\lambda^\dagger
\notag\\
&=
\sum_{\substack{\lambda\in\Lambda(d,t)\\c_\lambda>0}}
\iota_\lambda
\left(
   \frac{I_{V_\lambda}}{\dim_{\mathbb C}V_\lambda}
   \otimes c_\lambda\omega_\lambda
\right)
\iota_\lambda^\dagger
\notag\\
&=
\sum_{\substack{\lambda\in\Lambda(d,t)\\c_\lambda>0}}
\iota_\lambda
\left(
   c_\lambda
   \frac{I_{V_\lambda}}{\dim_{\mathbb C}V_\lambda}
   \otimes\omega_\lambda
\right)
\iota_\lambda^\dagger
\notag\\
&=\sigma.
\label{eq:sufficiency-final}
\end{align}
Because \(\rho\in\mathcal D(\mathcal S)\) was arbitrary,
\(\mathcal S\) is Haar-twirl admissible.

\medskip
Conversely, suppose that \(\mathcal S\) is Haar-twirl admissible with replacement state \(\sigma\in\mathcal D(\mathcal{H}^{\otimes t}_{d})\).  

Thus
\begin{equation}
\label{eq:necessity-replacement-assumption}
    \mathcal T^{(t)}(\rho)=\sigma
    \qquad
    \text{for every }\rho\in\mathcal D(\mathcal S).
\end{equation}
In particular, for every unit vector \(|\psi\rangle\in\mathcal S\),
\begin{align}
\sigma
&=
\mathcal T^{(t)}
\!\left(|\psi\rangle\!\langle\psi|\right)
\notag\\
&=
\sum_{\lambda\in\Lambda(d,t)}
\iota_\lambda
\left[
   \frac{I_{V_\lambda}}{\dim_{\mathbb C}V_\lambda}
   \otimes B_\lambda(\psi,\psi)
\right]
\iota_\lambda^\dagger.
\label{eq:necessity-twirled-pure-state}
\end{align}
Equation \eqref{eq:necessity-twirled-pure-state} shows, in particular, that
\(\sigma\) is block diagonal with respect to the Schur-Weyl decomposition.
For every \(\mu\in\Lambda(d,t)\), define its \(\mu\)-block by
\begin{equation}
\label{eq:sigma-mu-definition}
    \sigma_\mu
    :=
    \iota_\mu^\dagger\sigma\iota_\mu
    \in\mathcal L(V_\mu\otimes W_\mu).
\end{equation}
Multiplying \eqref{eq:necessity-twirled-pure-state} on the left by \(\iota_\mu^\dagger\), on the right by \(\iota_\mu\), and using
\begin{equation*}
    \iota_\mu^\dagger\iota_\lambda
    =
    \begin{cases}
       I_{V_{\lambda} \otimes W_{\lambda}},&\mu=\lambda,\\
       \mathbf{0},&\mu\neq\lambda,
    \end{cases}
\end{equation*}
we obtain
\begin{equation}
\label{eq:sigma-mu-fixed-B}
    \sigma_\mu
    =
    \frac{I_{V_\mu}}{\dim_{\mathbb C}V_\mu}
    \otimes B_\mu(\psi,\psi)
    \qquad
    \text{for every unit }|\psi\rangle\in\mathcal S.
\end{equation}

Define
\begin{equation}
\label{eq:tau-mu-definition}
    \tau_\mu
    :=
    \operatorname{Tr}_{V_\mu}(\sigma_\mu)
    \in\mathcal L(W_\mu).
\end{equation}
Since \(\sigma_\mu\geq0\), positivity of the partial trace implies
\(\tau_\mu\geq0\).  Applying \(\operatorname{Tr}_{V_\mu}\) to
\eqref{eq:sigma-mu-fixed-B} gives
\begin{align}
\tau_\mu
&=
\operatorname{Tr}_{V_\mu}
\!\left(
   \frac{I_{V_\mu}}{\dim_{\mathbb C}V_\mu}
   \otimes B_\mu(\psi,\psi)
\right)
\notag\\
&=
\operatorname{Tr}_{V_\mu}
\!\left(
   \frac{I_{V_\mu}}{\dim_{\mathbb C}V_\mu}
\right)
B_\mu(\psi,\psi)
\notag\\
&=
\frac{\operatorname{Tr}_{V_\mu}(I_{V_\mu})}{\dim_{\mathbb C}V_\mu}
B_\mu(\psi,\psi)
\notag\\
&=
1 \cdot B_\mu(\psi,\psi)
\notag\\
&=
B_\mu(\psi,\psi).
\label{eq:B-diagonal-equals-tau}
\end{align}
Thus the operator \(B_\mu(\psi,\psi)\) is independent of the choice of unit vector \(|\psi\rangle\in\mathcal S\).

Let
\begin{equation}
\label{eq:clambda-def}
    c_\mu
    :=
    \operatorname{Tr}_{W_\mu}(\tau_\mu).
\end{equation}
Then \(c_\mu\geq0\). Moreover,
\begin{align}
c_\mu
&=
\operatorname{Tr}_{W_\mu}(\tau_\mu)
\notag\\
&=
\operatorname{Tr}_{V_\mu\otimes W_\mu}(\sigma_\mu)
\notag\\
&=
\operatorname{Tr}
\!\left(
   \iota_\mu\sigma_\mu\iota_\mu^\dagger
\right)
\notag\\
&=
\operatorname{Tr}(\Pi_\mu\sigma\Pi_\mu)
\notag\\
&=
\operatorname{Tr}(\Pi_\mu\sigma),
\label{eq:c-mu-as-block-weight}
\end{align}
Consequently,
\begin{align}
\sum_{\mu\in\Lambda(d,t)}c_\mu
&=
\sum_{\mu\in\Lambda(d,t)}
\operatorname{Tr}(\Pi_\mu\sigma)
\notag\\
&=
\operatorname{Tr}
\!\left[
   \left(
      \sum_{\mu\in\Lambda(d,t)}\Pi_\mu
   \right)\sigma
\right]
\notag\\
&=
\operatorname{Tr}(I_{\mathcal{H}^{\otimes t}_{d}}\sigma)
\notag\\
&=
\operatorname{Tr}(\sigma)
=1.
\label{eq:c-mu-sum-one}
\end{align}

We now compute the same block weight directly from an
arbitrary unit vector \(|\psi\rangle\in\mathcal S\). Combining
\eqref{eq:sigma-mu-fixed-B}, multiplicativity of trace on tensor products, preservation of total trace under partial trace, and \eqref{eq:B-diagonal-trace}, we obtain
\begin{align}
\operatorname{Tr}(\Pi_\mu\sigma)
&=
\operatorname{Tr}_{V_\mu\otimes W_\mu}(\sigma_\mu)
\notag\\
&=
\operatorname{Tr}_{V_\mu\otimes W_\mu}
\!\left(
   \frac{I_{V_\mu}}{\dim_{\mathbb C}V_\mu}
   \otimes B_\mu(\psi,\psi)
\right)
\notag\\
&=
\frac{\operatorname{Tr}_{V_\mu}(I_{V_\mu})}{\dim_{\mathbb C}V_\mu}
\operatorname{Tr}_{W_\mu}B_\mu(\psi,\psi)
\notag\\
&=
1 \cdot
\operatorname{Tr}_{W_\mu}B_\mu(\psi,\psi)
\notag\\
&=
\operatorname{Tr}_{V_\mu\otimes W_\mu}
\!\left(
   \iota_\mu^\dagger
   |\psi\rangle\!\langle\psi|
   \iota_\mu
\right)
\notag\\
&=
\operatorname{Tr}
\!\left(
   \Pi_\mu|\psi\rangle\!\langle\psi|\Pi_\mu
\right)
\notag\\
&=
\operatorname{Tr}
\!\left(
   \Pi_\mu^2|\psi\rangle\!\langle\psi|
\right)
\notag\\
&=
\operatorname{Tr}
\!\left(
   \Pi_\mu|\psi\rangle\!\langle\psi|
\right)
\notag\\
&=
\langle\psi|\Pi_\mu|\psi\rangle
\notag\\
&=
\langle\psi|\Pi_\mu^\dagger\Pi_\mu|\psi\rangle
\notag\\
&=
\|\Pi_\mu|\psi\rangle\|^2.
\label{eq:block-weight-complete-calculation}
\end{align}
In particular,
\begin{equation}
\label{eq:diagonal-condition-A}
    \langle\psi|\Pi_\mu|\psi\rangle
    =
    c_\mu
    \qquad
    \text{for every unit }|\psi\rangle\in\mathcal S.
\end{equation}

It remains to derive the off-diagonal identities in Conditions~\ref{cond1}
and~\ref{cond2}.  For arbitrary \(|z\rangle\in\mathcal S\), homogeneity of
\(B_\mu\) and \eqref{eq:B-diagonal-equals-tau} imply
\begin{equation}
\label{eq:B-diagonal-homogeneous}
    B_\mu(z,z)
    =
    \|z\|^2\tau_\mu.
\end{equation}
The equality is trivial for \(z=0\); if \(z\neq0\), then
\begin{align*}
B_\mu(z,z)
&=
B_\mu
\!\left(
   \|z\|\frac{z}{\|z\|},
   \|z\|\frac{z}{\|z\|}
\right)                                                       \\
&=
\|z\|^2
B_\mu
\!\left(
   \frac{z}{\|z\|},
   \frac{z}{\|z\|}
\right)                                                       \\
&=
\|z\|^2\tau_\mu.
\end{align*}

We use the complex polarization identity in an operator-valued form.  For
\(|\psi\rangle,|\phi\rangle\in\mathcal S\), sesquilinearity gives
\begin{align}
B_\mu(\psi+\phi,\psi+\phi)
&=
B_\mu(\psi,\psi)
+B_\mu(\psi,\phi)
+B_\mu(\phi,\psi)
+B_\mu(\phi,\phi),
\label{eq:polarization-expansion-one}\\
B_\mu(\psi-\phi,\psi-\phi)
&=
B_\mu(\psi,\psi)
-B_\mu(\psi,\phi)
-B_\mu(\phi,\psi)
+B_\mu(\phi,\phi),
\label{eq:polarization-expansion-two}\\
B_\mu(\psi+i\phi,\psi+i\phi)
&=
B_\mu(\psi,\psi)
-iB_\mu(\psi,\phi)
+iB_\mu(\phi,\psi)
+B_\mu(\phi,\phi),
\label{eq:polarization-expansion-three}\\
B_\mu(\psi-i\phi,\psi-i\phi)
&=
B_\mu(\psi,\psi)
+iB_\mu(\psi,\phi)
-iB_\mu(\phi,\psi)
+B_\mu(\phi,\phi).
\label{eq:polarization-expansion-four}
\end{align}
Subtracting \eqref{eq:polarization-expansion-two} from
\eqref{eq:polarization-expansion-one}, adding \(i\) times
\eqref{eq:polarization-expansion-three}, and subtracting \(i\) times
\eqref{eq:polarization-expansion-four}, we obtain
\begin{equation}
\label{eq:operator-valued-polarization}
\begin{split}
B_\mu(\psi,\phi)
=\frac14\Bigl(&
B_\mu(\psi+\phi,\psi+\phi)
-B_\mu(\psi-\phi,\psi-\phi)                                  \\
&+iB_\mu(\psi+i\phi,\psi+i\phi)
-iB_\mu(\psi-i\phi,\psi-i\phi)
\Bigr).
\end{split}
\end{equation}
Substituting \eqref{eq:B-diagonal-homogeneous} into
\eqref{eq:operator-valued-polarization} yields
\begin{align}
B_\mu(\psi,\phi)
&=
\frac14\Bigl(
   \|\psi+\phi\|^2
   -\|\psi-\phi\|^2
   +i\|\psi+i\phi\|^2
   -i\|\psi-i\phi\|^2
\Bigr)\tau_\mu
\notag\\
&=
\langle\phi|\psi\rangle\tau_\mu.
\label{eq:B-off-diagonal-tau}
\end{align}
The last equality is the scalar complex polarization identity, with the
convention that the inner product is linear in its second
argument.

Taking the trace over \(W_\mu\) in
\eqref{eq:B-off-diagonal-tau}, and using
\eqref{eq:trace-of-B-lambda} and
\(
    \operatorname{Tr}_{W_\mu}(\tau_\mu)=c_\mu
\), gives
\begin{align}
\langle\phi|\Pi_\mu|\psi\rangle
&=
\operatorname{Tr}_{W_\mu}B_\mu(\psi,\phi)
\notag\\
&=
\langle\phi|\psi\rangle
\operatorname{Tr}_{W_\mu}(\tau_\mu)
\notag\\
&=
c_\mu\langle\phi|\psi\rangle.
\label{eq:condition-A-proved}
\end{align}
This proves Condition~\ref{cond1}.

If \(c_\mu>0\), define
\begin{equation}
\label{eq:omega-mu-definition}
    \omega_\mu
    :=
    \frac{1}{c_\mu}\tau_\mu.
\end{equation}
Since \(\tau_\mu\geq0\) and
\(
    \operatorname{Tr}_{W_\mu}(\tau_\mu)=c_\mu
\), it follows that
\begin{equation}
\label{eq:omega-is-density-operator}
    \omega_\mu\geq0,
    \qquad
    \operatorname{Tr}_{W_\mu}(\omega_\mu)=1.
\end{equation}
Thus \(\omega_\mu\in\mathcal D(W_\mu)\), and
\eqref{eq:B-off-diagonal-tau} becomes
\begin{equation}
\label{eq:condition-B-proved}
    B_\mu(\psi,\phi)
    =
    c_\mu\langle\phi|\psi\rangle\omega_\mu,
\end{equation}
which is Condition~\ref{cond2}.

If \(c_\mu=0\), then \(\tau_\mu\geq0\) and
\(
    \operatorname{Tr}(\tau_\mu)=0
\), so \(\tau_\mu=0\).  Therefore
\eqref{eq:sigma-mu-fixed-B} and
\eqref{eq:B-diagonal-equals-tau} give \(\sigma_\mu=0\).  If
\(c_\mu>0\), then
\begin{equation}
\label{eq:sigma-positive-block-form}
    \sigma_\mu
    =
    \frac{I_{V_\mu}}{\dim_{\mathbb{C}}(V_{\mu})}\otimes\tau_\mu
    =
    c_\mu
    \frac{I_{V_\mu}}{\dim_{\mathbb{C}}(V_{\mu})}\otimes\omega_\mu.
\end{equation}
Since \(\sigma\) is block diagonal, summing
\eqref{eq:sigma-positive-block-form} over the positive-weight blocks proves
the replacement-state formula \eqref{eq:sigma}.

\smallskip
\noindent \textbf{The single-block case.}
Assume \eqref{eq:single-positive-block}.  For every
\(|\psi\rangle\in\mathcal S\) and every \(\lambda\neq\lambda_0\),
Condition~\ref{cond1} gives
\begin{equation}
\label{eq:all-other-components-vanish}
    \|\Pi_\lambda\psi\|^2
    =
    \langle\psi|\Pi_\lambda|\psi\rangle
    =
    c_\lambda\|\psi\|^2
    =0.
\end{equation}
Therefore \(\Pi_\lambda\psi=0\) whenever
\(\lambda\neq\lambda_0\).  Using completeness of the Schur projectors,
\begin{align}
|\psi\rangle
&=
I_{\mathcal{H}^{\otimes t}_{d}}|\psi\rangle
\notag\\
&=
\left(
   \sum_{\lambda\in\Lambda(d,t)}\Pi_\lambda
\right)|\psi\rangle
\notag\\
&=
\Pi_{\lambda_0}|\psi\rangle.
\label{eq:S-contained-single-block}
\end{align}
Hence
\(
    \mathcal S\subseteq\operatorname{im}(\Pi_{\lambda_0})
    \cong V_{\lambda_0}\otimes W_{\lambda_0}
\).
After this identification, \(\Pi_{\lambda_0}\) acts as the identity on
\(\mathcal S\), so Condition~\ref{cond2} reduces to
\eqref{eq:single-block-fixed-marginal}.  Conversely, if
\(
    \mathcal S\subseteq V_{\lambda_0}\otimes W_{\lambda_0}
\)
and \eqref{eq:single-block-fixed-marginal} holds, then
Conditions~\ref{cond1} and~\ref{cond2} hold with
\(
    c_{\lambda_0}=1
\)
and \(c_\lambda=0\) for \(\lambda\neq\lambda_0\).

Finally, suppose that \(\mathcal S\) contains a nonzero repeated product
vector
\[
    |\Phi\rangle:=|\varphi\rangle^{\otimes t}\neq0.
\]
Let
\(
    |\widehat\Phi\rangle:=|\Phi\rangle/\|\Phi\|
\).
The vector \(|\widehat\Phi\rangle\) is invariant under every permutation of
its \(t\) tensor factors.  Consequently,
\begin{equation}
\label{eq:repeated-vector-in-symmetric-block}
    |\widehat\Phi\rangle
    \in
    \operatorname{Sym}^t(\mathbb C^d)
    =
    \operatorname{im}\Pi_{(t)},
\end{equation}
and hence
\begin{equation}
\label{eq:repeated-vector-projections}
    \Pi_{(t)}|\widehat\Phi\rangle
       =|\widehat\Phi\rangle,
    \qquad
    \Pi_\lambda|\widehat\Phi\rangle
       =0
    \quad\text{for every }\lambda\neq(t).
\end{equation}
Applying Condition~\ref{cond1} to
\(
    \psi=\phi=\widehat\Phi
\)
gives
\begin{align}
c_{(t)}
&=
\langle\widehat\Phi|\Pi_{(t)}|\widehat\Phi\rangle
=1,
\label{eq:c-symmetric-is-one}\\
c_\lambda
&=
\langle\widehat\Phi|\Pi_\lambda|\widehat\Phi\rangle
=0
\qquad
\text{for every }\lambda\neq(t).
\label{eq:other-c-are-zero}
\end{align}
Thus \eqref{eq:single-positive-block} is forced with
\(\lambda_0=(t)\), completing the proof.
\end{proof}

\begin{remark}[Repeated product vectors versus arbitrary product vectors]
\label{rem:repeated-versus-arbitrary-product}
The final assertion of Theorem~\ref{thm:qpb-classification} concerns a repeated product
vector \(|\varphi\rangle^{\otimes t}\).  It does not hold for a general
product vector
\(
    |\varphi_1\rangle\otimes\cdots\otimes|\varphi_t\rangle
\), which may have nonzero projections onto several Schur-Weyl blocks.
For example, if \(t=2\), \(d\geq2\), and
\(|0\rangle,|1\rangle\) are orthonormal, then for
\(
    |\Psi\rangle=|0\rangle\otimes|1\rangle
\)
and
\(
    \Pi_\pm=\tfrac12(I\pm F)
\), where \(F\) interchanges the two tensor factors, one has
\begin{align*}
\langle\Psi|\Pi_+|\Psi\rangle
&=
\frac12
\left(
   \langle\Psi|\Psi\rangle
   +\langle\Psi|F|\Psi\rangle
\right)
=\frac12,\\
\langle\Psi|\Pi_-|\Psi\rangle
&=
\frac12
\left(
   \langle\Psi|\Psi\rangle
   -\langle\Psi|F|\Psi\rangle
\right)
=\frac12,
\end{align*}
because
\(
    \langle\Psi|F|\Psi\rangle
    =\langle0|1\rangle\langle1|0\rangle=0
\).
Thus a general product vector need not force concentration in a single Schur
block.
\end{remark}

\begin{corollary}[Haar-twirl admissible full Schur-Weyl blocks]
\label{cor:active-block-injectivity}
Let
\(
    \lambda\in\Lambda(d,t)
\)
and let
\[
    \mathcal H_\lambda
    :=
    \iota_\lambda
    \bigl(
        V_\lambda\otimes W_\lambda
    \bigr)
    \subseteq
    \mathcal H_d^{\otimes t}
\]
be the full \(\lambda\)-th Schur-Weyl isotypic component.
Then
\(\mathcal H_\lambda\) is Haar-twirl admissible if and only if \(\dim_{\mathbb C}W_\lambda=1\). Consequently, for \(t\geq2\), the only Haar-twirl admissible full Schur-Weyl components are
\(
    \operatorname{Sym}^{t}(\mathcal H_d)
\)
and, when \(t\leq d\),
\(
    \bigwedge\nolimits^{t}\mathcal H_d.
\)
\end{corollary}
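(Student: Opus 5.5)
The plan is to apply the single-block case of Theorem~\ref{thm:qpb-classification} with $\mathcal S=\mathcal H_\lambda$ and $\lambda_0=\lambda$. Since $\mathcal H_\lambda=\operatorname{im}\Pi_\lambda$, Condition~\ref{cond1} holds automatically with $c_\lambda=1$ and $c_\mu=0$ for $\mu\neq\lambda$. Admissibility is therefore equivalent to \eqref{eq:single-block-fixed-marginal}: there must be a fixed $\omega_\lambda\in\mathcal D(W_\lambda)$ with $\operatorname{Tr}_{V_\lambda}(|\psi\rangle\langle\phi|)=\langle\phi|\psi\rangle\,\omega_\lambda$ for all $\psi,\phi\in V_\lambda\otimes W_\lambda$.

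\textbf{If $\dim W_\lambda=1$.} Then $\mathcal L(W_\lambda)=\mathbb C$. The partial trace $\operatorname{Tr}_{V_\lambda}(|\psi\rangle\langle\phi|)$ is just the scalar $\langle\phi|\psi\rangle$, so the condition holds with $\omega_\lambda=1$.

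\textbf{If $\dim W_\lambda\ge2$.} Take orthonormal $w_1\neq w_2\in W_\lambda$ and a unit $v\in V_\lambda$, and set $\psi_i=v\otimes w_i$. Then $\operatorname{Tr}_{V_\lambda}(|\psi_i\rangle\langle\psi_i|)=|w_i\rangle\langle w_i|$. These two operators differ, whereas \eqref{eq:single-block-fixed-marginal} would force both to equal $\omega_\lambda$. Hence $\mathcal H_\lambda$ is not admissible. One could alternatively argue through $\mathcal T^{(t)}$ directly using \eqref{eq:appendix-twirl-direct-sum-form}, but the classification theorem already packages this.

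\textbf{The enumeration for $t\ge2$.} By the hook length formula \eqref{eq:hook-length-formula-corrected}, $f^\lambda=1$ exactly when $\lambda=(t)$ or $\lambda=(1^t)$. To see this, note that if $\lambda$ has at least two rows and at least two columns, then placing $2$ either to the right of $1$ or below $1$, and filling the rest in the standard row-reading or column-reading way, gives two distinct standard tableaux. Since $\lambda$ must lie in $\Lambda(d,t)$, the partition $(1^t)$ occurs only when $t\le d$. By \eqref{eq:symmetric-schur-weyl-block} and \eqref{eq:antisymmetric-schur-weyl-block}, these two blocks are $\operatorname{Sym}^t(\mathcal H_d)$ and $\bigwedge^t\mathcal H_d$.

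The only step needing care is this last combinatorial claim that $f^\lambda=1$ forces a one-row or one-column shape, and it is elementary.
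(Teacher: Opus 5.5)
Your proof is correct and follows the paper's argument: both directions go through the single-block form of Theorem~\ref{thm:qpb-classification}, with $\omega_\lambda$ the projector onto the unique line of $W_\lambda$ in the one-dimensional case, and the same witness pair $\iota_\lambda(v\otimes w_1)$, $\iota_\lambda(v\otimes w_2)$ when $\dim_{\mathbb C}W_\lambda\ge2$. The only difference is the enumeration step, where the paper cites that the one-dimensional irreducible representations of $\mathfrak S_t$ ($t\ge2$) are the trivial and sign representations, whereas you count standard Young tableaux via $\dim_{\mathbb C}W_\lambda=f^\lambda$ (which follows from the Young--Yamanouchi basis of Theorem~\ref{thm:YY-prelim} rather than from the hook length formula as stated); both justifications are fine.
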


\begin{proof}
Suppose first that
\(
    \dim_{\mathbb C}W_\lambda=1
\).
Choose a unit vector
\(
    |s_\lambda\rangle
\)
spanning \(W_\lambda\).  Every vector in the \(\lambda\)-block has
the form \(|v\rangle\otimes|s_\lambda\rangle\), \(|v\rangle\in V_\lambda\).

For arbitrary
\(
    |\psi\rangle,|\phi\rangle\in V_\lambda
\),
we obtain
\begin{align}
    \operatorname{Tr}_{V_\lambda}
    \left[
        \bigl(
            |\psi\rangle\langle\phi|
        \bigr)
        \otimes
        |s_\lambda\rangle\langle s_\lambda|
    \right]
    &=
    \operatorname{Tr}
    \bigl(
        |\psi\rangle\langle\phi|
    \bigr)
    |s_\lambda\rangle\langle s_\lambda|
    \notag\\
    &=
    \langle\phi|\psi\rangle
    |s_\lambda\rangle\langle s_\lambda|.
\end{align}
Thus the single-block condition of
Theorem~\ref{thm:qpb-classification} holds with
\(
    c_\lambda=1,
    \;
    \omega_\lambda
    =
    |s_\lambda\rangle\langle s_\lambda|.
\)
Hence \(\mathcal H_\lambda\) is Haar-twirl admissible.

Conversely, suppose that
\(
    \dim_{\mathbb C}W_\lambda\geq2
\).
Choose a unit vector
\(
    |v\rangle\in V_\lambda
\)
and two orthonormal vectors
\(
    |w_1\rangle,|w_2\rangle\in W_\lambda.
\)
Define
\[
    |\psi_1\rangle
    :=
    \iota_\lambda
    \bigl(
        |v\rangle\otimes|w_1\rangle
    \bigr),
    \qquad
    |\psi_2\rangle
    :=
    \iota_\lambda
    \bigl(
        |v\rangle\otimes|w_2\rangle
    \bigr).
\]
Then
\begin{align}
    \operatorname{Tr}_{V_\lambda}
    \left[
        \iota_\lambda^\dagger
        |\psi_1\rangle\langle\psi_1|
        \iota_\lambda
    \right]
    &=
    |w_1\rangle\langle w_1|,
\\
    \operatorname{Tr}_{V_\lambda}
    \left[
        \iota_\lambda^\dagger
        |\psi_2\rangle\langle\psi_2|
        \iota_\lambda
    \right]
    &=
    |w_2\rangle\langle w_2|.
\end{align}
The two density operators are distinct. Therefore no single
\(
    \omega_\lambda\in\mathcal D(W_\lambda)
\)
can satisfy Condition~2 of Theorem~\ref{thm:qpb-classification} on the entire block.
Hence the full block is not Haar-twirl admissible.

Finally, the one-dimensional irreducible complex representations of \(\mathfrak{S}_t\), for \(t\geq2\), are the trivial representation and the sign representation. They correspond respectively to the partitions
\(
    (t)
    \;\text{and}\;
    (1^t).
\)
The latter belongs to \(\Lambda(d,t)\) when \(t\leq d\).
Under Schur-Weyl duality these blocks are respectively
\(
    \operatorname{Sym}^{t}(\mathcal H_d)
    \;\text{and}\;
    \bigwedge\nolimits^{t}\mathcal H_d.
\)
\end{proof}

\begin{corollary}
\label{cor:active-block-dimension-bound}
Let
\(
    \mathcal S\subset \mathcal H_d^{\otimes t}
\)
be a nonzero Haar-twirl admissible subspace, and let
\(
    \{c_\lambda\}_{\lambda\in\Lambda(d,t)}
\)
be the constants appearing in Theorem~\ref{thm:qpb-classification}.
For every
\(
    \lambda\in\Lambda(d,t)
\)
with
\(
    c_\lambda>0
\),
the restricted block projection
\[
    \Pi_\lambda|_{\mathcal S}:
    \mathcal S
    \longrightarrow
    \operatorname{im}(\Pi_\lambda)
    \subseteq
    \mathcal H_d^{\otimes t}
\]
is injective. Consequently,
\begin{equation}
\label{eq:active-block-dimension-bound}
    \dim_{\mathbb C}\mathcal S
    \leq
    \dim_{\mathbb C}(V_\lambda\otimes W_\lambda)
    =
    \dim_{\mathbb C}V_\lambda\,
    \dim_{\mathbb C}W_\lambda .
\end{equation}
\end{corollary}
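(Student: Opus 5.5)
The plan is to read injectivity directly off Condition~\ref{cond1} of Theorem~\ref{thm:qpb-classification}, and then to obtain the dimension bound by rank–nullity.

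Fix \(\lambda\) with \(c_\lambda>0\). For any \(|\psi\rangle,|\phi\rangle\in\mathcal S\), Condition~\eqref{eq:condA} gives \(\langle\phi|\Pi_\lambda|\psi\rangle=c_\lambda\langle\phi|\psi\rangle\). Because \(\Pi_\lambda\) is an orthogonal projection (Corollary~\ref{cor:equalityofprojectors}), we have \(\Pi_\lambda=\Pi_\lambda^\dagger\Pi_\lambda\). The identity therefore becomes \(\langle\Pi_\lambda\phi|\Pi_\lambda\psi\rangle=c_\lambda\langle\phi|\psi\rangle\). In operator form this reads
\[
(\Pi_\lambda|_{\mathcal S})^\dagger(\Pi_\lambda|_{\mathcal S})=c_\lambda I_{\mathcal S},
\]
where the adjoint is taken with respect to the inclusion \(\iota_{\mathcal S}\), i.e.\ the operator is \(\iota_{\mathcal S}^\dagger\Pi_\lambda\iota_{\mathcal S}\). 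Taking \(\phi=\psi\) yields \(\|\Pi_\lambda\psi\|^2=c_\lambda\|\psi\|^2\). Hence \(\Pi_\lambda\psi=0\) forces \(\psi=0\), since \(c_\lambda>0\). This is the injectivity claim. In fact \(c_\lambda^{-1/2}\Pi_\lambda|_{\mathcal S}\) is an isometry onto its image.

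For the dimension bound, an injective linear map \(\mathcal S\to\operatorname{im}(\Pi_\lambda)\) gives \(\dim\mathcal S\le\dim\operatorname{im}\Pi_\lambda\) by rank–nullity. By the Schur–Weyl identification \eqref{eq:schur-weyl-decomposition}, \(\iota_\lambda\) is an isometry from \(V_\lambda\otimes W_\lambda\) onto \(\operatorname{im}\Pi_\lambda\). So \(\dim\operatorname{im}\Pi_\lambda=\dim V_\lambda\cdot\dim W_\lambda\), which is \eqref{eq:active-block-dimension-bound}.

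There is no real obstacle here. The only point needing care is that Condition~\ref{cond1} is stated as a sesquilinear identity, so the diagonal case \(\phi=\psi\) together with idempotence and self-adjointness of \(\Pi_\lambda\) must be used to turn it into a norm identity. Note also that the existence of at least one \(\lambda\) with \(c_\lambda>0\) follows from \(\sum_\lambda c_\lambda=1\), so the bound is never vacuous.
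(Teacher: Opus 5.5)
Your proposal is correct and follows essentially the same route as the paper: take Condition~\ref{cond1} with \(\phi=\psi\), use \(\Pi_\lambda=\Pi_\lambda^\dagger\Pi_\lambda\) to get \(\|\Pi_\lambda\psi\|^2=c_\lambda\|\psi\|^2\), conclude injectivity from \(c_\lambda>0\), and then bound \(\dim_{\mathbb C}\mathcal S\) by \(\dim_{\mathbb C}\operatorname{im}(\Pi_\lambda)=\dim_{\mathbb C}V_\lambda\,\dim_{\mathbb C}W_\lambda\). Your remark that \(c_\lambda^{-1/2}\Pi_\lambda|_{\mathcal S}\) is an isometry is a harmless strengthening that the paper also records in its introduction.
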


\begin{proof}
Fix
\(
    \lambda\in\Lambda(d,t)
\)
with
\(
    c_\lambda>0
\),
and suppose that
\(
    |\psi\rangle\in\mathcal S
\)
satisfies
\(
    \Pi_\lambda|\psi\rangle=0.
\)
Applying Condition~\ref{cond1} of Theorem~\ref{thm:qpb-classification} with
\(
    |\phi\rangle=|\psi\rangle
\),
we obtain
\[
    \langle\psi|\Pi_\lambda|\psi\rangle
    =
    c_\lambda\langle\psi|\psi\rangle
    =
    c_\lambda\|\psi\|^2.
\]
Since
\(
    \Pi_\lambda=\Pi_\lambda^\dagger=\Pi_\lambda^2
\),
we also have
\[
    \langle\psi|\Pi_\lambda|\psi\rangle
    =
    \langle\psi|\Pi_\lambda^\dagger
                      \Pi_\lambda|\psi\rangle
    =
    \|\Pi_\lambda|\psi\rangle\|^2
    =
    0.
\]
Therefore
\(
    c_\lambda\|\psi\|^2=0.
\)
Because
\(
    c_\lambda>0
\),
it follows that
\(
    \|\psi\|=0
\),
and hence
\(
    |\psi\rangle=0
\).
Thus
\(
    \Pi_\lambda|_{\mathcal S}
\)
is injective.

Since
\[
    \Pi_\lambda(\mathcal S)
    \subseteq
    \operatorname{im}(\Pi_\lambda)
    =
    \iota_\lambda(V_\lambda\otimes W_\lambda),
\]
injectivity gives
\[
    \dim_{\mathbb C}\mathcal S
    =
    \dim_{\mathbb C}\Pi_\lambda(\mathcal S)
    \leq
    \dim_{\mathbb C}
    \operatorname{im}(\Pi_\lambda)
    =
    \dim_{\mathbb C}(V_\lambda\otimes W_\lambda).
\]
Finally,
\[
    \dim_{\mathbb C}(V_\lambda\otimes W_\lambda)
    =
    \dim_{\mathbb C}V_\lambda\,
    \dim_{\mathbb C}W_\lambda,
\]
which proves~\eqref{eq:active-block-dimension-bound}.
\end{proof}

\begin{remark}[Twirls of general compact groups]\label{rem:compact-group}
The proof of Theorem~\ref{thm:qpb-classification} uses only the blockwise form of the twirl in Lemma~\ref{lem:blockwise-collective-Haar-twirl}. That form rests on Schur's lemma and the inequivalence of distinct blocks, not on the specific structure of Schur-Weyl duality. Let $\mathrm G$ be a compact group with normalized Haar measure $\mu_{\text{Haar}}$, let $\pi_{\mathrm{G}}\colon\mathrm G\longrightarrow\mathrm U(\mathcal H)$ be a finite-dimensional unitary representation, and put
\[
  \mathcal T_\pi(X):=\int_{\mathrm G}\pi_{\mathrm{G}}(g)X\pi_{\mathrm{G}}(g)^\dagger\,
  \mathrm d\mu_{\mathrm G}(g),
  \qquad
  \mathcal H=\bigoplus_{\alpha}\iota_\alpha(V_\alpha\otimes M_\alpha),
\]
where the $V_\alpha$ are pairwise inequivalent irreducible representations occurring in $\pi_{\mathrm{G}}$ and the $M_\alpha$ are their multiplicity spaces. Then $\mathcal T_\pi(X)=\sum_\alpha\iota_\alpha\bigl[\tfrac{I_{V_\alpha}}{\dim_{\mathbb{C}} V_\alpha} \otimes\operatorname{Tr}_{V_\alpha}(\iota_\alpha^\dagger X\iota_\alpha)\bigr]
\iota_\alpha^\dagger$.
The proofs of Theorems~\ref{thm:qpb-classification} and
\ref{thm:general-multiblock-construction} therefore will be the same applied with $(V_\lambda,W_\lambda)$ replaced by $(V_\alpha,M_\alpha)$. A nonzero subspace $\mathcal S\subseteq\mathcal H$ satisfies $\mathcal T_\pi\circ\mathcal J_{\mathcal S}=\mathcal R_{\sigma_{\mathcal S},\mathcal S}$
for some state $\sigma_{\mathcal S}$ if and only if there are weights $c_\alpha\ge0$ with $\sum_\alpha c_\alpha=1$ and states $\omega_\alpha\in\mathcal D(M_\alpha)$ such that
\[
  \langle\phi|\Pi_\alpha|\psi\rangle=c_\alpha\langle\phi|\psi\rangle,
  \qquad
  \operatorname{Tr}_{V_\alpha}\bigl(\iota_\alpha^\dagger|\psi\rangle\langle\phi|
  \iota_\alpha\bigr)=c_\alpha\langle\phi|\psi\rangle\,\omega_\alpha
\]
for all $|\psi\rangle,|\phi\rangle\in\mathcal S$. The same normal form holds, with the bound $n\operatorname{rank}\omega_\alpha\le\dim V_\alpha$. The Schur-Weyl case adds two features that the rest of the paper exploits: the multiplicity spaces $W_\lambda$ are themselves irreducible modules of $\mathfrak S_t$, and they branch along the Young graph under deletion of a tensor factor. The heredity theory of Section~\ref{subsec:hereditary-qpb} rests on this additional structure and has no analogue for a general compact group.
\end{remark}

\subsubsection{The general Schur-Weyl multi-block construction}\label{subsubsec:general-multi-block-construction}
The following construction simultaneously allows an arbitrary finite collection of active Schur-Weyl blocks and an arbitrary fixed density operator on the multiplicity space of each active block. Also, in the following theorem we show that every Haar-twirl admissible subspace is obtained in this way, block by
block.  In particular, the rank of the fixed multiplicity-space state gives a sharp restriction on the dimension of the admissible subspace.

\begin{theorem}[General multi-block construction]
\label{thm:general-multiblock-construction}
Fix an integer
\(
    1 \leq n \leq d^t.
\)
Choose the following data.

\begin{enumerate}
\item[\textup{(D1)}]
A nonempty set of active partitions
\(
    \Lambda_0
    =
    \{\lambda_1,\ldots,\lambda_r\}
    \subseteq
    \Lambda(d,t).
\)
The construction is genuinely multi-block when
\(
    r\geq2
\).

\item[\textup{(D2)}]
Strictly positive real numbers
\(
    c_\lambda>0,\lambda\in\Lambda_0,
\)
such that
\(
\label{eq:general-weights}
    \sum_{\lambda\in\Lambda_0}
    c_\lambda
    =
    1.
\)

\item[\textup{(D3)}]
For every
\(
    \lambda\in\Lambda_0
\),
a density operator
\(
    \omega_\lambda
    \in
    \mathcal D(W_\lambda),
\)
and let
\(
\label{eq:r-lambda-def}
    r_\lambda
    :=
    \operatorname{rank}(\omega_\lambda).
\)

\item[\textup{(D4)}]
For every
\(
    \lambda\in\Lambda_0
\),
an auxiliary Hilbert space
\(
    K_\lambda
\)
with
\(
    \dim_{\mathbb C}K_\lambda=r_\lambda,
\)
together with a unit vector
\(
    |\Omega_\lambda\rangle
    \in
    K_\lambda\otimes W_\lambda
\)
which is a minimal purification of
\(
    \omega_\lambda
\), i.e.,
\begin{equation}
\label{eq:Omega-lambda-purification}
    \operatorname{Tr}_{K_\lambda}
    \bigl(
        |\Omega_\lambda\rangle
        \langle\Omega_\lambda|
    \bigr)
    =
    \omega_\lambda.
\end{equation}

\item[\textup{(D5)}]
For every
\(
    \lambda\in\Lambda_0
\),
assume the dimension condition
\begin{equation}
\label{eq:general-sharp-dimension-condition}
    n\,r_\lambda
    \leq
    \dim_{\mathbb C}V_\lambda,
\end{equation}
and choose a linear isometry
\begin{equation}
\label{eq:U-lambda-def}
    U_\lambda:
    \mathbb C^n\otimes K_\lambda
    \longrightarrow
    V_\lambda,
    \qquad
    U_\lambda^\dagger U_\lambda
    =
    I_{\mathbb C^n\otimes K_\lambda}.
\end{equation}
\end{enumerate}

For every active
\(
    \lambda\in\Lambda_0
\),
define
\begin{equation}
\label{eq:J-lambda-def}
    J_\lambda:
    \mathbb C^n
    \longrightarrow
    V_\lambda\otimes W_\lambda
\end{equation}
by
\begin{equation}
\label{eq:J-lambda-action}
    J_\lambda|x\rangle
    :=
    (U_\lambda\otimes I_{W_\lambda})
    \bigl(
        |x\rangle\otimes|\Omega_\lambda\rangle
    \bigr),
    \qquad
    |x\rangle\in\mathbb C^n.
\end{equation}
Define the ambient encoding isometry
\begin{equation}
\label{eq:J-total-def}
    J:
    \mathbb C^n
    \longrightarrow
    \mathcal H_d^{\otimes t}
\end{equation}
by
\begin{equation}
\label{eq:J-total-action}
    J|x\rangle
    :=
    \sum_{\lambda\in\Lambda_0}
    \sqrt{c_\lambda}\,
    \iota_\lambda
    J_\lambda|x\rangle,
\end{equation}
and define
\begin{equation}
\label{eq:general-S-def}
    \mathcal S
    :=
    J(\mathbb C^n)
    \subseteq
    \mathcal H_d^{\otimes t}.
\end{equation}

Then \(J\) is an isometry, and
\(
    \dim_{\mathbb C}\mathcal S=n.
\)
Moreover, \(\mathcal S\) is Haar-twirl admissible.  Its active Schur-Weyl weights are
\(
    \{c_\lambda:
      \lambda\in\Lambda_0\},
\)
while
\(
    c_\nu=0 \; \text{with} \; \nu\in\Lambda(d,t)\setminus\Lambda_0,
\)
and its replacement state is
\begin{equation}
\label{eq:general-multi-block-sigma}
    \sigma
    =
    \sum_{\lambda\in\Lambda_0}
    c_\lambda\,
    \iota_\lambda
    \left(
        \frac{I_{V_\lambda}}
             {\dim_{\mathbb C}V_\lambda}
        \otimes
        \omega_\lambda
    \right)
    \iota_\lambda^\dagger.
\end{equation}

Finally, for every
\(
    |\psi\rangle\in\mathcal S
\)
and every active
\(
    \lambda\in\Lambda_0
\),
\begin{equation}
\label{eq:general-active-block-norm}
    \|\Pi_\lambda|\psi\rangle\|^2
    =
    c_\lambda\|\ket{\psi}\|^2.
\end{equation}
Consequently, if
\(
    |\Lambda_0|\geq2
\),
then every nonzero vector in
\(
    \mathcal S
\)
has nonzero projection onto every active Schur-Weyl block and
therefore cannot lie entirely in a single isotypic component.

\smallskip
Conversely, if \(\mathcal{S}\) is Haar-twirl admissible then there exist data satisfying \textup{(D1)}-\textup{(D5)} that recover \(\mathcal{S}\). Combining both the conditions we get \(\mathcal{S}\) is Haar-twirl admissible if and only if \textup{(D1)}-\textup{(D5)} holds.
\end{theorem}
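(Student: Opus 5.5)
The plan has two directions: verify the stated properties of the construction by direct computation, then prove the converse with a Stinespring-type factorization block by block.

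\emph{Forward direction.} First I would check that \(J\) is an isometry. For \(x,y\in\mathbb C^n\), the blocks \(\iota_\lambda\) have mutually orthogonal ranges, so
\(\langle Jy|Jx\rangle=\sum_{\lambda\in\Lambda_0}c_\lambda\langle J_\lambda y|J_\lambda x\rangle\).
Since \(U_\lambda\) is an isometry and \(|\Omega_\lambda\rangle\) is a unit vector, each term equals \(\langle y|x\rangle\), and \(\sum c_\lambda=1\) gives \(\langle y|x\rangle\). Hence \(\dim\mathcal S=n\).

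Next I would verify Conditions~\ref{cond1} and~\ref{cond2} of Theorem~\ref{thm:qpb-classification} for \(\psi=Jx\) and \(\phi=Jy\). For \(\lambda\in\Lambda_0\) we have \(\iota_\lambda^\dagger\psi=\sqrt{c_\lambda}J_\lambda x\), and \(\iota_\nu^\dagger\psi=0\) for \(\nu\notin\Lambda_0\). Thus \(\langle\phi|\Pi_\lambda|\psi\rangle=c_\lambda\langle y|x\rangle\) and \(c_\nu=0\).

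The partial trace is the key computation. I would write \(|x\rangle\otimes|\Omega_\lambda\rangle\) as an element of \((\mathbb C^n\otimes K_\lambda)\otimes W_\lambda\). Because \(U_\lambda\otimes I\) acts only on the first factor and \(U_\lambda^\dagger U_\lambda=I\), cyclicity of the partial trace over \(V_\lambda\) gives
\(\operatorname{Tr}_{V_\lambda}(J_\lambda|x\rangle\langle y|J_\lambda^\dagger)=\operatorname{Tr}_{\mathbb C^n\otimes K_\lambda}\bigl(|x\rangle\langle y|\otimes|\Omega_\lambda\rangle\langle\Omega_\lambda|\bigr)=\langle y|x\rangle\,\omega_\lambda\),
using \eqref{eq:Omega-lambda-purification}. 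Theorem~\ref{thm:qpb-classification} then yields admissibility and the replacement state \eqref{eq:general-multi-block-sigma}. Equation \eqref{eq:general-active-block-norm} is the diagonal case \(x=y\), and the remark on multi-block supports follows since \(c_\lambda>0\) for every \(\lambda\in\Lambda_0\).

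\emph{Converse.} Let \(\mathcal S\) be admissible of dimension \(n\), and fix an isometry \(J:\mathbb C^n\to\mathcal S\). Theorem~\ref{thm:qpb-classification} supplies \(c_\lambda\) and \(\omega_\lambda\); set \(\Lambda_0=\{c_\lambda>0\}\), and choose \(r_\lambda\), \(K_\lambda\) and a minimal purification \(\Omega_\lambda\). For \(\lambda\in\Lambda_0\), define \(J_\lambda:=c_\lambda^{-1/2}\iota_\lambda^\dagger J\). Condition~\ref{cond1} gives \(J_\lambda^\dagger J_\lambda=I\), so \(J_\lambda\) is an isometry into \(V_\lambda\otimes W_\lambda\). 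Since \(\Pi_\nu J=0\) for \(\nu\notin\Lambda_0\), we get \(J=\sum_{\lambda\in\Lambda_0}\sqrt{c_\lambda}\,\iota_\lambda J_\lambda\).

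The main obstacle is producing \(U_\lambda\). Condition~\ref{cond2} says that \(\operatorname{Tr}_{V_\lambda}(J_\lambda|x\rangle\langle y|J_\lambda^\dagger)=\langle y|x\rangle\omega_\lambda\) for all \(x,y\). Consider the vector
\(|\Xi\rangle:=n^{-1/2}\sum_i|i\rangle_R\otimes J_\lambda|i\rangle\in R\otimes V_\lambda\otimes W_\lambda\), with \(R\cong\mathbb C^n\) a reference system. Its reduced state on \(R\otimes W_\lambda\) is \(n^{-1}\sum_{i,j}|i\rangle\langle j|\otimes\langle j|i\rangle\omega_\lambda=\frac{I_R}{n}\otimes\omega_\lambda\). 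That reduced state is also the reduction of the product purification \(|\Phi^+\rangle_{RR'}\otimes|\Omega_\lambda\rangle\), whose purifying system is \(R'\otimes K_\lambda\cong\mathbb C^n\otimes K_\lambda\).

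The reduced state has rank \(nr_\lambda\), and \(|\Xi\rangle\) purifies it through \(V_\lambda\). Uhlmann's theorem (uniqueness of purifications up to an isometry on the purifying system, which is possible because the minimal purifying system has dimension \(nr_\lambda\)) gives an isometry \(U_\lambda:\mathbb C^n\otimes K_\lambda\to V_\lambda\) carrying one purification to the other. The existence of this isometry also forces \(nr_\lambda\le\dim V_\lambda\), which is \eqref{eq:general-sharp-dimension-condition}.

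Finally, I would unwind the maximally entangled vectors to obtain \(J_\lambda|x\rangle=(U_\lambda\otimes I)(|x\rangle\otimes|\Omega_\lambda\rangle)\) for each basis vector, and hence for all \(x\) by linearity. I would be careful about the complex conjugation convention in the transpose trick, because that is where the identification \(R'\leftrightarrow\mathbb C^n\) enters. This recovers data \textup{(D1)}--\textup{(D5)} realizing \(\mathcal S\).
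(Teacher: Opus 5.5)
Your proposal is correct. The forward direction is the paper's computation: block orthogonality for the isometry property, $\iota_\lambda^\dagger Jx=\sqrt{c_\lambda}J_\lambda x$ for Condition~1, and the isometric partial-trace identity for Condition~2. The converse also starts the same way, with $J_\lambda=c_\lambda^{-1/2}\iota_\lambda^\dagger J$, but you build $U_\lambda$ by a genuinely different route.

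\textbf{The paper's converse.} The paper never introduces a reference system. It diagonalizes $\omega_\lambda=\sum_a p_{\lambda,a}|w_{\lambda,a}\rangle\langle w_{\lambda,a}|$ and slices $J_\lambda$ into components $A_{\lambda,a}=(I_{V_\lambda}\otimes\langle w_{\lambda,a}|)J_\lambda$. Condition~2 then gives $A_{\lambda,b}^\dagger A_{\lambda,a}=p_{\lambda,a}\delta_{ab}I$, and $A_{\lambda,a}=0$ for $a>r_\lambda$. Rescaling yields isometries $T_{\lambda,a}$ with mutually orthogonal ranges, which gives $nr_\lambda\le\dim_{\mathbb C}V_\lambda$ directly. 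The paper then sets $U_\lambda(|x\rangle\otimes|a\rangle)=T_{\lambda,a}|x\rangle$ with the specific purification $|\Omega_\lambda\rangle=\sum_a\sqrt{p_{\lambda,a}}|a\rangle\otimes|w_{\lambda,a}\rangle$. In effect this is the Schmidt-decomposition proof of purification uniqueness, done by hand.

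\textbf{Your converse.} You pass to the Choi-type vector $|\Xi\rangle$ and invoke uniqueness of minimal purifications up to an isometry on the purifying system. This is shorter and more conceptual. It also works for an arbitrary prescribed minimal purification $\Omega_\lambda$, whereas the paper fixes the eigenbasis one; any other choice differs by a unitary on $K_\lambda$, which $U_\lambda$ absorbs. The one hypothesis you need is that $R'\otimes K_\lambda$ is a minimal purifying system. It is, since the reduced state there is $\tfrac{1}{n}I_{R'}\otimes\operatorname{Tr}_{W_\lambda}|\Omega_\lambda\rangle\langle\Omega_\lambda|$, which has full rank $nr_\lambda$ because $\dim_{\mathbb C}K_\lambda=r_\lambda$. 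This is what lets you take $U_\lambda$ to be a genuine isometry rather than a partial isometry, and hence read off the dimension bound.

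\textbf{The conjugation concern.} The worry you flag does not arise. Contracting $|\Xi\rangle=(I_R\otimes U_\lambda\otimes I_{W_\lambda})(|\Phi^+\rangle\otimes|\Omega_\lambda\rangle)$ with $\langle j|_R$ gives $J_\lambda|j\rangle=(U_\lambda\otimes I_{W_\lambda})(|j\rangle\otimes|\Omega_\lambda\rangle)$ directly. No transpose appears, because the identification $R'\cong\mathbb C^n$ is made on basis vectors and both sides are linear in $|j\rangle$.

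\textbf{Trade-off.} The paper's version is fully elementary and makes the bound visible as $r_\lambda$ orthogonal copies of $\mathbb C^n$ inside $V_\lambda$. Yours isolates the standard purification lemma that the paper's argument reproves inline.
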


\begin{proof}
Fix
\(
    \lambda\in\Lambda_0
\)
and
\(
    |x\rangle,|y\rangle\in\mathbb C^n
\).
Using~\eqref{eq:J-lambda-action}, we get,
\begin{align}
    \langle J_\lambda y|J_\lambda x\rangle
    &=
    \bigl(
        \langle y|\otimes
        \langle\Omega_\lambda|
    \bigr)
    \bigl(
        U_\lambda^\dagger U_\lambda
        \otimes I_{W_\lambda}
    \bigr)
    \bigl(
        |x\rangle\otimes
        |\Omega_\lambda\rangle
    \bigr)
    \notag\\
    &=
    \bigl(
        \langle y|\otimes
        \langle\Omega_\lambda|
    \bigr)
    \bigl(
        |x\rangle\otimes
        |\Omega_\lambda\rangle
    \bigr)
    \notag\\
    &=
    \langle y|x\rangle
    \langle\Omega_\lambda|
           \Omega_\lambda\rangle
    \notag\\
    &=
    \langle y|x\rangle.
\end{align}
Hence
\begin{equation}
\label{eq:J-lambda-isometry}
    J_\lambda^\dagger J_\lambda
    =
    I_{\mathbb C^n}.
\end{equation}

\medskip
Now we show that \(J\) is an isometry. 

Let
\(
    |x\rangle,|y\rangle\in\mathbb C^n
\).
From~\eqref{eq:J-total-action}, we have,
\begin{align}
    \langle Jy|Jx\rangle
    &=
    \sum_{\lambda,\mu\in\Lambda_0}
    \sqrt{c_\lambda c_\mu}\,
    \langle J_\lambda y|
    \iota_\lambda^\dagger
    \iota_\mu
    |J_\mu x\rangle.
\end{align}
Using [Appendix A, ~\eqref{eq:appendix-cross-inclusion-relations}],
all terms with
\(
    \lambda\neq\mu
\)
vanish.  Therefore
\begin{align}
    \langle Jy|Jx\rangle
    &=
    \sum_{\lambda\in\Lambda_0}
    c_\lambda
    \langle J_\lambda y|J_\lambda x\rangle
    \notag\\
    &=
    \sum_{\lambda\in\Lambda_0}
    c_\lambda
    \langle y|x\rangle
    \notag\\
    &=
    \left(
        \sum_{\lambda\in\Lambda_0}
        c_\lambda
    \right)
    \langle y|x\rangle
    \notag\\
    &=
    \langle y|x\rangle,
\end{align}
where the last equality follows from \textup{(D2)}.
Thus
\(
\label{eq:J-total-isometry}
    J^\dagger J
    =
    I_{\mathbb C^n}.
\)
In particular, \(J\) is injective and
\(
    \dim_{\mathbb C}\mathcal S
    = n.
\)

\medskip
Fix
\(
    \lambda\in\Lambda_0
\).
Using
\(
    \Pi_\lambda
    =
    \iota_\lambda\iota_\lambda^\dagger
\),
we obtain
\begin{align}
    \Pi_\lambda J|x\rangle
    &=
    \sum_{\mu\in\Lambda_0}
    \sqrt{c_\mu}\,
    \iota_\lambda
    \iota_\lambda^\dagger
    \iota_\mu
    J_\mu|x\rangle
    \notag\\
    &=
    \sqrt{c_\lambda}\,
    \iota_\lambda
    J_\lambda|x\rangle.
\label{eq:general-block-projection}
\end{align}
Therefore,
\begin{equation}
\label{eq:general-block-coordinate}
    \iota_\lambda^\dagger
    J|x\rangle
    =
    \sqrt{c_\lambda}\,
    J_\lambda|x\rangle.
\end{equation}
If
\(
    \nu\notin\Lambda_0
\),
then
\(
    \Pi_\nu J|x\rangle=0.
\)

\medskip
Let
\(
    |\psi\rangle
    =
    J|x\rangle
\),
\(|\phi\rangle
    =
    J|y\rangle
\)
be arbitrary vectors in
\(
    \mathcal S
\).
For every active
\(
    \lambda\in\Lambda_0
\),
Equation~\eqref{eq:general-block-projection} and
\eqref{eq:J-lambda-isometry} give
\begin{align}
    \langle\phi|
    \Pi_\lambda
    |\psi\rangle
    &=
    c_\lambda
    \langle J_\lambda y|
            J_\lambda x\rangle
    \notag\\
    &=
    c_\lambda
    \langle y|x\rangle.
\end{align}
Since \(J\) is an isometry,
\[
    \langle\phi|\psi\rangle
    =
    \langle Jy|Jx\rangle
    =
    \langle y|x\rangle.
\]
Hence
\begin{equation}
\label{eq:general-condition-A}
    \langle\phi|
    \Pi_\lambda
    |\psi\rangle
    =
    c_\lambda
    \langle\phi|\psi\rangle.
\end{equation}
For an inactive
\(
    \nu\notin\Lambda_0
\),
\[
    \langle\phi|
    \Pi_\nu
    |\psi\rangle
    =
    0
    =
    c_\nu
    \langle\phi|\psi\rangle.
\]
Together with
\(
    \sum_\lambda c_\lambda=1
\),
this proves Condition~\ref{cond1}.

\medskip
Now we verify Condition~\ref{cond2}.

Fix
\(
    \lambda\in\Lambda_0
\).
By~\eqref{eq:general-block-coordinate}, we have,
\begin{align}
    \iota_\lambda^\dagger
    |\psi\rangle\langle\phi|
    \iota_\lambda
    &=
    c_\lambda
    J_\lambda
    |x\rangle\langle y|
    J_\lambda^\dagger
    \notag\\
    &=
    c_\lambda
    (U_\lambda\otimes I_{W_\lambda})
    \Bigl(
        |x\rangle\langle y|
        \otimes
        |\Omega_\lambda\rangle
        \langle\Omega_\lambda|
    \Bigr)
    (U_\lambda^\dagger\otimes I_{W_\lambda}).
\end{align}
Apply~\eqref{eq:isometric-partial-trace-identity} with
\[
    A
    =
    \mathbb C^n\otimes K_\lambda,
    \qquad
    B
    =
    V_\lambda,
    \qquad
    C
    =
    W_\lambda.
\]
We obtain
\begin{align}
    \operatorname{Tr}_{V_\lambda}
    \left(
        \iota_\lambda^\dagger
        |\psi\rangle\langle\phi|
        \iota_\lambda
    \right)
    =
    c_\lambda
    \operatorname{Tr}_{\mathbb C^n\otimes K_\lambda}
    \Bigl(
        |x\rangle\langle y|
        \otimes
        |\Omega_\lambda\rangle
        \langle\Omega_\lambda|
    \Bigr).
\end{align}
Applying ~\eqref{eq:Omega-lambda-purification}, we get
\begin{align}
    \operatorname{Tr}_{\mathbb C^n\otimes K_\lambda}
    \Bigl(
        |x\rangle\langle y|
        \otimes
        |\Omega_\lambda\rangle
        \langle\Omega_\lambda|
    \Bigr)
    &=
    \operatorname{Tr}_{\mathbb C^n}
    \bigl(
        |x\rangle\langle y|
    \bigr)
    \,
    \operatorname{Tr}_{K_\lambda}
    \bigl(
        |\Omega_\lambda\rangle
        \langle\Omega_\lambda|
    \bigr)
    \notag\\
    &=
    \langle y|x\rangle
    \omega_\lambda.
\end{align}

Therefore
\begin{equation}
\label{eq:general-condition-B}
    \operatorname{Tr}_{V_\lambda}
    \left(
        \iota_\lambda^\dagger
        |\psi\rangle\langle\phi|
        \iota_\lambda
    \right)
    =
    c_\lambda
    \langle y|x\rangle
    \omega_\lambda.
\end{equation}
Since
\(
    \langle y|x\rangle
    =
    \langle\phi|\psi\rangle
\),
Equation~\eqref{eq:general-condition-B} becomes
\[
    \operatorname{Tr}_{V_\lambda}
    \left(
        \iota_\lambda^\dagger
        |\psi\rangle\langle\phi|
        \iota_\lambda
    \right)
    =
    c_\lambda
    \langle\phi|\psi\rangle
    \omega_\lambda,
\]
which is the Condition~\ref{cond2}.

Thus Theorem~\ref{thm:qpb-classification} implies that
\(
    \mathcal S
\)
is Haar-twirl admissible.

\medskip
Substituting the active weights and multiplicity-space states into
the replacement-state formula of Theorem~\ref{thm:qpb-classification} gives
\[
    \sigma
    =
    \sum_{\lambda\in\Lambda_0}
    c_\lambda
    \iota_\lambda
    \left(
        \frac{I_{V_\lambda}}
             {\dim_{\mathbb C}V_\lambda}
        \otimes
        \omega_\lambda
    \right)
    \iota_\lambda^\dagger.
\]
Each summand is positive semidefinite.  Furthermore,
\begin{align}
    \operatorname{Tr}(\sigma)
    &=
    \sum_{\lambda\in\Lambda_0}
    c_\lambda
    \operatorname{Tr}
    \left(
        \frac{I_{V_\lambda}}
             {\dim_{\mathbb C}V_\lambda}
        \otimes
        \omega_\lambda
    \right)
    \notag\\
    &=
    \sum_{\lambda\in\Lambda_0}
    c_\lambda
    \operatorname{Tr}
    \left(
        \frac{I_{V_\lambda}}
             {\dim_{\mathbb C}V_\lambda}
    \right)
    \operatorname{Tr}(\omega_\lambda)
    \notag\\
    &=
    \sum_{\lambda\in\Lambda_0}
    c_\lambda
    \cdot1\cdot1
    \notag\\
    &=
    1.
\end{align}
Hence
\(
    \sigma\in
    \mathcal D(\mathcal H_d^{\otimes t})
\),
as required.

\medskip
Now we show that every nonzero encoded vector occupies every active block.

Let
\(
    |\psi\rangle=J|x\rangle\in\mathcal S
\).
For every active
\(
    \lambda\in\Lambda_0
\),
Equation~\eqref{eq:general-block-projection} gives
\begin{align}
    \|\Pi_\lambda|\psi\rangle\|^2
    &=
    c_\lambda
    \|\iota_\lambda J_\lambda|x\rangle\|^2
    \notag\\
    &=
    c_\lambda
    \|J_\lambda|x\rangle\|^2
    \notag\\
    &=
    c_\lambda
    \|x\|^2
    \notag\\
    &=
    c_\lambda
    \|J|x\rangle\|^2
    \notag\\
    &=
    c_\lambda
    \|\psi\|^2.
\end{align}
This proves~\eqref{eq:general-active-block-norm}.

If
\(
    |\psi\rangle\neq0
\),
then
\(
    \|\psi\|^2>0
\).
Since
\(
    c_\lambda>0
\)
for every
\(
    \lambda\in\Lambda_0
\),
we have
\(
    \|\Pi_\lambda|\psi\rangle\|^2>0
\)
for every \(\lambda\in\Lambda_0\).
Therefore every nonzero vector of
\(
    \mathcal S
\)
has a nonzero component in every active Schur-Weyl block.
If
\(
    |\Lambda_0|\geq2
\),
no such vector can be contained in a single block.

\smallskip
Conversely, let \(\mathcal S\subset \mathcal H_d^{\otimes t}\) be a nonzero Haar-twirl admissible subset, and let
\(n:=\dim_{\mathbb C}\mathcal S\).
Choose an isometry
\[
       J\colon \mathbb C^n\longrightarrow
                  \mathcal H_d^{\otimes t},
       \qquad
       J(\mathbb C^n)=\mathcal S,
       \qquad J^\dagger J=I_{\mathbb C^n}.
\]
By the classification theorem (Theorem~\ref{thm:qpb-classification}), there
are constants \(c_\lambda\geq0\),
\(\lambda\in\Lambda(d,t)\), with
\(\sum_{\lambda}c_\lambda=1\), such that
\begin{equation}\label{eq:c1}
 \langle Jy|\Pi_\lambda|Jx\rangle
       =c_\lambda\langle y|x\rangle,
       \qquad
       \; x,y\in\mathbb C^n.
\end{equation}
For each \(\lambda\), put
\[
 B_\lambda:=\iota_\lambda^\dagger J:
       \mathbb C^n\longrightarrow V_\lambda\otimes W_\lambda.
\]
Since \(\Pi_\lambda=\iota_\lambda\iota_\lambda^\dagger\),
equation~\ref{eq:c1} gives
\[
 \langle y|B_\lambda^\dagger B_\lambda|x\rangle
    =\langle Jy|\Pi_\lambda|Jx\rangle
    =c_\lambda\langle y|x\rangle.
\]
Nondegeneracy of the inner product therefore yields
\begin{equation}\label{eq:c2}
             B_\lambda^\dagger B_\lambda
                    =c_\lambda I_{\mathbb C^n}.
\end{equation}

Define the active set
\[
       \Lambda_0
          :=\{\lambda\in\Lambda(d,t):c_\lambda>0\}.
\]
It is nonempty because \(\sum_\lambda c_\lambda=1\);
its weights are strictly positive and sum to one.
If \(\lambda\notin\Lambda_0\), then equation~\ref{eq:c2} implies \(B_\lambda^\dagger B_\lambda=0\), hence \(B_\lambda=0\). For \(\lambda\in\Lambda_0\),
define
\begin{equation}\label{eq:c3}
       J_\lambda:=c_\lambda^{-1/2}B_\lambda .
\end{equation}
Equation~\ref{eq:c2} and Equation~\ref{eq:c3} show that
\(J_\lambda^\dagger J_\lambda=I_{\mathbb C^n}\).
Thus \(J_\lambda\) is an isometry. In particular,
\begin{equation}\label{eq:c4}
       \iota_\lambda^\dagger J
             =\sqrt{c_\lambda}\,J_\lambda,
       \qquad
       \iota_\lambda^\dagger J=0
       ,\quad\lambda\notin\Lambda_0.
\end{equation}

Fix an active \(\lambda\). Then from the second condition of Theorem~\ref{thm:qpb-classification} we  get a density operator
\(\omega_\lambda\in\mathcal D(W_\lambda)\) for which
\[
 \operatorname{Tr}_{V_\lambda}
       \bigl(B_\lambda|x\rangle\langle y|
                    B_\lambda^\dagger\bigr)
       =c_\lambda\langle y|x\rangle\omega_\lambda,
       \quad x,y\in\mathbb C^n.
\]
Substitution of \(B_\lambda=\sqrt{c_\lambda}J_\lambda\)
and division by \(c_\lambda>0\) give
\begin{equation}\label{eq:c5}
 \operatorname{Tr}_{V_\lambda}
       \bigl(J_\lambda|x\rangle\langle y|
                    J_\lambda^\dagger\bigr)
       =\langle y|x\rangle\omega_\lambda.
\end{equation}
Because rank-one operators span
\(\mathcal L(\mathbb C^n)\), the same identity,
by linearity, we have
\begin{equation}\label{eq:c6}
 \operatorname{Tr}_{V_\lambda}
       (J_\lambda XJ_\lambda^\dagger)
       =\operatorname{Tr}(X)\omega_\lambda,
       \qquad X\in\mathcal L(\mathbb C^n).
\end{equation}

Set \(r_\lambda:=\operatorname{rank}(\omega_\lambda)\).
Choose an orthonormal eigenbasis of \(W_\lambda\)
whose first \(r_\lambda\) vectors correspond to the
strictly positive eigenvalues. Thus
\begin{equation}\label{eq:c7}
 \omega_\lambda
   =\sum_{a=1}^{r_\lambda}
        p_{\lambda,a}
        |w_{\lambda,a}\rangle
        \langle w_{\lambda,a}|,
 \qquad
 p_{\lambda,a}>0,
 \qquad
 \sum_{a=1}^{r_\lambda}p_{\lambda,a}=1.
\end{equation}
Complete these vectors to an orthonormal basis
\((|w_{\lambda,a}\rangle)_{a=1}^{\dim_{\mathbf{C}} W_\lambda}\)
of \(W_\lambda\), and define linear operators
\(A_{\lambda,a}:\mathbb C^n \longrightarrow V_\lambda\) by
\[
 A_{\lambda,a}
    :=(I_{V_\lambda}\otimes
           \langle w_{\lambda,a}|)J_\lambda .
\]
For each \(\ket{x}\in\mathbb C^n\), expansion in this
basis gives
\begin{equation}\label{eq:c8}
 J_\lambda|x\rangle
   =\sum_{a=1}^{\dim_{\mathbf{C}} W_\lambda}
        A_{\lambda,a}|x\rangle
           \otimes|w_{\lambda,a}\rangle.
\end{equation}
Consequently, taking the partial trace in Equation~\ref{eq:c5} yields
\begin{align}\label{eq:c9}
 \operatorname{Tr}_{V_\lambda}
       \bigl(J_\lambda|x\rangle\langle y|
                    J_\lambda^\dagger\bigr)
 &=\sum_{a,b=1}^{\dim_{\mathbf{C}} W_\lambda}
       \langle A_{\lambda,b}y
                    |A_{\lambda,a}x\rangle
       |w_{\lambda,a}\rangle
          \langle w_{\lambda,b}|.
\end{align}
Comparing the \((a,b)\) matrix coefficients of Equation~\ref{eq:c9} with Equation~\ref{eq:c5} and Equation~\ref{eq:c7}, for
all \(x,y\in\mathbb C^n\), gives
\[
 \langle y|
       A_{\lambda,b}^\dagger A_{\lambda,a}
       |x\rangle
   =
   \begin{cases}
      p_{\lambda,a}\langle y|x\rangle,
          &a=b\leq r_\lambda,\\
      0,  &\text{otherwise}.
   \end{cases}
\]
Hence the following are operator identities:
\begin{equation}\label{eq:c10}
 A_{\lambda,b}^\dagger A_{\lambda,a}
   =
   \begin{cases}
      p_{\lambda,a} I_{\mathbb C^n},
          &a=b\leq r_\lambda,\\
      0,  &\text{otherwise}.
   \end{cases}
\end{equation}
In particular,
\(A_{\lambda,a}^\dagger A_{\lambda,a}=0\)
and therefore \(A_{\lambda,a}=0\) whenever
\(a>r_\lambda\).

For \(1\leq a\leq r_\lambda\), define
\[
       T_{\lambda,a}
          :=p_{\lambda,a}^{-1/2}
                         A_{\lambda,a}.
\]
Equation~\ref{eq:c10} implies
\begin{equation}\label{eq:c11}
       T_{\lambda,b}^\dagger T_{\lambda,a}
             =\delta_{ab} I_{\mathbb C^n},
       \qquad 1\leq a,b\leq r_\lambda.
\end{equation}
Thus each \(T_{\lambda,a}\) is an isometry, and
the ranges of distinct \(T_{\lambda,a}\) are
orthogonal. Their orthogonal direct sum has
dimension \(nr_\lambda\), so
\begin{equation}\label{eq:c12}
              nr_\lambda\leq\dim_{\mathbb C}V_\lambda .
\end{equation}
This proves the dimension condition in (D5).

Let \(\mathcal K_\lambda\cong\mathbb C^{r_\lambda}\)
have orthonormal basis
\(\{|a\rangle\}_{a=1}^{r_\lambda}\), and define
\begin{equation}\label{eq:c13}
 U_\lambda:\mathbb C^n\otimes\mathcal K_\lambda
                     \longrightarrow V_\lambda,
 \qquad
 U_\lambda(|x\rangle\otimes|a\rangle)
       :=T_{\lambda,a}|x\rangle ,
\end{equation}
extending linearly. For arbitrary \(\ket{x},\ket{y}\) and basis vectors \(\ket{a},\ket{b}\), Equation~\ref{eq:c11}
gives
\[
 \langle U_\lambda(y\otimes b)
             |U_\lambda(x\otimes a)\rangle
   =\langle y|
         T_{\lambda,b}^\dagger T_{\lambda,a}
      |x\rangle
   =\delta_{ab}\langle y|x\rangle
   =\langle y\otimes b|x\otimes a\rangle .
\]
By sesquilinearity this holds for all vectors in
\(\mathbb C^n\otimes\mathcal K_\lambda\); hence
\(U_\lambda^\dagger U_\lambda
 =I_{\mathbb C^n\otimes\mathcal K_\lambda}\).

Now put
\begin{equation}\label{eq:c14}
 |\Omega_\lambda\rangle
   :=\sum_{a=1}^{r_\lambda}
       \sqrt{p_{\lambda,a}}\,
       |a\rangle\otimes|w_{\lambda,a}\rangle
       \in\mathcal K_\lambda\otimes W_\lambda .
\end{equation}
It is a unit vector, and orthonormality of
\(\{|a\rangle\}_{i=1}^{r_{\lambda}}\) gives
\begin{equation}\label{eq:c15}
 \operatorname{Tr}_{\mathcal K_\lambda}
       (|\Omega_\lambda\rangle
          \langle\Omega_\lambda|)
   =\sum_{a=1}^{r_\lambda}
       p_{\lambda,a}
       |w_{\lambda,a}\rangle
          \langle w_{\lambda,a}|
   =\omega_\lambda .
\end{equation}
Because its Schmidt rank is \(r_\lambda\),
this is a minimal purification of
\(\omega_\lambda\).

Using Equation~\ref{eq:c8}, the vanishing of the terms
with \(a>r_\lambda\), and
\(A_{\lambda,a}
 =\sqrt{p_{\lambda,a}}T_{\lambda,a}\), we obtain,
for every \(x\in\mathbb C^n\),
\begin{align}\label{eq:c16}
 J_\lambda|x\rangle
 &=\sum_{a=1}^{r_\lambda}
       \sqrt{p_{\lambda,a}}\,
       T_{\lambda,a}|x\rangle
          \otimes|w_{\lambda,a}\rangle \notag\\
 &=(U_\lambda\otimes I_{W_\lambda})
       \bigl(|x\rangle\otimes
                    |\Omega_\lambda\rangle\bigr).
\end{align}
Thus the data
\((\Lambda_0,\{c_\lambda\},
  \{\omega_\lambda\},\{\mathcal K_\lambda,
  \Omega_\lambda,U_\lambda\})\)
satisfy (D1)--(D5).

\smallskip
Finally, completeness of the Schur-Weyl
projectors and Equation~\ref{eq:c4} show that, for every \(x\in\mathbb C^n\),
\begin{align}\label{eq:c17}
 J|x\rangle
 &=\sum_{\lambda\in\Lambda(d,t)}
       \Pi_\lambda J|x\rangle \notag\\
 &=\sum_{\lambda\in\Lambda_0}
       \iota_\lambda B_\lambda|x\rangle \notag\\
 &=\sum_{\lambda\in\Lambda_0}
       \sqrt{c_\lambda}\,
       \iota_\lambda
       (U_\lambda\otimes I_{W_\lambda})
       \bigl(|x\rangle\otimes
                    |\Omega_\lambda\rangle\bigr).
\end{align}
Equation~\ref{eq:c17} is the encoding
isometry in the stated construction, and its
range is the original subspace \(\mathcal S\).
The classification theorem also identifies its
replacement state as
\[
 \sigma_{\mathcal S}
   =\sum_{\lambda\in\Lambda_0}
       c_\lambda\,
       \iota_\lambda
       \left(
         \frac{I_{V_\lambda}}
              {\dim_{\mathbb C}V_\lambda}
             \otimes\omega_\lambda
       \right)\iota_\lambda^\dagger .
\]
This proves that every Haar-twirl admissible subspace
has the asserted multi-block normal form and
completes the converse.
\end{proof}

\begin{corollary}\label{cor:exact-maximal-dimension-fixed-active-data}
Fix a nonempty active set \(\Lambda_0\), positive weights
\(\{c_\lambda\}_{\lambda\in\Lambda_0}\) summing to one, and density operators \(\omega_\lambda\in\mathcal D(W_\lambda)\). Such data are realizable by a nonzero Haar-twirl admissible subspace if and only if
\begin{equation}
\label{eq:feasibility-fixed-active-data}
    \operatorname{rank}(\omega_\lambda)
    \leq \dim_{\mathbb C}V_\lambda
    \qquad\forall\lambda\in\Lambda_0.
\end{equation}
Whenever Equation~\eqref{eq:feasibility-fixed-active-data} holds, among all Haar-twirl admissible subspaces having these data with active blocks, the maximal possible
dimension is
\begin{equation}
\label{eq:exact-maximal-dimension-fixed-active-data}
    n_{\max}
    =
    \min_{\lambda\in\Lambda_0}
    \left\lfloor
       \frac{\dim_{\mathbb C}V_\lambda}
            {\operatorname{rank}(\omega_\lambda)}
    \right\rfloor.
\end{equation}
\end{corollary}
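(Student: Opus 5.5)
The corollary follows from Theorem~\ref{thm:general-multiblock-construction}, applied in both directions with the data \(\Lambda_0\), \(\{c_\lambda\}\), \(\{\omega_\lambda\}\) held fixed. Write \(r_\lambda:=\operatorname{rank}(\omega_\lambda)\geq1\); this is positive because \(\omega_\lambda\) has unit trace.

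\emph{Feasibility.} Suppose first that the data are realized by a nonzero admissible subspace \(\mathcal S\) of dimension \(n\geq1\). The converse part of Theorem~\ref{thm:general-multiblock-construction} produces the isometries \(T_{\lambda,a}\), \(1\leq a\leq r_\lambda\), with mutually orthogonal ranges, as in~\eqref{eq:c11}. Hence~\eqref{eq:c12} holds:
\[
n\,r_\lambda\leq\dim_{\mathbb C}V_\lambda .
\]
One point needs checking here: the \(\omega_\lambda\) and \(c_\lambda\) extracted in that proof are the prescribed ones. This holds because Theorem~\ref{thm:qpb-classification} determines \(c_\lambda=\operatorname{Tr}(\Pi_\lambda\sigma_{\mathcal S})\) and \(\omega_\lambda=\tau_\lambda/c_\lambda\) uniquely from the replacement state. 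Taking \(n\geq1\) then gives \(r_\lambda\leq\dim_{\mathbb C}V_\lambda\), which is~\eqref{eq:feasibility-fixed-active-data}.

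Conversely, assume \(r_\lambda\leq\dim_{\mathbb C}V_\lambda\) for every \(\lambda\in\Lambda_0\). Take \(n=1\), a minimal purification \(|\Omega_\lambda\rangle\) as in~\eqref{eq:c14}, and any isometry \(U_\lambda:\mathbb C\otimes K_\lambda\to V_\lambda\); such an isometry exists because \(r_\lambda\leq\dim V_\lambda\). The forward part of Theorem~\ref{thm:general-multiblock-construction} then yields a one-dimensional admissible subspace with exactly these weights and multiplicity states.

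\emph{Maximal dimension.} The necessity argument above shows that every realizing \(\mathcal S\) satisfies \(n\leq\lfloor\dim V_\lambda/r_\lambda\rfloor\) for each \(\lambda\in\Lambda_0\). Therefore \(n\leq n_{\max}\). For attainment, set \(n=n_{\max}\). Then \(n\,r_\lambda\leq\dim V_\lambda\) for all active \(\lambda\), so isometries \(U_\lambda:\mathbb C^{n}\otimes K_\lambda\to V_\lambda\) exist; one can take any \(n r_\lambda\) orthonormal vectors in \(V_\lambda\). The construction of Theorem~\ref{thm:general-multiblock-construction} gives an \(n_{\max}\)-dimensional admissible subspace with the prescribed data, namely weights \(c_\lambda\) by~\eqref{eq:general-condition-A} and states \(\omega_\lambda\) by~\eqref{eq:general-condition-B}.

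\emph{Main obstacle.} The only real care is the identification step in the necessity direction: the data recovered from \(\mathcal S\) must coincide with the fixed data rather than merely exist. I would settle this by citing the uniqueness of the replacement state together with formula~\eqref{eq:sigma}, whose blocks determine \(c_\lambda\) and \(\omega_\lambda\) uniquely.
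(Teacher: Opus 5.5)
Your proposal is correct and follows the paper's argument. Necessity and the upper bound both come from the inequality $n\,r_\lambda\le\dim_{\mathbb C}V_\lambda$ obtained in the converse part of Theorem~\ref{thm:general-multiblock-construction}, and feasibility and attainment both come from its forward construction with $n=1$ and $n=n_{\max}$. Your added check, that $c_\lambda$ and $\omega_\lambda$ are uniquely determined by $\mathcal S$ through the uniqueness of $\sigma_{\mathcal S}$ and formula~\eqref{eq:sigma}, states explicitly a point the paper leaves implicit.
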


\begin{proof}
If a nonzero admissible subspace exists, then its dimension is at least one, so Equation~\eqref{eq:general-sharp-dimension-condition} implies Equation~\eqref{eq:feasibility-fixed-active-data}. Conversely, if Equation~\eqref{eq:feasibility-fixed-active-data} holds, the construction below with \(n=1\) produces a nonzero admissible subspace, proving feasibility.

The upper bound follows from Equation~\eqref{eq:general-sharp-dimension-condition} for every active \(\lambda\). Conversely, let \(n\) equal the right-hand side of Equation~\eqref{eq:exact-maximal-dimension-fixed-active-data}. Then
\[
    n \operatorname{rank}(\omega_\lambda)
    \leq\dim V_\lambda
    \qquad\forall\lambda\in\Lambda_0.
\]
Hence, for every active \(\lambda\), an isometry
\[
    U_\lambda:
       \mathbb C^n\otimes\mathbb C^{\operatorname{rank}(\omega_\lambda)}
       \longrightarrow V_\lambda
\]
exists. The forward construction in Theorem~\ref{thm:general-multiblock-construction} then produces an admissible subspace of dimension \(n\) with the data (D1)--(D5).
\end{proof}

\begin{corollary}[Pure multiplicity states]
\label{cor:pure-multiplicity-normal-form}
In Theorem~\ref{thm:general-multiblock-construction}, suppose that
\[
\omega_\lambda
=|s_\lambda\rangle\langle s_\lambda|,
\qquad \|s_\lambda\|=1,
\qquad \lambda\in\Lambda_0.
\]
Then \(r_\lambda
    =
    \operatorname{rank}(\omega_\lambda)
    =
    1\), so that the dimension condition 
\eqref{eq:general-sharp-dimension-condition} reduces to
\begin{equation}
\label{eq:pure-multi-block-dimension}
    n
    \leq
    \dim_{\mathbb C}V_\lambda
    \qquad
    \text{for every }
    \lambda\in\Lambda_0.
\end{equation}
For any isometries \(f_\lambda \colon \mathbb C^n \longrightarrow V_\lambda\), the map
\begin{equation}
Jx
=
\sum_{\lambda\in\Lambda_0}
\sqrt{c_\lambda}\,
\iota_\lambda(f_\lambda x\otimes s_\lambda)
\label{eq:pure-multiplicity-normal-form}
\end{equation}
is an isometry, and \(S:=J(\mathbb C^n)\) is Haar-twirl admissible with
replacement state
\begin{equation}
\sigma_S
=
\sum_{\lambda\in\Lambda_0}
c_\lambda\,
\iota_\lambda
\left(
\frac{I_{V_\lambda}}{\dim_{\mathbb C}V_\lambda}
\otimes |s_\lambda\rangle\langle s_\lambda|
\right)
\iota_\lambda^\dagger .
\label{eq:pure-multiplicity-replacement-state}
\end{equation}
Conversely, every admissible subspace whose active multiplicity-space states
are pure has the form \eqref{eq:pure-multiplicity-normal-form}, after an
isometric identification of its domain with \(\mathbb C^n\).
\end{corollary}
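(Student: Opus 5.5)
The corollary is the rank-one specialization of Theorem~\ref{thm:general-multiblock-construction}, so I will obtain it by choosing the data (D1)--(D5) with $r_\lambda=1$ and then simplifying.

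\emph{Forward direction.} Let $\omega_\lambda=|s_\lambda\rangle\langle s_\lambda|$. Then $r_\lambda=1$, and the condition $n r_\lambda\le\dim_{\mathbb C}V_\lambda$ in (D5) becomes \eqref{eq:pure-multi-block-dimension}. For the data in (D4), take $\mathcal K_\lambda=\mathbb C$ with basis vector $|1\rangle$ and $|\Omega_\lambda\rangle=|1\rangle\otimes|s_\lambda\rangle$. This is a unit vector of Schmidt rank one, and its reduced state on $W_\lambda$ is $\omega_\lambda$, so it is a minimal purification. Since $\mathbb C^n\otimes\mathbb C\cong\mathbb C^n$ canonically, an isometry $U_\lambda:\mathbb C^n\otimes\mathcal K_\lambda\to V_\lambda$ is the same thing as an isometry $f_\lambda:\mathbb C^n\to V_\lambda$, via $U_\lambda(x\otimes 1)=f_\lambda x$. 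Substituting these choices into \eqref{eq:J-lambda-action} gives $J_\lambda x=f_\lambda x\otimes s_\lambda$, and then \eqref{eq:J-total-action} is exactly \eqref{eq:pure-multiplicity-normal-form}. The theorem therefore gives three things at once: $J$ is an isometry, $J(\mathbb C^n)$ is Haar-twirl admissible, and the replacement state is \eqref{eq:general-multi-block-sigma} with $\omega_\lambda=|s_\lambda\rangle\langle s_\lambda|$, which is \eqref{eq:pure-multiplicity-replacement-state}.

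\emph{Converse.} Let $\mathcal S$ be admissible and suppose every active $\omega_\lambda$ is pure. Fix an isometry $J$ with range $\mathcal S$. The converse half of Theorem~\ref{thm:general-multiblock-construction} gives isometries $T_{\lambda,a}$ indexed by $1\le a\le r_\lambda$. Here $r_\lambda=1$, so there is a single $T_{\lambda,1}$, and I set $f_\lambda:=T_{\lambda,1}$. The eigenbasis in \eqref{eq:c7} has one positive eigenvector, and I take $w_{\lambda,1}=s_\lambda$. This is legitimate up to a phase, because a pure state determines its vector only up to a phase, and the phase can be absorbed into $f_\lambda$.

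With these choices, \eqref{eq:c16} reduces to $J_\lambda x=f_\lambda x\otimes s_\lambda$, and then \eqref{eq:c17} becomes \eqref{eq:pure-multiplicity-normal-form}.

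There is no real obstacle. The only points needing care are the identification $\mathbb C^n\otimes\mathbb C\cong\mathbb C^n$ and the phase freedom in the choice of $s_\lambda$.
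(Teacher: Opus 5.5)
Your proposal is correct and matches the paper's proof: the forward direction is the specialization $K_\lambda=\mathbb C$, $\Omega_\lambda=1\otimes s_\lambda$, $U_\lambda=f_\lambda$ of Theorem~\ref{thm:general-multiblock-construction}, and the converse is read off from the theorem's converse half at $r_\lambda=1$, as you do. The phase caveat is harmless but unnecessary, because $s_\lambda$ is itself a unit eigenvector of $\omega_\lambda$ with eigenvalue $1$ and so can be chosen directly as $w_{\lambda,1}$ in \eqref{eq:c7}.
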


\begin{proof}
Apply Theorem~\ref{thm:general-multiblock-construction} with
\(K_\lambda=\mathbb C\),
\(\Omega_\lambda=1\otimes s_\lambda\), and
\(U_\lambda=f_\lambda\) under the identification
\(\mathbb C^n\otimes\mathbb C\cong\mathbb C^n\).
\end{proof}

\begin{remark}
\label{rem:one-dimensional-W-correct}
The hypothesis in Corollary~\ref{cor:pure-multiplicity-normal-form} is that each fixed
multiplicity-space state \(\omega_\lambda\) has rank one; it does not require \(\dim_{\mathbb{C}}W_\lambda=1\). If \(\dim_{\mathbb C}W_\lambda=1\), however, every density operator on \(W_\lambda\) is automatically pure, so the hypothesis of Corollary~\ref{cor:pure-multiplicity-normal-form} is automatic in that block.
If
\(
    W_\lambda
\)
is one-dimensional and
\(
    s_\lambda
\)
is a unit vector spanning it, then
\(
    V_\lambda\otimes W_\lambda
    \cong
    V_\lambda,
\)
and every linear map
\[
    F\colon
    \mathbb C^n
    \longrightarrow
    V_\lambda\otimes W_\lambda
\]
has a unique representation
\(
    F|x\rangle
    =
    f_\lambda|x\rangle
    \otimes
    |s_\lambda\rangle
\)
for a linear map
\(
    f_\lambda:\mathbb C^n\longrightarrow V_\lambda
\).
If \(F\) is an isometry, then
\(
    f_\lambda^\dagger f_\lambda
    =
    I_{\mathbb C^n}.
\)

For the one-row and one-column partitions,
\[
    \dim_{\mathbb C}W_{(t)}
    =
    1,
    \qquad
    \dim_{\mathbb C}W_{(1^t)}
    =
    1,
\]
the corresponding Specht modules being respectively the trivial and the sign representations of \(\mathfrak{S}_t\). In particular, when
\(t=2\),
\[
    \dim_{\mathbb C}W_{(2)}
    =
    \dim_{\mathbb C}W_{(1,1)}
    =
    1.
\]
On the other hand,
\[
    \dim_{\mathbb C}V_{(2)}
    =
    \dim_{\mathbb C}
    \operatorname{Sym}^2(\mathbb C^d)
    =
    \frac{d(d+1)}{2},
\]
whereas, for \(d\geq2\),
\[
    \dim_{\mathbb C}V_{(1,1)}
    =
    \dim_{\mathbb C}
    \bigwedge\nolimits^2\mathbb C^d
    =
    \frac{d(d-1)}{2}.
\]
Thus, in the \(t=2\) case, the fixed multiplicity-space factor is automatic because \(W_{(2)}\) and \(W_{(1,1)}\) are one-dimensional.
\end{remark}

\subsubsection{A two-block specialization}

The following construction is the pure multiplicity-state, two-block specialization of the multiblock normal form in Theorem~\ref{thm:general-multiblock-construction}. We present it separately as a specific construction.

\begin{proposition}[Two-block Haar-twirl admissible construction.]
\label{prop:two-block-construction}
Let
\(
    \lambda,\mu\in\Lambda(d,t)
\)
be distinct partitions.  Fix an integer
\(
    n\geq1
\)
satisfying
\begin{equation}
\label{eq:two-block-dimension-condition}
    n
    \leq
    \min
    \bigl\{
        \dim_{\mathbb C}V_\lambda,
        \dim_{\mathbb C}V_\mu
    \bigr\}.
\end{equation}
Choose orthonormal families
\(
    \{v_1,\ldots,v_n\}\subseteq V_\lambda
\), \(\{u_1,\ldots,u_n\}\subseteq V_\mu\),
and unit vectors
\(
    s_\lambda\in W_\lambda,
\)
\(
    s_\mu\in W_\mu.
\)
For a fixed real number
\(
    c\in(0,1)
\),
define
\begin{equation}
\label{eq:two-block-xi-def}
    |\xi_j\rangle
    :=
    \sqrt c\,
    \iota_\lambda(v_j\otimes s_\lambda)
    +
    \sqrt{1-c}\,
    \iota_\mu(u_j\otimes s_\mu),
    \qquad
    1\leq j\leq n,
\end{equation}
and let
\begin{equation}
\label{eq:two-block-S-def}
    \mathcal S
    :=
    \operatorname{span}_{\mathbb C}
    \bigl\{
        |\xi_1\rangle,\ldots,|\xi_n\rangle
    \bigr\}
    \subseteq
    \mathcal H_d^{\otimes t}.
\end{equation}
Then the following statements hold.

\begin{enumerate}
\item
The vectors
\(
    |\xi_1\rangle,\ldots,|\xi_n\rangle
\)
form an orthonormal basis of
\(
    \mathcal S
\),
hence,
\(
    \dim_{\mathbb C}\mathcal S=n.
\)

\item
The subspace
\(
    \mathcal S
\)
satisfies Conditions~\ref{cond1} and~\ref{cond2} of Theorem~\ref{thm:qpb-classification} with
\[
    c_\lambda=c,
    \qquad
    c_\mu=1-c,
    \qquad
    c_\nu=0
    \quad
    \text{for }
    \nu\notin\{\lambda,\mu\},
\]
and
\[
    \omega_\lambda
    =
    |s_\lambda\rangle\langle s_\lambda|,
    \qquad
    \omega_\mu
    =
    |s_\mu\rangle\langle s_\mu|.
\]

\item
Consequently,
\(
    \mathcal S
\)
is Haar-twirl admissible and
\begin{equation}
\label{eq:two-block-replacement-state}
\begin{split}
    \sigma
    =
    c\,
    \iota_\lambda
    \left(
        \frac{I_{V_\lambda}}
             {\dim_{\mathbb C}V_\lambda}
        \otimes
        |s_\lambda\rangle\langle s_\lambda|
    \right)
    \iota_\lambda^\dagger
    +
    (1-c)\,
    \iota_\mu
    \left(
        \frac{I_{V_\mu}}
             {\dim_{\mathbb C}V_\mu}
        \otimes
        |s_\mu\rangle\langle s_\mu|
    \right)
    \iota_\mu^\dagger .
\end{split}
\end{equation}

\item
Every nonzero vector in
\(
    \mathcal S
\)
has a nonzero component in each of the two active Schur-Weyl
blocks.  More precisely, for every
\(
    |\psi\rangle\in\mathcal S
\),
\begin{equation}
\label{eq:two-block-norm-identities}
    \|\Pi_\lambda|\psi\rangle\|^2
    =
    c\|\psi\|^2,
    \qquad
    \|\Pi_\mu|\psi\rangle\|^2
    =
    (1-c)\|\psi\|^2.
\end{equation}
Hence no nonzero vector of
\(
    \mathcal S
\)
lies entirely in a single Schur-Weyl isotypic component.
\end{enumerate}
\end{proposition}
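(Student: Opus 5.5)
The plan is to obtain Proposition~\ref{prop:two-block-construction} as a direct specialization of Corollary~\ref{cor:pure-multiplicity-normal-form}, which is itself Theorem~\ref{thm:general-multiblock-construction} with rank-one multiplicity states. Take $\Lambda_0=\{\lambda,\mu\}$, $c_\lambda=c$, $c_\mu=1-c$, $\omega_\lambda=|s_\lambda\rangle\langle s_\lambda|$ and $\omega_\mu=|s_\mu\rangle\langle s_\mu|$. Define isometries $f_\lambda\colon\mathbb C^n\to V_\lambda$ and $f_\mu\colon\mathbb C^n\to V_\mu$ by $f_\lambda|j\rangle=v_j$ and $f_\mu|j\rangle=u_j$ on the standard basis. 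They are isometries because the families $\{v_j\}$ and $\{u_j\}$ are orthonormal, which requires the dimension condition \eqref{eq:two-block-dimension-condition}; this is exactly \eqref{eq:pure-multi-block-dimension}. The encoding map \eqref{eq:pure-multiplicity-normal-form} then sends $|j\rangle\mapsto|\xi_j\rangle$, so its range is $\mathcal S$.

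Part (1) follows because $J$ is an isometry and the $|j\rangle$ are orthonormal, so $\{|\xi_j\rangle\}$ is orthonormal. One can also check this directly: by the orthogonality of distinct blocks, $\iota_\lambda^\dagger\iota_\mu=0$, so
$\langle\xi_i|\xi_j\rangle=c\langle v_i|v_j\rangle+(1-c)\langle u_i|u_j\rangle=\delta_{ij}$.

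Parts (2) and (3) come from the conditions verified in the proof of Theorem~\ref{thm:general-multiblock-construction}. Equation \eqref{eq:general-condition-A} gives the weights; here $c_\nu=0$ off $\{\lambda,\mu\}$ because $\Pi_\nu J=0$, and $c+(1-c)=1$. Equation \eqref{eq:general-condition-B} gives the marginals, which reduce to $\operatorname{Tr}_{V_\lambda}(c\,f_\lambda|x\rangle\langle y|f_\lambda^\dagger\otimes|s_\lambda\rangle\langle s_\lambda|)=c\langle y|x\rangle|s_\lambda\rangle\langle s_\lambda|$, using $f_\lambda^\dagger f_\lambda=I$. Theorem~\ref{thm:qpb-classification} then yields admissibility, and its formula \eqref{eq:sigma} gives the replacement state \eqref{eq:two-block-replacement-state}; equivalently, this is \eqref{eq:pure-multiplicity-replacement-state}.

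Part (4) is \eqref{eq:general-active-block-norm} specialized to this case. Since $c\in(0,1)$, both weights are strictly positive, so every nonzero $\psi\in\mathcal S$ has nonzero components in both blocks and cannot lie in a single isotypic component.

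There is no genuine obstacle here. The only points that need care are checking that the chosen orthonormal families really define isometries compatible with the normal form, and using the cross-block orthogonality $\iota_\lambda^\dagger\iota_\mu=0$ for $\lambda\neq\mu$. Both are immediate from the Schur-Weyl decomposition and the hypotheses.
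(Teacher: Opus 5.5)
Your proposal is correct, but it is organized differently from the paper's proof. You obtain the proposition as a literal instance of Corollary~\ref{cor:pure-multiplicity-normal-form}: take $\Lambda_0=\{\lambda,\mu\}$, $f_\lambda|j\rangle=v_j$ and $f_\mu|j\rangle=u_j$, so that $J|j\rangle=|\xi_j\rangle$. You then read off parts (1)--(4) from the isometry property of $J$, from \eqref{eq:general-condition-A}, \eqref{eq:general-condition-B} and \eqref{eq:general-active-block-norm}, and from Theorem~\ref{thm:qpb-classification}.

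The paper announces the proposition as the two-block, pure-state specialization of Theorem~\ref{thm:general-multiblock-construction}, but its proof does not invoke that theorem. Instead it recomputes everything by hand in the basis $\{\xi_j\}$:
\begin{itemize}
\item orthonormality, via $\iota_\lambda^\dagger\iota_\mu=0$;
\item the explicit projections $\Pi_\lambda\psi=\sqrt c\sum_j\alpha_j\iota_\lambda(v_j\otimes s_\lambda)$;
\item both conditions, written in coordinates $\alpha_j,\beta_k$;
\item a direct check of $\mathcal T^{(t)}(|\psi\rangle\langle\phi|)=\langle\phi|\psi\rangle\sigma$ from Lemma~\ref{lem:blockwise-collective-Haar-twirl}.
\end{itemize}
The underlying identities are the same in both routes. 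Yours is shorter and shows the proposition is visibly a special case of the normal form. The paper's is self-contained and serves as a worked example that does not depend on the general construction.

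Two small points. First, the identification $\omega_\lambda=|s_\lambda\rangle\langle s_\lambda|$ appears only inside the proof of Theorem~\ref{thm:general-multiblock-construction}, not in its statement. So either cite it as such or, as you in fact also do, verify the partial trace directly using $f_\lambda^\dagger f_\lambda=I$ and $\langle y|x\rangle=\langle\phi|\psi\rangle$. Second, the dimension condition \eqref{eq:two-block-dimension-condition} is what allows orthonormal families of size $n$ to exist. The isometry property of $f_\lambda$ and $f_\mu$ follows from orthonormality alone.
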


\begin{proof}
Fix
\(
    j,k\in[n]
\).
Using~\eqref{eq:two-block-xi-def}, we get,
\begin{align}
    \langle\xi_k|\xi_j\rangle
    &=
    c\,
    \bigl\langle
        \iota_\lambda(v_k\otimes s_\lambda),
        \iota_\lambda(v_j\otimes s_\lambda)
    \bigr\rangle
    \notag\\
    &\quad+
    \sqrt{c(1-c)}\,
    \bigl\langle
        \iota_\lambda(v_k\otimes s_\lambda),
        \iota_\mu(u_j\otimes s_\mu)
    \bigr\rangle
    \notag\\
    &\quad+
    \sqrt{c(1-c)}\,
    \bigl\langle
        \iota_\mu(u_k\otimes s_\mu),
        \iota_\lambda(v_j\otimes s_\lambda)
    \bigr\rangle
    \notag\\
    &\quad+
    (1-c)\,
    \bigl\langle
        \iota_\mu(u_k\otimes s_\mu),
        \iota_\mu(u_j\otimes s_\mu)
    \bigr\rangle .
\label{eq:two-block-orth-expand}
\end{align}
Because
\(
    \lambda\neq\mu
\),
we have
\(
    \iota_\lambda^\dagger\iota_\mu
    =
    \iota_\mu^\dagger\iota_\lambda=0.
\)
Hence both cross terms in
\eqref{eq:two-block-orth-expand} vanish.  Since
\(
    \iota_\lambda
\)
and
\(
    \iota_\mu
\)
are isometries,
\begin{align}
    \langle\xi_k|\xi_j\rangle
    &=
    c\,
    \langle v_k|v_j\rangle
    \langle s_\lambda|s_\lambda\rangle
    +
    (1-c)\,
    \langle u_k|u_j\rangle
    \langle s_\mu|s_\mu\rangle
    \notag\\
    &=
    c\,\delta_{kj}
    +
    (1-c)\,\delta_{kj}
    \notag\\
    &=
    \delta_{kj}.
\end{align}
Therefore
\(
    |\xi_1\rangle,\ldots,|\xi_n\rangle
\)
are orthonormal.  In particular, they are linearly independent and
\(
    \dim_{\mathbb C}\mathcal S=n.
\)

\smallskip
Let
\(
    |\psi\rangle
    =
    \sum_{j=1}^n
    \alpha_j|\xi_j\rangle
    \in\mathcal S.
\)
Using
\(
    \Pi_\lambda
    =
    \iota_\lambda\iota_\lambda^\dagger
\),
we have,
\begin{align}
    \Pi_\lambda|\xi_j\rangle
    &=
    \sqrt c\,
    \iota_\lambda
    \iota_\lambda^\dagger
    \iota_\lambda
    (v_j\otimes s_\lambda)
    +
    \sqrt{1-c}\,
    \iota_\lambda
    \iota_\lambda^\dagger
    \iota_\mu
    (u_j\otimes s_\mu)
    \notag\\
    &=
    \sqrt c\,
    \iota_\lambda(v_j\otimes s_\lambda).
\end{align}
Similarly,
\(
    \Pi_\mu|\xi_j\rangle
    =
    \sqrt{1-c}\,
    \iota_\mu(u_j\otimes s_\mu).
\)
Therefore
\begin{equation}
\label{eq:two-block-projection-psi}
    \Pi_\lambda|\psi\rangle
    =
    \sqrt c
    \sum_{j=1}^n
    \alpha_j
    \iota_\lambda(v_j\otimes s_\lambda),
\end{equation}
and
\begin{equation}
\label{eq:two-block-projection-psi-mu}
    \Pi_\mu|\psi\rangle
    =
    \sqrt{1-c}
    \sum_{j=1}^n
    \alpha_j
    \iota_\mu(u_j\otimes s_\mu).
\end{equation}
For every
\(
    \nu\notin\{\lambda,\mu\}
\),
\(
    \Pi_\nu|\psi\rangle=0.
\)

\medskip
Let
\(
    |\psi\rangle
    =
    \sum_{j=1}^n
    \alpha_j|\xi_j\rangle
\),
\(|\phi\rangle
    =
    \sum_{k=1}^n
    \beta_k|\xi_k\rangle.
\)
Since
\(
    \{|\xi_j\rangle\}_{j=1}^n
\)
is orthonormal,
\begin{equation}
\label{eq:two-block-inner-product}
    \langle\phi|\psi\rangle
    =
    \sum_{j=1}^n
    \overline{\beta_j}\alpha_j.
\end{equation}
Using~\eqref{eq:two-block-projection-psi},
\begin{align}
    \langle\phi|\Pi_\lambda|\psi\rangle
    &=
    c
    \sum_{j,k=1}^n
    \overline{\beta_k}\alpha_j
    \langle v_k|v_j\rangle
    \langle s_\lambda|s_\lambda\rangle
    \notag\\
    &=
    c
    \sum_{j,k=1}^n
    \overline{\beta_k}\alpha_j
    \delta_{kj}
    \notag\\
    &=
    c
    \sum_{j=1}^n
    \overline{\beta_j}\alpha_j
    \notag\\
    &=
    c\langle\phi|\psi\rangle.
\end{align}
Similarly,
\[
    \langle\phi|\Pi_\mu|\psi\rangle
    =
    (1-c)\langle\phi|\psi\rangle.
\]
For
\(
    \nu\notin\{\lambda,\mu\}
\),
\[
    \langle\phi|\Pi_\nu|\psi\rangle=0.
\]
Thus Condition~\ref{cond1} holds with
\[
    c_\lambda=c,
    \qquad
    c_\mu=1-c,
    \qquad
    c_\nu=0
    \quad
    (\nu\notin\{\lambda,\mu\}).
\]
Moreover,
\(
    \sum_{\nu\in\Lambda(d,t)}c_\nu
    =1.
\)

\medskip
Now from~\eqref{eq:two-block-projection-psi}, or by applying
\(
    \iota_\lambda^\dagger
\)
directly, we have
\[
    \iota_\lambda^\dagger|\psi\rangle
    =
    \sqrt c
    \sum_{j=1}^n
    \alpha_j(v_j\otimes s_\lambda),
\]
and
\[
    \iota_\lambda^\dagger|\phi\rangle
    =
    \sqrt c
    \sum_{k=1}^n
    \beta_k(v_k\otimes s_\lambda).
\]
Hence
\begin{align}
    \iota_\lambda^\dagger
    |\psi\rangle\langle\phi|
    \iota_\lambda
    &=
    c
    \sum_{j,k=1}^n
    \alpha_j\overline{\beta_k}
    \bigl(
        |v_j\rangle\langle v_k|
    \bigr)
    \otimes
    \bigl(
        |s_\lambda\rangle
        \langle s_\lambda|
    \bigr).
\end{align}
Taking the partial trace over
\(
    V_\lambda
\),
and using
\(
    \operatorname{Tr}
    \bigl(
        |v_j\rangle\langle v_k|
    \bigr)
    =
    \langle v_k|v_j\rangle
    =
    \delta_{kj},
\)
we obtain
\begin{align}
    \operatorname{Tr}_{V_\lambda}
    \left(
        \iota_\lambda^\dagger
        |\psi\rangle\langle\phi|
        \iota_\lambda
    \right)
    &=
    c
    \sum_{j,k=1}^n
    \alpha_j\overline{\beta_k}
    \delta_{kj}
    |s_\lambda\rangle
    \langle s_\lambda|
    \notag\\
    &=
    c
    \sum_{j=1}^n
    \alpha_j\overline{\beta_j}
    |s_\lambda\rangle
    \langle s_\lambda|
    \notag\\
    &=
    c\langle\phi|\psi\rangle
    |s_\lambda\rangle
    \langle s_\lambda|.
\label{eq:two-block-condition-B-lambda}
\end{align}
Thus Condition~\ref{cond2} holds in the
\(
    \lambda
\)-block with
\(
    \omega_\lambda
    =
    |s_\lambda\rangle
    \langle s_\lambda|.
\)

The same calculation gives
\begin{equation}
\label{eq:two-block-condition-B-mu}
    \operatorname{Tr}_{V_\mu}
    \left(
        \iota_\mu^\dagger
        |\psi\rangle\langle\phi|
        \iota_\mu
    \right)
    =
    (1-c)
    \langle\phi|\psi\rangle
    |s_\mu\rangle
    \langle s_\mu|.
\end{equation}
Thus
\(
    \omega_\mu
    =
    |s_\mu\rangle
    \langle s_\mu|
\).
For every inactive
\(
    \nu\notin\{\lambda,\mu\}
\),
\(
    \iota_\nu^\dagger|\psi\rangle=0,
\)
and hence
\(
    \operatorname{Tr}_{V_\nu}
    \left(
        \iota_\nu^\dagger
        |\psi\rangle\langle\phi|
        \iota_\nu
    \right)
    =
    0.
\)
Therefore both conditions of Theorem~\ref{thm:qpb-classification} are satisfied.

\medskip
By Theorem~\ref{thm:qpb-classification}, the subspace
\(
    \mathcal S
\)
is Haar-twirl admissible, and its replacement state is
the state in the Equation~\eqref{eq:two-block-replacement-state}.

For arbitrary
\(
    |\psi\rangle,|\phi\rangle\in\mathcal S
\), substituting \eqref{eq:two-block-condition-B-lambda} and
\eqref{eq:two-block-condition-B-mu} into the Haar-twirl Equation~\ref{eq:appendix-twirl-direct-sum-form} gives, 
\begin{align}
    \mathcal T^{(t)}
    \bigl(
        |\psi\rangle\langle\phi|
    \bigr)
    &=
    \langle\phi|\psi\rangle\,\sigma.
\label{eq:two-block-rank-one-twirl}
\end{align}
In particular, if
\[
    \rho
    =
    \sum_{\ell=1}^{\operatorname{rank}(\rho)}
    p_\ell
    |x_\ell\rangle\langle x_\ell|
    \in\mathcal D(\mathcal S),
\]
where
\[
    p_\ell\geq0,
    \qquad
    \sum_{\ell}p_\ell=1,
    \qquad
    \|x_\ell\|=1,
\]
then
\begin{align}
    \mathcal T^{(t)}(\rho)
    &=
    \sum_\ell
    p_\ell
    \mathcal T^{(t)}
    \bigl(
        |x_\ell\rangle\langle x_\ell|
    \bigr)
    \notag\\
    &=
    \sum_\ell
    p_\ell
    \langle x_\ell|x_\ell\rangle
    \sigma
    \notag\\
    &=
    \left(
        \sum_\ell p_\ell
    \right)\sigma
    \notag\\
    &=
    \sigma.
\end{align}

\medskip
Let
\(
    |\psi\rangle\in\mathcal S
\).
Using~\eqref{eq:two-block-projection-psi},
orthonormality of
\(
    \{v_j\}_{j=1}^n
\),
and~\eqref{eq:two-block-inner-product},
we obtain
\begin{align}
    \|\Pi_\lambda|\psi\rangle\|^2
    &=
    c
    \left\|
        \sum_{j=1}^n
        \alpha_j(v_j\otimes s_\lambda)
    \right\|^2
    \notag\\
    &=
    c
    \sum_{j=1}^n|\alpha_j|^2
    \notag\\
    &=
    c\|\ket{\psi}\|^2,
\end{align}
and
\[
    \|\Pi_\mu|\psi\rangle\|^2
    =
    (1-c)\|\ket{\psi}\|^2.
\]
If
\(
    |\psi\rangle\neq0
\),
then
\(
    \|\ket{\psi}\|^2>0
\),
and since
\(
    0<c<1
\),
both quantities are strictly positive. This proves
\eqref{eq:two-block-norm-identities} and completes the proof.
\end{proof}

\subsection{Complementary-channel and Knill-Laflamme interpretation}\label{subsec:classification-knill-laflamme}

The conditions in Theorem~\ref{thm:qpb-classification}
have a direct quantum-error-correcting interpretation.
The relevant error-correction channel is not the collective Haar-twirl itself, but a \emph{complementary channel} of the Haar twirl.
We make this relation explicit in the present subsection.

This connection is a specialization of the general complementarity
between private and correctable quantum codes
\cite{KretschmannKribsSpekkens2008,
JochymOConnorKribsLaflammePlosker2013,
JochymOConnorKribsLaflammePlosker2014}.
For the collective Haar twirl, Schur-Weyl duality makes the corresponding Knill-Laflamme operator equations completely explicit and yields the two conditions of Theorem~\ref{thm:qpb-classification}.

\medskip
Let
\(
    \Phi \colon 
    \mathcal L(\mathcal H_A)
    \longrightarrow
    \mathcal L(\mathcal H_B)
\)
be a quantum channel.  Choose a Stinespring isometry
\(
    V\colon
    \mathcal H_A
    \longrightarrow
    \mathcal H_B\otimes\mathcal H_E
\)
such that
\begin{equation}
\label{eq:stinespring-main}
    \Phi(X)
    =
    \operatorname{Tr}_{E}
    \bigl(
        VXV^\dagger
    \bigr).
\end{equation}
The corresponding complementary channel is
\begin{equation}
\label{eq:complementary-channel-main}
    \Phi^{c}(X)
    :=
    \operatorname{Tr}_{B}
    \bigl(
        VXV^\dagger
    \bigr).
\end{equation}

Fix an orthonormal basis
\(
    \{|a\rangle\}_{a=1}^{\dim_{\mathbb{C}}\mathcal H_B}
\)
of \(\mathcal H_B\) and define
\begin{equation}
\label{eq:complement-kraus}
    F_a
    :=
    (\langle a|\otimes I_E)V
    :
    \mathcal H_A
    \longrightarrow
    \mathcal H_E.
\end{equation}
Then
\begin{equation}
\label{eq:complement-kraus-rep}
    \Phi^c(X)
    =
    \sum_a
        F_aXF_a^\dagger.
\end{equation}
Moreover,
\begin{equation}
\label{eq:FstarF-adjoint}
    F_a^\dagger F_b
    =
    V^\dagger
    \bigl(
        |a\rangle\langle b|
        \otimes I_E
    \bigr)
    V
    =
    \Phi^\dagger
    \bigl(
        |a\rangle\langle b|
    \bigr).
\end{equation}
Since the matrix units
\(
    |a\rangle\langle b|
\)
span
\(
    \mathcal L(\mathcal H_B)
\),
Equation~\eqref{eq:FstarF-adjoint} implies
\begin{equation}
\label{eq:KL-system-range}
    \operatorname{span}_{\mathbb C}
    \{
        F_a^\dagger F_b
    \}_{a,b}
    =
    \operatorname{im}\Phi^\dagger.
\end{equation}

We first derive the exact subspace version of privacy-correctability complementarity.

\begin{theorem}[Private subspaces and complementary error correction]\label{thm:privacy-complementarity}
Let
\(
    \mathcal S\subseteq \mathcal H_A
\)
be a nonzero subspace and let
\(
    \Pi_{\mathcal S}
\)
denote its orthogonal projector.
The following are equivalent.

\begin{enumerate}[label=\textup{(\roman*)}]
\item
There exists
\(
    \sigma_B\in\mathcal D(\mathcal H_B)
\)
such that
\begin{equation}
\label{eq:private-channel-condition-KL}
    \Phi\circ\mathcal J_{\mathcal S}
  =\mathcal R_{\sigma_B,\mathcal S};
    \qquad
    \forall
    X\in\mathcal L(\mathcal S).
\end{equation}

\item
The subspace \(\mathcal S\) satisfies the Knill-Laflamme
conditions for the complementary channel \(\Phi^c\):
\begin{equation}
\label{eq:KL-complement-condition}
    \Pi_{\mathcal S}
    F_a^\dagger F_b
    \Pi_{\mathcal S}
    =
    \lambda_{ab}
    \Pi_{\mathcal S}
    \qquad
    \forall a,b,
\end{equation}
where
\(
\label{eq:lambda-sigma-relation}
    \lambda_{ab}
    =
    \langle b|\sigma_B|a\rangle.
\)
\end{enumerate}

Consequently,
\[
    \mathcal S\text{ is private for }\Phi
    \iff
    \mathcal S\text{ is exactly correctable for }\Phi^c.
\]
\end{theorem}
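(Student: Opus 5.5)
The plan is to go through the adjoint channel. Fix the Stinespring isometry $V$ and the complementary Kraus operators $F_a$ of \eqref{eq:complement-kraus}. Condition~(i) concerns the outputs $\Phi(X)$ for $X\in\mathcal L(\mathcal S)$. We test these outputs against the matrix units $|b\rangle\langle a|$, $|a\rangle\langle b|$ of $\mathcal L(\mathcal H_B)$, and then move everything to the input side using \eqref{eq:FstarF-adjoint}.

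For every $X\in\mathcal L(\mathcal S)$ we have
\[
\langle b|\Phi(\mathcal J_{\mathcal S}(X))|a\rangle=\operatorname{Tr}\bigl(\Phi^\dagger(|a\rangle\langle b|)\,\iota_{\mathcal S}X\iota_{\mathcal S}^\dagger\bigr)=\operatorname{Tr}\bigl(\iota_{\mathcal S}^\dagger F_a^\dagger F_b\iota_{\mathcal S}\,X\bigr).
\]
Privacy (i) says this equals $\operatorname{Tr}(X)\langle b|\sigma_B|a\rangle$ for all $X$ in a spanning set, so it holds for all $X\in\mathcal L(\mathcal S)$ by linearity (density operators span $\mathcal L(\mathcal S)$). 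The trace pairing on $\mathcal L(\mathcal S)$ is nondegenerate. Hence (i) is equivalent to the operator identity
\[
\iota_{\mathcal S}^\dagger F_a^\dagger F_b\iota_{\mathcal S}=\lambda_{ab}I_{\mathcal S}\quad\text{for all }a,b,\qquad \lambda_{ab}=\langle b|\sigma_B|a\rangle .
\]
Conjugating by $\iota_{\mathcal S}$ turns this into \eqref{eq:KL-complement-condition}, which proves (i)$\Rightarrow$(ii).

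For (ii)$\Rightarrow$(i), start from \eqref{eq:KL-complement-condition}. Define $\sigma_B$ by its matrix entries $\langle b|\sigma_B|a\rangle:=\lambda_{ab}$. Running the displayed identity backwards gives $\Phi(\mathcal J_{\mathcal S}(X))=\operatorname{Tr}(X)\sigma_B$ for all $X$. It remains to check that $\sigma_B$ is a state. Choose any unit vector $\psi\in\mathcal S$; then $\sigma_B=\Phi(|\psi\rangle\langle\psi|)$, which is a density operator because $\Phi$ is a channel. The only step requiring care is this positivity: we must avoid treating $(\lambda_{ab})$ abstractly as a matrix and instead realize $\sigma_B$ as a channel output. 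We also need the index convention $\lambda_{ab}=\langle b|\sigma_B|a\rangle$ to match \eqref{eq:FstarF-adjoint}.

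For the final "consequently", we invoke the Knill--Laflamme theorem~\cite{KnillLaflamme1997}. Applied to the Kraus family $\{F_a\}$ of $\Phi^c$ in \eqref{eq:complement-kraus-rep}, it says that $\mathcal S$ is exactly correctable for $\Phi^c$ if and only if $\Pi_{\mathcal S}F_a^\dagger F_b\Pi_{\mathcal S}=\lambda_{ab}\Pi_{\mathcal S}$ for some scalars $\lambda_{ab}$. These scalars then automatically form a density matrix, namely $\sigma_B$ as constructed above.

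The conditions do not depend on the chosen Kraus representation or Stinespring dilation. By \eqref{eq:KL-system-range}, the span of the $F_a^\dagger F_b$ is $\operatorname{im}\Phi^\dagger$ whatever dilation is used. Combining the Knill--Laflamme equivalence with (i)$\Leftrightarrow$(ii) gives: $\mathcal S$ is private for $\Phi$ if and only if it is correctable for $\Phi^c$.
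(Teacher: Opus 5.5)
Your proposal is correct and follows essentially the paper's proof. Both arguments test against matrix units of $\mathcal L(\mathcal H_B)$ and transfer them to the input side via $F_a^\dagger F_b=\Phi^\dagger(|a\rangle\langle b|)$; both secure positivity by realizing $\sigma_B$ as the channel output $\Phi(|\psi\rangle\langle\psi|)$ of a unit vector in $\mathcal S$; and both invoke the Knill--Laflamme theorem for the final equivalence. The only cosmetic difference is that you use nondegeneracy of the trace pairing on $\mathcal L(\mathcal S)$, where the paper works with the matrix elements $\langle\phi|\cdot|\psi\rangle$ of rank-one operators.
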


\begin{proof}
Assume first that
\eqref{eq:private-channel-condition-KL}
holds.
Let
\(
    |\psi\rangle,|\phi\rangle\in\mathcal S
\).
For arbitrary \(a,b\), using
\eqref{eq:FstarF-adjoint},
\begin{align}
    \langle\phi|
        F_a^\dagger F_b
    |\psi\rangle
    &=
    \langle\phi|
        \Phi^\dagger
        (|a\rangle\langle b|)
    |\psi\rangle
    \notag\\
    &=
    \operatorname{Tr}
    \left[
        |a\rangle\langle b|\,
        \Phi
        \bigl(
            |\psi\rangle\langle\phi|
        \bigr)
    \right]
    \notag\\
    &=
    \operatorname{Tr}
    \left[
        |a\rangle\langle b|\,
        \langle\phi|\psi\rangle
        \sigma_B
    \right]
    \notag\\
    &=
    \langle\phi|\psi\rangle
    \operatorname{Tr}
    \left(
        |a\rangle\langle b|
        \sigma_B
    \right)
    \notag\\
    &=
    \langle b|\sigma_B|a\rangle
    \langle\phi|\psi\rangle.
\end{align}
Since this identity holds for every
\(
    |\psi\rangle,|\phi\rangle\in\mathcal S
\),
we obtain the operator identity
\[
    \Pi_{\mathcal S}
    F_a^\dagger F_b
    \Pi_{\mathcal S}
    =
    \langle b|\sigma_B|a\rangle
    \Pi_{\mathcal S},
\]
which is
\eqref{eq:KL-complement-condition}.

Conversely, suppose that
\eqref{eq:KL-complement-condition}
holds.
For arbitrary
\(
    |\psi\rangle,|\phi\rangle\in\mathcal S
\),
we have
\begin{align}
    \langle a|
        \Phi
        \bigl(
            |\psi\rangle\langle\phi|
        \bigr)
    |b\rangle
    &=
    \operatorname{Tr}
    \left[
        |b\rangle\langle a|\,
        \Phi
        \bigl(
            |\psi\rangle\langle\phi|
        \bigr)
    \right]
    \notag\\
    &=
    \langle\phi|
        \Phi^\dagger
        (|b\rangle\langle a|)
    |\psi\rangle
    \notag\\
    &=
    \langle\phi|
        F_b^\dagger F_a
    |\psi\rangle
    \notag\\
    &=
    \lambda_{ba}
    \langle\phi|\psi\rangle.
\end{align}
Choose a unit vector
\(
    |\psi_0\rangle\in\mathcal S
\)
and put
\[
    \sigma_B
    :=
    \Phi
    \bigl(
        |\psi_0\rangle\langle\psi_0|
    \bigr).
\]
Since \(\Phi\) is CPTP,
\(
    \sigma_B\in\mathcal D(\mathcal H_B)
\).
The preceding identity with
\(
    \psi=\phi=\psi_0
\)
gives
\[
    \langle a|\sigma_B|b\rangle
    =
    \lambda_{ba}.
\]
Hence, for arbitrary
\(
    |\psi\rangle,|\phi\rangle\in\mathcal S
\),
\[
    \langle a|
        \Phi
        (|\psi\rangle\langle\phi|)
    |b\rangle
    =
    \langle\phi|\psi\rangle
    \langle a|\sigma_B|b\rangle.
\]
Since this holds for every pair \(a,b\),
\begin{equation}
\label{eq:rank-one-private-from-KL}
    \Phi
    (|\psi\rangle\langle\phi|)
    =
    \langle\phi|\psi\rangle\sigma_B
    =
    \operatorname{Tr}
    (|\psi\rangle\langle\phi|)
    \sigma_B.
\end{equation}
Rank-one operators
\(
    |\psi\rangle\langle\phi|
\),
with
\(
    |\psi\rangle,|\phi\rangle\in\mathcal S
\),
span
\(
    \mathcal L(\mathcal S)
\).
Therefore linearity of \(\Phi\) extends
\eqref{eq:rank-one-private-from-KL} to every
\(
    X\in\mathcal L(\mathcal S)
\),
which proves
\eqref{eq:private-channel-condition-KL}.

The final equivalence follows from the Knill-Laflamme theorem applied
to the error operators
\(
    \{F_a\}_a
\)
of the complementary channel.
\end{proof}

\noindent \(\bullet\) \textbf{Specialization to the collective Haar-twirl.} We now apply Theorem~\ref{thm:privacy-complementarity} to
\(
    \Phi
    =
    \mathcal T^{(t)}.
\)

The collective Haar twirl is self-adjoint with respect to the
Hilbert--Schmidt inner product:
\begin{equation}
\label{eq:twirl-self-adjoint}
    \bigl(
        \mathcal T^{(t)}
    \bigr)^\dagger
    =
    \mathcal T^{(t)}.
\end{equation}
For
\(
    X,Y\in\mathcal L(\mathcal H_d^{\otimes t})
\),
\begin{align}
    \operatorname{Tr}
    \left[
        Y^\dagger
        \mathcal T^{(t)}(X)
    \right]
    &=
    \int_{\mathrm U(d)}
    \operatorname{Tr}
    \left[
        Y^\dagger
        U^{\otimes t}
        X
        (U^\dagger)^{\otimes t}
    \right]
    \mathrm d\mu_{\mathrm{Haar}}(U)
    \notag\\
    &=
    \int_{\mathrm U(d)}
    \operatorname{Tr}
    \left[
        \bigl(
            U^{\otimes t}
            Y
            (U^\dagger)^{\otimes t}
        \bigr)^\dagger
        X
    \right]
    \mathrm d\mu_{\mathrm{Haar}}(U)
    \notag\\
    &=
    \operatorname{Tr}
    \left[
        \mathcal T^{(t)}(Y)^\dagger
        X
    \right].
\end{align}
Together with idempotence,
\(
    \mathcal T^{(t)}
    \circ
    \mathcal T^{(t)}
    =
    \mathcal T^{(t)},
\)
this shows that
\(
    \mathcal T^{(t)}
\)
is the Hilbert--Schmidt orthogonal projection onto its range.

By Schur-Weyl duality, this range is the commutant of the collective
\(\mathrm U(d)\)-representation:
\begin{equation}
\label{eq:twirl-range-algebra}
\begin{split}
    \mathfrak A_t
    &:=
    \operatorname{im}
    \bigl(
        \mathcal T^{(t)}
    \bigr)
\\
    &=
    \operatorname{End}_{\mathrm U(d)}
    \left(
        \mathcal H_d^{\otimes t}
    \right)
\\
    &=
    \bigoplus_{\lambda\in\Lambda(d,t)}
    \iota_\lambda
    \left(
        I_{V_\lambda}
        \otimes
        \mathcal L(W_\lambda)
    \right)
    \iota_\lambda^\dagger.
\end{split}
\end{equation}

\begin{proposition}[Knill-Laflamme form of the Schur-Weyl classification]
\label{prop:classification-KL}
Let
\(
    \mathcal S\subset \mathcal H_d^{\otimes t}
\)
be nonzero, with orthogonal projector
\(
    \Pi_{\mathcal S}
\).
The following are equivalent.

\begin{enumerate}[label=\textup{(\roman*)}]
\item
\(\mathcal S\) is Haar-twirl admissible for the collective Haar twirl.

\item
\(\mathcal S\) is an exactly correctable subspace for a complementary
channel of
\(
    \mathcal T^{(t)}
\).

\item
There exists a linear functional
\(
    \alpha:\mathfrak A_t \longrightarrow \mathbb C
\)
such that
\begin{equation}
\label{eq:scalar-compression-commutant}
    \Pi_{\mathcal S}A\Pi_{\mathcal S}
    =
    \alpha(A)\Pi_{\mathcal S}
    \qquad
    \forall A\in\mathfrak A_t.
\end{equation}

\item
There exist constants
\(
    c_\lambda\geq0
\)
with
\(
    \sum_{\lambda\in\Lambda(d,t)}
        c_\lambda=1
\)
and, for every active block \(c_\lambda>0\), a density operator
\(
    \omega_\lambda\in\mathcal D(W_\lambda)
\)
such that
\begin{align}
    \langle\phi|\Pi_\lambda|\psi\rangle
    &=
    c_\lambda
    \langle\phi|\psi\rangle,
\label{eq:KL-condition1-recovered}
\\
    \operatorname{Tr}_{V_\lambda}
    \left(
        \iota_\lambda^\dagger
        |\psi\rangle\langle\phi|
        \iota_\lambda
    \right)
    &=
    c_\lambda
    \langle\phi|\psi\rangle
    \omega_\lambda
\label{eq:KL-condition2-recovered}
\end{align}
for every
\(
    |\psi\rangle,|\phi\rangle\in\mathcal S
\).
\end{enumerate}

Thus
Theorem~\ref{thm:qpb-classification}
is the Schur-Weyl resolution of the Knill-Laflamme operator equations for a complementary channel of the collective Haar twirl.
\end{proposition}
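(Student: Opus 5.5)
The plan is to prove the cycle (i)$\Leftrightarrow$(ii), (i)$\Leftrightarrow$(iii), and (iii)$\Leftrightarrow$(iv), with (iv) tied back to (i) through Theorem~\ref{thm:qpb-classification}.

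\emph{Step 1: (i)$\Leftrightarrow$(ii).} Apply Theorem~\ref{thm:privacy-complementarity} with $\Phi=\mathcal T^{(t)}$, $\mathcal H_A=\mathcal H_B=\mathcal H_d^{\otimes t}$ and any Stinespring dilation of the twirl. Such a dilation exists because $\mathcal T^{(t)}$ is CPTP; for instance, approximate Haar measure by an exact unitary $t$-design and use its Kraus operators $\sqrt{w_k}U_k^{\otimes t}$. Condition (i) there is exactly Definition~\ref{def:haar-admissible}, so this equivalence is immediate.

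\emph{Step 2: (i)$\Leftrightarrow$(iii).} By \eqref{eq:KL-system-range}, $\operatorname{span}\{F_a^\dagger F_b\}=\operatorname{im}(\mathcal T^{(t)})^\dagger$. By self-adjointness \eqref{eq:twirl-self-adjoint}, this equals $\mathfrak A_t$. So the Knill--Laflamme condition \eqref{eq:KL-complement-condition} holding for all $a,b$ is equivalent, by linearity, to $\Pi_{\mathcal S}A\Pi_{\mathcal S}\in\mathbb C\Pi_{\mathcal S}$ for all $A\in\mathfrak A_t$.

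The scalar $\alpha(A)$ is uniquely determined because $\Pi_{\mathcal S}\neq0$, and it is linear in $A$. For a direct argument not routed through Theorem~\ref{thm:privacy-complementarity}, use
\[
\operatorname{Tr}\bigl(A\,\mathcal T^{(t)}(|\psi\rangle\langle\phi|)\bigr)=\langle\phi|A|\psi\rangle\qquad (A\in\mathfrak A_t).
\]
Since $\mathcal T^{(t)}(X)$ lies in $\mathfrak A_t$ and $\mathfrak A_t$ is a $*$-algebra, this output is determined by these pairings. Then $\mathcal T^{(t)}(|\psi\rangle\langle\phi|)=\langle\phi|\psi\rangle\sigma$ holds if and only if $\langle\phi|A|\psi\rangle=\langle\phi|\psi\rangle\operatorname{Tr}(A\sigma)$, which is (iii) with $\alpha(A)=\operatorname{Tr}(A\sigma)$. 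For the converse, $\alpha$ is a positive normalized functional on the finite-dimensional $*$-algebra $\mathfrak A_t$. It is therefore represented by a unique density operator $\sigma\in\mathfrak A_t$, and $\sigma$ is positive because $\alpha(A^\dagger A)\ge0$.

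\emph{Step 3: (iii)$\Leftrightarrow$(iv).} Use the block description \eqref{eq:twirl-range-algebra}, which says $\mathfrak A_t$ is spanned by the operators $\iota_\lambda(I_{V_\lambda}\otimes M)\iota_\lambda^\dagger$ with $M\in\mathcal L(W_\lambda)$. For such $A$,
\[
\langle\phi|A|\psi\rangle=\operatorname{Tr}\bigl(M\,B_\lambda(\psi,\phi)\bigr),
\]
with $B_\lambda$ as in \eqref{eq:B-lambda-definition}. So (iii) says that, for each $\lambda$, the functional $M\mapsto\operatorname{Tr}(M B_\lambda(\psi,\phi))$ equals $\langle\phi|\psi\rangle\alpha_\lambda(M)$. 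Varying $M$ gives
\[
B_\lambda(\psi,\phi)=\langle\phi|\psi\rangle\tau_\lambda
\]
for a fixed operator $\tau_\lambda$. Then:
\begin{itemize}
\item taking $M=I$ yields \eqref{eq:KL-condition1-recovered} with $c_\lambda=\operatorname{Tr}\tau_\lambda$;
\item $\tau_\lambda=B_\lambda(\psi,\psi)\ge0$ for unit $\psi$, so $c_\lambda\ge0$;
\item $\sum_\lambda c_\lambda=1$ follows from \eqref{eq:projsum};
\item setting $\omega_\lambda=\tau_\lambda/c_\lambda$ gives \eqref{eq:KL-condition2-recovered}.
\end{itemize}
Conversely, (iv) gives $\langle\phi|A|\psi\rangle=\langle\phi|\psi\rangle\sum_\lambda c_\lambda\operatorname{Tr}(M_\lambda\omega_\lambda)$ for $A=\bigoplus_\lambda\iota_\lambda(I\otimes M_\lambda)\iota_\lambda^\dagger$, which is (iii). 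Alternatively, (iv)$\Leftrightarrow$(i) is simply Theorem~\ref{thm:qpb-classification}.

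\emph{Main obstacle.} The only delicate point is Step 2: checking that the Kraus span of the complementary channel is exactly $\mathfrak A_t$ and not a proper subspace. This relies on \eqref{eq:KL-system-range} together with $(\mathcal T^{(t)})^\dagger=\mathcal T^{(t)}$. I would also note that the Knill--Laflamme conditions are independent of the choice of dilation, since all complementary channels are isometrically equivalent.
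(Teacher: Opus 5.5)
Your proposal is correct and is essentially the paper's proof: (i)$\Leftrightarrow$(ii) by Theorem~\ref{thm:privacy-complementarity}, (ii)$\Leftrightarrow$(iii) via $\operatorname{span}\{F_a^\dagger F_b\}=\operatorname{im}(\mathcal T^{(t)})^\dagger=\operatorname{im}\mathcal T^{(t)}=\mathfrak A_t$, and (iii)$\Leftrightarrow$(iv) by testing the compression on the spanning operators $\iota_\lambda(I_{V_\lambda}\otimes M)\iota_\lambda^\dagger$ and using nondegeneracy of the trace pairing. Your extra direct duality argument for (i)$\Leftrightarrow$(iii) is also valid and bypasses the complementary channel entirely; there you represent $\alpha$ by a density operator $\sigma\in\mathfrak A_t$. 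Likewise, getting $\tau_\lambda\geq 0$ as $B_\lambda(\psi,\psi)$ for a unit $\psi\in\mathcal S$ is a slightly quicker substitute for the paper's positive-functional and Riesz step.
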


\begin{proof}
The equivalence between (1) and (2) is Theorem~\ref{thm:privacy-complementarity}.

To prove
\(
    \textup{(2)}
    \Longleftrightarrow
    \textup{(3)}
\),
let
\(
    \{F_a\}_a
\)
be the Kraus operators of a complementary channel constructed as
above. By \eqref{eq:complement-kraus-rep},
\begin{align}
    \operatorname{span}_{\mathbb{C}}
    \{
        F_a^\dagger F_b
    \}_{a,b}
    &=
    \operatorname{im}
    \left[
        \bigl(
            \mathcal T^{(t)}
        \bigr)^\dagger
    \right]
    \notag\\
    &=
    \operatorname{im}\mathcal T^{(t)}
    \notag\\
    &=
    \mathfrak A_t,
\end{align}
where the first equality follows from \eqref{eq:KL-system-range}, and the second follows from \eqref{eq:twirl-self-adjoint}.
The Knill-Laflamme equations
\[
    \Pi_{\mathcal S}
    F_a^\dagger F_b
    \Pi_{\mathcal S}
    =
    \lambda_{ab} \Pi_{\mathcal S}
\]
for all \(a,b\) are therefore equivalent, by linearity, to scalar compression of every element of \(\mathfrak A_t\), which is \eqref{eq:scalar-compression-commutant}.

Now we derive \eqref{eq:scalar-compression-commutant}
with the two conditions of Theorem~\ref{thm:qpb-classification}.
For
\(
    \lambda\in\Lambda(d,t)
\)
and
\(
    B\in\mathcal L(W_\lambda)
\),
define
\begin{equation}
\label{eq:A-lambda-B}
    A_{\lambda,B}
    :=
    \iota_\lambda
    \bigl(
        I_{V_\lambda}\otimes B
    \bigr)
    \iota_\lambda^\dagger
    \in
    \mathfrak A_t.
\end{equation}
The family of all
\(
    A_{\lambda,B}
\)
spans \(\mathfrak A_t\).

For arbitrary
\(
    |\psi\rangle,|\phi\rangle\in\mathcal S
\),
\begin{align}
    \langle\phi|
        A_{\lambda,B}
    |\psi\rangle
    &=
    \operatorname{Tr}
    \left[
        A_{\lambda,B}
        |\psi\rangle\langle\phi|
    \right]
    \notag\\
    &=
    \operatorname{Tr}_{V_\lambda\otimes W_\lambda}
    \left[
        (I_{V_\lambda}\otimes B)
        \iota_\lambda^\dagger
        |\psi\rangle\langle\phi|
        \iota_\lambda
    \right]
    \notag\\
    &=
    \operatorname{Tr}_{W_\lambda}
    \left[
        B\,
        \operatorname{Tr}_{V_\lambda}
        \left(
            \iota_\lambda^\dagger
            |\psi\rangle\langle\phi|
            \iota_\lambda
        \right)
    \right].
\label{eq:commutant-pairing-partial-trace}
\end{align}

Suppose first that the conditions of
Theorem~\ref{thm:qpb-classification}
hold.  Substituting
\eqref{eq:KL-condition2-recovered}
into
\eqref{eq:commutant-pairing-partial-trace}
gives
\begin{align}
    \langle\phi|
        A_{\lambda,B}
    |\psi\rangle
    &=
    c_\lambda
    \langle\phi|\psi\rangle
    \operatorname{Tr}
    (B\omega_\lambda).
\end{align}
Hence
\begin{equation}
\label{eq:compression-A-lambda-B}
    \Pi_{\mathcal S}
    A_{\lambda,B}
    \Pi_{\mathcal S}
    =
    c_\lambda
    \operatorname{Tr}
    (B\omega_\lambda)
    \Pi_{\mathcal S}.
\end{equation}
Since the operators
\(
    A_{\lambda,B}
\)
span \(\mathfrak A_t\),
Condition~\textup{(3)} follows.

Conversely, suppose that
\eqref{eq:scalar-compression-commutant}
holds.
Taking
\(
    B=I_{W_\lambda}
\)
in
\eqref{eq:A-lambda-B}
gives
\(
    A_{\lambda,I_{W_\lambda}}
    =
    \Pi_\lambda.
\)
Therefore there exists a scalar
\(
    c_\lambda
\)
such that
\(
    \Pi_{\mathcal S}
    \Pi_\lambda
    \Pi_{\mathcal S}
    =
    c_\lambda \Pi_{\mathcal S}.
\)
Since
\(
    \Pi_\lambda\succeq0
\),
we have
\(
    c_\lambda\geq0
\).
Moreover,
\(
    \sum_{\lambda}
        \Pi_\lambda
    =
    I,
\)
and hence
\begin{align}
    \Pi_{\mathcal S}
    &=
    \Pi_{\mathcal S}
    \left(
        \sum_\lambda\Pi_\lambda
    \right)
    \Pi_{\mathcal S}
    \notag\\
    &=
    \left(
        \sum_\lambda c_\lambda
    \right)
    \Pi_{\mathcal S}.
\end{align}
Since
\(
    \Pi_{\mathcal S}\neq0
\),
\(
    \sum_\lambda c_\lambda=1.
\)
Thus
\eqref{eq:KL-condition1-recovered}
holds.

\smallskip
Fix now an active
\(
    \lambda
\)
with
\(
    c_\lambda>0
\).
For every
\(
    B\in\mathcal L(W_\lambda)
\),
scalar compression gives a scalar
\(
    \beta_\lambda(B)
\)
such that
\[
    \langle\phi|
        A_{\lambda,B}
    |\psi\rangle
    =
    \beta_\lambda(B)
    \langle\phi|\psi\rangle.
\]
By
\eqref{eq:commutant-pairing-partial-trace},
\begin{align}
    &
    \operatorname{Tr}
    \left[
        B\,
        \operatorname{Tr}_{V_\lambda}
        \left(
            \iota_\lambda^\dagger
            |\psi\rangle\langle\phi|
            \iota_\lambda
        \right)
    \right]
    \notag\\
    &\qquad=
    \beta_\lambda(B)
    \langle\phi|\psi\rangle.
\label{eq:functional-beta}
\end{align}
The functional
\(
    B\mapsto\beta_\lambda(B)
\)
is positive and satisfies
\(
    \beta_\lambda(I_{W_\lambda})=c_\lambda.
\)
By the finite-dimensional Riesz representation theorem for the trace
pairing, there exists a unique positive operator
\(
    \tau_\lambda\in\mathcal L(W_\lambda)
\)
such that
\[
    \beta_\lambda(B)
    =
    \operatorname{Tr}
    (B\tau_\lambda)
    \qquad
    \forall B\in\mathcal L(W_\lambda).
\]
Furthermore,
\(
    \operatorname{Tr}(\tau_\lambda)
    =
    c_\lambda.
\)

\smallskip
Define
\(
    \omega_\lambda
    :=
    \frac{\tau_\lambda}{c_\lambda}
    \in
    \mathcal D(W_\lambda).
\)
Then
\eqref{eq:functional-beta}
becomes
\[
    \operatorname{Tr}
    \left[
        B\,
        \operatorname{Tr}_{V_\lambda}
        \left(
            \iota_\lambda^\dagger
            |\psi\rangle\langle\phi|
            \iota_\lambda
        \right)
    \right]
    =
    c_\lambda
    \langle\phi|\psi\rangle
    \operatorname{Tr}(B\omega_\lambda)
\]
for every
\(
    B\in\mathcal L(W_\lambda)
\).
Nondegeneracy of the trace pairing therefore implies
\[
    \operatorname{Tr}_{V_\lambda}
    \left(
        \iota_\lambda^\dagger
        |\psi\rangle\langle\phi|
        \iota_\lambda
    \right)
    =
    c_\lambda
    \langle\phi|\psi\rangle
    \omega_\lambda.
\]
This is
\eqref{eq:KL-condition2-recovered} and completes the proof.
\end{proof}

\begin{remark}
\label{rem:meaning-KL-classification}
The preceding proposition clarifies the information-theoretic meaning
of the two conditions in
Theorem~\ref{thm:qpb-classification}, and
\[
    \mathfrak A_t
    =
    \bigoplus_{\lambda}
        I_{V_\lambda}
        \otimes
        \mathcal L(W_\lambda)
\]
is the image after taking the collective Haar twirl. The Haar-twirl admissibility requires that every observable in this surviving algebra have scalar compression to the plaintext subspace.

The operators
\(
    \Pi_\lambda
\)
measure the Schur-Weyl block label.  Their scalar-compression
conditions give the state-independent probabilities
\(
    c_\lambda
\).
Within each active block, the operators
\(
    I_{V_\lambda}\otimes B
\)
probe the multiplicity system \(W_\lambda\). Their
scalar-compression conditions give the state-independent density
operator
\(
    \omega_\lambda
\).
Thus the two conditions of Theorem~\ref{thm:qpb-classification} are same as the blockwise Knill-Laflamme equations for the complementary channel.
\end{remark}

\subsection{Hereditary reduction of Haar-twirl admissible subspaces}\label{subsec:hereditary-qpb}

The classification theorem of the preceding subsection characterizes those
subspaces
\(
    \mathcal S\subset \mathcal H_d^{\otimes t}
\)
on which the collective Haar twirl acts as a state-replacement channel.
A natural question is whether this property is preserved when one tensor
factor, corresponding to one recipient, is removed.

There are two mathematically distinct issues involved in this question.
First, one has to define canonically what is meant by the reduction of a
linear subspace
\(
    \mathcal S\subset \mathcal H_d^{\otimes s}
\)
to \(s-1\) tensor factors.  The partial trace cannot be applied directly to a
subspace, because
\(
    \operatorname{Tr}_j
\)
is a linear map on operators rather than on vectors or subspaces. The appropriate reduced subspace is therefore obtained from the support of a partially traced orthogonal projection, equivalently from all one-factor contractions of vectors in \(\mathcal S\).

Second, two different hereditary questions arise.  One may ask whether the reduced subspace itself remains Haar-twirl admissible for the collective Haar twirl. Independently, one may ask whether every exact unitary design that reproduces the Haar twirl on the original subspace automatically reproduces the Haar twirl on the reduced subspace. We shall call these properties
\emph{privacy heredity} and \emph{design heredity}, respectively.

For the purposes of \textsf{QPB}, neither property by itself is sufficient. A dimensionally reduced \textsf{QPB} scheme constructed with the same key ensemble
requires both the Haar-twirl admissibility of the reduced plaintext space and the descent of the corresponding restricted unitary design. Accordingly, we shall reserve the unqualified term \emph{\textsf{QPB}-hereditary} for the simultaneous validity of both properties.

\medskip
Fix an integer \(s\) with \(2\leq s\leq t\). Let \(\mathcal H_d^{\otimes s}\) be the Hilbert space. Define the collective Haar twirl at level \(s\) by
\begin{equation}
\label{eq:hereditary-Haar-twirl}
    \mathcal T^{(s)}(X)
    :=
    \int_{\mathrm U(d)}
       U^{\otimes s}
       X
       (U^\dagger)^{\otimes s}
       \,\mathrm d\mu_{\mathrm{Haar}}(U),
    \qquad
    X\in\mathcal L(\mathcal H_d^{\otimes s}).
\end{equation}
Here
\(
    \mu_{\mathrm{Haar}}
\)
denotes normalized Haar probability measure on \(\mathrm U(d)\).

\subsubsection{The canonical reduction of a subspace}\label{subsubsec:canonical-reduction}

Let
\(
    \mathcal S \subset \mathcal H_d^{\otimes s}
\)
be a nonzero linear subspace, and let
\(
    \Pi_{\mathcal S}
\)
denote the orthogonal projection onto \(\mathcal S\).

\begin{definition}[One-factor contraction]
\label{def:one-factor-contraction}
Fix
\(
    j\in\{1,\ldots,s\}
\)
and
\(
    |v\rangle\in\mathcal H_d
\).
The contraction in the \(j\)-th tensor factor with respect to
\(|v\rangle\) is the linear map
\begin{equation}
\label{eq:contraction-map}
    C_{j,v}^{(s)}
    :=
    I_{\mathcal H_d}^{\otimes(j-1)}
    \otimes
    \langle v|
    \otimes
    I_{\mathcal H_d}^{\otimes(s-j)}
    \colon
    \mathcal H_d^{\otimes s}
    \longrightarrow
    \mathcal H_d^{\otimes(s-1)}.
\end{equation}
Thus, for a decomposable tensor
\(
    |\psi\rangle
    =
    |x_1\rangle\otimes\cdots\otimes|x_s\rangle,
\)
we have
\begin{equation}
\label{eq:contraction-decomposable}
\begin{split}
    C_{j,v}^{(s)}|\psi\rangle
    &=
    \langle v|x_j\rangle
    |x_1\rangle\otimes\cdots
    \otimes|x_{j-1}\rangle
    \otimes|x_{j+1}\rangle
    \otimes\cdots\otimes|x_s\rangle .
\end{split}
\end{equation}
\end{definition}

\begin{definition}[Canonical reduced subspace]
\label{def:canonical-reduced-subspace}
The reduction of
\(
    \mathcal S \subset \mathcal H_d^{\otimes s}
\)
in the \(j\)-th tensor coordinate is
\begin{equation}
\label{eq:Red-definition}
    \operatorname{Red}_j(\mathcal S)
    :=
    \operatorname{span}_{\mathbb C}
    \left\{
       C_{j,v}^{(s)}|\psi\rangle:
       |v\rangle\in\mathcal H_d,\,
       |\psi\rangle\in\mathcal S
    \right\}
    \subseteq
    \mathcal H_d^{\otimes(s-1)}.
\end{equation}
\end{definition}

Since all Hilbert spaces considered here are finite-dimensional, the span in~\eqref{eq:Red-definition} is automatically closed.

The following result shows that this vector-space definition is equivalent to the operator-theoretic definition obtained from partial
trace.

\begin{lemma}[Support characterization of the reduced subspace]
\label{lem:Red-support}
For every nonzero
\(
    \mathcal S \subset \mathcal H_d^{\otimes s}
\)
and every
\(
    j\in\{1,\ldots,s\}
\),
\begin{equation}
\label{eq:Red-support}
    \operatorname{Red}_j(\mathcal S)
    =
    \operatorname{supp}
    \left(
       \operatorname{Tr}_j \Pi_{\mathcal S}
    \right).
\end{equation}
In particular,
\(
    \operatorname{Red}_j(\mathcal S)
\)
is independent of every choice of basis of
\(\mathcal S\) and of \(\mathcal H_d\).
\end{lemma}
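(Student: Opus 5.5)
The plan is to write $\operatorname{Tr}_j\Pi_{\mathcal S}$ as a sum of positive rank-one terms built from the contractions $C_{j,v}^{(s)}$. Then use the elementary fact that the support of a finite sum of positive operators is the span of the supports of the summands.

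First, fix an orthonormal basis $\{|e_a\rangle\}_{a=1}^d$ of $\mathcal H_d$ and an orthonormal basis $\{|\psi_k\rangle\}_{k=1}^{n}$ of $\mathcal S$, so that $\Pi_{\mathcal S}=\sum_k|\psi_k\rangle\langle\psi_k|$. By the definition of the partial trace in coordinate $j$,
\[
  \operatorname{Tr}_j X=\sum_{a=1}^d C^{(s)}_{j,e_a}\,X\,\bigl(C^{(s)}_{j,e_a}\bigr)^{\dagger},
\]
which can be checked on product operators using \eqref{eq:contraction-decomposable}. Hence
\[
  \operatorname{Tr}_j\Pi_{\mathcal S}=\sum_{a,k}|\chi_{a,k}\rangle\langle\chi_{a,k}|,
  \qquad |\chi_{a,k}\rangle:=C^{(s)}_{j,e_a}|\psi_k\rangle .
\]

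Second, I prove the support lemma. For positive operators $P=\sum_i|\chi_i\rangle\langle\chi_i|$ one has $\ker P=\bigcap_i\{\chi_i\}^\perp$. This holds because $\langle x|P|x\rangle=\sum_i|\langle\chi_i|x\rangle|^2$ and, for $P\ge0$, $\langle x|P|x\rangle=0$ iff $Px=0$. Therefore $\operatorname{supp}P=(\ker P)^\perp=\operatorname{span}\{\chi_i\}$. Applied here, this gives $\operatorname{supp}(\operatorname{Tr}_j\Pi_{\mathcal S})=\operatorname{span}\{C^{(s)}_{j,e_a}\psi_k\}$.

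Third, I identify this span with $\operatorname{Red}_j(\mathcal S)$. The inclusion $\subseteq$ is immediate from \eqref{eq:Red-definition}. For $\supseteq$, take any $|v\rangle=\sum_a v_a|e_a\rangle$ and $|\psi\rangle=\sum_k\beta_k|\psi_k\rangle$. Because $C^{(s)}_{j,v}$ is conjugate-linear in $v$ and linear in $\psi$,
\[
  C^{(s)}_{j,v}|\psi\rangle=\sum_{a,k}\overline{v_a}\,\beta_k\,C^{(s)}_{j,e_a}|\psi_k\rangle,
\]
which lies in the span. Basis independence then follows because the left side of \eqref{eq:Red-support} was defined without any basis, or equivalently because $\operatorname{Tr}_j\Pi_{\mathcal S}$ is basis-free.

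The only step needing care is the Kraus-type formula for $\operatorname{Tr}_j$ and keeping the adjoint conventions consistent. I expect no real obstacle there; it is a direct check on elementary tensors, extended by linearity.
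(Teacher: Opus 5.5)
Your proposal is correct and follows essentially the same route as the paper: it expands $\operatorname{Tr}_j\Pi_{\mathcal S}$ as a sum of rank-one operators $|C^{(s)}_{j,e_a}\psi_k\rangle\langle C^{(s)}_{j,e_a}\psi_k|$, identifies the support of such a sum with the span of its vectors, and then uses (conjugate-)linearity to pass from basis contractions to all of $\operatorname{Red}_j(\mathcal S)$. The only cosmetic difference is that the paper proves the case $j=1$ and appeals to relabelling, whereas you treat general $j$ directly through the Kraus form $\operatorname{Tr}_j X=\sum_a C^{(s)}_{j,e_a}X\bigl(C^{(s)}_{j,e_a}\bigr)^{\dagger}$.
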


\begin{proof}
It is sufficient to prove the statement for \(j=1\), since the remaining
cases are obtained by relabelling the tensor coordinates.

Choose an orthonormal basis
\(
    \{|\psi_\alpha\rangle\}_{\alpha=1}^{m}
\)
of \(\mathcal S\), where
\(
    m=\dim_{\mathbb C}\mathcal S
\),
and an orthonormal basis
\(
    \{|e_a\rangle\}_{a=1}^{d}
\)
of \(\mathcal H_d\).
Then
\begin{equation}
\label{eq:PS-basis-expansion}
    \Pi_{\mathcal S}
    =
    \sum_{\alpha=1}^{m}
       |\psi_\alpha\rangle\langle\psi_\alpha|.
\end{equation}
Applying the definition of the partial trace gives
\begin{align}
    \operatorname{Tr}_1(\Pi_{\mathcal S})
    &=
    \sum_{\alpha=1}^{m}
    \sum_{a=1}^{d}
       \bigl(
          \langle e_a|\otimes
          I_{\mathcal H_d^{\otimes(s-1)}}
       \bigr)
       |\psi_\alpha\rangle\langle\psi_\alpha|
       \bigl(
          |e_a\rangle\otimes
          I_{\mathcal H_d^{\otimes(s-1)}}
       \bigr)
       \notag\\
    &=
    \sum_{\alpha=1}^{m}
    \sum_{a=1}^{d}
       |C_{1,e_a}^{(s)}\psi_\alpha\rangle
       \langle C_{1,e_a}^{(s)}\psi_\alpha|.
\label{eq:partial-trace-PS}
\end{align}

For an arbitrary finite collection of vectors
\(
    |z_1\rangle,\ldots,|z_N\rangle
\)
and the positive semidefinite operator
\(
    A
    :=
    \sum_{r=1}^{N}
       |z_r\rangle\langle z_r|,
\)
we have
\begin{align}
    \langle x|A|x\rangle
    &=
    \sum_{r=1}^{N}
       |\langle z_r|x\rangle|^2.
\end{align}
Therefore
\(
    A|x\rangle=0
\) if and only if \(\langle z_r|x\rangle=0\) for every \(r\), and hence
\begin{equation}
\label{eq:support-sum-rank-one}
    \operatorname{supp}(A)
    =
    (\ker A)^\perp
    =
    \operatorname{span}
    \{|z_1\rangle,\ldots,|z_N\rangle\}.
\end{equation}
Applying~\eqref{eq:support-sum-rank-one} to
\eqref{eq:partial-trace-PS}, we obtain
\[
    \operatorname{supp}
    \left(
       \operatorname{Tr}_{1} \Pi_{\mathcal S}
    \right)
    =
    \operatorname{span}
    \left\{
       C_{1,e_a}^{(s)}|\psi_\alpha\rangle:
       1\leq a\leq d,\,
       1\leq\alpha\leq m
    \right\}.
\]
Because every
\(
    |v\rangle\in\mathcal H_d
\)
is a linear combination of the
\(
    |e_a\rangle
\)'s and every
\(
    |\psi\rangle\in\mathcal S
\)
is a linear combination of the
\(
    |\psi_\alpha\rangle
\)'s, the last span is
\(
    \operatorname{Red}_1(\mathcal S)
\).
This proves~\eqref{eq:Red-support}.
\end{proof}

\begin{remark}\label{rem:fibre-not-reduction}
For a fixed nonzero
\(
    |v\rangle\in\mathcal H_d
\),
the set
\[
    \left\{
       |\eta\rangle\in\mathcal H_d^{\otimes(s-1)}:
       |v\rangle\otimes|\eta\rangle\in\mathcal S
    \right\}
\]
is the product-vector fibre of \(\mathcal S\) over
\(|v\rangle\).  It is not the partial-trace reduction of
\(\mathcal S\).

For example, let \(d\geq2\), \(s=2\), and
\[
    |\Omega\rangle
    :=
    \frac{1}{\sqrt2}
    \bigl(
       |e_1\rangle\otimes|e_1\rangle
       +
       |e_2\rangle\otimes|e_2\rangle
    \bigr),
    \qquad
    \mathcal S
    :=
    \operatorname{span}\{|\Omega\rangle\}.
\]
The vector \(|\Omega\rangle\) has Schmidt rank \(2\), and hence no
nonzero vector of \(\mathcal S\) is a product vector.  Thus every
product-vector fibre is zero.  On the other hand,
\(
    \Pi_{\mathcal S}
    =
    |\Omega\rangle\langle\Omega|,
\)
and
\begin{align}
    \operatorname{Tr}_1(\Pi_{\mathcal S})
    &=
    \frac12
    \left(
       |e_1\rangle\langle e_1|
       +
       |e_2\rangle\langle e_2|
    \right).
\end{align}
Consequently,
\[
    \operatorname{Red}_1(\mathcal S)
    =
    \operatorname{span}
    \{|e_1\rangle,|e_2\rangle\}
    \neq\{0\}.
\]
\end{remark}

\subsubsection{Compatibility of Haar twirling with partial trace}
\label{subsubsec:Haar-partial-trace}

The partial trace is covariant with respect to the collective unitary
action.

\begin{lemma}[Partial-trace covariance]
\label{lem:partial-trace-covariance}
For every
\(
    X\in\mathcal L(\mathcal H_d^{\otimes s})
\),
every
\(
    U\in\mathrm U(d)
\),
and every
\(
    j\in\{1,\ldots,s\}
\),
\begin{equation}
\label{eq:partial-trace-unitary-covariance}
\begin{split}
    &\operatorname{Tr}_j
    \left[
       U^{\otimes s}
       X
       (U^\dagger)^{\otimes s}
    \right]
    =
    U^{\otimes(s-1)}
    \operatorname{Tr}_j(X)
    (U^\dagger)^{\otimes(s-1)}.
\end{split}
\end{equation}
Consequently,
\begin{equation}
\label{eq:Haar-partial-trace-intertwining}
    \operatorname{Tr}_j
    \circ
    \mathcal T^{(s)}
    =
    \mathcal T^{(s-1)}
    \circ
    \operatorname{Tr}_j.
\end{equation}
\end{lemma}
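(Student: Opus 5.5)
The first identity follows directly from the definition of the partial trace; the second is then obtained by integrating against Haar measure. Write $U^{\otimes s}=U_1\otimes U_{\widehat j}$, where $U_1=U$ acts on the $j$-th factor and $U_{\widehat j}=U^{\otimes(s-1)}$ acts on the remaining factors. By relabelling coordinates, as in the proof of Lemma~\ref{lem:Red-support}, it suffices to treat $j=1$. By linearity it also suffices to treat product operators $X=A\otimes B$ with $A\in\mathcal L(\mathcal H_d)$ and $B\in\mathcal L(\mathcal H_d^{\otimes(s-1)})$, since such operators span $\mathcal L(\mathcal H_d^{\otimes s})$ and both sides of \eqref{eq:partial-trace-unitary-covariance} are linear in $X$.

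For product operators we have $\operatorname{Tr}_1[(UAU^\dagger)\otimes(U^{\otimes(s-1)}B(U^\dagger)^{\otimes(s-1)})]=\operatorname{Tr}(UAU^\dagger)\,U^{\otimes(s-1)}B(U^\dagger)^{\otimes(s-1)}$. Cyclicity of the trace gives $\operatorname{Tr}(UAU^\dagger)=\operatorname{Tr}(A)$, so this equals $U^{\otimes(s-1)}\operatorname{Tr}_1(A\otimes B)(U^\dagger)^{\otimes(s-1)}$. An equivalent basis-free route uses the Kraus form $\operatorname{Tr}_1(Y)=\sum_a(\langle e_a|\otimes I)Y(|e_a\rangle\otimes I)$. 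Because $\{U^\dagger|e_a\rangle\}$ is again an orthonormal basis, the partial trace is independent of the basis chosen, and this absorbs the factor $U$ on the traced slot.

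For \eqref{eq:Haar-partial-trace-intertwining}, note that $\operatorname{Tr}_j$ is a linear map between finite-dimensional spaces, so it commutes with the Bochner (entrywise) Haar integral. Applying \eqref{eq:partial-trace-unitary-covariance} inside the integral then yields $\operatorname{Tr}_j\mathcal T^{(s)}(X)=\int U^{\otimes(s-1)}\operatorname{Tr}_j(X)(U^\dagger)^{\otimes(s-1)}\,d\mu_{\mathrm{Haar}}=\mathcal T^{(s-1)}(\operatorname{Tr}_jX)$.

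No step is a genuine obstacle. The only point needing care is justifying the interchange of the partial trace with the integral. This holds because each matrix entry of the integrand is continuous on the compact group $\mathrm U(d)$, and a linear map acts entrywise on such an integral.
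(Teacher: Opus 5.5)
Your proof is correct. Your main argument reduces by linearity to product operators $A\otimes B$ and uses $\operatorname{Tr}(UAU^\dagger)=\operatorname{Tr}(A)$; this is only a minor variant of the paper's proof. The paper applies $\operatorname{Tr}_1(Y)=\sum_a(\langle e_a|\otimes I)Y(|e_a\rangle\otimes I)$ directly to a general $X$ and absorbs $U$ by noting that $\{U^\dagger|e_a\rangle\}$ is again an orthonormal basis, which is exactly the alternative route you sketch; your explicit justification for moving $\operatorname{Tr}_j$ inside the Haar integral is a point the paper leaves implicit.
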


\begin{proof}
It is enough to consider \(j=1\). Choose an orthonormal basis
\(
    \{|e_a\rangle\}_{a=1}^{d}
\)
of \(\mathcal H_d\).  

Then
\begin{align}
    \operatorname{Tr}_1
    \left[
       U^{\otimes s}
       X
       (U^\dagger)^{\otimes s}
    \right]
    &=
    \sum_{a=1}^{d}
       (\langle e_a|\otimes I)
       (U\otimes U^{\otimes(s-1)})
       X
       (U^\dagger\otimes(U^\dagger)^{\otimes(s-1)})
       (|e_a\rangle\otimes I)
    \notag\\
    &=
    U^{\otimes(s-1)}
    \left[
       \sum_{a=1}^{d}
       (\langle U^\dagger e_a|\otimes I)
       X
       (|U^\dagger e_a\rangle\otimes I)
    \right]
    (U^\dagger)^{\otimes(s-1)}.
\end{align}
Since
\(
    \{|U^\dagger e_a\rangle\}_{a=1}^{d}
\)
is again an orthonormal basis of \(\mathcal H_d\), the expression in square brackets is
\(
    \operatorname{Tr}_1(X)
\).
This proves~\eqref{eq:partial-trace-unitary-covariance}.
Integrating both sides with respect to normalized Haar measure proves \eqref{eq:Haar-partial-trace-intertwining}.
\end{proof}


The preceding lemma gives an automatic, but weaker, form of descent.

\begin{proposition}\label{prop:automatic-marginal-descent}
Suppose that
\(
    \mathcal S \subset \mathcal H_d^{\otimes s}
\)
is Haar-twirl admissible, with replacement state
\(
    \sigma_{\mathcal S}
\).
Then for every
\(
    \rho\in\mathcal D(\mathcal S)
\),
\begin{equation}
\label{eq:automatic-marginal-descent}
    \mathcal T^{(s-1)}
    \left(
       \operatorname{Tr}_j\rho
    \right)
    =
    \operatorname{Tr}_j
    \left(
       \sigma_{\mathcal S}
    \right).
\end{equation}
\end{proposition}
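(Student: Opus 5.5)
The argument is a direct composition of Lemma~\ref{lem:partial-trace-covariance} with the defining identity of Haar-twirl admissibility. Fix \(\rho\in\mathcal D(\mathcal S)\) and identify it, as in \eqref{eq:density-operators-supported-on-S}, with \(\mathcal J_{\mathcal S}(\rho)\in\mathcal D(\mathcal H_d^{\otimes s})\). The partial trace \(\operatorname{Tr}_j\rho\) should be read as \(\operatorname{Tr}_j\mathcal J_{\mathcal S}(\rho)\), a state on \(\mathcal H_d^{\otimes(s-1)}\).

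First, apply the intertwining relation \eqref{eq:Haar-partial-trace-intertwining} to \(X=\mathcal J_{\mathcal S}(\rho)\). This gives
\[
  \mathcal T^{(s-1)}\bigl(\operatorname{Tr}_j\mathcal J_{\mathcal S}(\rho)\bigr)
  =\operatorname{Tr}_j\bigl(\mathcal T^{(s)}(\mathcal J_{\mathcal S}(\rho))\bigr).
\]
Second, invoke Definition~\ref{def:haar-admissible}. Since \(\mathcal S\) is admissible with replacement state \(\sigma_{\mathcal S}\) and \(\operatorname{Tr}\rho=1\), we have \(\mathcal T^{(s)}(\mathcal J_{\mathcal S}(\rho))=\operatorname{Tr}(\rho)\,\sigma_{\mathcal S}=\sigma_{\mathcal S}\). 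Substituting this into the right-hand side yields \eqref{eq:automatic-marginal-descent}.

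There is no real obstacle here. The only point to handle carefully is the one already settled in Lemma~\ref{lem:partial-trace-covariance}: the Haar integral commutes with the linear map \(\operatorname{Tr}_j\). This holds because \(\operatorname{Tr}_j\) is a linear map between finite-dimensional spaces, so it passes through the Bochner integral.

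Two remarks are worth adding. The identity is linear, so it extends to all \(X\in\mathcal L(\mathcal S)\) in the form \(\mathcal T^{(s-1)}(\operatorname{Tr}_j\mathcal J_{\mathcal S}(X))=\operatorname{Tr}(X)\operatorname{Tr}_j\sigma_{\mathcal S}\). However, the inputs \(\operatorname{Tr}_j\mathcal J_{\mathcal S}(X)\) need not exhaust \(\mathcal L(\operatorname{Red}_j(\mathcal S))\). That is why this proposition does not by itself give privacy heredity.
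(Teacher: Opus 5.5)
Your proposal is correct and is the paper's own argument. Admissibility gives \(\mathcal T^{(s)}(\rho)=\sigma_{\mathcal S}\), and the intertwining relation \(\operatorname{Tr}_j\circ\mathcal T^{(s)}=\mathcal T^{(s-1)}\circ\operatorname{Tr}_j\) from Lemma~\ref{lem:partial-trace-covariance} then yields the identity. Your closing observation that the inputs \(\operatorname{Tr}_j\mathcal J_{\mathcal S}(X)\) need not exhaust \(\mathcal L(\operatorname{Red}_j(\mathcal S))\) is exactly the point the paper makes in Remark~\ref{rem:marginal-descent-too-weak}.
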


\begin{proof}
Since \(\mathcal S\) is Haar-twirl admissible,
\(
    \mathcal T^{(s)}(\rho)
    =
    \sigma_{\mathcal S}
\). Applying \(\operatorname{Tr}_j\) and using
\eqref{eq:Haar-partial-trace-intertwining} gives
\[
    \mathcal T^{(s-1)}
    \left(
       \operatorname{Tr}_j\rho
    \right)
    =
    \operatorname{Tr}_j
    \left(
       \mathcal T^{(s)}(\rho)
    \right)
    =
    \operatorname{Tr}_j
    \left(
       \sigma_{\mathcal S}
    \right).
\]
\end{proof}

\begin{remark} \label{rem:marginal-descent-too-weak}
Equation~\eqref{eq:automatic-marginal-descent} concerns only density
operators of the form
\(
    \operatorname{Tr}_j\rho,
    \;
    \rho\in\mathcal D(\mathcal S).
\)
In general,
\[
    \left\{
       \operatorname{Tr}_j\rho:
       \rho\in\mathcal D(\mathcal S)
    \right\}
    \subsetneq
    \mathcal D
    \left(
       \operatorname{Red}_j(\mathcal S)
    \right).
\]
Therefore~\eqref{eq:automatic-marginal-descent} does not imply that
every density operator supported on
\(
    \operatorname{Red}_j(\mathcal S)
\)
has the same Haar average.

For \textsf{QPB} at level \((s-1)\), the admissible plaintexts are all states in
\(
    \mathcal D(\operatorname{Red}_j(\mathcal S))
\),
not merely those which happen to arise as partial traces of states in
\(
    \mathcal D(\mathcal S)
\).
This is the reason that privacy heredity below is a stronger
condition.
\end{remark}

\subsubsection{Privacy heredity, Design heredity, and \textsf{QPB}-heredity}\label{subsubsec:two-heredity-notions}

We first derive an equivalent operator formulation of Haar-twirl admissibility.

\begin{lemma}[Operator formulation of Haar-twirl admissibility]
\label{lem:operator-form-qpb-admissible}
Let
\(
    \mathcal S \subset \mathcal H_d^{\otimes s}
\)
be nonzero.  The following are equivalent:
\begin{enumerate}[label=\textup{(\roman*)}]
\item there exists
\(
    \sigma_{\mathcal S}
    \in\mathcal D(\mathcal H_d^{\otimes s})
\)
such that
\(
    \mathcal T^{(s)}(\rho)
    =
    \sigma_{\mathcal S}
    \qquad
    \text{for every }
    \rho\in\mathcal D(\mathcal S);
\)

\item there exists
\(
    \sigma_{\mathcal S}
    \in\mathcal D(\mathcal H_d^{\otimes s})
\)
such that
\begin{equation}
\label{eq:operator-replacer-property}
    \mathcal T^{(s)}(X)
    =
    \operatorname{Tr}(X)
    \sigma_{\mathcal S}
    \qquad
    \text{for every }
    X\in\mathcal L(\mathcal S).
\end{equation}
\end{enumerate}
\end{lemma}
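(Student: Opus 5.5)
The implication (ii)$\Rightarrow$(i) is immediate. Every $\rho\in\mathcal D(\mathcal S)$ lies in $\mathcal L(\mathcal S)$ and has $\operatorname{Tr}\rho=1$, so \eqref{eq:operator-replacer-property} gives $\mathcal T^{(s)}(\rho)=\sigma_{\mathcal S}$ with the same state $\sigma_{\mathcal S}$. All the work is in (i)$\Rightarrow$(ii), which I will prove by extending the identity from density operators to all of $\mathcal L(\mathcal S)$ by linearity. Throughout, operators on $\mathcal S$ are identified with their embeddings $\mathcal J_{\mathcal S}(X)$, as fixed after \eqref{eq:density-operators-supported-on-S}.

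\textbf{Plan for (i)$\Rightarrow$(ii).} Consider the linear map $\Delta:\mathcal L(\mathcal S)\to\mathcal L(\mathcal H_d^{\otimes s})$ defined by $\Delta(X):=\mathcal T^{(s)}(X)-\operatorname{Tr}(X)\sigma_{\mathcal S}$. It is linear because $\mathcal T^{(s)}$ and the trace are linear. Hypothesis (i) says exactly that $\Delta$ vanishes on $\mathcal D(\mathcal S)$. So it suffices to show that $\mathcal D(\mathcal S)$ spans $\mathcal L(\mathcal S)$ over $\mathbb C$.

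\textbf{Spanning step.} I would do this in two concrete stages.
\begin{enumerate}
\item Every positive $P\in\mathcal L(\mathcal S)$ is either $0$ or $\operatorname{Tr}(P)$ times the state $P/\operatorname{Tr}P$, so positives lie in the span of $\mathcal D(\mathcal S)$.
\item Every Hermitian $H$ splits as $H=H_+-H_-$ with $H_\pm\geq0$, via the spectral theorem on $\mathcal S$. Every $X\in\mathcal L(\mathcal S)$ then splits as $X=\tfrac12(X+X^\dagger)+i\,\tfrac{1}{2i}(X-X^\dagger)$, a combination of two Hermitian parts.
\end{enumerate}
Alternatively, one could use polarization on rank-one operators $|\psi\rangle\langle\phi|$, exactly as in \eqref{eq:operator-valued-polarization}. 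There, each $|\psi\pm i^k\phi\rangle\langle\psi\pm i^k\phi|$ is a nonnegative multiple of a pure state in $\mathcal D(\mathcal S)$, and the rank-one operators span $\mathcal L(\mathcal S)$.

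Since $\Delta$ is linear and vanishes on a spanning set, it vanishes on $\mathcal L(\mathcal S)$, which is \eqref{eq:operator-replacer-property}. The state $\sigma_{\mathcal S}$ is the same in both formulations, and it is unique because $\mathcal D(\mathcal S)\neq\emptyset$ for nonzero $\mathcal S$.

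There is no real obstacle. The only point needing care is the spanning step: one must make sure that the zero positive part is handled, and that trace-one normalization does not interfere with linear combinations. That is why the argument works with the affine-free map $\Delta$ rather than with $\mathcal T^{(s)}$ alone.
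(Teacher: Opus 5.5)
Your proof is correct and follows the paper's argument: (ii)$\Rightarrow$(i) because density operators have trace one, and (i)$\Rightarrow$(ii) by passing from normalized positive operators to Hermitian ones via the Jordan decomposition, then to all of $\mathcal L(\mathcal S)$ via the Cartesian decomposition. Phrasing this through the linear map $\Delta$ and treating the zero positive part explicitly are only presentational refinements of the same route.
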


\begin{proof}
Condition \textup{(2)} immediately implies \textup{(1)}, because every density operator has trace one.

Conversely, suppose \textup{(1)} holds.  If
\(
    A\in\mathcal L(\mathcal S)
\)
is positive semidefinite and \(A\neq0\), then
\[
    \rho_A
    :=
    \frac{A}{\operatorname{Tr}(A)}
    \in
    \mathcal D(\mathcal S).
\]
Hence
\[
    \mathcal T^{(s)}(A)
    =
    \operatorname{Tr}(A)
    \mathcal T^{(s)}(\rho_A)
    =
    \operatorname{Tr}(A)
    \sigma_{\mathcal S}.
\]
Every Hermitian
\(
    H\in\mathcal L(\mathcal S)
\)
has a Jordan decomposition
\(
    H=H_+-H_-
\)
with \(H_\pm\geq0\), so linearity gives
\[
    \mathcal T^{(s)}(H)
    =
    \operatorname{Tr}(H)\sigma_{\mathcal S}.
\]
Finally, every operator
\(
    X\in\mathcal L(\mathcal S)
\)
can be written as
\[
    X
    =
    \left[\frac{X+X^\dagger}{2}\right]
    +
    i
    \left[\frac{X-X^\dagger}{2i}\right],
\]
a complex linear combination of two Hermitian operators.
Equation~\eqref{eq:operator-replacer-property} follows by linearity.
\end{proof}

Let
\(
    \mathfrak{U}
    =
    \{U_k\}_{k\in K}
    \subset
    \mathrm U(d)
\)
be a finite unitary ensemble with positive weights
\(
    w_k>0,
    \;
    \sum_{k\in K}w_k=1.
\)
At level \(s\), define the ensemble twirl
\begin{equation}
\label{eq:ensemble-twirl-heredity}
    \Phi_{\mathfrak{U},w}^{(s)}(X)
    :=
    \sum_{k\in K}
       w_k
       U_k^{\otimes s}
       X
       (U_k^\dagger)^{\otimes s}.
\end{equation}

\begin{definition}[Exact unitary
\((\mathcal S,s)\)-design]
\label{def:exact-restricted-design-heredity}
The weighted ensemble
\(
    (\mathfrak{U},w)
\)
is an exact unitary
\(
    (\mathcal S,s)
\)-design if
\begin{equation}
\label{eq:exact-S-design}
    \Phi_{\mathfrak{U},w}^{(s)}(X)
    =
    \mathcal T^{(s)}(X)
    \qquad
    \text{for every }
    X\in\mathcal L(\mathcal S).
\end{equation}
\end{definition}

The following definitions separate the two requirements needed for recipient deletion: privacy of the reduced plaintext space and descent of the same
exact design ensemble. Example~\ref{ex:top-exterior-failure}, and Example~\ref{ex:design-without-privacy} show that neither implies the other.

\begin{definition}\label{qpb-heriditary}
Let \(\mathcal{S}\subseteq\mathcal H_d^{\otimes s}\) be a nonzero subspace,
let \(j\in\{1,\ldots,s\}\), and put
\[
\mathcal{R}_j:=\operatorname{Red}_j(\mathcal{S}).
\]

\begin{enumerate}
\item The subspace \(\mathcal{S}\) is called \emph{design-hereditary in coordinate \(j\)} if every exact unitary \((\mathcal{S},s)\)-design \((\mathfrak U,w)\) is, with the same unitaries and weights, an exact unitary \((\mathcal{R}_j,s-1)\)-design.

\item If \(\mathcal{S}\) is Haar-twirl admissible, then \(\mathcal{S}\) is called \emph{privacy-hereditary in coordinate \(j\)} if \(\mathcal{R}_j\) is Haar-twirl admissible for the level-\((s-1)\) collective Haar twirl.

\item If \(\mathcal{S}\) is Haar-twirl admissible, then \(\mathcal{S}\) is called \emph{\textsf{QPB}-hereditary in coordinate \(j\)} if it is both privacy-hereditary and design-hereditary in coordinate \(j\).
\end{enumerate}
\end{definition}

If a family
\(
    \mathcal S_s,
    \mathcal S_{s-1},
    \ldots,
    \mathcal S_1
\)
satisfies
\[
    \mathcal S_{r-1}
    =
    \operatorname{Red}_r(\mathcal S_r)
    \qquad
    (2\leq r\leq s),
\]
and the relevant one-step hereditary property holds at every level,
we call the family \emph{fully privacy-hereditary},
\emph{fully design-hereditary}, or \emph{fully \textsf{QPB}-hereditary},
respectively.

\begin{remark}\label{rem:which-heredity}
For the \textsf{QPB} problem, the appropriate notion is
\emph{\textsf{QPB}-heredity}, i.e. the conjunction of privacy heredity and design heredity.

The privacy heredity guarantees that there exists a fixed
lower-level replacement state
\(
    \sigma_{\mathcal R_j}
\)
such that
\[
    \mathcal T^{(s-1)}(\rho)
    =
    \sigma_{\mathcal R_j}
    \qquad
    \forall
    \rho\in\mathcal D(\mathcal R_j).
\]
It does not imply that the finite ensemble used at level \(s\) actually
implements
\(
    \mathcal T^{(s-1)}
\).

Conversely, design heredity guarantees
\[
    \Phi_{\mathfrak{U},w}^{(s-1)}
    =
    \mathcal T^{(s-1)}
    \quad
    \text{on }\mathcal L(\mathcal R_j),
\]
but this equality is useful for \textsf{QPB} security only when the Haar map on
\(\mathcal R_j\) is itself a replacement channel.

Thus \textsf{QPB}-heredity requires both properties: the same key ensemble then implements the lower-level Haar twirl, and that Haar twirl is a replacement channel on the reduced plaintext space.
\end{remark}

\subsubsection{Criterion for privacy heredity}\label{subsubsec:privacy-heredity-criterion}

At level \(s-1\), consider the Schur-Weyl decomposition
\begin{equation}
\label{eq:SW-level-s-minus-one}
    \mathcal H_d^{\otimes(s-1)}
    \cong
    \bigoplus_{\mu\in\Lambda(d,s-1)}
       V_\mu\otimes W_\mu.
\end{equation}
For every
\(
    \mu\in\Lambda(d,s-1)
\),
let
\[
    \iota_\mu^{(s-1)}
    :
    V_\mu\otimes W_\mu
    \longrightarrow
    \mathcal H_d^{\otimes(s-1)}
\]
be the isometric inclusion and
\begin{equation}
\label{eq:lower-Schur-projector}
    \Pi_\mu^{(s-1)}
    :=
    \iota_\mu^{(s-1)}
    \bigl(\iota_\mu^{(s-1)}\bigr)^\dagger.
\end{equation}
For
\(
    Y\in
    \mathcal L(\mathcal H_d^{\otimes(s-1)})
\),
define its local \(\mu\)-diagonal block by
\begin{equation}
\label{eq:lower-local-block}
    Y_{\mu\mu}
    :=
    \bigl(\iota_\mu^{(s-1)}\bigr)^\dagger
    Y
    \iota_\mu^{(s-1)}
    \in
    \mathcal L(V_\mu\otimes W_\mu).
\end{equation}

The blockwise Haar-twirl formula is then
\begin{equation}
\label{eq:lower-Haar-block-form}
    \mathcal T^{(s-1)}(Y)
    =
    \sum_{\mu\in\Lambda(d,s-1)}
       \iota_\mu^{(s-1)}
       \left[
          \frac{I_{V_\mu}}
               {\dim_{\mathbb C}V_\mu}
          \otimes
          \operatorname{Tr}_{V_\mu}(Y_{\mu\mu})
       \right]
       \bigl(\iota_\mu^{(s-1)}\bigr)^\dagger.
\end{equation}

\begin{theorem}[Privacy-heredity criterion]
\label{thm:privacy-heredity}
Let \(\mathcal{S}\subseteq\mathcal H_d^{\otimes s}\) be Haar-twirl admissible, fix
\(j\in\{1,\ldots,s\}\), and let
\[
\mathcal{R}_j:=\operatorname{Red}_j(\mathcal{S}).
\]
Then \(\mathcal{S}\) is privacy-hereditary in coordinate \(j\) if and only if, for every
\(\mu\in\Lambda(d,s-1)\), there exists
\(\tau_\mu\in\mathcal L(W_\mu)\) with \(\tau_\mu\geq0\) such that, for every \(Y\in\mathcal L(\mathcal{R}_j)\),
\begin{equation}
\operatorname{Tr}_{V_\mu}
\!\left[
\bigl(\iota_\mu^{(s-1)}\bigr)^\dagger
\mathcal J_{\mathcal{R}_j}(Y)
\iota_\mu^{(s-1)}
\right]
=\operatorname{Tr}(Y)\tau_\mu,
\label{eq:lower-fixed-marginals}
\end{equation}
and
\begin{equation}
\sum_{\mu\in\Lambda(d,s-1)}\operatorname{Tr}(\tau_\mu)=1.
\label{eq:lower-normalization}
\end{equation}
In this case the replacement state of \(\mathcal{R}_j\) is
\begin{equation}
\sigma_{\mathcal{R}_j}
=
\sum_{\mu\in\Lambda(d,s-1)}
\iota_\mu^{(s-1)}
\left(
\frac{I_{V_\mu}}{\dim_{\mathbb C}V_\mu}
\otimes\tau_\mu
\right)
\bigl(\iota_\mu^{(s-1)}\bigr)^\dagger .
\label{eq:lower-replacement-state}
\end{equation}
\end{theorem}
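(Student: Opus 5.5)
The plan is to reduce the statement to the blockwise form of the level-\((s-1)\) Haar twirl \eqref{eq:lower-Haar-block-form}, which is exactly the situation already handled for a general subspace at level \(t\) in Theorem~\ref{thm:qpb-classification}. By definition, \(\mathcal S\) is privacy-hereditary in coordinate \(j\) precisely when \(\mathcal R_j\) is Haar-twirl admissible at level \(s-1\). Lemma~\ref{lem:operator-form-qpb-admissible} then lets us work with the operator form \(\mathcal T^{(s-1)}\circ\mathcal J_{\mathcal R_j}=\mathcal R_{\sigma,\mathcal R_j}\) on all of \(\mathcal L(\mathcal R_j)\), not only on density operators. The admissibility of \(\mathcal S\) itself plays no role beyond making the definition applicable. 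In particular, \(\mathcal R_j\neq0\) because \(\mathcal S\neq0\): \(\operatorname{Tr}_j\Pi_{\mathcal S}\) has trace \(\dim\mathcal S>0\), so Lemma~\ref{lem:Red-support} gives a nonzero support.

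\emph{Sufficiency.} Assume \eqref{eq:lower-fixed-marginals} and \eqref{eq:lower-normalization}, and define \(\sigma_{\mathcal R_j}\) by \eqref{eq:lower-replacement-state}.
\begin{enumerate}
\item The operator \(\sigma_{\mathcal R_j}\) is positive because each \(\tau_\mu\ge0\).
\item Its trace is \(\sum_\mu\operatorname{Tr}\tau_\mu=1\), so it is a state.
\item For \(Y\in\mathcal L(\mathcal R_j)\), insert \(\mathcal J_{\mathcal R_j}(Y)\) into \eqref{eq:lower-Haar-block-form}. Each block marginal \(\operatorname{Tr}_{V_\mu}\bigl(\mathcal J_{\mathcal R_j}(Y)_{\mu\mu}\bigr)\) equals \(\operatorname{Tr}(Y)\tau_\mu\) by hypothesis, so the sum is \(\operatorname{Tr}(Y)\sigma_{\mathcal R_j}\).
\end{enumerate}
Hence \(\mathcal R_j\) is admissible, and uniqueness of the replacement state identifies it with \eqref{eq:lower-replacement-state}.

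\emph{Necessity.} Assume \(\mathcal T^{(s-1)}(\mathcal J_{\mathcal R_j}(Y))=\operatorname{Tr}(Y)\sigma\) for all \(Y\in\mathcal L(\mathcal R_j)\).
\begin{enumerate}
\item Compress by \(\iota_\mu^{(s-1)}\). Using \eqref{eq:lower-Haar-block-form} and the cross relations \((\iota_\mu^{(s-1)})^\dagger\iota_\nu^{(s-1)}=\delta_{\mu\nu}I\), we get
\[
  \frac{I_{V_\mu}}{\dim_{\mathbb C}V_\mu}\otimes\operatorname{Tr}_{V_\mu}\bigl(\mathcal J_{\mathcal R_j}(Y)_{\mu\mu}\bigr)=\operatorname{Tr}(Y)\,\sigma_\mu,
  \qquad \sigma_\mu:=(\iota_\mu^{(s-1)})^\dagger\sigma\iota_\mu^{(s-1)} .
\]
\item Take \(\operatorname{Tr}_{V_\mu}\) of both sides, exactly as in the step leading to \eqref{eq:B-diagonal-equals-tau}. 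With \(\tau_\mu:=\operatorname{Tr}_{V_\mu}\sigma_\mu\), this gives \eqref{eq:lower-fixed-marginals} directly for every \(Y\).
\item Here \(\tau_\mu\ge0\) because \(\sigma_\mu\ge0\) and partial traces preserve positivity.
\item For normalization, \(\sum_\mu\operatorname{Tr}\tau_\mu=\sum_\mu\operatorname{Tr}(\Pi^{(s-1)}_\mu\sigma)=\operatorname{Tr}\sigma=1\) by completeness \eqref{eq:projsum} at level \(s-1\).
\item Since \(\sigma\) equals a twirl output, it is block diagonal, and the first display gives \(\sigma_\mu=\tfrac{I_{V_\mu}}{\dim V_\mu}\otimes\tau_\mu\). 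Summing over \(\mu\) recovers \eqref{eq:lower-replacement-state}.
\end{enumerate}

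The remaining point is the final parenthetical identification with condition~(3) of Theorem~\ref{thm:A}. Apply \eqref{eq:lower-fixed-marginals} to \(Y=\iota_{\mathcal R_j}^\dagger|\psi\rangle\langle\phi|\iota_{\mathcal R_j}\) and take \(\operatorname{Tr}_{W_\mu}\) as in \eqref{eq:trace-of-B-lambda}. This gives \(\langle\phi|\Pi_\mu^{(s-1)}|\psi\rangle=c_\mu\langle\phi|\psi\rangle\) with \(c_\mu=\operatorname{Tr}\tau_\mu\). For \(c_\mu>0\), setting \(\omega_\mu=\tau_\mu/c_\mu\) yields the marginal condition.

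I expect no genuine obstacle, since the argument is a transcription of the proof of Theorem~\ref{thm:qpb-classification} at level \(s-1\) with \(\mathcal S\) replaced by \(\mathcal R_j\). The only care needed is using the operator form from Lemma~\ref{lem:operator-form-qpb-admissible}, which avoids the polarization argument, and handling the blocks with \(\tau_\mu=0\) uniformly.
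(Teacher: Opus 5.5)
Your proposal is correct and follows essentially the paper's proof: sufficiency by substituting the fixed marginals into the block formula \eqref{eq:lower-Haar-block-form}, and necessity by compressing the operator-form replacement identity to each Schur--Weyl block and reading off $\tau_\mu$. The only cosmetic difference is that the paper defines $\tau_\mu$ as the $V_\mu$-marginal of the $\mu$-block of the maximally mixed state $\Pi_{\mathcal R_j}/\dim_{\mathbb C}\mathcal R_j$, and by the block formula this equals your $\operatorname{Tr}_{V_\mu}\sigma_\mu$.
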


\begin{proof}
Suppose first that \eqref{eq:lower-fixed-marginals} and \eqref{eq:lower-normalization} hold. Substituting \eqref{eq:lower-fixed-marginals} into \eqref{eq:lower-Haar-block-form}, we obtain, for every
\(
    Y\in\mathcal L(\mathcal R_j)
\),
\begin{align}
    \mathcal T^{(s-1)}\!\left(\mathcal J_{\mathcal{R}_j}(Y)\right)
&=
\sum_{\mu\in\Lambda(d,s-1)}
\iota_\mu^{(s-1)}
\left[
\frac{I_{V_\mu}}{\dim_{\mathbb C}V_\mu}
\otimes
\operatorname{Tr}_{V_\mu}
\!\left(
\bigl(\iota_\mu^{(s-1)}\bigr)^\dagger
\mathcal J_{\mathcal{R}_j}(Y)
\iota_\mu^{(s-1)}
\right)
\right]
\bigl(\iota_\mu^{(s-1)}\bigr)^\dagger .
    \notag\\
    &=
    \operatorname{Tr}(Y)
    \sum_{\mu}
       \iota_\mu^{(s-1)}
       \left[
          \frac{I_{V_\mu}}
               {\dim_{\mathbb C}V_\mu}
          \otimes\tau_\mu
       \right]
       \bigl(\iota_\mu^{(s-1)}\bigr)^\dagger
    \notag\\
    &=
    \operatorname{Tr}(Y)
       \sigma_{\mathcal R_j}.
\end{align}
Each summand of~\eqref{eq:lower-replacement-state} is positive semidefinite.  Moreover,
\begin{align}
    \operatorname{Tr}
    \left(
       \sigma_{\mathcal R_j}
    \right)
    &=
    \sum_\mu
       \operatorname{Tr}
       \left(
          \frac{I_{V_\mu}}
               {\dim_{\mathbb C}V_\mu}
       \right)
       \operatorname{Tr}(\tau_\mu)
    \notag\\
    &=
    \sum_\mu
       \operatorname{Tr}(\tau_\mu)
    \notag\\
    &=
    1.
\end{align}
Thus
\(
    \sigma_{\mathcal R_j}
\in\mathcal D(\mathcal H_d^{\otimes(s-1)})
\),
and
\(
    \mathcal R_j
\)
is Haar-twirl admissible.

\medskip
Conversely, suppose that
\(
    \mathcal R_j
\)
is Haar-twirl admissible. Let
\(
    m_j
    :=
    \dim_{\mathbb C}\mathcal R_j,
\)
and consider the maximally mixed state on
\(
    \mathcal R_j
\),
\[
    \rho_{\mathcal R_j}^{\,0}
    :=
    \frac{\Pi_{\mathcal R_j}}{m_j}.
\]
Then
\[
    \sigma_{\mathcal R_j}
    =
    \mathcal T^{(s-1)}
    \left(
       \rho_{\mathcal R_j}^{\,0}
    \right).
\]
Define
\begin{equation}
\label{eq:tau-from-maxmixed}
    \tau_\mu
    :=
    \operatorname{Tr}_{V_\mu}
    \left[
       \bigl(\iota_\mu^{(s-1)}\bigr)^\dagger
       \rho_{\mathcal R_j}^{\,0}
       \iota_\mu^{(s-1)}
    \right].
\end{equation}
Because
\(
    \rho_{\mathcal R_j}^{\,0}\geq0
\),
positivity of the partial trace gives
\(
    \tau_\mu\geq0
\).
Equation~\eqref{eq:lower-Haar-block-form} gives
\[
    \sigma_{\mathcal R_j}
    =
    \sum_\mu
       \iota_\mu^{(s-1)}
       \left[
          \frac{I_{V_\mu}}
               {\dim_{\mathbb C}V_\mu}
          \otimes\tau_\mu
       \right]
       \bigl(\iota_\mu^{(s-1)}\bigr)^\dagger.
\]
Taking the trace gives
\(
    \sum_\mu\operatorname{Tr}(\tau_\mu)=1
\).

For arbitrary
\(
    Y\in\mathcal L(\mathcal R_j)
\),
Haar-twirl admissibility gives
\[
    \mathcal T^{(s-1)}(\mathcal{J}_{\mathcal{R}_j}(Y))
    =
    \operatorname{Tr}(Y)
    \sigma_{\mathcal R_j}.
\]
Using the block formula on the left-hand side and the preceding
block decomposition of \(\sigma_{\mathcal R_j}\) on the right-hand
side, and then compressing to the \(\mu\)-th block, gives
\[
    \frac{I_{V_\mu}}
         {\dim_{\mathbb C}V_\mu}
    \otimes
    \operatorname{Tr}_{V_\mu}
    \left[
       \bigl(\iota_\mu^{(s-1)}\bigr)^\dagger
       \mathcal{J}_{\mathcal{R}_j}(Y)
       \iota_\mu^{(s-1)}
    \right]
\]
\[
    =
    \operatorname{Tr}(Y)
    \frac{I_{V_\mu}}
         {\dim_{\mathbb C}V_\mu}
    \otimes\tau_\mu.
\]
Then
\[
    \operatorname{Tr}_{V_\mu}
    \left[
       \bigl(\iota_\mu^{(s-1)}\bigr)^\dagger
       \mathcal{J}_{\mathcal{R}_j}(Y)
       \iota_\mu^{(s-1)}
    \right]
    =
    \operatorname{Tr}(Y)\tau_\mu.
\]
This proves the necessity.
\end{proof}

\begin{remark}\label{rem:privacy-criterion-classification}
Define
\(
    c_\mu
    :=
    \operatorname{Tr}(\tau_\mu).
\)
Then
\(
    c_\mu\geq0
\)
and
\(
    \sum_\mu c_\mu=1
\).
Whenever
\(
    c_\mu>0
\),
let
\[
    \omega_\mu
    :=
    \frac{\tau_\mu}{c_\mu}
    \in\mathcal D(W_\mu).
\]
Taking
\(
    Y=|\psi\rangle\langle\phi|
\)
in~\eqref{eq:lower-fixed-marginals} gives
\begin{equation}
\label{eq:privacy-heredity-rank-one}
\begin{split}
    &\operatorname{Tr}_{V_\mu}
    \left[
       \bigl(\iota_\mu^{(s-1)}\bigr)^\dagger
       |\psi\rangle\langle\phi|
       \iota_\mu^{(s-1)}
    \right]
    \\
    &\qquad=
    c_\mu
    \langle\phi|\psi\rangle
    \omega_\mu.
\end{split}
\end{equation}
Taking the trace over \(W_\mu\) yields
\begin{align}
    \langle\phi|
       \Pi_\mu^{(s-1)}
    |\psi\rangle
    &=
    c_\mu
    \langle\phi|\psi\rangle.
\end{align}
Thus Theorem~\ref{thm:privacy-heredity} is the classification theorem of the preceding subsection applied to the reduced subspace
\(
    \mathcal R_j
\).
\end{remark}

\subsubsection{Criterion for design heredity}\label{subsubsec:design-heredity-criterion}

Privacy heredity alone concerns the Haar channel.  To determine whether the
same finite ensemble descends, we must compare the restricted moment
information required at the two tensor levels.

\begin{definition}\label{def:restricted-coefficient-space}
Let
\(
    \mathcal S\subseteq\mathcal H_d^{\otimes s}
\)
be nonzero.  Define
\begin{equation}
\label{eq:coefficient-space}
\begin{split}
    \mathfrak C_s(\mathcal S)
    :=
    \operatorname{span}_{\mathbb C}
    \Bigl\{
       f_{a,b;\psi,\phi}^{(s)}
       :\;
       |a\rangle,|b\rangle
          \in\mathcal H_d^{\otimes s}, |\psi\rangle,|\phi\rangle\in\mathcal S
    \Bigr\},
\end{split}
\end{equation}
where
\begin{equation}
\label{eq:coefficient-function}
\begin{split}
    f_{a,b;\psi,\phi}^{(s)}(U)
    &:=
    \langle a|
       U^{\otimes s}
    |\psi\rangle
    \,
    \langle\phi|
       (U^\dagger)^{\otimes s}
    |b\rangle,
    \qquad U\in\mathrm U(d).
\end{split}
\end{equation}
\end{definition}

The space
\(
    \mathfrak C_s(\mathcal S)
\)
is finite-dimensional over \(\mathbb{C}\).  Moreover, it is invariant under complex-conjugation, because
\begin{equation}
\label{eq:coefficient-conjugation}
    \overline{
       f_{a,b;\psi,\phi}^{(s)}(U)
    }
    =
    f_{b,a;\phi,\psi}^{(s)}(U).
\end{equation}
It also contains the constant function \(1\). If
\(
    |\psi\rangle\in\mathcal S
\)
is a unit vector and
\(
    \{|a_r\rangle\}
\)
is an orthonormal basis of
\(
    \mathcal H_d^{\otimes s}
\),
then
\begin{align}
    \sum_r
       f_{a_r,a_r;\psi,\psi}^{(s)}(U)
    &=
    \sum_r
       \left|
          \langle a_r|
             U^{\otimes s}
          |\psi\rangle
       \right|^2
    \notag\\
    &=
    \left\|
       U^{\otimes s}|\psi\rangle
    \right\|^2
    \notag\\
    &=
    1.
\end{align}

\begin{lemma}
\label{lem:coefficient-design-equivalence}
A weighted finite ensemble
\(
    (\mathfrak{U},w)
\)
is an exact unitary
\(
    (\mathcal S,s)
\)-design if and only if
\begin{equation}
\label{eq:coefficient-cubature}
    \sum_{U_{k} \in \mathfrak{U}}
       w_k f(U_k)
    =
    \int_{\mathrm U(d)}
       f(U)\,
       \mathrm d\mu_{\mathrm{Haar}}(U)
\end{equation}
for every
\(
    f\in\mathfrak C_s(\mathcal S)
\).
\end{lemma}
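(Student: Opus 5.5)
The plan is to turn the operator identity defining an exact unitary \((\mathcal S,s)\)-design into a family of scalar identities by testing against matrix elements. By linearity of both \(\Phi^{(s)}_{\mathfrak U,w}\) and \(\mathcal T^{(s)}\), equation~\eqref{eq:exact-S-design} holds for every \(X\in\mathcal L(\mathcal S)\) if and only if it holds for every rank-one operator \(X=|\psi\rangle\langle\phi|\) with \(|\psi\rangle,|\phi\rangle\in\mathcal S\), since these span \(\mathcal L(\mathcal S)\). In turn, an equality of two operators on \(\mathcal H_d^{\otimes s}\) holds if and only if all their matrix elements \(\langle a|\cdot|b\rangle\), for \(|a\rangle,|b\rangle\in\mathcal H_d^{\otimes s}\), agree.

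The key computation is the identity
\[
\langle a|\,U^{\otimes s}|\psi\rangle\langle\phi|(U^\dagger)^{\otimes s}\,|b\rangle=f^{(s)}_{a,b;\psi,\phi}(U).
\]
Summing it against the weights gives
\[
\langle a|\Phi^{(s)}_{\mathfrak U,w}(|\psi\rangle\langle\phi|)|b\rangle=\sum_k w_k f^{(s)}_{a,b;\psi,\phi}(U_k).
\]
Integrating it against Haar measure, and exchanging the finite-dimensional matrix element with the Bochner integral (justified by continuity of the integrand on the compact group \(\mathrm U(d)\)), gives
\[
\langle a|\mathcal T^{(s)}(|\psi\rangle\langle\phi|)|b\rangle=\int f^{(s)}_{a,b;\psi,\phi}\,\mathrm d\mu_{\mathrm{Haar}}.
\]

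It follows that \((\mathfrak U,w)\) is an exact \((\mathcal S,s)\)-design if and only if the cubature identity~\eqref{eq:coefficient-cubature} holds for every generator \(f^{(s)}_{a,b;\psi,\phi}\) of \(\mathfrak C_s(\mathcal S)\). Both sides of~\eqref{eq:coefficient-cubature} are linear functionals in \(f\), so it holds on the generators if and only if it holds on their span, which is all of \(\mathfrak C_s(\mathcal S)\). The converse direction reads the same chain backwards: cubature on the generators yields equality of all matrix elements for rank-one \(X\), and linearity extends this to all of \(\mathcal L(\mathcal S)\).

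There is no serious obstacle. The only points requiring care are the sesquilinearity bookkeeping (the function \(f\) is linear in \(\psi\) and conjugate-linear in \(\phi\), which matches the linearity of \(X\mapsto\langle a|\Phi(X)|b\rangle\) in \(X=|\psi\rangle\langle\phi|\)) and the interchange of integral and matrix element.
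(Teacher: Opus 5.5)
Your proposal is correct and follows essentially the same route as the paper: reduce to rank-one operators $|\psi\rangle\langle\phi|$ with $|\psi\rangle,|\phi\rangle\in\mathcal S$, identify their matrix elements under conjugation with the generators $f^{(s)}_{a,b;\psi,\phi}$, and pass to the span $\mathfrak C_s(\mathcal S)$ by linearity. You also justify exchanging the matrix element with the Haar integral and check the sesquilinearity, which the paper leaves implicit.
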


\begin{proof}
Let \(|\psi\rangle,|\phi\rangle\in\mathcal S\). The rank-one operators
\(
    |\psi\rangle\langle\phi|
\)
span
\(
    \mathcal L(\mathcal S)
\).
Thus it is sufficient to test equality
\[
    \Phi_{\mathfrak{U},w}^{(s)}
    \left(
       |\psi\rangle\langle\phi|
    \right)
    =
    \mathcal T^{(s)}
    \left(
       |\psi\rangle\langle\phi|
    \right).
\]
For arbitrary
\(
    |a\rangle,|b\rangle
    \in\mathcal H_d^{\otimes s}
\),
the corresponding matrix element of a single conjugated rank-one
operator is
\begin{align}
    &\left\langle
       a\left|
       U^{\otimes s}
       |\psi\rangle\langle\phi|
       (U^\dagger)^{\otimes s}
       \right|b
    \right\rangle
    \notag\\
    &\qquad=
    \langle a|
       U^{\otimes s}
    |\psi\rangle
    \langle\phi|
       (U^\dagger)^{\otimes s}
    |b\rangle
    \notag\\
    &\qquad=
    f_{a,b;\psi,\phi}^{(s)}(U).
\end{align}
Therefore equality of the ensemble average and the Haar average on
all rank-one operators is equivalent to equality of the ensemble and
Haar integrals on the spanning functions
\(
    f_{a,b;\psi,\phi}^{(s)}
\),
and hence on their linear span
\(
    \mathfrak C_s(\mathcal S)
\).
\end{proof}

\begin{theorem}[Design-heredity criterion]
\label{thm:design-heredity-criterion}
Let
\(
    \mathcal S\subseteq\mathcal H_d^{\otimes s}
\)
be nonzero and let
\[
    \mathcal R_j
    :=
    \operatorname{Red}_j(\mathcal S).
\]
Then
\(
    \mathcal S
\)
is design-hereditary in coordinate \(j\) if and only if
\begin{equation}
\label{eq:coefficient-space-inclusion}
    \mathfrak C_{s-1}(\mathcal R_j)
    \subseteq
    \mathfrak C_s(\mathcal S)
\end{equation}
as subspaces of
\(
    C(\mathrm U(d)).
\)
\end{theorem}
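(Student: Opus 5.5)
The argument runs through Lemma~\ref{lem:coefficient-design-equivalence}, which turns design properties into cubature statements. An ensemble $(\mathfrak U,w)$ is an exact $(\mathcal S,s)$-design iff the functional $L_{\mathfrak U,w}(f):=\sum_k w_kf(U_k)-\int f\,\mathrm d\mu_{\mathrm{Haar}}$ vanishes on $\mathfrak C_s(\mathcal S)$. Likewise, it is an exact $(\mathcal R_j,s-1)$-design iff $L_{\mathfrak U,w}$ vanishes on $\mathfrak C_{s-1}(\mathcal R_j)$.

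\textbf{Sufficiency.} Assume \eqref{eq:coefficient-space-inclusion} holds. Every exact $(\mathcal S,s)$-design annihilates $\mathfrak C_s(\mathcal S)$, hence annihilates $\mathfrak C_{s-1}(\mathcal R_j)$, and so it is an exact $(\mathcal R_j,s-1)$-design. Nothing beyond Lemma~\ref{lem:coefficient-design-equivalence} is needed.

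\textbf{Necessity.} Argue contrapositively. Suppose there is $g\in\mathfrak C_{s-1}(\mathcal R_j)\setminus\mathfrak C_s(\mathcal S)$; we must build an exact $(\mathcal S,s)$-design with $L(g)\neq0$.

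Both coefficient spaces are finite-dimensional and contain the constants. They are closed under conjugation by \eqref{eq:coefficient-conjugation}; for $\mathfrak C_{s-1}(\mathcal R_j)$ the same identity applies. We may therefore replace $g$ by its real or imaginary part, one of which still lies outside $\mathfrak C_s(\mathcal S)$. Then subtract its $L^2(\mathrm U(d),\mu_{\mathrm{Haar}})$-orthogonal projection onto $\mathfrak C_s(\mathcal S)$. Since $\mathfrak C_s(\mathcal S)$ is conjugation-invariant, the projection is real. The result is a real continuous function $h$ with $h\perp\mathfrak C_s(\mathcal S)$ and $h\neq0$. The projection lies in $\mathfrak C_s(\mathcal S)$, not necessarily in $\mathfrak C_{s-1}(\mathcal R_j)$. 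This does not matter: any design for $\mathcal S$ integrates the projection exactly, so $L(h)\neq0$ implies $L(g')\neq0$ for the modified $g'$, and hence $L$ is nonzero on $\mathfrak C_{s-1}(\mathcal R_j)$.

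Now perturb Haar measure to $\mathrm d\nu_\epsilon:=(1+\epsilon h)\,\mathrm d\mu_{\mathrm{Haar}}$, with $\epsilon>0$ small enough that $1+\epsilon h>0$ (possible since $h$ is bounded). This is a probability measure, because $\int h=\langle 1,h\rangle=0$ and $1\in\mathfrak C_s(\mathcal S)$. It integrates every $f\in\mathfrak C_s(\mathcal S)$ exactly, since $\int f h\,\mathrm d\mu=\langle \bar f,h\rangle=0$ and $\bar f\in\mathfrak C_s(\mathcal S)$. It also satisfies $\int h\,\mathrm d\nu_\epsilon=\epsilon\|h\|_2^2>0=\int h\,\mathrm d\mu_{\mathrm{Haar}}$.

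Next discretize $\nu_\epsilon$. Choose a real basis $f_1,\dots,f_N$ of the real parts of $\mathfrak C_s(\mathcal S)+\mathbb C h$, and consider the moment map $U\mapsto(f_1(U),\dots,f_N(U),h(U))\in\mathbb R^{N+1}$. The vector of $\nu_\epsilon$-moments lies in the closed convex hull of the image of the compact set $\mathrm U(d)$. The image is compact, and in finite dimensions the convex hull of a compact set is already closed, so the moment vector lies in the convex hull itself. Carath\'eodory's theorem then gives finitely many $U_k$ and positive weights $w_k$ reproducing these moments exactly. By Lemma~\ref{lem:coefficient-design-equivalence}, $(\mathfrak U,w)$ is an exact $(\mathcal S,s)$-design, but $\sum_kw_kh(U_k)\neq\int h\,\mathrm d\mu_{\mathrm{Haar}}$. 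It therefore fails to be an exact $(\mathcal R_j,s-1)$-design, so $\mathcal S$ is not design-hereditary in coordinate $j$.

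The main obstacle is making this last step airtight. The moment vector of $\nu_\epsilon$ must lie in the convex hull of the finite-dimensional image, which is where compactness is used. One must also handle complex-valued functions by splitting each into real and imaginary parts, so that Carath\'eodory applies in a real space of dimension at most $2\dim_{\mathbb C}\mathfrak C_s(\mathcal S)+1$. The remaining steps are routine linear algebra using conjugation invariance.
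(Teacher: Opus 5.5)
Your proposal is correct and follows essentially the same route as the paper. Sufficiency comes from Lemma~\ref{lem:coefficient-design-equivalence}; necessity comes from perturbing Haar measure to $(1+\varepsilon h)\,\mathrm d\mu_{\mathrm{Haar}}$ with $h$ a nonzero real function orthogonal to $\mathfrak C_s(\mathcal S)$, and then discretizing via compactness and Carath\'eodory's theorem. The only cosmetic difference is that the paper projects $g$ onto the real space $\mathfrak C_s(\mathcal S)_{\mathbb R}$ and detects the failure on $g$ itself. You instead project in complex $L^2$ and detect it on $h$, using that every $(\mathcal S,s)$-design integrates $g-h\in\mathfrak C_s(\mathcal S)$ exactly.
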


\begin{proof}
Suppose first that
\eqref{eq:coefficient-space-inclusion} holds.
Let
\(
    (\mathfrak{U},w)
\)
be any exact unitary
\(
    (\mathcal S,s)
\)-design.  By
Lemma~\ref{lem:coefficient-design-equivalence},
\[
    \sum_{U_{k} \in \mathfrak{U} } w_k f(U_k)
    =
    \int_{\U(d)} f\,\mathrm d\mu_{\mathrm{Haar}}(U)
    \qquad
    \forall f\in\mathfrak C_s(\mathcal S).
\]
Since
\[
    \mathfrak C_{s-1}(\mathcal R_j)
    \subseteq
    \mathfrak C_s(\mathcal S),
\]
the same equality holds for every
\(
    f\in\mathfrak C_{s-1}(\mathcal R_j)
\).
Applying
Lemma~\ref{lem:coefficient-design-equivalence}
at level \(s-1\), we conclude that
\(
    (\mathfrak{U},w)
\)
is an exact unitary
\(
    (\mathcal R_j,s-1)
\)-design.

Conversely, assume
\(
\label{eq:coeff-not-inclusion}
    \mathfrak C_{s-1}(\mathcal R_j)
    \nsubseteq
    \mathfrak C_s(\mathcal S).
\)
Because both coefficient spaces are invariant under complex
conjugation, there exists a real-valued function
\(
    g
    \in
    \mathfrak C_{s-1}(\mathcal R_j)
\)
which does not belong to the real-valued part of
\(
    \mathfrak C_s(\mathcal S)
\).
Denote this latter finite-dimensional real vector space by
\(
    \mathfrak C_s(\mathcal S)_{\mathbb R}.
\)

Let
\[
    Q:
    L^2(\mathrm U(d),\mu_{\mathrm{Haar}};\mathbb R)
    \longrightarrow
    \mathfrak C_s(\mathcal S)_{\mathbb R}
\]
be the orthogonal projection and define
\begin{equation}
\label{eq:h-g-Qg}
    h
    :=
    g-Qg.
\end{equation}
Then
\(
    h\neq0
\)
and
\begin{equation}
\label{eq:h-orthogonality-final}
    \int_{\mathrm U(d)}
       f(U)h(U)
       \,\mathrm d\mu_{\mathrm{Haar}}(U)
    =
    0
\end{equation}
for every
\(
    f\in\mathfrak C_s(\mathcal S)_{\mathbb R}.
\)
Since the constant function \(1\) belongs to this space,
\[
    \int h\,\mathrm d\mu_{\mathrm{Haar}}
    =
    0.
\]
The function \(h\) is continuous and bounded.

Choose
\(
    0<\varepsilon<
       \frac{1}{\|h\|_\infty}
\)
and define a new probability measure by
\begin{equation}
\label{eq:perturbed-Haar}
    \mathrm d\nu(U)
    :=
    \bigl(
       1+\varepsilon h(U)
    \bigr)
    \mathrm d\mu_{\mathrm{Haar}}(U),
\end{equation}
where
\[
    1+\varepsilon h(U)
    \geq
    1-\varepsilon\|h\|_\infty
    >
    0,
\]
and
\[
    \int
       (1+\varepsilon h)
       \,\mathrm d\mu_{\mathrm{Haar}}
    =
    1.
\]

For every
\(
    f\in\mathfrak C_s(\mathcal S)_{\mathbb R}
\),
Equation~\eqref{eq:h-orthogonality-final} gives
\[
    \int f\,\mathrm d\nu
    =
    \int f\,\mathrm d\mu_{\mathrm{Haar}}.
\]
However,
\begin{align}
    \int g\,\mathrm d\nu
    -
    \int g\,\mathrm d\mu_{\mathrm{Haar}}
    &=
    \varepsilon
    \int gh
       \,\mathrm d\mu_{\mathrm{Haar}}
    \notag\\
    &=
    \varepsilon
    \int
       (Qg+h)h
       \,\mathrm d\mu_{\mathrm{Haar}}
    \notag\\
    &=
    \varepsilon
       \|h\|_{L^2(\mu_{\mathrm{Haar}})}^2 > 0.
\label{eq:g-separated}
\end{align}

 Now we replace \(\nu\) by a finite ensemble.

Define
\[
    F
    :=
    \mathfrak C_s(\mathcal S)_{\mathbb R}
    +
    \operatorname{span}_{\mathbb R}\{g\}.
\]
Let
\(
    f_1,\ldots,f_m
\)
be a real basis of \(F\), and define
\[
    \gamma:
    \mathrm U(d)\longrightarrow\mathbb R^m,
    \qquad
    \gamma(U)
    :=
    \bigl(
       f_1(U),\ldots,f_m(U)
    \bigr).
\]
Because \(\mathrm U(d)\) is compact and \(\gamma\) is continuous,
\(
    \gamma(\mathrm U(d))
\)
is compact.  The vector
\[
    \int_{\mathrm U(d)}
       \gamma(U)\,
       \mathrm d\nu(U)
\]
is a barycenter of this compact set and therefore belongs to its
convex hull.  By Carath\'eodory's theorem, there exist
\[
    U_1,\ldots,U_N\in\mathrm U(d),
    \qquad
    N\leq m+1,
\]
and nonnegative weights
\(
    w_1,\ldots,w_N
\)
with
\(
    \sum_kw_k=1
\),
such that
\[
    \sum_{k=1}^{N}
       w_k f(U_k)
    =
    \int f\,\mathrm d\nu
    \qquad
    \forall f\in F.
\]
After deleting zero weights and combining repeated unitaries if
necessary, we obtain a finite weighted unitary ensemble with positive
weights.

This ensemble agrees with Haar measure on
\(
    \mathfrak C_s(\mathcal S)
\),
and therefore is an exact unitary
\(
    (\mathcal S,s)
\)-design.  On the other hand,
Equation~\eqref{eq:g-separated} shows that it does not agree with
Haar measure on \(g\in\mathfrak C_{s-1}(\mathcal R_j)\).
Consequently it is not an exact
\(
    (\mathcal R_j,s-1)
\)-design.

Thus failure of
\eqref{eq:coefficient-space-inclusion}
implies failure of design heredity.
\end{proof}

Combining the preceding two theorems gives the exact general criterion.

\begin{corollary}[\textsf{QPB}-heredity criterion]
\label{cor:qpb-heredity}
Let \(\mathcal{S}\subseteq\mathcal H_d^{\otimes s}\) be Haar-twirl admissible, fix
\(j\in\{1,\ldots,s\}\), and put \(\mathcal{R}_j:=\operatorname{Red}_j(\mathcal{S})\). Then \(\mathcal{S}\) is \textsf{QPB}-hereditary in coordinate \(j\) if and only if 

\begin{enumerate}
\item both the conditions \eqref{eq:lower-fixed-marginals}--\eqref{eq:lower-normalization} hold; and 
\item \(\mathfrak{C}_{s-1}(R_j)\subseteq\mathfrak{ C}_s(S)\).
\end{enumerate}
\end{corollary}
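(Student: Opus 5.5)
The corollary is a direct conjunction of two earlier results, so I will prove it by unfolding Definition~\ref{qpb-heriditary} and invoking the two criteria. By Definition~\ref{qpb-heriditary}(3), a Haar-twirl admissible $\mathcal S$ is \textsf{QPB}-hereditary in coordinate $j$ exactly when it is both privacy-hereditary and design-hereditary in coordinate $j$. The proof should therefore consist of two independent equivalences.

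First, since $\mathcal S$ is assumed Haar-twirl admissible, the hypothesis of Theorem~\ref{thm:privacy-heredity} is met. That theorem states that privacy heredity in coordinate $j$ holds if and only if there are positive operators $\tau_\mu\in\mathcal L(W_\mu)$ satisfying \eqref{eq:lower-fixed-marginals} and \eqref{eq:lower-normalization}. This gives condition~(1).

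Second, Theorem~\ref{thm:design-heredity-criterion} needs only that $\mathcal S$ is nonzero, and admissibility presupposes this. It states that design heredity in coordinate $j$ holds if and only if $\mathfrak C_{s-1}(\mathcal R_j)\subseteq\mathfrak C_s(\mathcal S)$. This gives condition~(2).

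Taking the conjunction of the two equivalences yields the corollary in both directions. The only point requiring any care is bookkeeping. Privacy heredity (and hence \textsf{QPB}-heredity) is defined only for admissible $\mathcal S$, which is why admissibility is a standing hypothesis here rather than an additional condition. Likewise, the symbols $R_j$ and $S$ in item~(2) denote $\mathcal R_j$ and $\mathcal S$. No genuine obstacle arises, because all the substantive work, including the Carath\'eodory construction in the necessity direction of design heredity, is already contained in the two cited theorems.
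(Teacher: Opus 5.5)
Your proposal is correct and matches the paper's proof: unfold Definition~\ref{qpb-heriditary} as the conjunction of privacy and design heredity, then apply Theorem~\ref{thm:privacy-heredity} for condition~(1) and Theorem~\ref{thm:design-heredity-criterion} for condition~(2). Your bookkeeping remarks are also accurate: admissibility is a standing hypothesis, only nonzeroness is needed for the design criterion, and $R_j$, $S$ denote $\mathcal R_j$, $\mathcal S$.
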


\begin{proof}
By definition, \(\mathcal S\) is \textsf{QPB}-hereditary in coordinate \(j\) if and only if it is both privacy-hereditary and design-hereditary in that coordinate. By Theorem~\ref{thm:privacy-heredity}, privacy heredity is equivalent to Condition~\textup{(1)}, while by Theorem~\ref{thm:design-heredity-criterion}, design heredity is equivalent to Condition~\textup{(2)}. Hence the conjunction of Conditions~\textup{(1)} and~\textup{(2)} is necessary and sufficient for \textsf{QPB}-heredity.
\end{proof}

\subsubsection{A sufficient condition for design descent}
\label{subsubsec:covariant-lift}

Although
Theorem~\ref{thm:design-heredity-criterion}
is necessary and sufficient, the coefficient-space inclusion may be difficult to verify directly. Here we derive the following sufficient condition for the symmetric subspaces.

\begin{proposition}\label{prop:covariant-lifting}
Let \(\mathcal S \subseteq \mathcal H_d^{\otimes s}\), and \(\mathcal R \subseteq \mathcal H_d^{\otimes(s-1)}
\)
be two nonzero subspaces. Suppose that there exists a linear map
\[
    \mathscr L\colon
    \mathcal L(\mathcal R)
    \longrightarrow
    \mathcal L(\mathcal S)
\]
such that, for every
\(
    A\in\mathcal L(\mathcal R)
\)
and every
\(
    U\in\mathrm U(d)
\),
\begin{align}
    \operatorname{Tr}_j \circ \mathcal{J}_{\mathcal{S}} \circ \mathscr L
    &=
    \mathcal{J}_{\mathcal{R}}.
\label{eq:lift-covariance}
\end{align}
Then every exact unitary
\(
    (\mathcal S,s)
\)-design is an exact unitary
\(
    (\mathcal R,s-1)
\)-design.
\end{proposition}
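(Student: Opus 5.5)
The plan is a direct superoperator computation: lift a lower-level plaintext operator to level \(s\), twirl there, and push the result back down by the partial trace. Fix an exact unitary \((\mathcal S,s)\)-design \((\mathfrak U,w)\). We must show \(\Phi^{(s-1)}_{\mathfrak U,w}(\mathcal J_{\mathcal R}(A))=\mathcal T^{(s-1)}(\mathcal J_{\mathcal R}(A))\) for every \(A\in\mathcal L(\mathcal R)\).

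The first step is to record that the ensemble twirl intertwines with the partial trace exactly as the Haar twirl does. Applying the covariance identity \eqref{eq:partial-trace-unitary-covariance} of Lemma~\ref{lem:partial-trace-covariance} termwise and summing with weights \(w_k\) gives
\[
  \operatorname{Tr}_j\circ\Phi^{(s)}_{\mathfrak U,w}=\Phi^{(s-1)}_{\mathfrak U,w}\circ\operatorname{Tr}_j .
\]
The Haar analogue, \(\operatorname{Tr}_j\circ\mathcal T^{(s)}=\mathcal T^{(s-1)}\circ\operatorname{Tr}_j\), is \eqref{eq:Haar-partial-trace-intertwining}. No invariance of the measure is used in the ensemble version; only the pointwise covariance for each fixed \(U_k\) is needed.

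The second step chains the identities. Given \(A\in\mathcal L(\mathcal R)\), put \(X:=\mathscr L(A)\in\mathcal L(\mathcal S)\), so that \eqref{eq:lift-covariance} reads \(\mathcal J_{\mathcal R}(A)=\operatorname{Tr}_j\mathcal J_{\mathcal S}(X)\). Then
\[
  \Phi^{(s-1)}_{\mathfrak U,w}(\mathcal J_{\mathcal R}(A))
  =\operatorname{Tr}_j\,\Phi^{(s)}_{\mathfrak U,w}(\mathcal J_{\mathcal S}(X))
  =\operatorname{Tr}_j\,\mathcal T^{(s)}(\mathcal J_{\mathcal S}(X))
  =\mathcal T^{(s-1)}(\operatorname{Tr}_j\mathcal J_{\mathcal S}(X))
  =\mathcal T^{(s-1)}(\mathcal J_{\mathcal R}(A)).
\]
The middle equality is the defining property \eqref{eq:exact-S-design} of an exact \((\mathcal S,s)\)-design applied to \(X\in\mathcal L(\mathcal S)\). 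Since \(A\) is arbitrary, \((\mathfrak U,w)\) is an exact \((\mathcal R,s-1)\)-design in the sense of Definition~\ref{def:exact-restricted-design-heredity}.

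I do not expect a genuine obstacle. The only point needing care is bookkeeping: the operator \(\mathcal J_{\mathcal S}(\mathscr L(A))\) is an operator on \(\mathcal H_d^{\otimes s}\) supported on \(\mathcal S\), so the design identity applies to it. Equivariance of \(\mathscr L\) under \(U\) is not needed, since both twirls are moved through \(\operatorname{Tr}_j\) rather than through \(\mathscr L\). If \(\mathscr L\) were only defined on rank-one operators, I would extend it by linearity, because rank-one operators span \(\mathcal L(\mathcal R)\).
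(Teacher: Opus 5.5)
Your proof is correct and is essentially the paper's argument: the same chain of four equalities (covariance of the ensemble twirl under $\operatorname{Tr}_j$, exactness on $\mathcal S$, the Haar intertwining \eqref{eq:Haar-partial-trace-intertwining}, and the lifting identity \eqref{eq:lift-covariance}), which you make slightly more explicit by deriving the ensemble intertwining termwise from the pointwise covariance in Lemma~\ref{lem:partial-trace-covariance}. Your remark that no $U$-equivariance of $\mathscr L$ is used is also accurate, since the displayed hypothesis involves no $U$.
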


\begin{proof}
Let $(\mathfrak U,w)$ be exact on $\mathcal S$. Covariance of partial
trace gives, for $A\in\mathcal L(\mathcal R)$,
\begin{align*}
 \Phi_{\mathfrak U,w}^{(s-1)}(\mathcal J_{\mathcal R}(A))
 &=\Tr_j\Phi_{\mathfrak U,w}^{(s)}
       (\mathcal J_{\mathcal S}(\mathscr L(A)))\\
 &=\Tr_j\mathcal T^{(s)}
       (\mathcal J_{\mathcal S}(\mathscr L(A)))\\
 &=\mathcal T^{(s-1)}(\Tr_j\mathcal J_{\mathcal S}(\mathscr L(A)))\\
 &=\mathcal T^{(s-1)}(\mathcal J_{\mathcal R}(A)).
\end{align*}
This is exactness on $\mathcal R$.
\end{proof}

\subsubsection{The symmetric tower}
\label{subsubsec:symmetric-heredity}

Let
\(
    \Pi_{\mathrm{sym}}^{(d,s)}
\)
denote the orthogonal projection onto
\(
    \operatorname{Sym}^s(\mathcal H_d)
\),
and define
\[
    D_s^+
    :=
    \dim_{\mathbb C}
       \operatorname{Sym}^s(\mathcal H_d)
    =
    \binom{d+s-1}{s}.
\]

Broadbent, Gonz\'alez-Guill\'en, and Schuknecht proved that every exact symmetric unitary \(s\)-design is an exact symmetric unitary \((s-1)\)-design~[Lemma~VI.3,\cite{BroadbentGonzalezGuillenSchuknecht2022}].
We present a short alternative proof because it realizes their descent theorem as an application of the equivariant-lifting criterion in Proposition~\ref{prop:covariant-lifting}.

\begin{proposition}[Reduction and privacy heredity of the symmetric tower]~\cite{BroadbentGonzalezGuillenSchuknecht2022}
\label{prop:symmetric-privacy-heredity}
For every
\(
    s\geq2
\)
and every
\(
    j\in\{1,\ldots,s\}
\),
\begin{equation}
\label{eq:symmetric-Red}
    \operatorname{Red}_j
    \left(
       \operatorname{Sym}^s(\mathcal H_d)
    \right)
    =
    \operatorname{Sym}^{s-1}(\mathcal H_d).
\end{equation}
Moreover,
\begin{equation}
\label{eq:symmetric-Haar-replacer}
    \mathcal T^{(s)}(X)
    =
    \operatorname{Tr}(X)
    \frac{
       \Pi_{\mathrm{sym}}^{(d,s)}
    }{
       D_s^+
    }
\end{equation}
for every
\(
    X\in
    \mathcal L(\operatorname{Sym}^s(\mathcal H_d)).
\)
Hence the symmetric tower is fully privacy-hereditary.
\end{proposition}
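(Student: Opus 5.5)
The proof splits into three parts: the reduction identity \eqref{eq:symmetric-Red}, the replacement formula \eqref{eq:symmetric-Haar-replacer}, and privacy heredity, which follows from combining the first two.

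\emph{The reduction identity.} By relabelling tensor factors it suffices to treat \(j=1\); since \(\operatorname{Sym}^s(\mathcal H_d)\) is invariant under every \(\Gamma_{\mathfrak S_s}(\sigma)\), the contraction \(C^{(s)}_{j,v}\) differs from \(C^{(s)}_{1,v}\) on this space only by a permutation of the remaining \(s-1\) factors. For the inclusion \(\subseteq\), let \(\psi\in\operatorname{Sym}^s\) and let \(\pi\in\mathfrak S_{s-1}\) act on factors \(2,\dots,s\). Then \(\Gamma_{\mathfrak S_{s-1}}(\pi)C^{(s)}_{1,v}\psi=C^{(s)}_{1,v}\Gamma_{\mathfrak S_s}(\pi)\psi=C^{(s)}_{1,v}\psi\), so every contraction is symmetric. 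For \(\supseteq\), recall that \(\operatorname{Sym}^{s-1}\) is spanned by the vectors \(|\varphi\rangle^{\otimes(s-1)}\), by the spanning statement recalled before Proposition~\ref{prop:symmetrizer-corrected}. If \(\varphi\neq0\), then \(C^{(s)}_{1,\varphi}|\varphi\rangle^{\otimes s}=\|\varphi\|^2|\varphi\rangle^{\otimes(s-1)}\), and \(|\varphi\rangle^{\otimes s}\in\operatorname{Sym}^s\). Lemma~\ref{lem:Red-support} then identifies this span with the canonical reduction.

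\emph{The replacement formula.} We identify \(\operatorname{Sym}^s(\mathcal H_d)=\iota_{(s)}(V_{(s)}\otimes W_{(s)})\), where \(\dim W_{(s)}=1\). Corollary~\ref{cor:active-block-injectivity} (or directly Theorem~\ref{thm:qpb-classification} with \(c_{(s)}=1\) and \(\omega_{(s)}=|s_{(s)}\rangle\langle s_{(s)}|\)) shows that \(\operatorname{Sym}^s\) is admissible, with replacement state \(\iota_{(s)}\bigl(I_{V_{(s)}}/\dim V_{(s)}\otimes\omega_{(s)}\bigr)\iota_{(s)}^\dagger\). Since \(W_{(s)}\) is one-dimensional, this equals \(\Pi_{(s)}/D_s^+=\Pi^{(d,s)}_{\mathrm{sym}}/D_s^+\), using Proposition~\ref{prop:symmetrizer-corrected} and Corollary~\ref{cor:equalityofprojectors}. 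Lemma~\ref{lem:operator-form-qpb-admissible} extends the replacement property from density operators to all \(X\in\mathcal L(\operatorname{Sym}^s)\).

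\emph{Privacy heredity.} The level-\(s\) admissibility applied at level \(s-1\) shows that \(\operatorname{Red}_j(\operatorname{Sym}^s)=\operatorname{Sym}^{s-1}\) is admissible for \(\mathcal T^{(s-1)}\), which is exactly privacy heredity in coordinate \(j\). Iterating down the tower \(\operatorname{Sym}^s\to\operatorname{Sym}^{s-1}\to\cdots\to\operatorname{Sym}^1\) gives full privacy heredity. At level \(1\) we have \(\operatorname{Sym}^1=\mathcal H_d\), and \(\mathcal T^{(1)}(X)=\operatorname{Tr}(X)I/d\), which is consistent.

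The only step needing care is the \(\supseteq\) inclusion, specifically the spanning of \(\operatorname{Sym}^{s-1}\) by repeated product vectors. I would justify this by polarization, or by noting that the span of \(\{\varphi^{\otimes(s-1)}\}\) is a nonzero \(\mathrm U(d)\)-invariant subspace of the irreducible module \(V_{(s-1)}\) and hence is all of it. Everything else is bookkeeping with earlier results.
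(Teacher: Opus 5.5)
Your proof is correct and follows essentially the same route as the paper: you reduce to $j=1$, show that contractions of symmetric vectors remain symmetric, recover $\operatorname{Sym}^{s-1}$ from contractions of $|\varphi\rangle^{\otimes s}$, and obtain the replacement formula from the one-dimensionality of $W_{(s)}$. For that last step the paper invokes Schur averaging directly, whereas you route it through Corollary~\ref{cor:active-block-injectivity}, which rests on the same blockwise twirl formula.

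Two points are cosmetic. Lemma~\ref{lem:Red-support} is unnecessary, since $\operatorname{Red}_j$ is defined as the span of contractions. The spanning fact you cite from the preliminaries is stated there for density operators, not vectors, so your irreducibility argument is the cleaner justification.
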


\begin{proof}
By permutation symmetry it suffices to consider \(j=1\).

Let
\(
    |\psi\rangle
    \in
    \operatorname{Sym}^s(\mathcal H_d)
\).
Every permutation \(\pi\in \mathfrak{S}_{s-1}\) acting on the remaining
\(s-1\) factors satisfies
\[
    (I_{\mathcal H_d}\otimes\Gamma_{\mathfrak{S}_{s-1}}(\pi))
    |\psi\rangle
    =
    |\psi\rangle.
\]
Since contraction in the first factor commutes with permutations of
the remaining factors,
\begin{align}
    \Gamma_{\mathfrak{S}_{s-1}}(\pi)
    C_{1,v}^{(s)}|\psi\rangle
    &=
    C_{1,v}^{(s)}
    (I\otimes\Gamma_{\mathfrak{S}_{s-1}}(\pi))
    |\psi\rangle
    \notag\\
    &=
    C_{1,v}^{(s)}|\psi\rangle.
\end{align}
Therefore
\[
    C_{1,v}^{(s)}|\psi\rangle
    \in
    \operatorname{Sym}^{s-1}(\mathcal H_d),
\]
and hence
\[
    \operatorname{Red}_1
    \left(
       \operatorname{Sym}^s(\mathcal H_d)
    \right)
    \subseteq
    \operatorname{Sym}^{s-1}(\mathcal H_d).
\]

For the reverse inclusion, let
\(
    |v\rangle\in\mathcal H_d
\)
be nonzero.  Since
\(
    |v\rangle^{\otimes s}
\in\operatorname{Sym}^s(\mathcal H_d)
\),
\begin{align}
    C_{1,v}^{(s)}
    \left(
       |v\rangle^{\otimes s}
    \right)
    &=
    \langle v|v\rangle
       |v\rangle^{\otimes(s-1)}.
\end{align}
The pure symmetric tensors
\[
    \left\{
       |v\rangle^{\otimes(s-1)}
       :
       |v\rangle\in\mathcal H_d
    \right\}
\]
span
\(
    \operatorname{Sym}^{s-1}(\mathcal H_d)
\).
Consequently,
\[
    \operatorname{Sym}^{s-1}(\mathcal H_d)
    \subseteq
    \operatorname{Red}_1
    \left(
       \operatorname{Sym}^s(\mathcal H_d)
    \right),
\]
proving~\eqref{eq:symmetric-Red}.

Finally,
\(
    \operatorname{Sym}^s(\mathcal H_d)
\)
is the irreducible \(\mathrm U(d)\)-module \(V_{(s)}\), and its
multiplicity space \(W_{(s)}\) is one-dimensional.  Schur averaging
therefore gives
\[
    \mathcal T^{(s)}(X)
    =
    \frac{\operatorname{Tr}(X)}{D_s^+}
    I_{\operatorname{Sym}^s(\mathcal H_d)}.
\]
After extending this operator by zero to
\(
    \mathcal H_d^{\otimes s}
\),
the identity on the symmetric subspace is
\(
    \Pi_{\mathrm{sym}}^{(d,s)}
\),
which proves~\eqref{eq:symmetric-Haar-replacer}.
\end{proof}

The symmetric tower satisfies the stronger design-descent property.

\begin{theorem}[Full \textsf{QPB}-heredity of the symmetric tower]
\label{thm:symmetric-tower-heredity}
For every \(s\geq2\),
\(
    \operatorname{Sym}^s(\mathcal H_d)
\)
is \textsf{QPB}-hereditary. More precisely, if
\(
    (\mathfrak{U},w)
\)
is an exact unitary
\(
    (\operatorname{Sym}^s(\mathcal H_d),s)
\)-design, then the same weighted ensemble is an exact unitary
\(
    \bigl(
       \operatorname{Sym}^r(\mathcal H_d),r
    \bigr)
\text{-design}
\)
for every
\(
    1\leq r\leq s-1
\).
\end{theorem}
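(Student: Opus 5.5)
The plan is to prove design descent one level at a time and then iterate, because privacy heredity is already Proposition~\ref{prop:symmetric-privacy-heredity}. Two facts are needed:
\begin{itemize}
\item $\operatorname{Red}_j(\operatorname{Sym}^s(\mathcal H_d))=\operatorname{Sym}^{s-1}(\mathcal H_d)$;
\item $\operatorname{Sym}^{s-1}(\mathcal H_d)$ is Haar-twirl admissible.
\end{itemize}
So \textsf{QPB}-heredity reduces to showing that every exact $(\operatorname{Sym}^s(\mathcal H_d),s)$-design is an exact $(\operatorname{Sym}^{s-1}(\mathcal H_d),s-1)$-design. Once this single step is established for all $s\ge2$, induction downward in $r$ gives the statement for every $1\le r\le s-1$. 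Each intermediate $\operatorname{Sym}^r(\mathcal H_d)$ is again the top of a symmetric tower, so the same one-step lemma applies at every level. For $r=1$ one uses $\operatorname{Sym}^1(\mathcal H_d)=\mathcal H_d$, and the argument is unchanged.

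For the one-step descent I will apply Proposition~\ref{prop:covariant-lifting} with $\mathcal S=\operatorname{Sym}^s(\mathcal H_d)$, $\mathcal R=\operatorname{Sym}^{s-1}(\mathcal H_d)$ and $j=1$. The task is to construct a linear lift $\mathscr L:\mathcal L(\mathcal R)\to\mathcal L(\mathcal S)$ with $\operatorname{Tr}_1\circ\mathcal J_{\mathcal S}\circ\mathscr L=\mathcal J_{\mathcal R}$.

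The natural candidate is to tensor with the identity and symmetrize:
\[
\mathscr L(A):=\kappa\,\Pi_{\mathrm{sym}}^{(d,s)}\bigl(I_{\mathcal H_d}\otimes \mathcal J_{\mathcal R}(A)\bigr)\Pi_{\mathrm{sym}}^{(d,s)},
\]
possibly corrected by a term proportional to $\operatorname{Tr}(A)\Pi_{\mathrm{sym}}^{(d,s)}$. This map is $\mathrm U(d)$-equivariant, hence so is $\operatorname{Tr}_1\circ\mathscr L$. Since $\operatorname{Sym}^{s-1}$ is irreducible, I will not rely on Schur's lemma alone, because $\mathcal L(\mathcal R)$ is not irreducible as a $\mathrm U(d)$-module. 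Instead I will compute $\operatorname{Tr}_1 \mathscr L(A)$ directly.

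Write $\Pi_{\mathrm{sym}}^{(d,s)}=\frac1s\sum_{k=1}^s(1\,k)\,(I\otimes\Pi_{\mathrm{sym}}^{(d,s-1)})$. Expanding then gives a combination of $A$ and $\operatorname{Tr}(A)\,\Pi_{\mathrm{sym}}^{(d,s-1)}$. More precisely, I expect
\[
\operatorname{Tr}_1\bigl[\Pi_{\mathrm{sym}}^{(d,s)}(I\otimes A)\Pi_{\mathrm{sym}}^{(d,s)}\bigr]=\alpha A+\beta\operatorname{Tr}(A)\Pi_{\mathrm{sym}}^{(d,s-1)}
\]
for explicit positive $\alpha,\beta$ depending on $d,s$. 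This is the standard partial-trace identity for symmetric projectors, obtained by using $\operatorname{Tr}_1(1\,k)=I$ for the relevant swap terms and counting.

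Taking $A=\Pi_{\mathrm{sym}}^{(d,s-1)}$ determines the relation $\alpha+\beta D_{s-1}^+$ via the known value $\operatorname{Tr}_1\Pi_{\mathrm{sym}}^{(d,s)}=\frac{D_s^+}{D_{s-1}^+}\Pi_{\mathrm{sym}}^{(d,s-1)}$. With $\alpha\neq0$, the lift is then
\[
\mathscr L(A)=\alpha^{-1}\Bigl(\Pi_{\mathrm{sym}}^{(d,s)}(I\otimes A)\Pi_{\mathrm{sym}}^{(d,s)}-\beta\operatorname{Tr}(A)\,\frac{D_{s-1}^+}{D_s^+}\Pi_{\mathrm{sym}}^{(d,s)}\Bigr).
\]
This works because the trace term's partial trace is exactly $\beta\operatorname{Tr}(A)\Pi_{\mathrm{sym}}^{(d,s-1)}$. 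Proposition~\ref{prop:covariant-lifting} then gives the descent. Note that the lift need not be positive, only linear, which is all the proposition requires.

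The main obstacle is the partial-trace computation: showing that the output has exactly the two-term form and that $\alpha\neq0$. If the combinatorics get messy, I would instead argue spectrally. Decompose $A$ into $\mathrm U(d)$-isotypic components of $\mathcal L(\operatorname{Sym}^{s-1}(\mathcal H_d))\cong\bigoplus_{k=0}^{s-1}L_{k}$, where these components are multiplicity-free. The equivariant map $\operatorname{Tr}_1\circ\Pi(I\otimes\cdot)\Pi$ then acts by a scalar $\alpha_k$ on each component, and it suffices to show every $\alpha_k\neq0$; invertibility alone is enough for the lift. To prove $\alpha_k\neq0$, I would test on $A=(|v\rangle\langle v|)^{\otimes(s-1)}$, whose image is a positive combination of $(|v\rangle\langle v|)^{\otimes(s-1)}$-type terms. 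Alternatively, the fallback is to cite the descent result [Lemma~VI.3,\cite{BroadbentGonzalezGuillenSchuknecht2022}] directly.
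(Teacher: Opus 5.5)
\textbf{Verdict: genuine gap.} Privacy heredity and the downward iteration are fine. The one-step design descent, however, is not established: your explicit route rests on a false identity, and your fallback leaves its key step unproved.

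\textbf{The explicit route fails for $s\ge 3$.} Write $\Pi_s:=\Pi^{(d,s)}_{\mathrm{sym}}=\frac1s\sum_{k}(1\,k)(I\otimes\Pi_{s-1})$ on the left and its adjoint form on the right.
\begin{itemize}
\item The term with $k=l=1$ gives $dA$, and the $2(s-1)$ terms with exactly one index equal to $1$ give $A$ each, for a total of $(d+2s-2)A$.
\item The $(s-1)^2$ terms with $k,l\ge2$ do not produce $\operatorname{Tr}(A)\Pi_{s-1}$. Compress to $\operatorname{Sym}^{s-1}(\mathcal H_d)$, which does not change the total, and use the permutation symmetry of $A$. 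Each of these terms then equals $\Pi_{s-1}(I\otimes\operatorname{Tr}_1A)\Pi_{s-1}$.
\end{itemize}
Hence
\[
\operatorname{Tr}_1\bigl[\Pi_s(I\otimes A)\Pi_s\bigr]=\frac{1}{s^2}\Bigl[(d+2s-2)\,A+(s-1)^2\,\Pi_{s-1}\bigl(I\otimes\operatorname{Tr}_1A\bigr)\Pi_{s-1}\Bigr].
\]
This collapses to your form $\alpha A+\beta\operatorname{Tr}(A)\Pi_{s-1}$ only for $s=2$, where $\operatorname{Tr}_1A=\operatorname{Tr}(A)$ is a scalar. A concrete counterexample: take $d=2$, $s=3$, $A=|00\rangle\langle00|$, and $|\Psi^+\rangle=(|01\rangle+|10\rangle)/\sqrt2$. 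The left side equals $\frac{10}{9}|00\rangle\langle00|+\frac29|\Psi^+\rangle\langle\Psi^+|$. It vanishes on $|11\rangle$ but not on $|\Psi^+\rangle$, so it is not in $\operatorname{span}\{A,\Pi_2\}$. Your explicit $\mathscr L$ therefore does not satisfy $\operatorname{Tr}_1\mathscr L(A)=A$; the problem is the ansatz, not messy combinatorics.

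\textbf{The spectral fallback is missing its key step.} Its skeleton is right, and it coincides with the paper's proof. Set $\mathcal J_s^+(A)=\Pi_s(I\otimes A)\Pi_s$ and $\mathcal K_s^+=\operatorname{Tr}_1\circ\mathcal J_s^+$; the lift $\mathscr L=\mathcal J_s^+\circ(\mathcal K_s^+)^{-1}$ is exactly the paper's, and invertibility of $\mathcal K_s^+$ is all that is needed. But you do not show that no Schur scalar $\alpha_k$ vanishes. Knowing something about the image of the single test operator $(|v\rangle\langle v|)^{\otimes(s-1)}$ does not rule out that some isotypic component is annihilated. You would have to check its projection onto each component, and you do not.

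The missing idea is the Hilbert--Schmidt adjoint relation the paper uses:
\begin{itemize}
\item The adjoint of $\operatorname{Tr}_1$ is $B\mapsto I\otimes B$, so $\langle A,\mathcal K_s^+(A)\rangle_{\mathrm{HS}}=\|\mathcal J_s^+(A)\|_{\mathrm{HS}}^2$, i.e.\ $\mathcal K_s^+=(\mathcal J_s^+)^{*}\mathcal J_s^+$.
\item $\mathcal J_s^+$ is injective, because $\langle v|^{\otimes s}\mathcal J_s^+(A)|w\rangle^{\otimes s}=\langle v|w\rangle\,\langle v|^{\otimes(s-1)}A|w\rangle^{\otimes(s-1)}$ and the set $\{\langle v|w\rangle\neq0\}$ is dense.
\item Hence $\mathcal K_s^+$ is positive definite and invertible.
\end{itemize}
The corrected formula above gives the same conclusion directly:
\[
\langle A,\mathcal K_s^+(A)\rangle_{\mathrm{HS}}=s^{-2}\bigl[(d+2s-2)\|A\|_{\mathrm{HS}}^2+(s-1)^2\|\operatorname{Tr}_1A\|_{\mathrm{HS}}^2\bigr]>0.
\]

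\textbf{Even less is needed.} Proposition~\ref{prop:covariant-lifting} never uses covariance of $\mathscr L$. So any linear right inverse of $\operatorname{Tr}_1:\mathcal L(\operatorname{Sym}^s(\mathcal H_d))\to\mathcal L(\operatorname{Sym}^{s-1}(\mathcal H_d))$ works. Surjectivity of this map already follows from $\operatorname{Tr}_1\bigl[(|v\rangle\langle w|)^{\otimes s}\bigr]=\langle w|v\rangle\,(|v\rangle\langle w|)^{\otimes(s-1)}$, by the same density argument.

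Your last resort, citing \cite[Lemma~VI.3]{BroadbentGonzalezGuillenSchuknecht2022}, is an appeal to the known result rather than a proof plan.
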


\begin{proof}
Privacy descent follows from
Proposition~\ref{prop:symmetric-privacy-heredity}.
It remains to establish one-step design descent.

For
\(
    A
    \in
    \mathcal L
    \left(
       \operatorname{Sym}^{s-1}(\mathcal H_d)
    \right),
\)
identify \(A\) with its extension by zero to
\(
    \mathcal H_d^{\otimes(s-1)}
\),
and define
\begin{equation}
\label{eq:Jplus}
    \mathcal J_s^+(A)
    :=
    \Pi_{\mathrm{sym}}^{(d,s)}
    (I_{\mathcal H_d}\otimes A)
    \Pi_{\mathrm{sym}}^{(d,s)}.
\end{equation}
Then
\[
    \mathcal J_s^+(A)
    \in
    \mathcal L
    \left(
       \operatorname{Sym}^s(\mathcal H_d)
    \right).
\]
Define also
\begin{equation}
\label{eq:Kplus}
    \mathcal K_s^+(A)
    :=
    \operatorname{Tr}_1
    \left(
       \mathcal J_s^+(A)
    \right).
\end{equation}
By~\eqref{eq:symmetric-Red},
\[
    \mathcal K_s^+(A)
    \in
    \mathcal L
    \left(
       \operatorname{Sym}^{s-1}(\mathcal H_d)
    \right).
\]

We first prove that
\(
    \mathcal J_s^+
\)
is injective.  Suppose that
\(
    \mathcal J_s^+(A)=0
\).
For arbitrary
\(
    |v\rangle,|w\rangle\in\mathcal H_d
\),
the vectors
\(
    |v\rangle^{\otimes s}
\)
and
\(
    |w\rangle^{\otimes s}
\)
belong to the symmetric subspace, so
\begin{align}
    0
    &=
    \langle v|^{\otimes s}
       \mathcal J_s^+(A)
       |w\rangle^{\otimes s}
    \notag\\
    &=
    \langle v|^{\otimes s}
       (I\otimes A)
       |w\rangle^{\otimes s}
    \notag\\
    &=
    \langle v|w\rangle
    \,
    \langle v|^{\otimes(s-1)}
       A
       |w\rangle^{\otimes(s-1)}.
\label{eq:Jplus-injective}
\end{align}
Whenever
\(
    \langle v|w\rangle\neq0
\),
we obtain
\[
    \langle v|^{\otimes(s-1)}
       A
       |w\rangle^{\otimes(s-1)}
    =
    0.
\]
The left-hand side is continuous in \((v,w)\), while the set
\(
    \{(v,w):\langle v|w\rangle\neq0\}
\)
is dense.  Hence the same equality holds for all \(v,w\).
Since pure powers span
\(
    \operatorname{Sym}^{s-1}(\mathcal H_d)
\),
all matrix elements of \(A\) vanish, and therefore \(A=0\).
Thus
\(
    \mathcal J_s^+
\)
is injective.

Equip both operator spaces with the Hilbert--Schmidt inner product.
For
\(
    A,B
\in
\mathcal L(\operatorname{Sym}^{s-1}(\mathcal H_d))
\),
\begin{align}
    \langle A,\mathcal K_s^+(B)\rangle_{\mathrm{HS}}
    &=
    \operatorname{Tr}
    \left[
       A^\dagger
       \operatorname{Tr}_1
       \left(
          \Pi_{\mathrm{sym}}^{(d,s)}
          (I\otimes B)
          \Pi_{\mathrm{sym}}^{(d,s)}
       \right)
    \right]
    \notag\\
    &=
    \operatorname{Tr}
    \left[
       (I\otimes A^\dagger)
       \Pi_{\mathrm{sym}}^{(d,s)}
       (I\otimes B)
       \Pi_{\mathrm{sym}}^{(d,s)}
    \right]
    \notag\\
    &=
    \operatorname{Tr}
    \left[
       \bigl(
          \mathcal J_s^+(A)
       \bigr)^\dagger
       \mathcal J_s^+(B)
    \right]
    \notag\\
    &=
    \langle
       \mathcal J_s^+(A),
       \mathcal J_s^+(B)
    \rangle_{\mathrm{HS}}.
\end{align}
Hence
\[
    \mathcal K_s^+
    =
    (\mathcal J_s^+)^\ast
    \mathcal J_s^+.
\]
For \(A\neq0\),
\[
    \langle
       A,\mathcal K_s^+(A)
    \rangle_{\mathrm{HS}}
    =
    \|\mathcal J_s^+(A)\|_{\mathrm{HS}}^2
    >
    0.
\]
Therefore
\(
    \mathcal K_s^+
\)
is positive definite and, since its domain is finite-dimensional, is
invertible.

The symmetric projector commutes with \(U^{\otimes s}\), so
\begin{align}
    &\mathcal J_s^+
    \left(
       U^{\otimes(s-1)}
       A
       (U^\dagger)^{\otimes(s-1)}
    \right)
    \notag\\
    &\qquad=
    U^{\otimes s}
    \mathcal J_s^+(A)
    (U^\dagger)^{\otimes s}.
\end{align}
Taking the first partial trace and using covariance shows that
\(
    \mathcal K_s^+
\)
intertwines the level-\((s-1)\) conjugation action.  Consequently
\(
    (\mathcal K_s^+)^{-1}
\)
also intertwines it.

Define
\begin{equation}
\label{eq:symmetric-lift}
    \mathcal L_s^+
    :=
    \mathcal J_s^+
    \circ
    (\mathcal K_s^+)^{-1}.
\end{equation}
Then
\[
    \operatorname{Tr}_1
    \mathcal L_s^+(A)
    =
    A,
\]
and \(\mathcal L_s^+\) is covariant.  Therefore
Proposition~\ref{prop:covariant-lifting}
implies design descent from \(s\) to \(s-1\).
Iterating the same argument proves the statement for every
\(1\leq r<s\).
\end{proof}

\begin{corollary}\label{cor:symmetric-proper-subspaces}
Every nonzero subspace
\(
    \mathcal S
    \subset
    \operatorname{Sym}^s(\mathcal H_d)
\)
is Haar-twirl admissible and privacy-hereditary in every tensor coordinate.
More precisely,
\[
    \operatorname{Red}_j(\mathcal S)
    \subseteq
    \operatorname{Sym}^{s-1}(\mathcal H_d),
\]
and every nonzero reduced subspace is Haar-twirl admissible.
\end{corollary}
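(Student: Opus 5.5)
The plan has two parts. Part one shows that every nonzero $\mathcal S\subseteq\operatorname{Sym}^s(\mathcal H_d)$ is admissible. Part two shows that its reductions stay inside the symmetric subspace and are therefore admissible too. Both parts rest on the single-block case of Theorem~\ref{thm:qpb-classification} with $\lambda_0=(s)$, combined with the fact that $W_{(s)}$ is one-dimensional.

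For admissibility, note that $\mathcal S\subseteq\operatorname{im}\Pi_{(s)}\cong V_{(s)}\otimes W_{(s)}$. Since $\dim_{\mathbb C}W_{(s)}=1$, every vector of $\mathcal S$ has the form $\iota_{(s)}(v\otimes s_{(s)})$ for a fixed unit vector $s_{(s)}$. The computation in the first half of the proof of Corollary~\ref{cor:active-block-injectivity} then gives
\[
\operatorname{Tr}_{V_{(s)}}\bigl(\iota_{(s)}^\dagger|\psi\rangle\langle\phi|\iota_{(s)}\bigr)=\langle\phi|\psi\rangle\,|s_{(s)}\rangle\langle s_{(s)}| .
\]
This is condition \eqref{eq:single-block-fixed-marginal} with $\omega_{(s)}=|s_{(s)}\rangle\langle s_{(s)}|$, so $\mathcal S$ is admissible. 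Alternatively, one can restrict \eqref{eq:symmetric-Haar-replacer} to $\mathcal L(\mathcal S)\subseteq\mathcal L(\operatorname{Sym}^s(\mathcal H_d))$, which gives $\mathcal T^{(s)}\circ\mathcal J_{\mathcal S}=\mathcal R_{\Pi_{\mathrm{sym}}^{(d,s)}/D_s^+,\mathcal S}$ directly. This is the route I would take, since it is a single line.

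For the reduction, I would first use monotonicity of $\operatorname{Red}_j$. Definition~\ref{def:canonical-reduced-subspace} spans contractions $C^{(s)}_{j,v}|\psi\rangle$ with $\psi\in\mathcal S$, and these form a subset of the corresponding contractions for $\operatorname{Sym}^s(\mathcal H_d)$. Hence $\operatorname{Red}_j(\mathcal S)\subseteq\operatorname{Red}_j(\operatorname{Sym}^s(\mathcal H_d))=\operatorname{Sym}^{s-1}(\mathcal H_d)$ by \eqref{eq:symmetric-Red}. (The permutation argument in the proof of Proposition~\ref{prop:symmetric-privacy-heredity} works for any $j$ after relabelling.) Next I would check that the reduction is nonzero. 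Take a nonzero $\psi\in\mathcal S$. By Lemma~\ref{lem:Red-support}, $\operatorname{Tr}_j|\psi\rangle\langle\psi|$ has trace $\|\psi\|^2>0$ and is supported in $\operatorname{Red}_j(\mathcal S)$, so $\operatorname{Red}_j(\mathcal S)\neq0$. Finally, apply the admissibility part at level $s-1$ to the nonzero subspace $\operatorname{Red}_j(\mathcal S)\subseteq\operatorname{Sym}^{s-1}(\mathcal H_d)$; for $s-1=1$ the symmetric space is all of $\mathcal H_d$ and the twirl is the completely depolarizing map. This shows the reduction is admissible with replacement state $\Pi_{\mathrm{sym}}^{(d,s-1)}/D_{s-1}^+$, which is privacy heredity in every coordinate by Definition~\ref{qpb-heriditary}.

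There is no real obstacle here. The only points needing care are the nonvanishing of the reduction and the observation that \eqref{eq:symmetric-Red} holds for every $j$, not just $j=1$. Design heredity is not claimed for proper subspaces, so Theorem~\ref{thm:design-heredity-criterion} is not needed.
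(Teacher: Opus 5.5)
Your proposal is correct and takes the route the paper intends. The paper states this corollary without a separate proof, as an immediate consequence of Proposition~\ref{prop:symmetric-privacy-heredity}: restrict the replacer identity \eqref{eq:symmetric-Haar-replacer} to $\mathcal L(\mathcal S)$, and use $\operatorname{Red}_j(\mathcal S)\subseteq\operatorname{Red}_j(\operatorname{Sym}^s(\mathcal H_d))=\operatorname{Sym}^{s-1}(\mathcal H_d)$. Your checks that $\operatorname{Red}_j(\mathcal S)\neq\{0\}$ (which makes the word ``nonzero'' in the statement automatic) and that the case $s=2$ reduces to the completely depolarizing twirl on $\mathcal H_d$ fill in details the paper leaves implicit.
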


\subsubsection{The antisymmetric tower}\label{subsubsec:antisymmetric-heredity}

Assume
\(
    1\leq s\leq d
\).
Let
\(
    \Pi_{\mathrm{alt}}^{(d,s)}
\)
denote the orthogonal projection onto
\(
    \bigwedge^s\mathcal H_d
\),
and put
\[
    D_s^-
    :=
    \dim_{\mathbb C}
       \bigwedge^s\mathcal H_d
    =
    \binom ds.
\]

\begin{theorem}\label{thm:antisymmetric-privacy-heredity}
For
\(
    2\leq s\leq d
\)
and every tensor coordinate \(j\),
\begin{equation}
\label{eq:antisymmetric-Red}
    \operatorname{Red}_j
    \left(
       \bigwedge^s\mathcal H_d
    \right)
    =
    \bigwedge^{s-1}\mathcal H_d.
\end{equation}
Furthermore,
\begin{equation}
\label{eq:antisymmetric-replacer}
    \mathcal T^{(s)}(X)
    =
    \operatorname{Tr}(X)
    \frac{
       \Pi_{\mathrm{alt}}^{(d,s)}
    }{
       D_s^-
    }
\end{equation}
for every
\(
    X\in
    \mathcal L(\bigwedge^s\mathcal H_d).
\)
Thus the antisymmetric tower is fully privacy-hereditary.
\end{theorem}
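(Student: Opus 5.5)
The proof will follow the template of Proposition~\ref{prop:symmetric-privacy-heredity}, with the trivial character replaced by the sign character. By relabelling tensor coordinates it suffices to take \(j=1\), since \(\bigwedge^s\mathcal H_d\) is stable under all permutations and \(\operatorname{Red}_j\) transforms compatibly with them. We establish the reduction identity \eqref{eq:antisymmetric-Red} by proving two inclusions, and then read off the replacement formula \eqref{eq:antisymmetric-replacer} from the Schur--Weyl block structure.

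\emph{Inclusion \(\subseteq\).} Let \(|\psi\rangle\in\bigwedge^s\mathcal H_d\) and \(\pi\in\mathfrak S_{s-1}\), acting on factors \(2,\dots,s\). Then
\[
(I\otimes\Gamma_{\mathfrak S_{s-1}}(\pi))|\psi\rangle=\operatorname{sgn}(\pi)|\psi\rangle ,
\]
because the sign of \(\pi\) viewed in \(\mathfrak S_s\) equals its sign in \(\mathfrak S_{s-1}\). Since \(C^{(s)}_{1,v}\) commutes with permutations of the remaining factors, we get
\[
\Gamma_{\mathfrak S_{s-1}}(\pi)\,C^{(s)}_{1,v}|\psi\rangle=\operatorname{sgn}(\pi)\,C^{(s)}_{1,v}|\psi\rangle .
\]
Hence \(C^{(s)}_{1,v}|\psi\rangle\in\bigwedge^{s-1}\mathcal H_d\), which gives \(\operatorname{Red}_1(\bigwedge^s\mathcal H_d)\subseteq\bigwedge^{s-1}\mathcal H_d\).

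\emph{Inclusion \(\supseteq\).} It suffices to hit each basis vector of \eqref{eq:wedge-orthonormal-basis}. Given \(1\le i_1<\dots<i_{s-1}\le d\), we use \(s\le d\) to choose an index \(k\notin\{i_1,\dots,i_{s-1}\}\); this is the only place where \(s\le d\) enters. Contract the first factor of \(|e_k\wedge e_{i_1}\wedge\cdots\wedge e_{i_{s-1}}\rangle\) against \(\langle e_k|\). Expanding the defining sum \eqref{eq:normalized-wedge-vector}, only the permutations that place \(e_k\) in slot \(1\) survive. These form a copy of \(\mathfrak S_{s-1}\) with matching signs, so the result is
\[
\pm\sqrt{(s-1)!/s!}\;|e_{i_1}\wedge\cdots\wedge e_{i_{s-1}}\rangle=\pm\tfrac{1}{\sqrt s}\,|e_{i_1}\wedge\cdots\wedge e_{i_{s-1}}\rangle .
\]
(Ordering \(k\) among the \(i\)'s only changes the overall sign.) This coefficient is nonzero, so every basis vector of \(\bigwedge^{s-1}\mathcal H_d\) lies in the reduction. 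This bookkeeping of the surviving terms is the only mildly delicate step, and it amounts to checking that the coefficient is nonzero. Together with the first inclusion, this proves \eqref{eq:antisymmetric-Red}.

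\emph{Replacement formula.} By \eqref{eq:antisymmetric-schur-weyl-block}, \(\bigwedge^s\mathcal H_d=\mathcal H_{(1^s)}\cong V_{(1^s)}\otimes W_{(1^s)}\) with \(\dim W_{(1^s)}=1\). The blockwise twirl formula (Lemma~\ref{lem:blockwise-collective-Haar-twirl}) together with Corollary~\ref{cor:schur-integral} therefore gives
\[
\mathcal T^{(s)}(X)=\operatorname{Tr}(X)\,\frac{I_{\bigwedge^s}}{D_s^-}
\]
for every \(X\) supported on the block. Extending by zero, \(I_{\bigwedge^s}\) becomes \(\Pi_{\mathrm{alt}}^{(d,s)}\), which is \eqref{eq:antisymmetric-replacer}. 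Equivalently, admissibility follows from Corollary~\ref{cor:active-block-injectivity}. The same argument applies at every level \(r\) with \(1\le r\le d\).

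\emph{Full privacy heredity.} By \eqref{eq:antisymmetric-Red}, each reduced space \(\operatorname{Red}_j(\bigwedge^r\mathcal H_d)=\bigwedge^{r-1}\mathcal H_d\) is again a full sign block, hence Haar-twirl admissible by the replacement formula just established. Iterating down the tower from level \(s\) shows that the antisymmetric tower is fully privacy-hereditary.
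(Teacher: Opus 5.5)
Your proposal is correct and follows essentially the same route as the paper's proof. Like the paper, you reduce to $j=1$, prove $\subseteq$ by sign-covariance of the contraction under $\mathfrak S_{s-1}$, and prove $\supseteq$ by an explicit contraction that yields $\tfrac{1}{\sqrt s}$ times a decomposable wedge, then obtain the replacement formula by Schur averaging on the block with one-dimensional $W_{(1^s)}$. The only difference is that you use standard basis vectors and an unused index $k$, where the paper uses orthonormal $v_2,\dots,v_s$ and a unit $v_1$ orthogonal to their span.
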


\begin{proof}
Again it suffices to consider \(j=1\).
If
\(
    |\psi\rangle\in\bigwedge^s\mathcal H_d
\),
then every permutation \(\pi\in \mathfrak{S}_{s-1}\) of the remaining factors
satisfies
\[
    (I\otimes\Gamma_{\mathfrak{S}_{s-1}}(\pi))
    |\psi\rangle
    =
    \operatorname{sgn}(\pi)|\psi\rangle.
\]
Therefore
\begin{align}
    \Gamma_{\mathfrak{S}_{s-1}}(\pi)
       C_{1,v}^{(s)}|\psi\rangle
    &=
    C_{1,v}^{(s)}
       (I\otimes\Gamma_{\mathfrak{S}_{s-1}}(\pi))
       |\psi\rangle
    \notag\\
    &=
    \operatorname{sgn}(\pi)
       C_{1,v}^{(s)}|\psi\rangle.
\end{align}
Hence
\[
    \operatorname{Red}_1
    \left(
       \bigwedge^s\mathcal H_d
    \right)
    \subseteq
    \bigwedge^{s-1}\mathcal H_d.
\]

Conversely, decomposable exterior tensors span
\(
    \bigwedge^{s-1}\mathcal H_d
\).
Take orthonormal vectors
\(
    |v_2\rangle,\ldots,|v_s\rangle
\).
Since \(s\leq d\), choose a unit vector
\(
    |v_1\rangle
\)
orthogonal to their span.  With the normalization
\[
    |v_1\wedge\cdots\wedge v_s\rangle
    :=
    \frac{1}{\sqrt{s!}}
    \sum_{\pi\in \mathfrak{S}_s}
       \operatorname{sgn}(\pi)
       |v_{\pi(1)}\rangle
       \otimes\cdots\otimes
       |v_{\pi(s)}\rangle,
\]
only those permutations for which
\(
    \pi(1)=1
\)
survive contraction with
\(
    \langle v_1|
\).
Consequently,
\begin{align}
    C_{1,v_1}^{(s)}
    |v_1\wedge\cdots\wedge v_s\rangle &=
    \frac{1}{\sqrt{s!}}
    \sum_{\pi\in S_s:\,\pi(1)=1}
       \operatorname{sgn}(\pi)
       |v_{\pi(2)}\rangle
       \otimes\cdots\otimes
       |v_{\pi(s)}\rangle
    \notag\\
    &=
    \frac{\sqrt{(s-1)!}}{\sqrt{s!}}
       |v_2\wedge\cdots\wedge v_s\rangle
    \notag\\
    &=
    \frac{1}{\sqrt s}
       |v_2\wedge\cdots\wedge v_s\rangle.
\end{align}
Hence the reverse inclusion holds, proving
\eqref{eq:antisymmetric-Red}.

Finally,
\(
    \bigwedge^s\mathcal H_d
\)
is the irreducible \(\mathrm U(d)\)-module
\(
    V_{(1^s)}
\),
and
\(
    W_{(1^s)}
\)
is one-dimensional.  Schur averaging gives
\eqref{eq:antisymmetric-replacer}.
\end{proof}

As in the symmetric case, every nonzero subspace
\(
    \mathcal S
    \subseteq
    \bigwedge^s\mathcal H_d
\)
is privacy-hereditary.  However, unlike the symmetric tower, universal
design descent does not hold at every exterior degree.

To determine the precise threshold, let
\(
    \omega_1,\ldots,\omega_{d-1}
\)
denote the fundamental weights of \(\mathrm{SU}(d)\), and let
\(
    L_\gamma
\)
denote the irreducible \(\mathrm{SU}(d)\)-module of highest weight
\(\gamma\). 

\medskip
Let
\(
    \omega_0=\omega_d=0.
\)
\begin{lemma}\label{lem:exterior-conjugation-decomposition}
Let
\(
    m_s
    :=
    \min\{s,d-s\}.
\)
Then, as an \(\mathrm{SU}(d)\)-module,
\begin{equation}
\label{eq:End-wedge-decomposition}
    \mathcal L
    \left(
       \bigwedge^s\mathcal H_d
    \right)
    \cong
    \bigoplus_{r=0}^{m_s}
       L_{\omega_r+\omega_{d-r}}.
\end{equation}
Every irreducible summand occurs with multiplicity one.
\end{lemma}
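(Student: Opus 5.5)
We plan to identify \(\mathcal L(\bigwedge^s\mathcal H_d)\) with \(\bigwedge^s\mathcal H_d\otimes\overline{\bigwedge^s\mathcal H_d}\), using Proposition~\ref{prop:conjugation-coeff-prelim}. We then rewrite the conjugate factor by exterior duality and decompose the resulting tensor product of two exterior powers with the Pieri rule. Restricting from \(\mathrm U(d)\) to \(\mathrm{SU}(d)\) loses nothing here, since all determinant twists act trivially on \(\mathrm{SU}(d)\).

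First, for a unitary representation the conjugate is isomorphic to the dual, so \(\overline{\bigwedge^s\mathcal H_d}\cong(\bigwedge^s\mathcal H_d)^*\). By Proposition~\ref{prop:exterior-duality-prelim} this is \(\bigwedge^{d-s}\mathcal H_d\otimes\det^{-1}\). Hence, as \(\mathrm U(d)\)-modules,
\[
\mathcal L\Bigl(\textstyle\bigwedge^s\mathcal H_d\Bigr)\cong \Bigl(\textstyle\bigwedge^s\mathcal H_d\otimes\bigwedge^{d-s}\mathcal H_d\Bigr)\otimes{\det}^{-1}.
\]
Next we apply the vertical-strip Pieri rule \eqref{eq:Pieri-alt-prelim} with \(\lambda=(1^s)\) and \(r=d-s\). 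The partitions \(\nu\) with \(\ell(\nu)\le d\) for which \(\nu/(1^s)\) is a vertical \((d-s)\)-strip have at most two columns, so \(\nu=(2^k,1^{d-2k})\) with \(k\) boxes added in column two. The constraints are that column two has length \(k\le s\) and that the \(d-s-k\) boxes added to column one satisfy \(s+(d-s-k)\le d\), which always holds. Since \(\nu\) must be a partition, column two cannot exceed column one: \(k\le d-k\), which is automatic given \(k\le d-s\) and \(k\le s\). Counting boxes gives \(|\nu|=d\), so \(2k+(d-2k)=d\). Combining the conditions, \(k\) ranges over \(0\le k\le\min\{s,d-s\}=m_s\), each with multiplicity one.

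It remains to identify \(V_{(2^k,1^{d-2k})}\otimes\det^{-1}\) restricted to \(\mathrm{SU}(d)\). Its highest weight is \(\varepsilon_1+\dots+\varepsilon_k-\varepsilon_{d-k+1}-\dots-\varepsilon_d\) modulo the central part, so as an \(\mathrm{SU}(d)\) weight it equals \(\omega_k+\omega_{d-k}\). For \(k=0\) this is \(\omega_0+\omega_d=0\), the trivial module spanned by the identity operator.

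Irreducibility is preserved on restriction to \(\mathrm{SU}(d)\), because \(\mathrm U(d)=\mathrm{SU}(d)\cdot Z\) with \(Z\) acting by scalars. The summands remain pairwise inequivalent because their highest weights \(\omega_k+\omega_{d-k}\) differ for distinct \(k\le m_s\). Here \(\omega_k+\omega_{d-k}\) is understood as the symmetric pair \(\{k,d-k\}\), and since \(k\le d/2\) the pairs are distinct. This gives \eqref{eq:End-wedge-decomposition} with multiplicity one.

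The main obstacle is the careful bookkeeping in the Pieri step, namely confirming which two-column shapes occur and that each occurs exactly once. A dimension check \(\binom ds^2=\sum_k\dim L_{\omega_k+\omega_{d-k}}\) can be used as a consistency test. Alternatively, the standard identity \(\dim V_{(2^k,1^{d-2k})}=\binom dk^2-\binom d{k-1}^2\) telescopes to \(\binom d{m_s}^2=\binom ds^2\).
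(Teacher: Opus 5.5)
Your proposal is correct and follows essentially the same route as the paper. Both arguments use the exterior duality $(\bigwedge^s\mathcal H_d)^*\cong\bigwedge^{d-s}\mathcal H_d\otimes(\det)^{-1}$, then the vertical-strip Pieri rule on $(1^s)$ to obtain the shapes $(2^r,1^{d-2r})$ for $0\le r\le m_s$, and finally the determinant twist to read off the $\mathrm{SU}(d)$ highest weights $\omega_r+\omega_{d-r}$. Your explicit justification that irreducibility and pairwise inequivalence survive restriction to $\mathrm{SU}(d)$, and your telescoping dimension check, are small additions that the paper leaves implicit.
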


\begin{proof}
There is a \(\mathrm U(d)\)-equivariant isomorphism
\begin{equation}
\label{eq:wedge-dual}
    \left(
       \bigwedge^s\mathcal H_d
    \right)^*
    \cong
    \bigwedge^{d-s}\mathcal H_d
    \otimes
    (\det)^{-1}.
\end{equation}
Therefore
\begin{align}
    \mathcal L
    \left(
       \bigwedge^s\mathcal H_d
    \right)
    &\cong
    \bigwedge^s\mathcal H_d
    \otimes
    \left(
       \bigwedge^s\mathcal H_d
    \right)^*
    \notag\\
    &\cong
    \left(
       \bigwedge^s\mathcal H_d
       \otimes
       \bigwedge^{d-s}\mathcal H_d
    \right)
    \otimes
    (\det)^{-1}.
\end{align}
The Littlewood--Richardson rule gives
\begin{equation}
\label{eq:wedge-Pieri}
    \bigwedge^s\mathcal H_d
    \otimes
    \bigwedge^{d-s}\mathcal H_d
    \cong
    \bigoplus_{r=0}^{m_s}
       \mathbb S_{(2^r,1^{d-2r})}
       (\mathcal H_d).
\end{equation}
We start from the column partition
\(
    (1^s)
\),
and add a vertical strip of size \(d-s\).
If \(r\) of the added boxes are placed in rows already occupied
by the original column, the resulting partition is
\[
    (2^r,1^{d-2r}),
\]
and the inequalities
\[
    0\leq r\leq s,
    \qquad
    0\leq r\leq d-s
\]
give
\[
    0\leq r\leq m_s.
\]
The Pieri rule also shows that every such summand occurs once.

After padding the highest weight
\(
    (2^r,1^{d-2r})
\)
to \(d\) coordinates, tensoring by
\(
    (\det)^{-1}
\)
subtracts \(1\) from each coordinate.  The resulting weight is
\[
    (\underbrace{1,\ldots,1}_{r},
     \underbrace{0,\ldots,0}_{d-2r},
     \underbrace{-1,\ldots,-1}_{r}),
\]
whose \(\mathrm{SU}(d)\)-highest weight is
\[
    \omega_r+\omega_{d-r}.
\]
For \(r=0\), this is the zero weight, corresponding to the trivial
representation.  This proves
\eqref{eq:End-wedge-decomposition}.
\end{proof}

\begin{theorem}[Exact design heredity of the antisymmetric tower]
\label{thm:antisymmetric-design-heredity}
Let
\(
    2\leq s\leq d
\).
Then
\(
    \bigwedge^s\mathcal H_d
\)
is design-hereditary from level \(s\) to level \(s-1\) if and only if
\begin{equation}
\label{eq:antisymmetric-threshold}
       s\leq\frac{d+1}{2}.
\end{equation}
Consequently,
\(
    \bigwedge^s\mathcal H_d
\)
is \textsf{QPB}-hereditary if and only if
\(
    s\leq(d+1)/2
\).
\end{theorem}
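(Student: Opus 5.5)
The plan is to apply the design-heredity criterion, Theorem~\ref{thm:design-heredity-criterion}, to $\mathcal S=\bigwedge^s\mathcal H_d$ and $\mathcal R_1=\bigwedge^{s-1}\mathcal H_d$, using Theorem~\ref{thm:antisymmetric-privacy-heredity} for the reduction. I will first identify the coefficient spaces. Write $\mathcal H_d^{\otimes s}=\bigwedge^s\mathcal H_d\oplus(\bigwedge^s\mathcal H_d)^\perp$; both summands are $\mathrm U(d)$-invariant. Hence $\langle a|U^{\otimes s}|\psi\rangle$ with $\psi\in\bigwedge^s$ depends only on the projection of $a$ onto $\bigwedge^s$. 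It follows that $\mathfrak C_s(\bigwedge^s\mathcal H_d)$ is spanned by products $\langle a|V_{(1^s)}(U)|\psi\rangle\overline{\langle b|V_{(1^s)}(U)|\phi\rangle}$. This equals $\operatorname{Coeff}(V_{(1^s)}\otimes\overline{V_{(1^s)}})=\operatorname{Coeff}(\operatorname{Ad}_{V_{(1^s)}})$ by Proposition~\ref{prop:conjugation-coeff-prelim}. Similarly, $\mathfrak C_{s-1}(\bigwedge^{s-1})=\operatorname{Coeff}(\operatorname{Ad}_{V_{(1^{s-1})}})$.

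Next I will decompose these adjoint representations. By Lemma~\ref{lem:exterior-conjugation-decomposition}, as $\mathrm{SU}(d)$-modules $\mathcal L(\bigwedge^s)\cong\bigoplus_{r=0}^{m_s}L_{\omega_r+\omega_{d-r}}$ with $m_s=\min\{s,d-s\}$. As $\mathrm U(d)$-modules the determinant characters cancel in $V\otimes\overline V$, so every summand has trivial central character. Therefore the $\mathrm U(d)$-irreducibles are exactly the modules of highest weight $(1^r,0^{d-2r},(-1)^r)$, and these are pairwise inequivalent for distinct $r$. Proposition~\ref{prop:Coeff-direct-sum-prelim} then gives
\[
\mathfrak C_s(\textstyle\bigwedge^s)=\bigoplus_{r=0}^{m_s}\operatorname{Coeff}(\pi_r),\qquad
\mathfrak C_{s-1}(\textstyle\bigwedge^{s-1})=\bigoplus_{r=0}^{m_{s-1}}\operatorname{Coeff}(\pi_r),
\]
where $\pi_r$ denotes the irreducible $\mathrm U(d)$-module of highest weight $(1^r,0^{d-2r},(-1)^r)$. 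Schur orthogonality (Theorem~\ref{thm:Schur-orthogonality-prelim}, via Peter--Weyl) makes the $\operatorname{Coeff}(\pi_r)$ mutually orthogonal and nonzero. So the inclusion $\mathfrak C_{s-1}\subseteq\mathfrak C_s$ holds if and only if $\{0,\dots,m_{s-1}\}\subseteq\{0,\dots,m_s\}$, that is, if and only if $m_{s-1}\le m_s$.

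It remains to check this numerical condition. We have $m_{s-1}=\min\{s-1,d-s+1\}$ and $m_s=\min\{s,d-s\}$. If $s\le(d+1)/2$, then $2s-1\le d$, so $s-1\le d-s$. Hence $m_{s-1}=s-1$, and $m_s\ge s-1$ because $s\ge s-1$ and $d-s\ge s-1$. If $s>(d+1)/2$, then $2s\ge d+2$, so $d-s\le s-2$. Hence $m_s=d-s$, while $m_{s-1}\ge\min\{s-1,d-s+1\}\ge d-s+1>m_s$. This proves the threshold in \eqref{eq:antisymmetric-threshold}. The QPB statement then follows from Corollary~\ref{cor:qpb-heredity} together with the full privacy heredity of Theorem~\ref{thm:antisymmetric-privacy-heredity}. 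Because the symmetric group acts by sign, the reduction is the same in every coordinate $j$, so nothing changes with $j$.

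The main obstacle is the first step: justifying rigorously that $\mathfrak C_s(\bigwedge^s)$ equals the full $\operatorname{Coeff}(V\otimes\overline V)$, and that passing from $\mathrm{SU}(d)$ to $\mathrm U(d)$ loses no distinctions between summands. For the first point, I will check that $a,b$ range over all of $\bigwedge^s$ and $\psi,\phi$ over all of $\bigwedge^s$. The products then span all matrix coefficients of the tensor product $V\otimes\overline V$ in a product basis, which span $\operatorname{Coeff}(V\otimes\overline V)$. For the second point, I will verify that the highest weights for distinct $r$ remain distinct as $\mathrm U(d)$-weights.
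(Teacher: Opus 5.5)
Your proposal is correct and follows the paper's proof. Like the paper, you identify both restricted coefficient spaces with coefficient spaces of the conjugation representations, decompose them via Lemma~\ref{lem:exterior-conjugation-decomposition}, use Peter--Weyl orthogonality to reduce design heredity to $m_{s-1}\le m_s$, and then check the threshold. Your explicit handling of the $\mathrm U(d)$ versus $\mathrm{SU}(d)$ passage (trivial central character, pairwise distinct weights $(1^r,0^{d-2r},(-1)^r)$) is a slightly more careful version of the paper's remark that the center acts trivially by conjugation.
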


\begin{proof}
Since
\(
    \bigwedge^s\mathcal H_d
\)
is invariant under the collective \(\mathrm U(d)\)-action, its
restricted coefficient space is the coefficient space of the
conjugation representation on
\(
    \mathcal L(\bigwedge^s\mathcal H_d)
\).
The central subgroup
\(
    \mathrm U(1)\subseteq\mathrm U(d)
\)
acts by scalars on
\(
    \bigwedge^s\mathcal H_d
\)
and hence acts trivially by conjugation.  It is therefore sufficient to
use the \(\mathrm{SU}(d)\)-decomposition in
Lemma~\ref{lem:exterior-conjugation-decomposition}.

By Peter--Weyl orthogonality,
\begin{equation}
\label{eq:wedge-coeff-s}
    \mathfrak C_s
    \left(
       \bigwedge^s\mathcal H_d
    \right)
    =
    \bigoplus_{r=0}^{m_s}
       \operatorname{Coeff}
       \left(
          L_{\omega_r+\omega_{d-r}}
       \right),
\end{equation}
where
\[
    m_s
    =
    \min\{s,d-s\}.
\]
Similarly,
\begin{equation}
\label{eq:wedge-coeff-sminus1}
    \mathfrak C_{s-1}
    \left(
       \bigwedge^{s-1}\mathcal H_d
    \right)
    =
    \bigoplus_{r=0}^{m_{s-1}}
       \operatorname{Coeff}
       \left(
          L_{\omega_r+\omega_{d-r}}
       \right),
\end{equation}
where
\[
    m_{s-1}
    =
    \min\{s-1,d-s+1\}.
\]
Since matrix-coefficient spaces of inequivalent irreducible representations are orthogonal, the inclusion required by Theorem~\ref{thm:design-heredity-criterion} holds when
\begin{equation}
\label{eq:m-threshold}
    m_{s-1}
    \leq
    m_s.
\end{equation}

We now evaluate~\eqref{eq:m-threshold}.

If
\(
    s\leq d/2
\),
then
\[
    m_s=s,
    \qquad
    m_{s-1}=s-1,
\]
and hence
\(
    m_{s-1}\leq m_s
\).

If \(d=2q+1\) is odd and
\(
    s=q+1=(d+1)/2
\),
then
\[
    m_s=q
    =
    m_{s-1},
\]
and again the inclusion holds.

These two possibilities hold when
\[
    s\leq\frac{d+1}{2}.
\]

Finally, if
\(
    s>(d+1)/2
\),
then
\[
    m_s=d-s,
    \qquad
    m_{s-1}=d-s+1,
\]
and therefore
\(
    m_{s-1}>m_s.
\)
The lower coefficient space then contains the additional nonzero
Peter-Weyl summand
\(
    \operatorname{Coeff}
    \left(
       L_{\omega_{d-s+1}+\omega_{s-1}}
    \right)
\)
which is absent from the upper coefficient space.  Thus design
heredity fails.

Privacy heredity holds for all
\(
    2\leq s\leq d
\)
by
Theorem~\ref{thm:antisymmetric-privacy-heredity}.
The final assertion follows.
\end{proof}

\begin{example}[Failure of design heredity at the top exterior power]
\label{ex:top-exterior-failure}
Take \(s=d\).  Then
\[
    \dim_{\mathbb C}\bigwedge^d\mathcal H_d
    =
    1
\]
and
\[
    U^{\otimes d}
    \big|_{\bigwedge^d\mathcal H_d}
    =
    \det(U).
\]
Hence conjugation by
\(
    U^{\otimes d}
\)
acts trivially on
\(
    \mathcal L(\bigwedge^d\mathcal H_d)
\).
Consequently every weighted unitary ensemble, even the ensemble
consisting only of the identity, is an exact restricted design on
\(
    \bigwedge^d\mathcal H_d
\).

On the other hand,
\[
    \dim_{\mathbb C}
       \bigwedge^{d-1}\mathcal H_d
    =
    d.
\]
For \(d\geq2\), conjugation on this space is nontrivial, and the
identity-only ensemble does not reproduce its Haar twirl.  Thus
\(
    \bigwedge^d\mathcal H_d
\)
is privacy-hereditary but not design-hereditary.  This gives an
explicit demonstration that privacy heredity does not imply design
heredity.
\end{example}

\subsubsection{Canonical carriers in a general Schur-Weyl block}
\label{subsubsec:canonical-Schur-carriers}

We now consider the canonical single-copy carriers arising in an
arbitrary Schur-Weyl block.  At tensor level \(s\), let
\begin{equation}
\label{eq:SW-level-s-carrier}
    \mathcal H_d^{\otimes s}
    \cong
    \bigoplus_{\lambda\in\Lambda(d,s)}
      \left( V_\lambda\otimes W_\lambda \right).
\end{equation}

For a unit vector
\(
    |\eta\rangle\in W_\lambda
\),
define the ambient carrier
\begin{equation}
\label{eq:canonical-carrier}
    \mathcal S_{\lambda,\eta}
    :=
    \iota_\lambda^{(s)}
    \left(
       V_\lambda
       \otimes
       \operatorname{span}_{\mathbb{C}}\{|\eta\rangle\}
    \right)
    \subseteq
    \mathcal H_d^{\otimes s}.
\end{equation}
This subspace is Haar-twirl admissible, because the multiplicity-space state is
fixed and the \(\mathrm U(d)\)-action on \(V_\lambda\) is irreducible.

For the branching theorem, we contract the \emph{last} tensor factor.The remaining first \((s-1)\) tensor factors carry the standard subgroup
\(
    \mathfrak{S}_{s-1}
    \subset 
    \mathfrak{S}_s,
\)
which fixes the symbol \(s\).  The standard symmetric-group branching
rule therefore takes the form
\begin{equation}
\label{eq:Specht-branching}
    W_\lambda
    \downarrow_{\mathfrak{S}_{s-1}}
    \cong
    \bigoplus_{\mu\nearrow\lambda}
       W_\mu,
\end{equation}
where
\(
    \mu\nearrow\lambda
\)
means that the Young diagram of \(\mu\vdash s-1\) is obtained by removing one removable box from \(\lambda\).

Since the restriction is multiplicity-free, the \(W_\mu\)-isotypic
subspace in
\(
    W_\lambda\downarrow_{\mathfrak{S}_{s-1}}
\)
is canonical.  Let
\[
    q_{\mu\leftarrow\lambda}
    :
    W_\lambda
    \longrightarrow
    W_\mu
\]
denote the corresponding orthogonal projection under the fixed
branching identification, and define the active branch set
\begin{equation}
\label{eq:active-branch-set-correct}
    \mathcal B(\lambda,\eta)
    :=
    \left\{
       \mu\nearrow\lambda:
       q_{\mu\leftarrow\lambda}|\eta\rangle
       \neq0
    \right\}.
\end{equation}

\begin{theorem}[Branching classification for canonical Schur carriers]
\label{thm:canonical-carrier-branching}
For
\(
    \mathcal S_{\lambda,\eta}
\)
defined in~\eqref{eq:canonical-carrier},
\begin{equation}
\label{eq:carrier-Red}
\begin{split}
    \operatorname{Red}_s
    \left(
       \mathcal S_{\lambda,\eta}
    \right)
    &=
    \bigoplus_{\mu\in\mathcal B(\lambda,\eta)}
       \iota_\mu^{(s-1)}
       \left(
          V_\mu
          \otimes
          \operatorname{span}_{\mathbb{C}}
          \{
             q_{\mu\leftarrow\lambda}|\eta\rangle
          \}
       \right).
\end{split}
\end{equation}
Consequently:
\begin{enumerate}[label=\textup{(\roman*)}]
\item
\(
    \mathcal S_{\lambda,\eta}
\)
is privacy-hereditary under last-coordinate reduction if and only if
\begin{equation}
\label{eq:one-active-branch}
    |\mathcal B(\lambda,\eta)|=1.
\end{equation}

\item
If
\(
    \mathcal B(\lambda,\eta)=\{\mu\}
\),
then
\(
    \mathcal S_{\lambda,\eta}
\)
is design-hereditary if and only if
\begin{equation}
\label{eq:carrier-coeff-inclusion}
\begin{split}
    \operatorname{Coeff}
    \left(
       V_\mu\otimes\overline{V_\mu}
    \right)
    \subseteq
    \operatorname{Coeff}
    \left(
       V_\lambda\otimes\overline{V_\lambda}
    \right).
\end{split}
\end{equation}

\item
Thus
\(
    \mathcal S_{\lambda,\eta}
\)
is \textsf{QPB}-hereditary if and only if
\eqref{eq:one-active-branch} and
\eqref{eq:carrier-coeff-inclusion} both hold.
\end{enumerate}
\end{theorem}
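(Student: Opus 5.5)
The approach is to treat the last-factor contraction as an equivariant map and read off its image from Schur's lemma, the one-box Pieri rule (Corollary~\ref{cor:one-box-Pieri-prelim}) and the branching rule (Theorem~\ref{thm:branching-prelim}).

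\textbf{Equivariance and block form.} Consider the bilinear map $(\langle v|,|\psi\rangle)\mapsto C^{(s)}_{s,v}|\psi\rangle$, i.e.\ the linear map $\mathcal H_d^*\otimes\mathcal H_d^{\otimes s}\to\mathcal H_d^{\otimes(s-1)}$. It intertwines $\overline{U}\otimes U^{\otimes s}$ with $U^{\otimes(s-1)}$, by the computation in Lemma~\ref{lem:partial-trace-covariance}. It also commutes with the standard $\mathfrak S_{s-1}$, which permutes the first $s-1$ factors and fixes the symbol $s$. Restrict it to $\mathcal H_d^*\otimes\iota_\lambda(V_\lambda\otimes W_\lambda)$ and decompose $W_\lambda\!\downarrow_{\mathfrak S_{s-1}}=\bigoplus_{\mu\nearrow\lambda}W_\mu$. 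By Schur's lemma for $\mathrm U(d)\times\mathfrak S_{s-1}$, the component from $\mathcal H_d^*\otimes V_\lambda\otimes W_\mu$ into the target must land in $\iota^{(s-1)}_\mu(V_\mu\otimes W_\mu)$ and has the form $\varphi_\mu\otimes I_{W_\mu}$. Here $\varphi_\mu\in\operatorname{Hom}_{\mathrm U(d)}(\mathcal H_d^*\otimes V_\lambda,V_\mu)$, which is at most one-dimensional by \eqref{eq:contraction-Hom-Pieri}. Hence
\[
C^{(s)}_{s,v}\,\iota_\lambda(|x\rangle\otimes|\eta\rangle)=\sum_{\mu\nearrow\lambda}\iota^{(s-1)}_\mu\bigl(\varphi_\mu(\langle v|\otimes|x\rangle)\otimes q_{\mu\leftarrow\lambda}|\eta\rangle\bigr).
\]
It follows that $\operatorname{Red}_s(\mathcal S_{\lambda,\eta})$ is contained in the right-hand side of \eqref{eq:carrier-Red}.

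\textbf{Equality, and the main obstacle.} I expect the main difficulty to be showing that every $\varphi_\mu$ with $\mu\nearrow\lambda$ (automatically $\ell(\mu)\le d$) is nonzero. I would use $\operatorname{Red}_s(\mathcal H_d^{\otimes s})=\mathcal H_d^{\otimes(s-1)}$, which holds by contracting product vectors. Only $\lambda\searrow\mu$ can feed the block $V_\mu\otimes W_\mu$, through the block maps above. If $\varphi_\mu=0$ for a fixed pair $(\lambda,\mu)$, one then expects a dimension or weight count to show that the remaining $\lambda'\searrow\mu$ cannot fill $V_\mu\otimes W_\mu$. Each $\lambda'$ contributes only the subspace $V_\mu\otimes q_{\mu\leftarrow\lambda'}(W_{\lambda'})$, but these sit in different embeddings, so the count is not immediate. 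If that bookkeeping gets messy, I would fall back on an explicit check: contract a highest-weight vector of $V_\lambda$ with the dual vector of the basis vector $e_i$, where $i$ is the row of the removed box. Once the $\varphi_\mu\neq0$ are available, each image $\varphi_\mu(\mathcal H_d^*\otimes V_\lambda)$ equals all of $V_\mu$ by irreducibility. The span $\operatorname{Red}_s(\mathcal S_{\lambda,\eta})$ is $\mathrm U(d)$-invariant, since it is closed under $v\mapsto Uv$, $x\mapsto V_\lambda(U)x$. It therefore lies in $\bigoplus_{\mu\in\mathcal B}V_\mu\otimes q_{\mu\leftarrow\lambda}\eta$, whose summands are pairwise inequivalent irreducibles. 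Its projection onto each summand is onto, so it equals the full direct sum.

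\textbf{Deducing (i)--(iii).} For (i): if $\mathcal B(\lambda,\eta)=\{\mu\}$, the reduced space is a canonical carrier and is admissible (single-block case of Theorem~\ref{thm:qpb-classification}). If $|\mathcal B|\ge2$, the reduced space contains a nonzero vector lying in a single block $\mu$, which has $\|\Pi_\mu\psi\|^2=\|\psi\|^2$. It also contains a nonzero vector in a different block, which has $\|\Pi_\mu\psi\|^2=0$. Condition~\ref{cond1} fails, so privacy heredity fails by Theorem~\ref{thm:privacy-heredity}. For (ii): for a $\mathrm U(d)$-invariant carrier $\iota_\lambda(V_\lambda\otimes|\eta\rangle)$, the functions $\langle a|U^{\otimes s}|\psi\rangle\langle\phi|(U^\dagger)^{\otimes s}|b\rangle$ are products of matrix coefficients of $V_\lambda$ and $\overline{V_\lambda}$, and they span $\operatorname{Coeff}(V_\lambda\otimes\overline{V_\lambda})$. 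So $\mathfrak C_s(\mathcal S_{\lambda,\eta})=\operatorname{Coeff}(V_\lambda\otimes\overline{V_\lambda})$, and likewise $\mathfrak C_{s-1}(\mathcal R_s)=\operatorname{Coeff}(V_\mu\otimes\overline{V_\mu})$. Theorem~\ref{thm:design-heredity-criterion} then gives (ii). Part (iii) is the conjunction of (i) and (ii), as in Corollary~\ref{cor:qpb-heredity}.
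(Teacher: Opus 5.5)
Your overall architecture is the paper's. It runs through the equivariant last-factor contraction, Schur's lemma for $\mathrm U(d)\times\mathfrak S_{s-1}$ on each branch of $W_\lambda$, the containment, and the upgrade to equality via pairwise inequivalent irreducible summands. Parts (i)--(iii) then follow from the classification theorem, the coefficient-space identities and Theorem~\ref{thm:design-heredity-criterion}, and those steps are fine. The gap is exactly the step you flag: you never prove $\varphi_\mu\neq0$ for the given pair $\lambda\searrow\mu$. Without it you only have the inclusion in \eqref{eq:carrier-Red}, not equality, and parts (i)--(iii) all use the exact form of the reduced space.

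Your primary route cannot close this gap. A nonzero intertwiner $\varphi^{\lambda'}_\mu$ is onto $V_\mu$, and $q_{\mu\leftarrow\lambda'}$ maps $W_{\lambda'}$ onto $W_\mu$. So every $\lambda'\searrow\mu$ with $\varphi^{\lambda'}_\mu\neq0$ contributes the whole block $\iota^{(s-1)}_\mu(V_\mu\otimes W_\mu)$, not a proper piece of it. Hence $\operatorname{Red}_s(\mathcal H_d^{\otimes s})=\mathcal H_d^{\otimes(s-1)}$ only shows that some $\lambda'\searrow\mu$ has a nonzero branch map. That is consistent with $\varphi^{\lambda}_\mu=0$ for your particular $\lambda$, and no count of this type can distinguish the cases. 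The highest-weight fallback is only announced; it would need a concrete model of $\iota_\lambda$, or a Pieri leading-term fact, that you do not supply.

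The missing idea is elementary. Every $\psi\in\mathcal H_d^{\otimes s}$ satisfies $\psi=\sum_a\bigl(C^{(s)}_{s,e_a}\psi\bigr)\otimes e_a$, so a nonzero vector has a nonzero last-factor contraction. Apply this to $\psi=\iota_\lambda\bigl(x\otimes q_{\mu\leftarrow\lambda}^{\dagger}\zeta\bigr)$ with $x\neq0$ and $\zeta\neq0$. Since $q^{\dagger}_{\mu\leftarrow\lambda}\zeta$ lies entirely in the $\mu$-branch, your own block formula gives $C^{(s)}_{s,e_a}\psi=\iota^{(s-1)}_\mu\bigl(\varphi_\mu(\langle e_a|\otimes x)\otimes\zeta\bigr)$. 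Hence some $\varphi_\mu(\langle e_a|\otimes x)\neq0$.

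This is essentially the paper's argument. The paper applies Schur's lemma to the embedding rather than the contraction, writing $\iota_\lambda(x\otimes j_{\mu\to\lambda}\zeta)=\sum_a\iota^{(s-1)}_\mu(B_ax\otimes\zeta)\otimes e_a$. It then observes that $b_{\lambda\leftarrow\mu}=\sum_aB_a\otimes e_a\colon V_\lambda\to V_\mu\otimes\mathcal H_d$ is an isometry, so the induced contraction map $A_{\mu\leftarrow\lambda}$ cannot vanish. With this step inserted, the rest of your argument goes through as written.
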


\begin{proof}
Let
\(
   \pi_\alpha\colon \mathrm U(d) \longrightarrow \U(V_\alpha)
\)
for the irreducible representation carried by \(V_\alpha\).  We use the
fixed unitary Schur-Weyl identifications
\[
   \mathcal H_d^{\otimes r}
   =
   \bigoplus_{\alpha\in\Lambda(d,r)}
      \iota_\alpha^{(r)}(V_\alpha\otimes W_\alpha),
   \qquad r\in\{s-1,s\}.
\]

Define the joint last-coordinate contraction
\[
   C_s\colon
   \mathcal H_d^*\otimes\mathcal H_d^{\otimes s}
   \longrightarrow
   \mathcal H_d^{\otimes(s-1)},
   \qquad
   C_s(f\otimes\ket{\psi})
   :=
   (I_{\mathcal H_d^{\otimes(s-1)}}\otimes f)
   \ket{\psi}.
\]
Thus, if \(f=\bra{v}\), then
\(
   C_s(\bra{v}\otimes\ket{\psi})
   =
   C_{s,v}^{(s)}\ket{\psi}
\).
Let \(U^\vee f:=f\circ U^{-1}\) be the contragredient action on \(\mathcal H_d^*\). By definition,
\[
   C_s\!\left(U^\vee f\otimes U^{\otimes s}\ket{\psi}\right)
   =
   U^{\otimes(s-1)}C_s(f\otimes\ket{\psi}),
   \qquad U\in\mathrm U(d),
\]
so \(C_s\) is \(\mathrm U(d)\)-equivariant.  It is also
\(\mathfrak S_{s-1}\)-equivariant, where
\(\mathfrak S_{s-1}\) permutes the first \(s-1\) tensor factors.  Moreover,
the definition of the reduced subspace gives, for every
\(\mathcal S\subseteq\mathcal H_d^{\otimes s}\),
\begin{equation*}
   \operatorname{Red}_s(\mathcal S)
   =
   \operatorname{im}\big(C_s\big|_{\mathcal H_d^*\otimes\mathcal S}\big).
\end{equation*}
The image on the right is the linear span of all
last-coordinate contractions of vectors in \(\mathcal S\).

We next determine the restriction of \(C_s\) to the
\(\lambda\)-th Schur-Weyl block.  Under the multiplicity-free branching
decomposition, write
\[
   W_\lambda\!\downarrow_{\mathfrak S_{s-1}}
   =
   \bigoplus_{\mu\nearrow\lambda}
      j_{\mu\to\lambda}(W_\mu),
   \qquad
   j_{\mu\to\lambda}:=q_{\mu\leftarrow\lambda}^{\dagger}.
\]
The maps \(j_{\mu\to\lambda}\) are isometries, their ranges are mutually
orthogonal, and
\begin{equation*}
   \sum_{\mu\nearrow\lambda}
      j_{\mu\to\lambda}q_{\mu\leftarrow\lambda}
   =
 I_{W_\lambda}.
\end{equation*}

Fix a branch \(\mu\nearrow\lambda\).  The subspace
\[
   \iota_\lambda^{(s)}
   \bigl(V_\lambda\otimes j_{\mu\to\lambda}(W_\mu)\bigr)
   \subseteq
   \mathcal H_d^{\otimes(s-1)}\otimes\mathcal H_d
\]
is \(W_\mu\)-isotypic for the action of
\(\mathfrak S_{s-1}\).  Schur's lemma, applied to the
\(\mathfrak S_{s-1}\)-factor, therefore yields a linear map
\[
   b_{\lambda\leftarrow\mu}\colon
   V_\lambda\longrightarrow V_\mu\otimes\mathcal H_d
\]
such that, for every \(x\in V_\lambda\) and \(\zeta\in W_\mu\),
\begin{equation*}
   \iota_\lambda^{(s)}
   \bigl(x\otimes j_{\mu\to\lambda}\zeta\bigr)
   =
   \sum_{a=1}^{d}
      \iota_\mu^{(s-1)}
      \bigl(B_a^{\mu\leftarrow\lambda}x\otimes\zeta\bigr)
      \otimes e_a,
\end{equation*}
where \(\{e_a\}_{a=1}^{d}\) is an orthonormal basis of
\(\mathcal H_d\) and
\[
   b_{\lambda\leftarrow\mu}x
   =
   \sum_{a=1}^{d}
      B_a^{\mu\leftarrow\lambda}x\otimes e_a.
\]
Because the Schur-Weyl identifications and \(j_{\mu\to\lambda}\) are
isometric, \(b_{\lambda\leftarrow\mu}\) is an isometry.  Equivariance of
the Schur-Weyl identifications also shows that it is a
\(\mathrm U(d)\)-intertwiner.  It is therefore the nonzero Pieri
inclusion corresponding to the branch \(\mu\nearrow\lambda\). Therefore
\[
   \dim_{\mathbb C}
   \operatorname{Hom}_{\mathrm U(d)}
   \bigl(
      \mathcal H_d^*\otimes V_\lambda,V_\mu
   \bigr)
   =1.
\]

Define the Pieri map
\[
   A_{\mu\leftarrow\lambda}\colon
   \mathcal H_d^*\otimes V_\lambda
   \longrightarrow V_\mu
\]
by
\begin{equation*}
   A_{\mu\leftarrow\lambda}(f\otimes x)
   :=
   (\mathbf I_{V_\mu}\otimes f)
   b_{\lambda\leftarrow\mu}x
   =
   \sum_{a=1}^{d}
      f(e_a)B_a^{\mu\leftarrow\lambda}x.
\end{equation*}
The equivariance of \(b_{\lambda\leftarrow\mu}\) and of the evaluation
pairing implies that \(A_{\mu\leftarrow\lambda}\) is a
\(\mathrm U(d)\)-intertwiner.  It is nonzero. If
\(A_{\mu\leftarrow\lambda}=0\), then, for every \(a\) and every
\(x\in V_\lambda\),
\[
   B_a^{\mu\leftarrow\lambda}x
   =
   A_{\mu\leftarrow\lambda}(e_a^*\otimes x)
   =0,
\]
where \(\{e_a^*\}_{a=1}^{d}\) is the dual basis.  This would imply
\(b_{\lambda\leftarrow\mu}=0\), contradicting the fact that
\(b_{\lambda\leftarrow\mu}\) is an isometry.  Since \(V_\mu\) is
irreducible, the nonzero invariant subspace
\(\operatorname{im}A_{\mu\leftarrow\lambda}\subseteq V_\mu\) must be all
of \(V_\mu\); hence
\begin{equation*}
   \operatorname{im}A_{\mu\leftarrow\lambda}=V_\mu.
\end{equation*}

Applying \(C_s\) to the preceding Schur-Weyl identity gives
\begin{align*}
   C_s\!\left[
      f\otimes
      \iota_\lambda^{(s)}
      \bigl(x\otimes j_{\mu\to\lambda}\zeta\bigr)
   \right]
   &=
   \sum_{a=1}^{d}
      f(e_a)
      \iota_\mu^{(s-1)}
      \bigl(B_a^{\mu\leftarrow\lambda}x\otimes\zeta\bigr)
   \\
   &=
   \iota_\mu^{(s-1)}
   \bigl(
      A_{\mu\leftarrow\lambda}(f\otimes x)
      \otimes\zeta
   \bigr).
\end{align*}
Because
\(
   \sum_{\mu\nearrow\lambda}
   j_{\mu\to\lambda}q_{\mu\leftarrow\lambda}
   =I_{W_\lambda}
\), it follows that, as a map on
\(\mathcal H_d^*\otimes V_\lambda\otimes W_\lambda\),
\begin{equation*}
\begin{split}
   C_s\circ
   \bigl(
      I_{\mathcal H_d^*}
      \otimes\iota_\lambda^{(s)}
   \bigr)
   =
   \sum_{\mu\nearrow\lambda}
      \iota_\mu^{(s-1)}
      \circ
      \bigl(
         A_{\mu\leftarrow\lambda}
         \otimes q_{\mu\leftarrow\lambda}
      \bigr).
\end{split}
\end{equation*}
This is the branch decomposition of the contraction map.

Now set
\begin{equation*}
\begin{split}
   \mathcal I_{\lambda,\eta}
   &:=
   \operatorname{im}\!\left(
      \left.
      C_s\circ
      \bigl(
         I_{\mathcal H_d^*}
         \otimes\iota_\lambda^{(s)}
      \bigr)
      \right|_{
         \mathcal H_d^*\otimes
         (V_\lambda\otimes\operatorname{span}\{\ket{\eta}\})
      }
   \right).
\end{split}
\end{equation*}
By the contraction characterization of reduction,
\begin{equation*}
   \mathcal I_{\lambda,\eta}
   =
   \operatorname{Red}_s(\mathcal S_{\lambda,\eta}).
\end{equation*}
For \(f\in\mathcal H_d^*\) and \(x\in V_\lambda\), the branch
decomposition yields
\begin{equation*}
\begin{split}
   &C_s\!\left[
      f\otimes
      \iota_\lambda^{(s)}(x\otimes\ket{\eta})
   \right]
   =
   \sum_{\mu\nearrow\lambda}
      \iota_\mu^{(s-1)}
      \left(
         A_{\mu\leftarrow\lambda}(f\otimes x)
         \otimes q_{\mu\leftarrow\lambda}\ket{\eta}
      \right).
\end{split}
\end{equation*}
Consequently,
\begin{equation*}
   \mathcal I_{\lambda,\eta}
   \subseteq
   \mathcal D_{\lambda,\eta}
   :=
   \bigoplus_{\mu\in\mathcal B(\lambda,\eta)}
      \mathcal E_\mu,
\end{equation*}
where
\begin{equation*}
   \mathcal E_\mu
   :=
   \iota_\mu^{(s-1)}
   \left(
      V_\mu\otimes
      \operatorname{span}_{\mathbb{C}}\!\left\{
         q_{\mu\leftarrow\lambda}\ket{\eta}
      \right\}
   \right).
\end{equation*}
Notice that \(\mathcal B(\lambda,\eta)\neq\emptyset\); otherwise all the vectors \(q_{\mu\leftarrow\lambda}\ket{\eta}\) would vanish, and the
orthogonal branching resolution of the identity would give
\(\ket{\eta}=0\), contrary to the hypothesis that \(\ket{\eta}\) is a
unit vector.

Both \(\mathcal I_{\lambda,\eta}\) and
\(\mathcal D_{\lambda,\eta}\) are \(\mathrm U(d)\)-invariant. Each \(\mathcal E_\mu\) is an irreducible \(\mathrm U(d)\)-module equivalent to \(V_\mu\), and the modules \(\mathcal E_\mu\) for distinct
\(\mu\in\mathcal B(\lambda,\eta)\) are pairwise inequivalent. We now justify that the preceding inclusion is an equality.

Let \(Q\) be the orthogonal projection of
\(\mathcal D_{\lambda,\eta}\) onto
\(\mathcal I_{\lambda,\eta}\).  Since the representation is unitary and
\(\mathcal I_{\lambda,\eta}\) is invariant, its orthogonal complement
inside \(\mathcal D_{\lambda,\eta}\) is invariant as well; hence \(Q\)
commutes with the \(\mathrm U(d)\)-action.  If \(\Pi_\rho\) denotes the
orthogonal projection onto \(\mathcal E_\rho\), then
\[
   \Pi_\rho Q\big|_{\mathcal E_\sigma}
   \in
   \operatorname{Hom}_{\mathrm U(d)}
   (\mathcal E_\sigma,\mathcal E_\rho).
\]
Schur's lemma therefore gives
\[
   \Pi_\rho Q\big|_{\mathcal E_\sigma}=0
   \quad(\rho\neq\sigma),
   \qquad
   \Pi_\rho Q\big|_{\mathcal E_\rho}
   =c_\rho I_{\mathcal E_\rho}
\]
for some \(c_\rho\in\mathbb C\).  Since \(Q=Q^\dagger=Q^2\), we have
\(c_\rho\in\{0,1\}\).  Thus
\begin{equation*}
   \mathcal I_{\lambda,\eta}
   =
   \bigoplus_{\rho:\,c_\rho=1}\mathcal E_\rho.
\end{equation*}
For every active branch
\(\rho\in\mathcal B(\lambda,\eta)\), both
\(A_{\rho\leftarrow\lambda}\) and
\(q_{\rho\leftarrow\lambda}\ket{\eta}\) are nonzero.  Hence there exist
\(f\in\mathcal H_d^*\) and \(x\in V_\lambda\) for which
\[
   \iota_\rho^{(s-1)}
   \left(
      A_{\rho\leftarrow\lambda}(f\otimes x)
      \otimes q_{\rho\leftarrow\lambda}\ket{\eta}
   \right)
   \neq0.
\]
This vector is the \(\rho\)-component of an element of
\(\mathcal I_{\lambda,\eta}\), so \(c_\rho=1\).  As this holds for every
active branch,
\begin{equation*}
   \mathcal I_{\lambda,\eta}
   =
   \mathcal D_{\lambda,\eta}
   =
   \bigoplus_{\mu\in\mathcal B(\lambda,\eta)}
      \iota_\mu^{(s-1)}
      \left(
         V_\mu\otimes
         \operatorname{span}\!\left\{
            q_{\mu\leftarrow\lambda}\ket{\eta}
         \right\}
      \right).
\end{equation*}
Together with
\(
   \mathcal I_{\lambda,\eta}
   =\operatorname{Red}_s(\mathcal S_{\lambda,\eta})
\), this proves~\eqref{eq:carrier-Red}.

We next prove part~\textup{(i)}.  For every active branch \(\rho\), put
\[
   \ket{\widehat\eta_\rho}
   :=
   \frac{q_{\rho\leftarrow\lambda}\ket{\eta}}
        {\lVert q_{\rho\leftarrow\lambda}\ket{\eta}\rVert}.
\]
Suppose first that
\(\mathcal B(\lambda,\eta)=\{\mu\}\).  Then
\[
   \operatorname{Red}_s(\mathcal S_{\lambda,\eta})
   =
   \iota_\mu^{(s-1)}
   \bigl(
      V_\mu\otimes\operatorname{span}\{\ket{\widehat\eta_\mu}\}
   \bigr).
\]
Every operator \(Y\) supported on this reduced subspace has a unique
representation
\[
   Y
   =
   \iota_\mu^{(s-1)}
   \bigl(
      X\otimes
      \ket{\widehat\eta_\mu}\!\bra{\widehat\eta_\mu}
   \bigr)
   \bigl(\iota_\mu^{(s-1)}\bigr)^\dagger
\]
with \(X\in\mathcal L(V_\mu)\).  Schur averaging gives
\begin{align*}
   \mathcal T^{(s-1)}(Y)
   &=
   \iota_\mu^{(s-1)}
   \left[
      \left(
         \int_{\mathrm U(d)}
            \pi_\mu(U)X\pi_\mu(U)^\dagger
            \,\mathrm d\mu_{\mathrm{Haar}}(U)
      \right)
      \otimes
      \ket{\widehat\eta_\mu}\!\bra{\widehat\eta_\mu}
   \right]
   \bigl(\iota_\mu^{(s-1)}\bigr)^\dagger
   \\
   &=
   \operatorname{Tr}(X)
   \iota_\mu^{(s-1)}
   \left[
      \frac{I_{V_\mu}}{\dim_{\mathbb C}V_\mu}
      \otimes
      \ket{\widehat\eta_\mu}\!\bra{\widehat\eta_\mu}
   \right]
   \bigl(\iota_\mu^{(s-1)}\bigr)^\dagger
   \\
   &=
   \operatorname{Tr}(Y)\,\sigma_\mu,
\end{align*}
where
\[
   \sigma_\mu
   :=
   \iota_\mu^{(s-1)}
   \left[
      \frac{I_{V_\mu}}{\dim_{\mathbb C}V_\mu}
      \otimes
      \ket{\widehat\eta_\mu}\!\bra{\widehat\eta_\mu}
   \right]
   \bigl(\iota_\mu^{(s-1)}\bigr)^\dagger.
\]
Thus the Haar twirl is a replacement channel on the reduced subspace,
so \(\mathcal S_{\lambda,\eta}\) is privacy-hereditary.

Conversely, suppose that
\(
   |\mathcal B(\lambda,\eta)|\geq2
\), and choose distinct
\(
   \mu,\nu\in\mathcal B(\lambda,\eta)
\).  Choose unit vectors \(v_\mu\in V_\mu\) and \(v_\nu\in V_\nu\), and
let
\[
   \ket{x_\mu}
   :=
   \iota_\mu^{(s-1)}
   \bigl(v_\mu\otimes\ket{\widehat\eta_\mu}\bigr),
   \qquad
   \ket{x_\nu}
   :=
   \iota_\nu^{(s-1)}
   \bigl(v_\nu\otimes\ket{\widehat\eta_\nu}\bigr).
\]
By~\eqref{eq:carrier-Red}, both vectors belong to
\(\operatorname{Red}_s(\mathcal S_{\lambda,\eta})\).  A second
application of Schur averaging yields
\begin{align*}
   \mathcal T^{(s-1)}
   \bigl(\ket{x_\mu}\!\bra{x_\mu}\bigr)
   &=
   \iota_\mu^{(s-1)}
   \left[
      \frac{I_{V_\mu}}{\dim_{\mathbb C}V_\mu}
      \otimes
      \ket{\widehat\eta_\mu}\!\bra{\widehat\eta_\mu}
   \right]
   \bigl(\iota_\mu^{(s-1)}\bigr)^\dagger,
   \\
   \mathcal T^{(s-1)}
   \bigl(\ket{x_\nu}\!\bra{x_\nu}\bigr)
   &=
   \iota_\nu^{(s-1)}
   \left[
      \frac{I_{V_\nu}}{\dim_{\mathbb C}V_\nu}
      \otimes
      \ket{\widehat\eta_\nu}\!\bra{\widehat\eta_\nu}
   \right]
   \bigl(\iota_\nu^{(s-1)}\bigr)^\dagger.
\end{align*}
The first state is supported in the \(\mu\)-th Schur-Weyl block and the second in the orthogonal \(\nu\)-th block.  They are nonzero trace-one operators and therefore cannot be equal.  Hence the Haar twirl is not constant on the density operators supported on \(\operatorname{Red}_s(\mathcal S_{\lambda,\eta})\).  The reduced subspace is not Haar-twirl admissible, and privacy heredity fails. This proves part~\textup{(i)}.

\smallskip
Now we prove part~\textup{(ii)}. Assume
\(
   \mathcal B(\lambda,\eta)=\{\mu\}
\).  Since \(\mathcal S_{\lambda,\eta}\) is invariant under the
collective action, the restriction of \(U^{\otimes s}\) to this carrier
is unitarily equivalent to \(\pi_\lambda(U)\), while the restriction of
\(U^{\otimes(s-1)}\) to the reduced carrier is unitarily equivalent to
\(\pi_\mu(U)\).

Let
\(
   \psi=\iota_\lambda^{(s)}(x\otimes\ket{\eta})
\)
and
\(
   \varphi=\iota_\lambda^{(s)}(y\otimes\ket{\eta})
\).  Because the upper carrier is invariant, only the orthogonal
projections of the ambient test vectors
\(a,b\in\mathcal H_d^{\otimes s}\) onto that carrier contribute to the
defining coefficient function.  If these projections are
\(
   \iota_\lambda^{(s)}(a_\lambda\otimes\ket{\eta})
\)
and
\(
   \iota_\lambda^{(s)}(b_\lambda\otimes\ket{\eta})
\), respectively, then
\begin{align*}
   f_{a,b;\psi,\varphi}^{(s)}(U)
   &=
   \langle a_\lambda,
      \pi_\lambda(U)x\rangle
   \langle y,
      \pi_\lambda(U)^\dagger b_\lambda\rangle
   \\
   &=
   \langle a_\lambda,
      \pi_\lambda(U)x\rangle
   \overline{
      \langle b_\lambda,
         \pi_\lambda(U)y\rangle
   }.
\end{align*}
The linear span of these products is the coefficient space of
the conjugation representation
\(
   \pi_\lambda\otimes\overline{\pi_\lambda}
\).  Conversely, every decomposable matrix coefficient of
\(
   \pi_\lambda\otimes\overline{\pi_\lambda}
\)
is obtained by choosing the four vectors
\(a_\lambda,b_\lambda,x,y\), and decomposable tensors span the tensor
product.  Hence
\begin{equation*}
   \mathfrak C_s(\mathcal S_{\lambda,\eta})
   =
   \operatorname{Coeff}
   \bigl(
      V_\lambda\otimes\overline{V_\lambda}
   \bigr).
\end{equation*}
Applying the same argument to the lower carrier gives
\begin{equation*}
   \mathfrak C_{s-1}
   \bigl(
      \operatorname{Red}_s(\mathcal S_{\lambda,\eta})
   \bigr)
   =
   \operatorname{Coeff}
   \bigl(
      V_\mu\otimes\overline{V_\mu}
   \bigr).
\end{equation*}
By Theorem~\ref{thm:design-heredity-criterion}, design heredity is
equivalent to the inclusion of the lower coefficient space in the upper
coefficient space.  Substituting the preceding two identities gives
\[
   \operatorname{Coeff}
   \bigl(
      V_\mu\otimes\overline{V_\mu}
   \bigr)
   \subseteq
   \operatorname{Coeff}
   \bigl(
      V_\lambda\otimes\overline{V_\lambda}
   \bigr),
\]
which proves part~\textup{(ii)}.

Finally, by definition, \textsf{QPB}-heredity is the conjunction of privacy
heredity and design heredity.  Part~\textup{(i)} shows that privacy
heredity is equivalent to
\(
   |\mathcal B(\lambda,\eta)|=1
\), and, under this condition, part~\textup{(ii)} gives the exact
criterion for design heredity.  This proves part~\textup{(iii)} and
completes the proof.
\end{proof}

\begin{corollary}[Hereditary paths in the Young graph]
\label{cor:young-yamanouchi-heredity}
Let
\(
    |\eta\rangle\in W_\lambda
\)
be a Young--Yamanouchi basis vector.  Under the standard chain
\[
    \mathfrak{S}_1\subset \mathfrak{S}_2\subset\cdots\subset \mathfrak{S}_s,
\]
the vector \(|\eta\rangle\) determines a unique path
\[
    \lambda^{(s)}
    =
    \lambda
    \searrow
    \lambda^{(s-1)}
    \searrow
    \cdots
    \searrow
    \lambda^{(1)}
    =
    (1)
\]
in the Young graph.  The associated sequence of canonical carriers,
obtained by successively contracting the last tensor factor, is fully
privacy-hereditary along this path.

It is fully \textsf{QPB}-hereditary if and only if
\begin{equation}
\label{eq:Young-path-coeff-condition}
\begin{split}
    \operatorname{Coeff}
    \left(
       V_{\lambda^{(r-1)}}
       \otimes
       \overline{V_{\lambda^{(r-1)}}}
    \right)
    \subseteq
    \operatorname{Coeff}
    \left(
       V_{\lambda^{(r)}}
       \otimes
       \overline{V_{\lambda^{(r)}}}
    \right)
\end{split}
\end{equation}
for every
\(
    2\leq r\leq s
\).
\end{corollary}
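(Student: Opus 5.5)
The plan is to apply Theorem~\ref{thm:canonical-carrier-branching} level by level along the path, by induction on $r$ running downward from $s$. The first step is to check that a Young--Yamanouchi vector $|\eta\rangle=|T\rangle$ has exactly one active branch. By Theorem~\ref{thm:branching-prelim}, the $W_\mu$-summand of $W_\lambda\downarrow_{\mathfrak S_{s-1}}$ is spanned by the basis vectors $|T'\rangle$ with $\operatorname{shape}(T'\!\downarrow_{s-1})=\mu$. Hence $q_{\mu\leftarrow\lambda}|T\rangle\neq0$ only for $\mu=\lambda^{(s-1)}:=\operatorname{shape}(T\!\downarrow_{s-1})$, and so $\mathcal B(\lambda,T)=\{\lambda^{(s-1)}\}$. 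By \eqref{eq:carrier-Red} we then get
\[
\operatorname{Red}_s(\mathcal S_{\lambda,T})=\iota^{(s-1)}_{\lambda^{(s-1)}}\bigl(V_{\lambda^{(s-1)}}\otimes\operatorname{span}_{\mathbb C}\{q|T\rangle\}\bigr).
\]

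The second step, and the one point needing care, is to identify the normalized vector $q_{\lambda^{(s-1)}\leftarrow\lambda}|T\rangle/\|q_{\lambda^{(s-1)}\leftarrow\lambda}|T\rangle\|$ with the Young--Yamanouchi vector $|T\!\downarrow_{s-1}\rangle$ of $W_{\lambda^{(s-1)}}$. I would argue this via the Jucys--Murphy characterization in Theorem~\ref{thm:YY-prelim}. The elements $J_2,\dots,J_{s-1}$ lie in $\mathbb C[\mathfrak S_{s-1}]$, so they commute with the $\mathfrak S_{s-1}$-equivariant map $q$. Therefore $q|T\rangle$ is a joint eigenvector of $J_2,\dots,J_{s-1}$ with contents $c_T(2),\dots,c_T(s-1)$, which are the contents of $T\!\downarrow_{s-1}$. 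Joint JM-spectra are simple, since contents determine the tableau. So $q|T\rangle$ is proportional to the YY vector of $T\!\downarrow_{s-1}$, up to a phase that the convention of the fixed branching identification absorbs. Consequently the reduced subspace is again a canonical carrier $\mathcal S_{\lambda^{(s-1)},T\downarrow_{s-1}}$ attached to a YY vector.

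The third step is the induction. Iterating the first two steps produces the carriers $\mathcal S_r:=\mathcal S_{\lambda^{(r)},T\downarrow_r}$ with $\operatorname{Red}_r(\mathcal S_r)=\mathcal S_{r-1}$ for $2\le r\le s$. The path is unique because each truncation shape is forced. At every level the active branch set is a singleton, so part~(i) of Theorem~\ref{thm:canonical-carrier-branching} gives privacy heredity at every step, which is full privacy heredity.

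For the \textsf{QPB} statement, full \textsf{QPB}-heredity is by definition the conjunction over $r$ of privacy heredity and design heredity at step $r$. Privacy heredity is already established, and part~(ii) with $\mathcal B=\{\lambda^{(r-1)}\}$ makes design heredity at step $r$ equivalent to \eqref{eq:Young-path-coeff-condition}. Taking the conjunction over $2\le r\le s$ gives the stated equivalence.
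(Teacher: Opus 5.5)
Your proposal is correct and follows the paper's route: a Young--Yamanouchi vector has a single active branch at each level, so parts~(i) and~(ii) of Theorem~\ref{thm:canonical-carrier-branching} apply step by step along the path. Your Jucys--Murphy argument, which shows that $q_{\lambda^{(s-1)}\leftarrow\lambda}|T\rangle$ is a multiple of $|T\!\downarrow_{s-1}\rangle$, makes explicit the inductive step that the paper's one-line proof takes for granted, namely that the reduced carrier is again a Young--Yamanouchi carrier.
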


\begin{proof}
In a Young--Yamanouchi basis, restriction from \(\mathfrak{S}_r\) to \(\mathfrak{S}_{r-1}\) removes the box containing \(r\). Hence the basis vector has exactly one nonzero component in the multiplicity-free branching decomposition at every step. The privacy assertion follows by successive application of
Theorem~\ref{thm:canonical-carrier-branching}.
The design assertion follows from the coefficient-space condition in the same theorem at every step.
\end{proof}

\subsubsection{Deletion of an arbitrary recipient}
\label{subsubsec:any-recipient}

Let $s\ge2$ and use
\[
 \Gamma_{\mathfrak{S}_{t}}(\pi)(x_1\otimes\cdots\otimes x_s)
 =x_{\pi^{-1}(1)}\otimes\cdots\otimes x_{\pi^{-1}(s)}
 \qquad(\pi\in\mathfrak S_s).
\]
For $1\le j<s$, put $\tau_j=(j\,s)$, and put $\tau_s=e$.
Define the permutation unitary $Q_j$ on
$\mathcal H_d^{\otimes(s-1)}$ by
\[
 Q_j(y_1\otimes\cdots\otimes y_{s-1})
 =y_1\otimes\cdots\otimes y_{j-1}\otimes y_{s-1}
  \otimes y_j\otimes\cdots\otimes y_{s-2}
 \qquad(1\le j<s),
\]
where empty tensor strings are omitted, and set
$Q_s=I_{\mathcal H_d^{\otimes(s-1)}}$.
Thus $Q_{s-1}$ is also the identity. Let
\[
 \mathsf J_s=\sum_{q=1}^{s-1}(q\,s),\qquad
 \mathsf J^{(j)}=\tau_j\mathsf J_s\tau_j
               =\sum_{q\ne j}(q\,j)
 \in\mathbb C[\mathfrak S_s].
\]
The representation $W_\lambda$ is extended linearly to the group algebra.
For a removable box $\lambda\setminus\mu$ in row $r$ and column $c$,
write $\kappa_{\lambda/\mu}=c-r$ for its content.

\begin{theorem}[Deletion of the $j$th recipient]
\label{thm:any-recipient}
Let $\lambda\in\Lambda(d,s)$, let $|\eta\rangle\in W_\lambda$
be a unit vector, and let $1\le j\le s$. Let
\[
 |\eta_j\rangle=W_\lambda(\tau_j)|\eta\rangle,
 \qquad
 \mathcal S_{\lambda,\eta}
 =\iota^{(s)}_\lambda
   \bigl(V_\lambda\otimes\operatorname{span}_{\mathbb{C}}\{|\eta\rangle\}\bigr).
\]
Then the following assertions hold.
\begin{enumerate}
\item The canonical reduction is
\begin{align}
 \operatorname{Red}_j(\mathcal S_{\lambda,\eta})
 &=Q_j^\dagger\operatorname{Red}_s
        (\mathcal S_{\lambda,\eta_j})\notag\\
 &=Q_j^\dagger
   \bigoplus_{\mu\in\mathcal B(\lambda,\eta_j)}
   \iota^{(s-1)}_\mu
   \bigl(V_\mu\otimes
     \operatorname{span}_{\mathbb{C}}\{q_{\mu\leftarrow\lambda}|\eta_j\rangle\}\bigr).
 \label{eq:any-recipient-reduction}
\end{align}
\item The following conditions are equivalent:
\begin{enumerate}
\item $\mathcal S_{\lambda,\eta}$ is privacy-hereditary in coordinate $j$;
\item $|\mathcal B(\lambda,\eta_j)|=1$;
\item $|\eta\rangle$ is an eigenvector of
      $W_\lambda(\mathsf J^{(j)})$.
\end{enumerate}
If the unique active branch is $\mu$, the eigenvalue in the last condition
is $\kappa_{\lambda/\mu}$.
\item If $\mathcal B(\lambda,\eta_j)=\{\mu\}$, then
$\mathcal S_{\lambda,\eta}$ is design-hereditary in coordinate $j$
if and only if
\[
 \operatorname{Coeff}(V_\mu\otimes\overline{V_\mu})
 \subseteq
 \operatorname{Coeff}(V_\lambda\otimes\overline{V_\lambda}).
\]
Under this single-branch hypothesis, the same condition characterizes \textsf{QPB}-heredity in coordinate $j$.
\item The carrier $\mathcal S_{\lambda,\eta}$ is privacy-hereditary in every coordinate if and only if $|\eta\rangle$ is a common eigenvector of $W_\lambda(\mathsf J^{(1)}),\ldots,W_\lambda(\mathsf J^{(s)})$.
\end{enumerate}
\end{theorem}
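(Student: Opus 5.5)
The plan is to reduce deletion of coordinate $j$ to deletion of the last coordinate by a permutation, and then apply Theorem~\ref{thm:canonical-carrier-branching}.

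\textbf{Step 1 (relabelling).} We first check the operator identity
\[
C^{(s)}_{j,v}=Q_j^\dagger\, C^{(s)}_{s,v}\,\Gamma_{\mathfrak S_s}(\tau_j)\qquad (v\in\mathcal H_d).
\]
Indeed, $\Gamma(\tau_j)$ moves the $j$th factor to slot $s$ and the $s$th factor to slot $j$. Contracting the last slot then leaves $x_1,\dots,x_{j-1},x_s,x_{j+1},\dots,x_{s-1}$, and $Q_j^\dagger$ restores the original order. This is a direct check on product tensors; the only care needed is to fix the convention for $Q_j$, and if it comes out as $Q_j$ rather than $Q_j^\dagger$ we adjust.

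Since $\Gamma(\tau_j)$ acts on the $\lambda$-block as $I_{V_\lambda}\otimes W_\lambda(\tau_j)$, we have $\Gamma(\tau_j)\mathcal S_{\lambda,\eta}=\mathcal S_{\lambda,\eta_j}$. Hence
\[
\operatorname{Red}_j(\mathcal S_{\lambda,\eta})=Q_j^\dagger\operatorname{Red}_s(\mathcal S_{\lambda,\eta_j}),
\]
and \eqref{eq:carrier-Red} gives part~(1).

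\textbf{Step 2 (transferring heredity through $Q_j$).} $Q_j$ is a permutation unitary on $\mathcal H_d^{\otimes(s-1)}$, so it commutes with $U^{\otimes(s-1)}$. Therefore $\mathcal T^{(s-1)}(Q_j^\dagger YQ_j)=Q_j^\dagger\mathcal T^{(s-1)}(Y)Q_j$, and the same holds for any ensemble twirl $\Phi^{(s-1)}_{\mathfrak U,w}$. Consequently:
\begin{itemize}
\item Haar-twirl admissibility is preserved under $Q_j$, with replacement state conjugated.
\item Exactness of a design is preserved under $Q_j$.
\item The coefficient spaces satisfy $\mathfrak C_{s-1}(Q_j^\dagger\mathcal R)=\mathfrak C_{s-1}(\mathcal R)$, because the test vectors $a,b$ range over the whole space.
\end{itemize}
In the same way $\mathfrak C_s(\mathcal S_{\lambda,\eta})=\mathfrak C_s(\mathcal S_{\lambda,\eta_j})$; both equal $\operatorname{Coeff}(V_\lambda\otimes\overline{V_\lambda})$, as shown in the proof of Theorem~\ref{thm:canonical-carrier-branching}. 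So heredity of $\mathcal S_{\lambda,\eta}$ in coordinate $j$ is equivalent to heredity of $\mathcal S_{\lambda,\eta_j}$ in coordinate $s$. Theorem~\ref{thm:canonical-carrier-branching}(i)--(ii), combined with Theorem~\ref{thm:design-heredity-criterion}, then gives (a)$\Leftrightarrow$(b) and part~(3). The QPB clause of part~(3) follows because privacy already holds under the single-branch hypothesis.

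\textbf{Step 3 (eigenvector criterion).} This is the main point. We use the standard fact that $W_\lambda(\mathsf J_s)$ acts on the branch summand $j_{\mu\to\lambda}(W_\mu)$ as the scalar $\kappa_{\lambda/\mu}$. This follows from Theorem~\ref{thm:YY-prelim}: in the Young--Yamanouchi basis, $J_s|T\rangle=c_T(s)|T\rangle$, and $c_T(s)$ is the content of the box containing $s$, which is the box $\lambda\setminus\mu$. By the branching description in Theorem~\ref{thm:branching-prelim}, the summand for $\mu$ is spanned by exactly these $|T\rangle$.

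Distinct removable boxes lie on distinct diagonals, so they have distinct contents. Hence the branch summands are precisely the eigenspaces of $W_\lambda(\mathsf J_s)$, and $|\mathcal B(\lambda,\eta_j)|=1$ iff $|\eta_j\rangle$ is an eigenvector of $W_\lambda(\mathsf J_s)$. Since $\mathsf J^{(j)}=\tau_j\mathsf J_s\tau_j$ and $\tau_j^2=e$, this holds iff $|\eta\rangle$ is an eigenvector of $W_\lambda(\mathsf J^{(j)})$, with the same eigenvalue $\kappa_{\lambda/\mu}$. This proves (b)$\Leftrightarrow$(c).

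\textbf{Step 4 (all coordinates).} Part~(4) is immediate by applying part~(2) for each $j=1,\dots,s$. For $j=s$ the relabelling is trivial, since $\tau_s=e$ and $Q_s=I$.

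The main obstacle is the convention bookkeeping in Step~1, specifically the exact form of $Q_j$ and the direction of $\tau_j$. The representation-theoretic content in Step~3 is standard.
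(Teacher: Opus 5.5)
Your proposal is correct and follows essentially the same route as the paper. The paper's proof has the same four steps:
\begin{enumerate}
\item relabel via $C^{(s)}_{s,v}\Gamma_{\mathfrak S_s}(\tau_j)=Q_jC^{(s)}_{j,v}$ on product tensors, which is equivalent to your $C^{(s)}_{j,v}=Q_j^\dagger C^{(s)}_{s,v}\Gamma_{\mathfrak S_s}(\tau_j)$, so no convention adjustment is needed;
\item transfer admissibility, design exactness and coefficient spaces through permutation unitaries commuting with $U^{\otimes r}$;
\item identify the branch summands with the eigenspaces of $W_\lambda(\mathsf J_s)$ using distinct contents of removable boxes;
\item conjugate by $\tau_j$ to obtain the $\mathsf J^{(j)}$ criterion, then apply this in every coordinate for part~(4).
\end{enumerate}
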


\begin{proof}
Schur--Weyl equivariance gives
\[
 \Gamma_{\mathfrak{S}_{t}}(\tau_j)\iota^{(s)}_\lambda
 =\iota^{(s)}_\lambda
   \bigl(I_{V_\lambda}\otimes W_\lambda(\tau_j)\bigr),
 \qquad
 \Gamma_{\mathfrak{S}_{t}}(\tau_j)\mathcal S_{\lambda,\eta}
 =\mathcal S_{\lambda,\eta_j}.
\]
For $j<s$, applying the two sides of the following identity to a
decomposable tensor gives
\begin{align*}
 C^{(s)}_{s,v}\Gamma_{\mathfrak{S}_{t}}(\tau_j)(x_1\otimes\cdots\otimes x_s)
 &=\langle v|x_j\rangle\,
   x_1\otimes\cdots\otimes x_{j-1}\otimes x_s
   \otimes x_{j+1}\otimes\cdots\otimes x_{s-1}\\
 &=Q_j C^{(s)}_{j,v}(x_1\otimes\cdots\otimes x_s).
\end{align*}
Consequently, by linearity in the tensor input,
\[
 C^{(s)}_{s,v}\Gamma_{\mathfrak{S}_{t}}(\tau_j)=Q_j C^{(s)}_{j,v}.
\]
The same identity is immediate for $j=s$. Taking the linear spans of all
contractions of vectors in $\mathcal S_{\lambda,\eta}$ yields
\[
 \operatorname{Red}_s(\mathcal S_{\lambda,\eta_j})
 =Q_j\operatorname{Red}_j(\mathcal S_{\lambda,\eta}).
\]
The last-coordinate branching theorem now gives
\eqref{eq:any-recipient-reduction}.

Now we show that the permutation conjugation preserves the relevant privacy and design properties. If a unitary $Q$ on $\mathcal H_d^{\otimes r}$ commutes with every $U^{\otimes r}$, then
\begin{align*}
 \mathcal T^{(r)}(QXQ^\dagger)
 &=Q\mathcal T^{(r)}(X)Q^\dagger,\\
 \Phi^{(r)}_{\mathcal U,w}(QXQ^\dagger)
 &=Q\Phi^{(r)}_{\mathcal U,w}(X)Q^\dagger
\end{align*}
for every weighted unitary ensemble $(\mathcal U,w)$ and every ambient operator $X$. Thus a subspace $\mathcal R$ is Haar-twirl admissible if and only if $Q\mathcal R$ is, with replacement states related by
$\sigma_{Q\mathcal R}=Q\sigma_{\mathcal R}Q^\dagger$.
Moreover, a fixed ensemble is exact on $\mathcal R$ if and only if the same ensemble is exact on $Q\mathcal R$.

The corresponding coefficient spaces are also equal. For the
coefficient functions defining $\mathfrak C_r$,
\[
 f^{(r)}_{a,b;Q\psi,Q\phi}(U)
 =f^{(r)}_{Q^\dagger a,Q^\dagger b;\psi,\phi}(U),
\]
and the ambient test vectors $a,b$ range over the whole tensor power.
Applying these observations to $\Gamma_{\mathfrak{S}_{t}}(\tau_j)$ at level $s$ and $Q_j$ at level $s-1$ proves that privacy heredity and design heredity in coordinate $j$ for $\mathcal S_{\lambda,\eta}$ are equivalent to the respective last-coordinate properties for $\mathcal S_{\lambda,\eta_j}$. The canonical-carrier branching theorem therefore proves the equivalence of the first two conditions in item~2 and proves item~3, including its \textsf{QPB} assertion.

Under the fixed multiplicity-free branching identification, the last Jucys-Murphy element satisfies
\[
 W_\lambda(\mathsf J_s)
 =\sum_{\mu\nearrow\lambda}
   \kappa_{\lambda/\mu}\,
   q_{\mu\leftarrow\lambda}^{\dagger}q_{\mu\leftarrow\lambda}.
\]
In fact, on a Young-Yamanouchi vector the box occupied by $s$ is
$\lambda\setminus\mu$, so its eigenvalue is the content of that box.
The contents of distinct removable boxes are distinct: if their rows are $r<r'$, their contents differ by
\[
 (\lambda_r-r)-(\lambda_{r'}-r')
 =\lambda_r-\lambda_{r'}+(r'-r)>0.
\]
Therefore the eigenspaces of $W_\lambda(\mathsf J_s)$ are exactly its individual removable-box branch subspaces. It follows that
\[
 |\mathcal B(\lambda,\eta_j)|=1
 \quad\Longleftrightarrow\quad
 |\eta_j\rangle\text{ is an eigenvector of }W_\lambda(\mathsf J_s).
\]
Since $\tau_j^{-1}=\tau_j$, conjugating this eigenvector equation gives
\[
 W_\lambda(\mathsf J^{(j)})|\eta\rangle
 =\kappa_{\lambda/\mu}|\eta\rangle
\]
when the unique branch is $\mu$. Here
\[
 \tau_j\mathsf J_s\tau_j
 =\sum_{q<s}(\tau_j(q)\,\tau_j(s))
 =\sum_{q\ne j}(q\,j).
\]
This proves item~2. Item~4 follows by applying item~2 in every coordinate.
\end{proof}

\begin{corollary}[Rectangular shapes and the shape $(2,1)$]
\label{cor:rectangular}
Let $s\ge2$, let $\lambda\in\Lambda(d,s)$, and let
$|\eta\rangle\in W_\lambda$ be a unit vector.
\begin{enumerate}
\item Suppose $\lambda=(a^b)$ with $a,b\ge1$ and $ab=s$.
Its unique removable-box shape is
\[
 \mu=(a^{b-1},a-1),
\]
where a zero final part is omitted. Then
$\mathcal S_{\lambda,\eta}$ is privacy-hereditary in every coordinate and
\[
 \operatorname{Red}_j(\mathcal S_{\lambda,\eta})
 =Q_j^\dagger\iota^{(s-1)}_\mu
   \bigl(V_\mu\otimes
     \operatorname{span}_{\mathbb{C}}\{q_{\mu\leftarrow\lambda}|\eta_j\rangle\}\bigr).
\]
It is design-hereditary in at least one coordinate if and only if it is
design-hereditary in every coordinate, and these conditions hold
when
\[
 \operatorname{Coeff}(V_\mu\otimes\overline{V_\mu})
 \subseteq\operatorname{Coeff}(V_\lambda\otimes\overline{V_\lambda}).
\]
The same criterion characterizes \textsf{QPB}-heredity in any or all coordinates.
Furthermore,
\[
 W_\lambda(\mathsf J^{(j)})=(a-b) I_{W_\lambda}
 \qquad(1\le j\le s).
\]
The cases $b=1$ and $a=1$ are the symmetric and exterior towers, respectively.
\item Suppose $\lambda=(2,1)$, so $s=3$ and $d\ge2$.
No canonical carrier $\mathcal S_{\lambda,\eta}$ is privacy-hereditary in all three coordinates. Each Young-Yamanouchi carrier is privacy-hereditary in coordinate $3$, and in neither coordinate $1$
nor coordinate $2$.
\end{enumerate}
\end{corollary}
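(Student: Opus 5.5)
Both parts reduce to Theorem~\ref{thm:any-recipient}. Everything follows once we know the spectrum of $W_\lambda(\mathsf J_s)$ and how a given vector decomposes into its eigenspaces. The key identity, recorded in that proof, is
\[
 W_\lambda(\mathsf J_s)=\sum_{\mu\nearrow\lambda}\kappa_{\lambda/\mu}\,q^\dagger_{\mu\leftarrow\lambda}q_{\mu\leftarrow\lambda}.
\]
Also, $W_\lambda(\mathsf J^{(j)})=W_\lambda(\tau_j)W_\lambda(\mathsf J_s)W_\lambda(\tau_j)$.

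\textbf{Part 1 (rectangles).} A rectangle $(a^b)$ has exactly one removable box, located at row $b$ and column $a$, so the only branch is $\mu=(a^{b-1},a-1)$. By the branching rule (Theorem~\ref{thm:branching-prelim}) we then have $q^\dagger_{\mu\leftarrow\lambda}q_{\mu\leftarrow\lambda}=I_{W_\lambda}$. The identity above therefore gives $W_\lambda(\mathsf J_s)=(a-b)I$. Conjugating by $W_\lambda(\tau_j)$ gives $W_\lambda(\mathsf J^{(j)})=(a-b)I$ for every $j$.

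It follows that every unit $|\eta\rangle$ is a common eigenvector. Item~4 of Theorem~\ref{thm:any-recipient} then yields privacy heredity in every coordinate. Moreover, $\mathcal B(\lambda,\eta_j)=\{\mu\}$, since $q_{\mu\leftarrow\lambda}|\eta_j\rangle=|\eta_j\rangle\neq0$, so item~1 gives the displayed formula for $\operatorname{Red}_j$. Item~3 makes design heredity (and hence QPB heredity) in coordinate $j$ equivalent to $\operatorname{Coeff}(V_\mu\otimes\overline{V_\mu})\subseteq\operatorname{Coeff}(V_\lambda\otimes\overline{V_\lambda})$. This condition does not depend on $j$, so "some coordinate" and "every coordinate" are equivalent. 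Finally, $b=1$ gives $\lambda=(s)$ and $a=1$ gives $(1^s)$, which are the symmetric and exterior towers.

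\textbf{Part 2 ($\lambda=(2,1)$).} Here $W_\lambda$ is two-dimensional. Its Young--Yamanouchi basis consists of $T_1$, with first row $1\,2$ and second row $3$, and $T_2$, with first row $1\,3$ and second row $2$. The element $\mathsf J_3$ acts diagonally with contents $-1$ on $T_1$ (the box of $3$ is at $(2,1)$) and $+1$ on $T_2$. Item~2 of Theorem~\ref{thm:any-recipient} then shows that each YY carrier is privacy-hereditary in coordinate $3$.

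For coordinates $j=1,2$, the operator $W_\lambda(\mathsf J^{(j)})$ has the same spectrum $\{\pm1\}$, with eigenspaces $W_\lambda(\tau_j)\,\mathrm{span}\{T_i\}$. The main obstacle is to show that these eigenlines differ from the YY lines and from each other. I would do this with the explicit orthogonal (Young seminormal) form on $W_{(2,1)}$:
\[
(1\,2)=\mathrm{diag}(1,-1),\qquad (2\,3)=\begin{pmatrix}-1/2&\sqrt3/2\\ \sqrt3/2&1/2\end{pmatrix},
\]
with $(1\,3)=(1\,2)(2\,3)(1\,2)$. Computing gives
\[
W(\mathsf J^{(1)})=(1\,2)+(1\,3),\qquad W(\mathsf J^{(2)})=(1\,2)+(2\,3),\qquad W(\mathsf J^{(3)})=(1\,3)+(2\,3).
\]
Each of these is a traceless reflection-type matrix with eigenvalues $\pm1$. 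Their eigenbases are three distinct orthonormal frames, rotated relative to one another by $60^\circ$ in the real plane. In particular, none of them is diagonal when $j\ne3$, since the entry of $\mathsf J^{(2)}$ at position $(1,2)$ is $\sqrt3/2\neq0$, and similarly for $\mathsf J^{(1)}$. Hence $T_1$ and $T_2$ are not eigenvectors for $j=1,2$, and item~2 rules out privacy heredity in those coordinates.

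For the first claim, a common eigenvector of all three $\mathsf J^{(j)}$ would have to lie on an eigenline of each. The three frames have pairwise distinct eigenlines, so no such vector exists. An alternative route: the $\mathsf J^{(j)}$ sum to $2\sum_{q<r}(q\,r)$, which is central and equals zero on $W_{(2,1)}$. A common eigenvector $\eta$ would then generate an invariant line under the algebra generated by the $\mathsf J^{(j)}$, and we would check that this algebra is all of $\mathcal L(W_\lambda)$; the direct frame comparison is cleaner. Item~4 of Theorem~\ref{thm:any-recipient} then concludes that no canonical carrier $\mathcal S_{(2,1),\eta}$ is privacy-hereditary in all three coordinates. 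The hypothesis $d\ge2$ is needed only so that $(2,1)\in\Lambda(d,3)$.
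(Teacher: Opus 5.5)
Your proof is correct and follows the paper's argument. Part 1 matches it exactly: the single removable box has content $a-b$, you conjugate by $W_\lambda(\tau_j)$, and the coefficient criterion does not depend on $j$. In Part 2 you use the same Young orthogonal matrices for the claims about the Young--Yamanouchi carriers.

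The only variation is how you rule out a common eigenvector. You compare the explicit eigenframes. The paper instead notes that the central element $(1\,2)+(1\,3)+(2\,3)$ acts by $0$, so each $W_\lambda(\mathsf J^{(j)})$ is minus a transposition; a common eigenvector would then span an $\mathfrak S_3$-invariant line, contradicting irreducibility of $W_{(2,1)}$. That is the route you mention as an alternative.
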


\begin{proof}
For a rectangle, only its bottom-right box is removable. Hence restriction
to $\mathfrak S_{s-1}$ has just the branch $W_\mu$, and
$q_{\mu\leftarrow\lambda}$ is a unitary identification with that branch.
In particular, its value on every unit $|\eta_j\rangle$ is nonzero.
The theorem gives the reduced-space formula and privacy heredity in
every coordinate. Its design criterion depends only on $\lambda$ and
$\mu$, not on $j$ or $\eta$, proving the remaining equivalences.
The unique removable box has content $a-b$, so
$W_\lambda(\mathsf J_s)=(a-b) I$; conjugating by
$W_\lambda(\tau_j)$ proves the displayed scalar identity.

For $\lambda=(2,1)$, the central element
\[
 Z_3=(1\,2)+(1\,3)+(2\,3)=\mathsf J_2+\mathsf J_3
\]
acts by the total content of the diagram, namely $0+1-1=0$.
Consequently
\[
 W_\lambda(\mathsf J^{(1)})=-W_\lambda((2\,3)),\quad
 W_\lambda(\mathsf J^{(2)})=-W_\lambda((1\,3)),\quad
 W_\lambda(\mathsf J^{(3)})=-W_\lambda((1\,2)).
\]
A vector witnessing privacy heredity in every coordinate would therefore
span a common eigenvector of $W_\lambda((1\,2))$ and
$W_\lambda((2\,3))$. Since these transpositions generate $\mathfrak S_3$, that vector would be invariant under the entire group, contradicting irreducibility of the two-dimensional $W_{(2,1)}$.

Let $T_1$ have first row $1,2$ and second row $3$, and let $T_2$ have
first row $1,3$ and second row $2$. In a compatible Young orthogonal
phase convention,
\[
 W_\lambda((1\,2))=
 \begin{pmatrix}1&0\\0&-1\end{pmatrix},\qquad
 W_\lambda((2\,3))=
 \begin{pmatrix}-\tfrac12&\tfrac{\sqrt3}{2}\\
                 \tfrac{\sqrt3}{2}&\tfrac12\end{pmatrix}.
\]
Thus each $|T_i\rangle$ is an eigenvector of $W_\lambda((1\,2))$
but not of $W_\lambda((2\,3))$. Since
\[
 W_\lambda((1\,3))=
 W_\lambda((1\,2))W_\lambda((2\,3))W_\lambda((1\,2)),
\]
neither is an eigenvector of $W_\lambda((1\,3))$. The theorem now gives
the assertions about the three individual coordinates.
\end{proof}

\begin{example}[Design heredity without privacy heredity]\label{ex:design-without-privacy}
Let $d=3$, $s=3$, $\lambda=(2,1)$, and
\[
 |\eta\rangle=a|T_1\rangle+b|T_2\rangle,\qquad
 |a|^2+|b|^2=1,\qquad ab\ne0,
\]
where $T_1,T_2$ are the tableaux specified above.
Their penultimate shapes are $(2)$ and $(1,1)$, respectively. Hence
$\mathcal B(\lambda,\eta)=\{(2),(1,1)\}$, and the canonical-carrier
branching theorem gives
\begin{align*}
 \mathcal R:=\operatorname{Red}_3(\mathcal S_{\lambda,\eta})
 &=\iota^{(2)}_{(2)}
   \bigl(V_{(2)}\otimes
      \operatorname{span}_{\mathbb{C}}\{q_{(2)\leftarrow\lambda}|\eta\rangle\}\bigr)\\
 &\quad\oplus\iota^{(2)}_{(1,1)}
   \bigl(V_{(1,1)}\otimes
      \operatorname{span}_{\mathbb{C}}\{q_{(1,1)\leftarrow\lambda}|\eta\rangle\}\bigr)
 =\mathcal H_3^{\otimes2}.
\end{align*}
The last equality holds because both level-two multiplicity spaces are one dimensional, both branch vectors are nonzero, and $\dim_{\mathbb{C}} V_{(2)}+\dim_{\mathbb{C}} V_{(1,1)}=6+3=9$.
In particular $\mathcal R$ is not Haar-twirl admissible: the normalized Haar outputs of a symmetric and an antisymmetric unit vector have orthogonal supports. Thus privacy heredity fails.

We verify design heredity as an equality of function spaces on
$\mathrm U(3)$. Write $[\gamma_1,\gamma_2,\gamma_3]$ for the irreducible
$\mathrm U(3)$-module of that highest weight, and set
\[
 A=[1,0,-1],\quad D=[2,-1,-1],\quad
 \overline D=[1,1,-2],\quad E=[2,0,-2],
 \qquad\mathbf1=[0,0,0].
\]
Here $A$ is the eight-dimensional conjugation representation on
traceless matrices, so $\overline A\cong A$, and
$V_{(2,1)}\cong\det\otimes A$.

The four level-two tensor products decompose as
\begin{align*}
 V_{(2)}\otimes\overline{V_{(2)}}
   &\cong\mathbf1\oplus A\oplus E,\\
 V_{(1,1)}\otimes\overline{V_{(1,1)}}
   &\cong\mathbf1\oplus A,\\
 V_{(2)}\otimes\overline{V_{(1,1)}}
   &\cong D\oplus A,\\
 V_{(1,1)}\otimes\overline{V_{(2)}}
   &\cong\overline D\oplus A.
\end{align*}
These follow from Pieri's rule and determinant twists.
We have 
\[
 \overline{V_{(2)}}\cong(\det)^{-2}\otimes\mathbb S_{(2,2)}(\mathcal H_3),
 \qquad
 \overline{V_{(1,1)}}\cong(\det)^{-1}\otimes\mathcal H_3,
\]
where $\mathbb S_\nu(\mathcal H_3)$ is the Schur module. The relevant Pieri decompositions in dimension three are
\begin{align*}
 \mathbb S_{(2)}\otimes\mathbb S_{(2,2)}
 &\cong\mathbb S_{(4,2)}\oplus\mathbb S_{(3,2,1)}
       \oplus\mathbb S_{(2,2,2)},\\
 \mathbb S_{(1,1)}\otimes\mathcal H_3
 &\cong\mathbb S_{(2,1)}\oplus\mathbb S_{(1,1,1)},\\
 \mathbb S_{(2)}\otimes\mathcal H_3
 &\cong\mathbb S_{(3)}\oplus\mathbb S_{(2,1)},
\end{align*}
where Schur functors without an argument are evaluated at $\mathcal H_3$.
Subtracting $(2,2,2)$ in the first line and $(1,1,1)$ in the latter
two lines yields the first three tensor-product decompositions above;
the fourth is the conjugate of the third.

Since $\mathcal R=\mathcal H_3^{\otimes2}$, these decompositions give
\[
 \mathcal R\otimes\overline{\mathcal R}
 \cong\mathbf1^{\oplus2}\oplus A^{\oplus4}
       \oplus D\oplus\overline D\oplus E.
\]
They also determine the upper conjugation representation. We have $\mathcal H_3\otimes\overline{\mathcal H_3}\cong\mathbf1\oplus A$, so rearranging tensor factors gives
\[
 \mathcal R\otimes\overline{\mathcal R}
 \cong(\mathbf1\oplus A)\otimes(\mathbf1\oplus A)
 \cong\mathbf1\oplus A^{\oplus2}\oplus(A\otimes A).
\]
Uniqueness of irreducible multiplicities therefore yields
\[
 V_{(2,1)}\otimes\overline{V_{(2,1)}}
 \cong A\otimes A
 \cong\mathbf1\oplus A^{\oplus2}\oplus D\oplus\overline D\oplus E.
\]

For an invariant subspace carrying a representation $\rho$, its
restricted coefficient space is $\operatorname{Coeff}
(\rho\otimes\overline\rho)$. Moreover, repeated copies of an
irreducible representation do not enlarge its coefficient space.
Consequently the two representations just computed have the same
coefficient space, and
\[
 \mathfrak C_2(\mathcal R)
 =\mathfrak C_3(\mathcal S_{\lambda,\eta})
 =\operatorname{Coeff}(\mathbf1)
  \oplus\operatorname{Coeff}(A)
  \oplus\operatorname{Coeff}(D)
  \oplus\operatorname{Coeff}(\overline D)
  \oplus\operatorname{Coeff}(E).
\]
These sums are orthogonal in $L^2(\mathrm U(3),\mu_{\mathrm{Haar}})$.
For $\gamma_1\ge\gamma_2\ge\gamma_3$, Weyl's dimension formula is
\[
 \dim_{\mathbb{C}}[\gamma_1,\gamma_2,\gamma_3]
 =\frac{(\gamma_1-\gamma_2+1)(\gamma_2-\gamma_3+1)
              (\gamma_1-\gamma_3+2)}2.
\]
It gives $\dim A=8$, $\dim D=\dim\overline D=10$, and $\dim E=27$.
By Peter-Weyl orthogonality,
\[
 \dim_{\mathbb{C}}\mathfrak C_2(\mathcal R)
 =\dim_{\mathbb{C}}\mathfrak C_3(\mathcal S_{\lambda,\eta})
 =1^2+8^2+10^2+10^2+27^2=994.
\]
The equality of coefficient spaces along with dimensions implies by Theorem~\ref{thm:design-heredity-criterion} that
$\mathcal S_{\lambda,\eta}$ is design-hereditary in coordinate $3$.
It is therefore design-hereditary but not privacy-hereditary in that
coordinate. In fact the argument applies to every coordinate: the
upper coefficient space is independent of the multiplicity vector,
and each reduced coefficient space is contained in
$\mathfrak C_2(\mathcal H_3^{\otimes2})$.
\end{example}

\subsubsection{Replacement states under privacy-hereditary reduction}
\label{subsubsec:replacement-state-reduction}

The replacement state of the reduced subspace is related canonically to
the replacement state of the original subspace.

\begin{proposition}[Partial trace of the replacement state]
\label{prop:replacement-state-partial-trace}
Let
\(
    \mathcal S
    \subset
    \mathcal H_d^{\otimes s}
\)
be Haar-twirl admissible with replacement state
\(
    \sigma_{\mathcal S}
\).
Assume that
\(
    \mathcal S
\)
is privacy-hereditary in coordinate \(j\), and let
\(
    \mathcal R_j
    =
    \operatorname{Red}_j(\mathcal S)
\)
have replacement state
\(
    \sigma_{\mathcal R_j}
\).
Then
\begin{equation}
\label{eq:replacement-state-partial-trace}
       \sigma_{\mathcal R_j}
       =
       \operatorname{Tr}_j
       \left(
          \sigma_{\mathcal S}
       \right).
\end{equation}
\end{proposition}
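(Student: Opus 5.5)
The plan is to evaluate the reduced replacement state on a single, carefully chosen input state. That input is the partial trace of a state in \(\mathcal D(\mathcal S)\), so Proposition~\ref{prop:automatic-marginal-descent} applies to it directly.

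Take any \(\rho\in\mathcal D(\mathcal S)\); for definiteness, the maximally mixed state \(\Pi_{\mathcal S}/\dim_{\mathbb C}\mathcal S\). Partial trace preserves positivity and trace, so \(\operatorname{Tr}_j\rho\) is a density operator on \(\mathcal H_d^{\otimes(s-1)}\). The first step is to show that its support lies in \(\mathcal R_j=\operatorname{Red}_j(\mathcal S)\). Spectrally decompose \(\rho=\sum_\alpha p_\alpha|\psi_\alpha\rangle\langle\psi_\alpha|\) with \(|\psi_\alpha\rangle\in\mathcal S\). The computation in the proof of Lemma~\ref{lem:Red-support} then gives
\[
\operatorname{Tr}_j\rho=\sum_{\alpha,a}p_\alpha\,|C^{(s)}_{j,e_a}\psi_\alpha\rangle\langle C^{(s)}_{j,e_a}\psi_\alpha|,
\]
and every vector \(C^{(s)}_{j,e_a}\psi_\alpha\) lies in \(\operatorname{Red}_j(\mathcal S)\) by Definition~\ref{def:canonical-reduced-subspace}. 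Hence \(\operatorname{supp}(\operatorname{Tr}_j\rho)\subseteq\mathcal R_j\). Identifying operators with their extension by zero, this gives \(\operatorname{Tr}_j\rho\in\mathcal D(\mathcal R_j)\).

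Next, privacy heredity says that \(\mathcal R_j\) is Haar-twirl admissible with replacement state \(\sigma_{\mathcal R_j}\). Therefore \(\mathcal T^{(s-1)}(\operatorname{Tr}_j\rho)=\sigma_{\mathcal R_j}\). Proposition~\ref{prop:automatic-marginal-descent}, which rests on the intertwining relation \eqref{eq:Haar-partial-trace-intertwining}, gives \(\mathcal T^{(s-1)}(\operatorname{Tr}_j\rho)=\operatorname{Tr}_j\sigma_{\mathcal S}\). Comparing the two expressions yields \eqref{eq:replacement-state-partial-trace}.

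The only step that needs any care is the support inclusion \(\operatorname{supp}(\operatorname{Tr}_j\rho)\subseteq\operatorname{Red}_j(\mathcal S)\). For the maximally mixed choice it is exactly Lemma~\ref{lem:Red-support}, since \(\operatorname{Tr}_j\Pi_{\mathcal S}\) has support equal to \(\mathcal R_j\). If desired, uniqueness of the replacement state (Definition~\ref{def:haar-admissible}) shows that the result does not depend on the choice of \(\rho\). The rest is a two-line comparison.
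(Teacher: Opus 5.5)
Your proposal is correct and matches the paper's proof: the paper takes the maximally mixed state \(\Pi_{\mathcal S}/m\), moves \(\operatorname{Tr}_j\) through the twirl via Lemma~\ref{lem:partial-trace-covariance} (the same step that Proposition~\ref{prop:automatic-marginal-descent} packages), and uses Lemma~\ref{lem:Red-support} to place \(\operatorname{Tr}_j\Pi_{\mathcal S}/m\) in \(\mathcal D(\mathcal R_j)\). Your observation that any \(\rho\in\mathcal D(\mathcal S)\) would do, because the support inclusion holds for every such \(\rho\), is a harmless generalization.
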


\begin{proof}
Let
\(
    m
    :=
    \dim_{\mathbb C}\mathcal S.
\)
Then
\(
    \rho_{\mathcal S}^{\,0}
    :=
    \frac{\Pi_{\mathcal S}}{m}
    \in
    \mathcal D(\mathcal S),
\)
and hence
\(
    \mathcal T^{(s)}
    \left(
       \rho_{\mathcal S}^{\,0}
    \right)
    =
    \sigma_{\mathcal S}.
\)
Applying the \(j\)-th partial trace and using
Lemma~\ref{lem:partial-trace-covariance}, we obtain
\begin{align}
    \operatorname{Tr}_j
       (\sigma_{\mathcal S})
    &=
    \operatorname{Tr}_j
    \left[
       \mathcal T^{(s)}
       \left(
          \frac{\Pi_{\mathcal S}}{m}
       \right)
    \right]
    \notag\\
    &=
    \mathcal T^{(s-1)}
    \left(
       \frac{
          \operatorname{Tr}_j \Pi_{\mathcal S}
       }{
          m
       }
    \right).
\label{eq:replacement-trace-computation}
\end{align}
Now
\[
    \operatorname{Tr}
    \left(
       \frac{
          \operatorname{Tr}_j \Pi_{\mathcal S}
       }{
          m
       }
    \right)
    =
    \frac{
       \operatorname{Tr}(\Pi_{\mathcal S})
    }{
       m
    }
    =
    1,
\]
and
\[
    \operatorname{supp}
    \left(
       \frac{
          \operatorname{Tr}_j \Pi_{\mathcal S}
       }{
          m
       }
    \right)
    =
    \operatorname{Red}_j(\mathcal S)
    =
    \mathcal R_j.
\]
Thus
\[
    \frac{
       \operatorname{Tr}_j \Pi_{\mathcal S}
    }{
       m
    }
    \in
    \mathcal D(\mathcal R_j).
\]
Since
\(
    \mathcal R_j
\)
is Haar-twirl admissible,
\[
    \mathcal T^{(s-1)}
    \left(
       \frac{
          \operatorname{Tr}_j \Pi_{\mathcal S}
       }{
          m
       }
    \right)
    =
    \sigma_{\mathcal R_j}.
\]
Combining this with
\eqref{eq:replacement-trace-computation}
proves~\eqref{eq:replacement-state-partial-trace}.
\end{proof}

\subsubsection{\textsf{QPB}-security under hereditary dimensional reduction}\label{subsubsec:QPB-dimensional-reduction}

We finally state the operational consequence of the preceding definitions. The reduced protocol is defined directly on the canonical reduced subspace.

\begin{theorem}[\textsf{QPB}-security preservation under hereditary reduction]\label{thm:QPB-hereditary-security}
Let
\(
    \mathcal S\subset \mathcal H_d^{\otimes s}
\)
be \textsf{QPB}-hereditary in coordinate \(j\), and let
\[
    \mathcal R_j
    :=
    \operatorname{Red}_j(\mathcal S).
\]
Let
\(
    (\mathfrak{U},w)
\),
with
\(
    \mathfrak{U}=\{U_k\}_{k\in K}
\),
be an exact unitary
\(
    (\mathcal S,s)
\)-design.

For the reduced \((s-1)\)-recipient protocol, define
\begin{equation}
\label{eq:reduced-encryption-correct}
    \operatorname{Enc}_k^{(s-1)}(\rho)
    :=
    U_k^{\otimes(s-1)}
    \mathcal{J}_{\mathcal{R}_{j}}(\rho)
    (U_k^\dagger)^{\otimes(s-1)},
    \qquad
    \rho\in\mathcal D(\mathcal R_j),
\end{equation}
with the same local decryption map
\begin{equation}
\label{eq:reduced-local-decryption}
    \operatorname{Dec}_k(\gamma)
    :=
    U_k^\dagger\gamma U_k.
\end{equation}
Then:

\begin{enumerate}[label=\textup{(\roman*)}]
\item
the reduced protocol is perfectly correct:
\begin{equation}
\label{eq:reduced-perfect-correctness}
\operatorname{Dec}_k^{\otimes(s-1)}
\circ
\operatorname{Enc}_k^{(s-1)}
=
\mathcal{J}_{\mathcal{R}_{j}}.
\end{equation}

\item the averaged reduced encryption is the replacement channel
\begin{equation}
\label{eq:reduced-perfect-security}
     \mathbb{E}_{k\in \mathcal{K}}\left[\operatorname{Enc}^{(s-1)}_{k}(\rho)\right]
    =
    \sigma_{\mathcal R_j}
    \qquad
    \forall
    \rho\in\mathcal D(\mathcal R_j),
\end{equation}
where
\(
\label{eq:reduced-sigma-trace}
    \sigma_{\mathcal R_j}
    =
    \operatorname{Tr}_j
       (\sigma_{\mathcal S});
\)

\item
the same unitary ensemble, the same weights, and therefore the same
key set are used at levels \(s\) and \(s-1\).
\end{enumerate}
\end{theorem}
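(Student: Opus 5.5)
The proof splits into the three assertions, and each follows almost directly from results already established.

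For (i), I compute directly. Since \(\operatorname{Dec}_k^{\otimes(s-1)}(Y)=(U_k^\dagger)^{\otimes(s-1)}YU_k^{\otimes(s-1)}\), composing with \eqref{eq:reduced-encryption-correct} makes the conjugations cancel. This gives \(\mathcal J_{\mathcal R_j}\) for every key \(k\), exactly as in the proof of Proposition~\ref{prop:qpb-from-restricted-design}.

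For (ii), I will use the definitions of the two hereditary properties. First, design heredity in coordinate \(j\) (Definition~\ref{qpb-heriditary}) says that the exact \((\mathcal S,s)\)-design \((\mathfrak U,w)\) is also an exact \((\mathcal R_j,s-1)\)-design. Hence
\[
\mathbb E_k\bigl[\operatorname{Enc}^{(s-1)}_k(\rho)\bigr]=\Phi^{(s-1)}_{\mathfrak U,w}\bigl(\mathcal J_{\mathcal R_j}(\rho)\bigr)=\mathcal T^{(s-1)}\bigl(\mathcal J_{\mathcal R_j}(\rho)\bigr)
\]
for every \(\rho\in\mathcal D(\mathcal R_j)\). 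Second, privacy heredity says that \(\mathcal R_j\) is Haar-twirl admissible. Lemma~\ref{lem:operator-form-qpb-admissible} then gives \(\mathcal T^{(s-1)}(\mathcal J_{\mathcal R_j}(\rho))=\sigma_{\mathcal R_j}\) for all such \(\rho\). Finally, Proposition~\ref{prop:replacement-state-partial-trace} applies because \(\mathcal S\) is admissible and privacy-hereditary, and it identifies \(\sigma_{\mathcal R_j}=\operatorname{Tr}_j(\sigma_{\mathcal S})\).

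For (iii), nothing needs proving beyond noting that the reduced protocol is defined from \(\{U_k\}_{k\in K}\) and \(w\) without modification.

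The only step needing care is that the design-heredity hypothesis is applied to the given ensemble. This is automatic, since design heredity is quantified over every exact \((\mathcal S,s)\)-design. No real obstacle is expected.
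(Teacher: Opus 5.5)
Your proposal is correct and follows the paper's proof essentially step by step: direct cancellation for correctness, design heredity to replace the ensemble average by $\mathcal T^{(s-1)}$ on $\mathcal R_j$, privacy heredity to make that a replacement channel, and Proposition~\ref{prop:replacement-state-partial-trace} to identify $\sigma_{\mathcal R_j}=\operatorname{Tr}_j(\sigma_{\mathcal S})$. The appeal to Lemma~\ref{lem:operator-form-qpb-admissible} is harmless but not needed, since the definition of admissibility already gives the replacement identity on $\mathcal D(\mathcal R_j)$.
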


\begin{proof}
For every
\(
    \rho\in\mathcal D(\mathcal R_j)
\),
\begin{align}
    \operatorname{Dec}_k^{\otimes(s-1)}
    \left(
       \operatorname{Enc}_k^{(s-1)}(\rho)
    \right)
    &=
    (U_k^\dagger)^{\otimes(s-1)}
    \left[
       U_k^{\otimes(s-1)}
       \rho
       (U_k^\dagger)^{\otimes(s-1)}
    \right]
    U_k^{\otimes(s-1)}
    \notag\\
    &=
    \rho,
\end{align}
which proves perfect correctness.

Since \(\mathcal S\) is design-hereditary,
the exact
\(
    (\mathcal S,s)
\)-design
\(
    (\mathfrak{U},w)
\)
is also an exact
\(
    (\mathcal R_j,s-1)
\)-design. Therefore
\begin{align}
     \mathbb{E}_{k\in \mathcal{K}}\left[\operatorname{Enc}^{(s-1)}_{k}(\rho)\right]
     &=\sum_{k\in K}w_k\operatorname{Enc}_k^{(s-1)}(\rho)\notag\\
    &=\Phi_{\mathfrak{U},w}^{(s-1)}(\rho)
    \notag\\
    &=\mathcal T^{(s-1)}(\rho).
\end{align}
Since \(\mathcal S\) is also privacy-hereditary,
\(
    \mathcal R_j
\)
is Haar-twirl admissible, and hence
\[
    \mathcal T^{(s-1)}(\rho)
    =
    \sigma_{\mathcal R_j}
    \qquad
    \forall
    \rho\in\mathcal D(\mathcal R_j).
\]
Finally,
Proposition~\ref{prop:replacement-state-partial-trace}
gives
\[
    \sigma_{\mathcal R_j}
    =
    \operatorname{Tr}_j
       (\sigma_{\mathcal S}).
\]
This proves all assertions.
\end{proof}

\begin{remark}\label{rem:approximate-heredity}
It is not always possible to convert an exact hereditary statement into an \(\varepsilon\)-approximate statement by keeping the same \(\varepsilon\).

Suppose, for example, that design descent is obtained from a covariant
lifting map
\[
    \mathscr{L}\colon
    \mathcal L(\mathcal R_j)
    \longrightarrow
    \mathcal L(\mathcal S)
\]
as in
Proposition~\ref{prop:covariant-lifting}.
If the restricted error is measured in the induced trace norm, then
the factorization
\[
    \Phi_{\mathfrak{U},w}^{(s-1)}
    -
    \mathcal T^{(s-1)}
    =
    \operatorname{Tr}_j
    \circ
    \left(
       \Phi_{\mathfrak{U},w}^{(s)}
       -
       \mathcal T^{(s)}
    \right)
    \circ
    \mathscr{L}
\]
gives the estimate
\begin{equation}
\label{eq:approximate-heredity-bound}
\begin{split}
    \left\|
       \Phi_{\mathfrak{U},w}^{(s-1)}
       -
       \mathcal T^{(s-1)}
    \right\|_{1\to1,\mathcal R_j}
    \leq
    \|\mathscr{L}\|_{1\to1}
    \left\|
       \Phi_{\mathfrak{U},w}^{(s)}
       -
       \mathcal T^{(s)}
    \right\|_{1\to1,\mathcal S},
\end{split}
\end{equation}
because the partial trace has induced trace norm at most one. Thus the lower approximation error is controlled by the norm of the lifting map. Preservation of the same error requires an additional estimate, for example
\(
    \|\mathscr{L}\|_{1\to1}\leq1
\).
\end{remark}

\section{Conclusion and future directions}
\label{sec:conclusion}
The purpose of this paper has been to understand quantum private
broadcast through the invariant structure of the collective unitary
twirl.  The main structural result,
Theorem~\ref{thm:qpb-classification},
gives a complete characterization of the subspaces
\(
    \mathcal S
    \subset
    \mathcal H_d^{\otimes t}
\)
on which
\(
    \mathcal T^{(t)}
\)
acts as a replacement channel.  In Schur-Weyl coordinates, privacy is equivalent to the simultaneous state-independence of the Schur-block probabilities and of the multiplicity-space marginals. The resulting class is substantially larger than the collection of subspaces contained in one isotypic component: the multi-block constructions of Proposition~\ref{prop:two-block-construction} and Theorem~\ref{thm:general-multiblock-construction} exhibit coherent admissible subspaces with support across several Schur-Weyl sectors.

Proposition~\ref{prop:classification-KL}
places this classification in the general theory of quantum error
correction.  The collective Haar twirl is a conditional expectation
onto
\[
    \mathfrak A_t
    =
    \bigoplus_{\lambda}
        I_{V_\lambda}\otimes\mathcal L(W_\lambda),
\]
and the Haar-twirl admissible subspaces are those on which every element of this algebra has scalar compression. They are the exactly correctable subspaces of a complementary channel of the twirl. The fixed block weights and fixed multiplicity marginals are therefore the explicit Schur-Weyl form of the corresponding Knill-Laflamme equations.

The second part of the paper develops a hereditary theory under deletion of tensor factors. The distinction between privacy heredity and design heredity is essential: the former is a property of the Haar channel, whereas the latter is a property of the moment constraints imposed on a finite ensemble.  Their conjunction is the condition required for the same key ensemble to remain a \textsf{QPB}-scheme after a recipient is removed. We obtained exact criteria for both properties, established full heredity of the symmetric tower, determined the sharp exterior-power threshold
\[
    s\leq\frac{d+1}{2},
\]
and described hereditary canonical carriers through paths in the Young graph and inclusions of Peter-Weyl coefficient spaces.

Several problems remain open.

\smallskip

\begin{enumerate}
\item Recent work~\cite{SchusterHaferkampHuang2025,CuiSchusterBrandaoHuang2025} studies low-depth constructions of full unitary designs, and
Schuster-Haferkamp-Huang prove a gluing lemma for full designs. This leaves open the problem of gluing local symmetric unitary designs directly.

The Cauchy decomposition
\[
    \operatorname{Sym}^{t}
    (\mathcal H_A\otimes\mathcal H_B)
    \cong
    \bigoplus_{\lambda\vdash t}
        \mathbb S_\lambda(\mathcal H_A)
        \otimes
        \mathbb S_\lambda(\mathcal H_B)
\]
shows why the known full-design proof does not immediately descend: global symmetry contains correlated non-one-row sectors that are not controlled by local symmetric moments.  A symmetric gluing theorem would have to control these sectors collectively or find a mechanism that bypasses their explicit resolution.

\smallskip
\item  Classification of arbitrary multi-block hereditary subspaces Theorem~\ref{thm:canonical-carrier-branching} gives a complete one-step classification for canonical single-copy carriers. A substantially harder problem is to describe
\(
    \operatorname{Red}_j(\mathcal S)
\)
for a general multi-block Haar-twirl admissible subspace of Theorem~\ref{thm:qpb-classification}. This requires to understand how the fixed multiplicity states in several \(\mathfrak{S}_s\)-blocks transform simultaneously under restriction to \(\mathfrak{S}_{s-1}\).
\end{enumerate}

\begin{appendices}
\section{Haar-twirl on Schur-Weyl decomposition}\label{secA1}

Let \(d,t\) be two positive integers, and 
\(
    \mathcal{H}^{\otimes t}_{d}:=(\mathbb C^d)^{\otimes t}
\). Consider a Schur-Weyl decomposition
\begin{equation}
\label{eq:appendix-Schur-Weyl-decomposition}
    \mathcal{H}^{\otimes t}_{d}
    \cong
    \bigoplus_{\lambda\in\Lambda(d,t)}
    (V_\lambda\otimes W_\lambda).
\end{equation}
For each \(\lambda\in\Lambda(d,t)\), let
\[
    \iota_\lambda \colon
    (V_\lambda\otimes W_\lambda)\longrightarrow\mathcal{H}^{\otimes t}_{d}
\]
be the canonical isometric inclusion and let
\begin{equation}
\label{eq:appendix-Schur-projector}
    \Pi_\lambda
    :=
    \iota_\lambda\iota_\lambda^\dagger
\end{equation}
be the orthogonal projection onto the \(\lambda\)-th Schur-Weyl block.
Thus
\begin{equation}
\label{eq:appendix-inclusion-relations}
    \iota_\lambda^\dagger\iota_\lambda
       =I_{(V_\lambda\otimes W_\lambda)},
    \qquad
    \Pi_\lambda\Pi_\mu
       =\delta_{\lambda\mu}\Pi_\lambda,
    \qquad
    \sum_{\lambda\in\Lambda(d,t)}\Pi_\lambda
       =I_{\mathcal{H}^{\otimes t}_{d}}.
\end{equation}
More explicitly, for arbitrary \(\lambda,\mu\in\Lambda(d,t)\),
\begin{equation}
\label{eq:appendix-cross-inclusion-relations}
    \iota_\lambda^\dagger\iota_\mu
    =
    \begin{cases}
       I_{(V_\lambda\otimes W_\lambda)},&\lambda=\mu,\\
       \mathbf{0},&\lambda\neq\mu.
    \end{cases}
\end{equation}

For \(U\in\mathrm U(d)\), the collective representation decomposes as
\begin{equation}
\label{eq:appendix-collective-representation-blocks}
    \Gamma_{\U(d)}(U)
    \cong
    \bigoplus_{\lambda\in\Lambda(d,t)}
    \left(V_{\lambda}(U)\otimes I_{W_\lambda}\right),
\end{equation}
where \(V_{\lambda}(U)\) denotes the irreducible representation of
\(\mathrm U(d)\) on \(V_\lambda\), and
\begin{equation}
\label{eq:appendix-intertwining-inclusion}
    \Gamma_{\U(d)}(U)\iota_\lambda
    =
    \iota_\lambda
    \left(V_{\lambda}(U)\otimes I_{W_\lambda}\right).
\end{equation}
The representations on \(V_\lambda\) and \(V_\mu\) are inequivalent when
\(\lambda\neq\mu\).

\begin{corollary}\label{cor:schur-integral}
Let
\[
\Gamma_{\U(d),V_\lambda} \colon \U(d)\longrightarrow\U(V_\lambda)
\]
be a finite-dimensional irreducible unitary representation, and let
\[
V_\lambda(U):=\Gamma_{\U(d),V_\lambda}(U).
\]
Let \(d\mu_{\mathrm{Haar}}\) be normalized Haar probability measure on
\(\U(d)\). Then, for every \(A\in\End(V_\lambda)\),
\begin{equation}
\int_{\U(d)}
V_\lambda(U)A V_\lambda(U)^\dagger\,
d\mu_{\mathrm{Haar}}(U)
=
\frac{\Tr(A)}{\dim V_\lambda}I_{V_\lambda}.
\label{eq:schur-integral}
\end{equation}
\end{corollary}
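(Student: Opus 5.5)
The statement is the Schur integral of Corollary~\ref{cor:schur-integral}: for an irreducible unitary representation \(V_\lambda\) of \(\mathrm U(d)\) and any \(A\in\End(V_\lambda)\), the Haar average \(\int V_\lambda(U)AV_\lambda(U)^\dagger\,d\mu_{\mathrm{Haar}}(U)\) equals \(\tfrac{\Tr(A)}{\dim V_\lambda}I_{V_\lambda}\). The plan is the standard two-step argument: show that the averaged operator commutes with the representation, invoke Schur's lemma to make it scalar, and fix the scalar by taking traces.

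First, define
\[
\bar A:=\int_{\U(d)}V_\lambda(U)AV_\lambda(U)^\dagger\,d\mu_{\mathrm{Haar}}(U).
\]
This integral is well defined because \(U\mapsto V_\lambda(U)AV_\lambda(U)^\dagger\) is continuous into the finite-dimensional space \(\End(V_\lambda)\), and \(\U(d)\) is compact. For a fixed \(W\in\U(d)\), I will use the homomorphism property \(V_\lambda(W)V_\lambda(U)=V_\lambda(WU)\) together with unitarity \(V_\lambda(U)^\dagger=V_\lambda(U^{-1})\). Since left multiplication by \(W\) commutes with the (entrywise) integral, this gives
\[
V_\lambda(W)\bar A V_\lambda(W)^\dagger=\int V_\lambda(WU)AV_\lambda(WU)^\dagger\,d\mu_{\mathrm{Haar}}(U).
\]
Left invariance of Haar measure under the substitution \(U\mapsto W^{-1}U\) then shows that this equals \(\bar A\). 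Hence \(\bar A\) commutes with every \(V_\lambda(W)\).

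Second, because \(V_\lambda\) is irreducible over \(\mathbb C\) and finite dimensional, Schur's lemma applies. The eigenspace of \(\bar A\) for any eigenvalue \(c\) is a nonzero invariant subspace, so it must be all of \(V_\lambda\). Therefore \(\bar A=c\,I_{V_\lambda}\).

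Finally, to determine \(c\), take the trace inside the integral. Cyclicity of the trace gives \(\Tr(V_\lambda(U)AV_\lambda(U)^\dagger)=\Tr A\) for every \(U\), and \(\mu_{\mathrm{Haar}}\) has total mass one, so \(\Tr\bar A=\Tr A\). Comparing with \(\Tr(cI_{V_\lambda})=c\dim V_\lambda\) yields \(c=\Tr(A)/\dim V_\lambda\).

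I expect no real obstacle here. The only points that need care are justifying that the integral commutes with multiplication by constant operators and with the trace, which follows from linearity of the entrywise Bochner integral, and applying Schur's lemma correctly, which needs irreducibility over an algebraically closed field.
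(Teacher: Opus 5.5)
Your proposal is correct and follows essentially the same route as the paper's proof in Appendix~A: you use left invariance of Haar measure to show that the averaged operator commutes with every \(V_\lambda(W)\), then Schur's lemma to conclude it is a scalar multiple of \(I_{V_\lambda}\), then cyclicity of the trace together with normalization of \(\mu_{\mathrm{Haar}}\) to identify the scalar as \(\Tr(A)/\dim V_\lambda\). Your short eigenspace proof of Schur's lemma and your remark on well-definedness of the integral are harmless additions to the same argument.
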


\begin{proof}
Define
\[
\Phi(A)
:=
\int_{\U(d)}
V_\lambda(U)A V_\lambda(U)^\dagger\,
d\mu_{\mathrm{Haar}}(U).
\]
For every \(W\in\U(d)\), left invariance of Haar measure gives
\begin{align*}
V_\lambda(W)\Phi(A)V_\lambda(W)^\dagger
&=
\int_{\U(d)}
V_\lambda(WU)A V_\lambda(WU)^\dagger\,
d\mu_{\mathrm{Haar}}(U) \\
&=
\int_{\U(d)}
V_\lambda(U)A V_\lambda(U)^\dagger\,
d\mu_{\mathrm{Haar}}(U) \\
&=
\Phi(A).
\end{align*}
Thus
\[
\Phi(A)\in\End_{\U(d)}(V_\lambda).
\]
Since \(V_\lambda\) is irreducible, Schur's lemma implies that
\[
\Phi(A)=c(A)I_{V_\lambda}
\]
for some \(c(A)\in\CC\). Taking traces and using the cyclicity of trace,
\[
c(A)\dim_{\mathbb{C}} V_\lambda
=
\Tr(\Phi(A))
=
\int_{\U(d)}
\Tr\bigl(V_\lambda(U)A V_\lambda(U)^\dagger\bigr)\,
d\mu_{\mathrm{Haar}}(U)
=
\Tr(A).
\]
Hence
\[
c(A)=\frac{\Tr(A)}{\dim V_\lambda},
\]
which proves \eqref{eq:schur-integral}.
\end{proof}

\begin{lemma}\label{lem:blockwise-collective-Haar-twirl}
Let
\[
    \mathcal T^{(t)}(X)
    :=
    \int_{\mathrm U(d)}
    U^{\otimes t}X(U^\dagger)^{\otimes t}\,d\mu_{\mathrm{Haar}}(U),
    \qquad
    X\in\mathcal L(\mathcal{H}^{\otimes t}_{d}),
\]
where \(d\mu_{\mathrm{Haar}}(U)\) is normalized Haar probability measure on \(\mathrm U(d)\).
For \(\lambda,\mu\in\Lambda(d,t)\), define the local block
\begin{equation}
\label{eq:appendix-X-lambda-mu}
    X_{\lambda\mu}
    :=
    \iota_\lambda^\dagger X\iota_\mu
    \in
    \mathcal L(V_\mu\otimes W_\mu,V_\lambda\otimes W_\lambda).
\end{equation}
In particular,
\[
    X_{\lambda\lambda}
    \in
    \mathcal L(V_\lambda\otimes W_\lambda).
\]
Then
\begin{equation}
\label{eq:appendix-twirl-direct-sum-form}
   \mathcal T^{(t)}(X)
    =
    \sum_{\lambda\in\Lambda(d,t)}
    \iota_\lambda
    \left[
       \frac{I_{V_\lambda}}{\dim_{\mathbb C}V_\lambda}
       \otimes
       \operatorname{Tr}_{V_\lambda}
       \!\left(X_{\lambda\lambda}\right)
    \right]
    \iota_\lambda^\dagger.
\end{equation}
Moreover,
\begin{equation}
\label{eq:appendix-compressed-block-identification}
    \Pi_\lambda X\Pi_\lambda
    =
    \iota_\lambda X_{\lambda\lambda}\iota_\lambda^\dagger,
\end{equation}
Consequently, \(\mathcal T^{(t)}\) is the trace-preserving, Hilbert--Schmidt orthogonal conditional expectation onto
\begin{equation}\label{eq:twirl-range}
  \operatorname{im}\mathcal T^{(t)}
  =\bigoplus_{\lambda\in\Lambda(d,t)}
    \iota_\lambda\bigl(I_{V_\lambda}\otimes
    \mathcal L(W_\lambda)\bigr)\iota_\lambda^\dagger.
\end{equation}
\end{lemma}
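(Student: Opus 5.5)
The plan is to reduce everything to the block-diagonal structure of the collective representation and then apply Corollary~\ref{cor:schur-integral} block by block. First, I would insert the resolution of the identity $\sum_\lambda\Pi_\lambda=I$ on both sides of $X$ and write
\[
X=\sum_{\lambda,\mu}\iota_\lambda X_{\lambda\mu}\iota_\mu^\dagger .
\]
Using the intertwining relation~\eqref{eq:appendix-intertwining-inclusion} and its adjoint, each term transforms as
\[
U^{\otimes t}\iota_\lambda X_{\lambda\mu}\iota_\mu^\dagger(U^\dagger)^{\otimes t}
=\iota_\lambda\bigl(V_\lambda(U)\otimes I\bigr)X_{\lambda\mu}\bigl(V_\mu(U)^\dagger\otimes I\bigr)\iota_\mu^\dagger .
\]
The twirl therefore acts separately on each $(\lambda,\mu)$ block. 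The identity~\eqref{eq:appendix-compressed-block-identification} is immediate from $\Pi_\lambda=\iota_\lambda\iota_\lambda^\dagger$.

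Second, I would show that the off-diagonal blocks vanish. The averaged operator
\[
Y_{\lambda\mu}:=\int\bigl(V_\lambda(U)\otimes I\bigr)X_{\lambda\mu}\bigl(V_\mu(U)^\dagger\otimes I\bigr)\,d\mu_{\mathrm{Haar}}(U)
\]
satisfies $(V_\lambda(W)\otimes I)Y_{\lambda\mu}=Y_{\lambda\mu}(V_\mu(W)\otimes I)$ by invariance of Haar measure. To exploit this, I expand $X_{\lambda\mu}=\sum_k A_k\otimes B_k$ with $A_k\in\mathcal L(V_\mu,V_\lambda)$. Each averaged $A_k$ is then an intertwiner $V_\mu\to V_\lambda$, and since the two representations are inequivalent for $\lambda\neq\mu$, Schur's lemma makes it zero. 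This inequivalence step is the only place where Schur–Weyl input is needed.

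Third, for the diagonal blocks, I would use the same expansion $X_{\lambda\lambda}=\sum_k A_k\otimes B_k$ and apply Corollary~\ref{cor:schur-integral} to each $A_k$, which gives
\[
\sum_k\frac{\operatorname{Tr}A_k}{\dim V_\lambda}I_{V_\lambda}\otimes B_k=\frac{I_{V_\lambda}}{\dim V_\lambda}\otimes\operatorname{Tr}_{V_\lambda}X_{\lambda\lambda}.
\]
This establishes~\eqref{eq:appendix-twirl-direct-sum-form}.

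Finally, I would derive the conditional-expectation claims. The formula shows that the image lies in $\bigoplus_\lambda\iota_\lambda(I\otimes\mathcal L(W_\lambda))\iota_\lambda^\dagger$. Conversely, every element $\iota_\lambda(I\otimes B)\iota_\lambda^\dagger$ is fixed by the formula, since its partial trace is $\dim V_\lambda\,B$. Together these give the image~\eqref{eq:twirl-range} and idempotence. Trace preservation follows from cyclicity, or directly from the formula together with $\sum\Pi_\lambda=I$. Hilbert–Schmidt self-adjointness follows from the change of variables $U\mapsto U^{-1}$ (the same computation as~\eqref{eq:twirl-self-adjoint}), so $\mathcal T^{(t)}$ is the orthogonal projection onto its image, which is an algebra. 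The main obstacle is modest: it is handling the multiplicity factor cleanly in the Schur's lemma step. The tensor expansion $X_{\lambda\mu}=\sum_k A_k\otimes B_k$ is the device I would rely on to avoid any subtlety there.
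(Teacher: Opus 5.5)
Your proposal is correct and follows essentially the same route as the paper. You expand $X=\sum_{\lambda,\mu}\iota_\lambda X_{\lambda\mu}\iota_\mu^\dagger$, use the intertwining relation, kill the off-diagonal blocks by Schur's lemma applied to each multiplicity-tensor component, and evaluate the diagonal blocks by Schur averaging with trace normalization; the paper expands in matrix units $\sum_{a,b}Y_{ab}\otimes|e_a\rangle\langle e_b|$ where you use a generic $\sum_k A_k\otimes B_k$. Your final paragraph (image, idempotence, trace preservation, Hilbert--Schmidt self-adjointness via $U\mapsto U^{-1}$) is in fact more complete than the paper's proof of this lemma, which stops after the block formula and proves those properties separately in Propositions~\ref{prop:appendix-twirl-trace-preserving} and~\ref{prop:appendix-twirl-idempotent} and in~\eqref{eq:twirl-self-adjoint}.
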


\begin{proof}
Using the mutually orthogonal projections in
\eqref{eq:appendix-inclusion-relations}, we have
\begin{align}
X
&=
I_{\mathcal{H}^{\otimes t}_{d}}X I_{\mathcal{H}^{\otimes t}_{d}}
\notag\\
&=
\left(
   \sum_{\lambda\in\Lambda(d,t)}\Pi_\lambda
\right)
X
\left(
   \sum_{\mu\in\Lambda(d,t)}\Pi_\mu
\right)
\notag\\
&=
\sum_{\lambda,\mu\in\Lambda(d,t)}
\Pi_\lambda X\Pi_\mu.
\label{eq:appendix-X-expanded-by-projectors}
\end{align}
For every pair \((\lambda,\mu)\), equations
\eqref{eq:appendix-Schur-projector} and
\eqref{eq:appendix-X-lambda-mu} give
\begin{align}
\Pi_\lambda X\Pi_\mu
&=
\left(\iota_\lambda\iota_\lambda^\dagger\right)
X
\left(\iota_\mu\iota_\mu^\dagger\right)
\notag\\
&=
\iota_\lambda
\left(
   \iota_\lambda^\dagger X\iota_\mu
\right)
\iota_\mu^\dagger
\notag\\
&=
\iota_\lambda X_{\lambda\mu}\iota_\mu^\dagger.
\label{eq:appendix-projector-block-as-inclusion}
\end{align}
Substituting \eqref{eq:appendix-projector-block-as-inclusion} into
\eqref{eq:appendix-X-expanded-by-projectors}, we obtain the full block-matrix
expansion
\begin{equation}
\label{eq:appendix-X-full-block-expansion}
    X
    =
    \sum_{\lambda,\mu\in\Lambda(d,t)}
    \iota_\lambda X_{\lambda\mu}\iota_\mu^\dagger.
\end{equation}
Taking \(\mu=\lambda\) in
\eqref{eq:appendix-projector-block-as-inclusion} proves
\eqref{eq:appendix-compressed-block-identification}.  Conversely,
\begin{align}
\iota_\lambda^\dagger
\left(\Pi_\lambda X\Pi_\lambda\right)
\iota_\lambda
&=
\iota_\lambda^\dagger
\left(
   \iota_\lambda X_{\lambda\lambda}
   \iota_\lambda^\dagger
\right)
\iota_\lambda
\notag\\
&=
\left(\iota_\lambda^\dagger\iota_\lambda\right)
X_{\lambda\lambda}
\left(\iota_\lambda^\dagger\iota_\lambda\right)
\notag\\
&=
I_{V_{\lambda} \otimes W_{\lambda}}
X_{\lambda\lambda}
I_{V_{\lambda} \otimes W_{\lambda}}
\notag\\
&=
X_{\lambda\lambda}.
\label{eq:appendix-compressed-block-recovered}
\end{align}

\smallskip
From \eqref{eq:appendix-intertwining-inclusion},
\begin{equation}
\label{eq:appendix-action-on-inclusion}
    U^{\otimes t}\iota_\lambda
    =
    \iota_\lambda
    \left(V_{\lambda}(U)\otimes I_{W_\lambda}\right).
\end{equation}
Taking adjoints of the corresponding identity for \(\mu\) gives
\begin{equation}
\label{eq:appendix-action-on-inclusion-adjoint}
    \iota_\mu^\dagger(U^\dagger)^{\otimes t}
    =
    \left(V_\mu(U)^\dagger\otimes I_{W_\mu}\right)
    \iota_\mu^\dagger.
\end{equation}
Therefore, using \eqref{eq:appendix-X-full-block-expansion},
\begin{align}
&U^{\otimes t}X(U^\dagger)^{\otimes t}
\notag\\
&\quad=
U^{\otimes t}
\left(
   \sum_{\lambda,\mu}
   \iota_\lambda X_{\lambda\mu}\iota_\mu^\dagger
\right)
(U^\dagger)^{\otimes t}
\notag\\
&\quad=
\sum_{\lambda,\mu}
\left(U^{\otimes t}\iota_\lambda\right)
X_{\lambda\mu}
\left(\iota_\mu^\dagger(U^\dagger)^{\otimes t}\right)
\notag\\
&\quad=
\sum_{\lambda,\mu}
\iota_\lambda
\left(V_{\lambda}(U)\otimes I_{W_\lambda}\right)
X_{\lambda\mu}
\left(V_{\mu}(U)^{\dagger}\otimes I_{W_\mu}\right)
\iota_\mu^\dagger.
\label{eq:appendix-conjugated-X-block-expansion}
\end{align}
Because the set \(\Lambda(d,t)\) is finite, we may integrate the finite sum
term by term.  Thus
\begin{align}
\mathcal T^{(t)}(X)
&=
\int_{\mathrm U(d)}
U^{\otimes t}X(U^\dagger)^{\otimes t}\,d\mu_{\mathrm{Haar}}U
\notag\\
&=
\sum_{\lambda,\mu}
\iota_\lambda
\Upsilon_{\lambda\mu}(X_{\lambda\mu})
\iota_\mu^\dagger,
\label{eq:appendix-twirl-before-Schur-lemma}
\end{align}
where
\begin{equation}
\label{eq:appendix-A-lambda-mu-definition}
\begin{split}
\Upsilon_{\lambda\mu}(X_{\lambda\mu})
:=
\int_{\mathrm U(d)}
&\left(V_{\lambda}(U)\otimes I_{W_\lambda}\right)
X_{\lambda\mu}\left(V_{\mu}(U)^{\dagger}\otimes I_{W_\mu}\right)
\,d\mu_{\mathrm{Haar}}(U).
\end{split}
\end{equation}

\smallskip
Fix \(\lambda\neq\mu\).  Choose orthonormal bases
\[
    \{e_a^{(\lambda)}\}_{a=1}^{\dim_{\mathbb{C}} W_\lambda}
    \quad\text{and}\quad
    \{e_b^{(\mu)}\}_{b=1}^{\dim_{\mathbb{C}} W_\mu}
\]
of \(W_\lambda\) and \(W_\mu\), respectively.  Then there exist uniquely
determined operators
\(
    Y_{ab}^{(\lambda,\mu)}
    \in\mathcal L(V_\mu,V_\lambda)
\)
such that
\begin{equation}
\label{eq:appendix-off-diagonal-multiplicity-expansion}
    X_{\lambda\mu}
    =
    \sum_{a=1}^{\dim_{\mathbb{C}} W_\lambda}
    \sum_{b=1}^{\dim_{\mathbb{C}} W_\mu}
    Y_{ab}^{(\lambda,\mu)}
    \otimes
    |e_a^{(\lambda)}\rangle
    \langle e_b^{(\mu)}|.
\end{equation}
Substitution into \eqref{eq:appendix-A-lambda-mu-definition} gives
\begin{align}
\Upsilon_{\lambda\mu}(X_{\lambda\mu})
&=
\sum_{a,b}
M_{ab}^{(\lambda,\mu)}
\otimes
|e_a^{(\lambda)}\rangle
\langle e_b^{(\mu)}|,
\label{eq:appendix-off-diagonal-average-components}
\end{align}
where
\begin{equation}
\label{eq:appendix-M-ab-definition}
    M_{ab}^{(\lambda,\mu)}
    :=
    \int_{\mathrm U(d)}
    V_\lambda(U)
    Y_{ab}^{(\lambda,\mu)}
    V_\mu(U)^\dagger\,d \mu_{\mathrm{Haar}}(U)
    \in\mathcal L(V_\mu,V_\lambda).
\end{equation}

We show that \(M_{ab}^{(\lambda,\mu)}\) intertwines the representations on
\(V_\mu\) and \(V_\lambda\).  Fix \(G\in\mathrm U(d)\).  By left invariance
of Haar measure, the change of variables \(U'=GU\) gives
\begin{align}
&V_\lambda(G)
M_{ab}^{(\lambda,\mu)}
V_\mu(G)^\dagger
\notag\\
&\quad=
\int_{\mathrm U(d)}
V_\lambda(G) V_\lambda(U)
Y_{ab}^{(\lambda,\mu)}
V_\mu(U)^\dagger V_\mu^\dagger(G)\,d\mu_{\mathrm{Haar}}(U)
\notag\\
&\quad=
\int_{\mathrm U(d)}
V_\lambda(GU)
Y_{ab}^{(\lambda,\mu)}
V_\mu^\dagger(GU)\,d\mu_{\mathrm{Haar}}U
\notag\\
&\quad=
\int_{\mathrm U(d)}
V_\lambda(U')
Y_{ab}^{(\lambda,\mu)}
V_\mu(U')^\dagger\,d_{\mathrm{Haar}}(U')
\notag\\
&\quad=
M_{ab}^{(\lambda,\mu)}.
\label{eq:appendix-M-ab-invariant}
\end{align}
Multiplying \eqref{eq:appendix-M-ab-invariant} on the right by \(V_\mu(G)\)
gives
\begin{equation}
\label{eq:appendix-M-ab-intertwiner}
    V_\lambda(G) M_{ab}^{(\lambda,\mu)}
    =
    M_{ab}^{(\lambda,\mu)}V_\mu(G)
    \qquad
    \text{for every }G\in\mathrm U(d).
\end{equation}
Thus \(M_{ab}^{(\lambda,\mu)}\) is an intertwining operator from the
irreducible representation \(V_\mu\) to the irreducible representation
\(V_\lambda\).  Since \(\lambda\neq\mu\), these representations are
inequivalent.  Schur's lemma therefore implies
\begin{equation}
\label{eq:appendix-M-ab-zero}
    M_{ab}^{(\lambda,\mu)}=0
    \qquad
    \text{for all }a,b.
\end{equation}
Using \eqref{eq:appendix-off-diagonal-average-components}, we conclude that
\begin{equation}
\label{eq:appendix-off-diagonal-average-zero}
    \Upsilon_{\lambda\mu}(X_{\lambda\mu})=0
    \qquad
    \text{whenever }\lambda\neq\mu.
\end{equation}

\smallskip
Fix \(\lambda\in\Lambda(d,t)\), and choose an orthonormal basis
\(
    \{e_a^{(\lambda)}\}_{a=1}^{\dim_{\mathbb{C}} W_\lambda}
\)
of \(W_\lambda\).  There exist uniquely determined operators
\(
    Z_{ab}^{(\lambda)}\in\mathcal L(V_\lambda)
\)
such that
\begin{equation}
\label{eq:appendix-diagonal-multiplicity-expansion}
    X_{\lambda\lambda}
    =
    \sum_{a,b=1}^{\dim_{\mathbb{C}} W_\lambda}
    Z_{ab}^{(\lambda)}
    \otimes
    |e_a^{(\lambda)}\rangle
    \langle e_b^{(\lambda)}|.
\end{equation}
Consequently,
\begin{align}
\Upsilon_{\lambda\lambda}(X_{\lambda\lambda})
&=
\int_{\mathrm U(d)}
\left(V_{\lambda}(U)\otimes I_{W_\lambda}\right)
X_{\lambda\lambda}
\left(V_{\lambda}(U)^{\dagger}\otimes I_{W_\lambda}\right)
\,d\mu_{\mathrm{Haar}}(U)
\notag\\
&=
\sum_{a,b}
\left(
   \int_{\mathrm U(d)}
   V_\lambda(U) Z_{ab}^{(\lambda)}V_\lambda^\dagger(U)\,d\mu_{\mathrm{Haar}}(U)
\right)
\otimes
|e_a^{(\lambda)}\rangle
\langle e_b^{(\lambda)}|.
\label{eq:appendix-diagonal-average-expanded}
\end{align}
Define
\begin{equation}
\label{eq:appendix-N-ab-definition}
    N_{ab}^{(\lambda)}
    :=
    \int_{\mathrm U(d)}
   V_\lambda(U) Z_{ab}^{(\lambda)}V_\lambda^\dagger(U)\,d\mu_{\mathrm{Haar}}(U).
\end{equation}
For every \(G\in\mathrm U(d)\), left invariance of Haar measure gives
\begin{align}
V_\lambda(G) N_{ab}^{(\lambda)}V_\lambda(G)^\dagger
&=
\int_{\mathrm U(d)}
V_\lambda(GU) Z_{ab}^{(\lambda)}V_\lambda(GU)^\dagger\,d\mu_{\mathrm{Haar}}(U)
\notag\\
&=
\int_{\mathrm U(d)}
V_\lambda(U') Z_{ab}^{(\lambda)}V_\lambda(U')^\dagger\,d\mu_{\mathrm{Haar}}(U')
\notag\\
&=
N_{ab}^{(\lambda)}.
\label{eq:appendix-N-ab-commutation}
\end{align}
Hence
\[
    V_\lambda(G) N_{ab}^{(\lambda)}
    =
    N_{ab}^{(\lambda)}V_\lambda(G)
    \qquad
    \text{for all }G\in\mathrm U(d).
\]
Since the representation on \(V_\lambda\) is irreducible, Schur's lemma
implies that there exists a scalar
\(\alpha_{ab}^{(\lambda)}\in\mathbb C\) such that
\begin{equation}
\label{eq:appendix-N-ab-scalar}
    N_{ab}^{(\lambda)}
    =
    \alpha_{ab}^{(\lambda)}I_{V_\lambda}.
\end{equation}

We determine \(\alpha_{ab}^{(\lambda)}\) by taking traces.  By linearity of
trace and invariance of trace under unitary conjugation,
\begin{align}
\operatorname{Tr}_{V_\lambda}
\!\left(N_{ab}^{(\lambda)}\right)
&=
\operatorname{Tr}_{V_\lambda}
\!\left(
   \int_{\mathrm U(d)}
   V_\lambda(U) Z_{ab}^{(\lambda)}V_\lambda(U)^\dagger\,d\mu_{\mathrm{Haar}}(U)
\right)
\notag\\
&=
\int_{\mathrm U(d)}
\operatorname{Tr}_{V_\lambda}
\!\left(
   V_\lambda(U) Z_{ab}^{(\lambda)}V_\lambda(U)^\dagger
\right)d\mu_{\mathrm{Haar}}(U)
\notag\\
&=
\int_{\mathrm U(d)}
\operatorname{Tr}_{V_\lambda}
\!\left(Z_{ab}^{(\lambda)}\right)d\mu_{\mathrm{Haar}}(U)
\notag\\
&=
\operatorname{Tr}_{V_\lambda}
\!\left(Z_{ab}^{(\lambda)}\right)
\int_{\mathrm U(d)}d\mu_{\mathrm{Haar}}(U)
\notag\\
&=
\operatorname{Tr}_{V_\lambda}
\!\left(Z_{ab}^{(\lambda)}\right),
\label{eq:appendix-N-ab-trace-first-way}
\end{align}
because Haar measure is normalized.  On the other hand,
\eqref{eq:appendix-N-ab-scalar} gives
\begin{align}
\operatorname{Tr}_{V_\lambda}
\!\left(N_{ab}^{(\lambda)}\right)
&=
\alpha_{ab}^{(\lambda)}
\operatorname{Tr}_{V_\lambda}(I_{V_\lambda})
\notag\\
&=
\alpha_{ab}^{(\lambda)}
\dim_{\mathbb C}V_\lambda.
\label{eq:appendix-N-ab-trace-second-way}
\end{align}
Comparing \eqref{eq:appendix-N-ab-trace-first-way} and
\eqref{eq:appendix-N-ab-trace-second-way}, we obtain
\begin{equation}
\label{eq:appendix-alpha-ab-value}
    \alpha_{ab}^{(\lambda)}
    =
    \frac{
       \operatorname{Tr}_{V_\lambda}
       \!\left(Z_{ab}^{(\lambda)}\right)
    }{
       \dim_{\mathbb C}V_\lambda
    }.
\end{equation}
Substituting \eqref{eq:appendix-alpha-ab-value} into
\eqref{eq:appendix-diagonal-average-expanded} yields
\begin{align}
\Upsilon_{\lambda\lambda}(X_{\lambda\lambda})
&=
\sum_{a,b}
\frac{
   \operatorname{Tr}_{V_\lambda}
   \!\left(Z_{ab}^{(\lambda)}\right)
}{
   \dim_{\mathbb C}V_\lambda
}
I_{V_\lambda}
\otimes
|e_a^{(\lambda)}\rangle
\langle e_b^{(\lambda)}|
\notag\\
&=
\frac{I_{V_\lambda}}{\dim_{\mathbb C}V_\lambda}
\otimes
\left[
   \sum_{a,b}
   \operatorname{Tr}_{V_\lambda}
   \!\left(Z_{ab}^{(\lambda)}\right)
   |e_a^{(\lambda)}\rangle
   \langle e_b^{(\lambda)}|
\right].
\label{eq:appendix-diagonal-average-before-partial-trace}
\end{align}

By the definition of the partial trace over \(V_\lambda\), the expansion
\eqref{eq:appendix-diagonal-multiplicity-expansion} satisfies
\begin{equation}
\label{eq:appendix-partial-trace-expanded}
    \operatorname{Tr}_{V_\lambda}
    \!\left(X_{\lambda\lambda}\right)
    =
    \sum_{a,b}
    \operatorname{Tr}_{V_\lambda}
    \!\left(Z_{ab}^{(\lambda)}\right)
    |e_a^{(\lambda)}\rangle
    \langle e_b^{(\lambda)}|.
\end{equation}
Combining \eqref{eq:appendix-diagonal-average-before-partial-trace} and
\eqref{eq:appendix-partial-trace-expanded}, we conclude that
\begin{equation}
\label{eq:appendix-diagonal-average-final}
    \Upsilon_{\lambda\lambda}(X_{\lambda\lambda})
    =
    \frac{I_{V_\lambda}}{\dim_{\mathbb C}V_\lambda}
    \otimes
    \operatorname{Tr}_{V_\lambda}
    \!\left(X_{\lambda\lambda}\right).
\end{equation}

\smallskip
Substituting the off-diagonal vanishing
\eqref{eq:appendix-off-diagonal-average-zero} and the diagonal identity
\eqref{eq:appendix-diagonal-average-final} into
\eqref{eq:appendix-twirl-before-Schur-lemma}, we obtain
\begin{align}
\mathcal T^{(t)}(X)
&=
\sum_{\lambda\in\Lambda(d,t)}
\iota_\lambda
\Upsilon_{\lambda\lambda}(X_{\lambda\lambda})
\iota_\lambda^\dagger
\notag\\
&=
\sum_{\lambda\in\Lambda(d,t)}
\iota_\lambda
\left[
   \frac{I_{V_\lambda}}{\dim_{\mathbb C}V_\lambda}
   \otimes
   \operatorname{Tr}_{V_\lambda}
   \!\left(X_{\lambda\lambda}\right)
\right]
\iota_\lambda^\dagger.
\label{eq:appendix-twirl-derived}
\end{align}
This proves the ambient-space formula
\eqref{eq:appendix-twirl-direct-sum-form}.

To prove that this is same as the direct-sum formula
\eqref{eq:appendix-twirl-direct-sum-form}, define
\begin{equation}
\label{eq:appendix-A-lambda-local-output}
    A_\lambda(X)
    :=
    \frac{I_{V_\lambda}}{\dim_{\mathbb C}V_\lambda}
    \otimes
    \operatorname{Tr}_{V_\lambda}
    \!\left(X_{\lambda\lambda}\right)
    \in\mathcal L(V_{\lambda} \otimes W_{\lambda}).
\end{equation}
Let \(|h\rangle\in\mathcal{H}^{\otimes t}_{d}\).  By the orthogonal direct-sum
decomposition, there exist unique vectors
\(
    |h_\mu\rangle\in V_\mu \otimes W_{\mu}
\)
such that
\begin{equation}
\label{eq:appendix-h-direct-sum-decomposition}
    |h\rangle
    =
    \sum_{\mu\in\Lambda(d,t)}
    \iota_\mu|h_\mu\rangle.
\end{equation}
Then
\begin{align}
&\left(
   \sum_{\lambda\in\Lambda(d,t)}
   \iota_\lambda A_\lambda(X)\iota_\lambda^\dagger
\right)|h\rangle
\notag\\
&\quad=
\sum_{\lambda,\mu}
\iota_\lambda A_\lambda(X)
\iota_\lambda^\dagger\iota_\mu
|h_\mu\rangle
\notag\\
&\quad=
\sum_{\lambda}
\iota_\lambda A_\lambda(X)|h_\lambda\rangle.
\label{eq:appendix-zero-extension-action}
\end{align}
Thus \(\iota_\lambda A_\lambda(X)\iota_\lambda^\dagger\) acts as
\(A_\lambda(X)\) on the \(\lambda\)-th summand and as zero on every other
summand.  Consequently,
\begin{equation}
\label{eq:appendix-direct-sum-equals-embedded-sum}
    \bigoplus_{\lambda\in\Lambda(d,t)}A_\lambda(X)
    \quad\longleftrightarrow\quad
    \sum_{\lambda\in\Lambda(d,t)}
    \iota_\lambda A_\lambda(X)\iota_\lambda^\dagger
\end{equation}
under the fixed Schur-Weyl identification.  This proves
\eqref{eq:appendix-twirl-direct-sum-form}.

Finally, \eqref{eq:appendix-compressed-block-recovered} shows that the blockwise partial trace is
\begin{align}
\operatorname{Tr}_{V_\lambda}
\!\left[
   \iota_\lambda^\dagger
   \left(\Pi_\lambda X\Pi_\lambda\right)
   \iota_\lambda
\right]
=
\operatorname{Tr}_{V_\lambda}
\!\left(X_{\lambda\lambda}\right).
\label{eq:appendix-type-correct-partial-trace}
\end{align}
The tensor product
\begin{equation}
\label{eq:appendix-local-twirled-output}
    \frac{I_{V_\lambda}}{\dim_{\mathbb C}V_\lambda}
    \otimes
    \operatorname{Tr}_{V_\lambda}
    \!\left(X_{\lambda\lambda}\right)
\end{equation}
is therefore an operator on the local space
\(V_\lambda\otimes W_\lambda\), and on the full space
\(\mathcal{H}^{\otimes t}_{d}\), we have
\begin{equation}
\label{eq:appendix-local-output-zero-extension}
    \iota_\lambda
    \left[
       \frac{I_{V_\lambda}}{\dim_{\mathbb C}V_\lambda}
       \otimes
       \operatorname{Tr}_{V_\lambda}
       \!\left(X_{\lambda\lambda}\right)
    \right]
    \iota_\lambda^\dagger.
\end{equation}
This completes the proof.
\end{proof}

\begin{lemma}\label{lem:trace-embedded-normalized-block}
For \(\lambda\in\Lambda(d,t)\) and
\(Y_\lambda\in\mathcal L(W_\lambda)\), define the ambient operator
\begin{equation}
\label{eq:embedded-normalized-block}
    \widetilde Y_\lambda
    :=
    \iota_\lambda
    \left(
       \frac{I_{V_\lambda}}
            {\dim_{\mathbb C}V_\lambda}
       \otimes Y_\lambda
    \right)
    \iota_\lambda^\dagger
    \in\mathcal L(\mathcal H_d^{\otimes t}).
\end{equation}
Then, for every \(\mu\in\Lambda(d,t)\),
\begin{equation}
\label{eq:embedded-normalized-block-compression}
    \Pi_\mu\widetilde Y_\lambda\Pi_\mu
    =\delta_{\mu\lambda}\widetilde Y_\lambda
\end{equation}
and
\begin{equation}
\label{eq:trace-embedded-normalized-block}
    \operatorname{Tr}_{\mathcal H_d^{\otimes t}}
    (\Pi_\mu\widetilde Y_\lambda)
    =
    \delta_{\mu\lambda}
    \operatorname{Tr}_{W_\lambda}(Y_\lambda).
\end{equation}
\end{lemma}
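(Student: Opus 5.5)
The plan is to compute $\Pi_\mu\widetilde Y_\lambda\Pi_\mu$ directly from the inclusion relations of Appendix~\ref{secA1}, and then obtain the trace identity by cyclicity.

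\textbf{Step 1: the compression.} Write $\Pi_\mu=\iota_\mu\iota_\mu^\dagger$. Then
\[
\Pi_\mu\widetilde Y_\lambda\Pi_\mu
=\iota_\mu\bigl(\iota_\mu^\dagger\iota_\lambda\bigr)\Bigl(\tfrac{I_{V_\lambda}}{\dim_{\mathbb C}V_\lambda}\otimes Y_\lambda\Bigr)\bigl(\iota_\lambda^\dagger\iota_\mu\bigr)\iota_\mu^\dagger .
\]
By \eqref{eq:appendix-cross-inclusion-relations}, $\iota_\mu^\dagger\iota_\lambda=\delta_{\mu\lambda}I_{V_\lambda\otimes W_\lambda}$, so the expression vanishes for $\mu\neq\lambda$. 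For $\mu=\lambda$ it collapses to $\iota_\lambda\bigl(\tfrac{I_{V_\lambda}}{\dim V_\lambda}\otimes Y_\lambda\bigr)\iota_\lambda^\dagger=\widetilde Y_\lambda$. This proves \eqref{eq:embedded-normalized-block-compression}.

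\textbf{Step 2: the trace.} By cyclicity of the trace and $\Pi_\mu=\Pi_\mu^2$,
\[
\operatorname{Tr}(\Pi_\mu\widetilde Y_\lambda)=\operatorname{Tr}(\Pi_\mu\widetilde Y_\lambda\Pi_\mu).
\]
By Step~1 this equals $\delta_{\mu\lambda}\operatorname{Tr}(\widetilde Y_\lambda)$. Cyclicity again together with $\iota_\lambda^\dagger\iota_\lambda=I$ gives
\[
\operatorname{Tr}(\widetilde Y_\lambda)=\operatorname{Tr}_{V_\lambda\otimes W_\lambda}\Bigl(\tfrac{I_{V_\lambda}}{\dim_{\mathbb C}V_\lambda}\otimes Y_\lambda\Bigr).
\]
Multiplicativity of the trace on tensor products then yields $\tfrac{\operatorname{Tr}I_{V_\lambda}}{\dim V_\lambda}\operatorname{Tr}_{W_\lambda}Y_\lambda=\operatorname{Tr}_{W_\lambda}Y_\lambda$, which is \eqref{eq:trace-embedded-normalized-block}.

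There is no real obstacle here. The only point requiring care is keeping track of the domains of the isometries $\iota_\lambda$ and $\iota_\mu$, so that $\iota_\mu^\dagger\iota_\lambda$ is interpreted as the map $V_\lambda\otimes W_\lambda\to V_\mu\otimes W_\mu$, which is zero when $\mu\neq\lambda$.
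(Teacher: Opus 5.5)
Your proof is correct and follows essentially the same route as the paper. The compression comes from $\iota_\mu^\dagger\iota_\lambda=\delta_{\mu\lambda}I$, and the trace comes from $\operatorname{Tr}(\iota_\lambda A\iota_\lambda^\dagger)=\operatorname{Tr}(A)$ together with multiplicativity of the trace on tensor products. The only cosmetic difference is that you route the trace through Step~1 via $\operatorname{Tr}(\Pi_\mu\widetilde Y_\lambda)=\operatorname{Tr}(\Pi_\mu\widetilde Y_\lambda\Pi_\mu)$, whereas the paper handles the cases $\mu\neq\lambda$ and $\mu=\lambda$ directly.
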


\begin{proof}
Using \(\Pi_\mu=\iota_\mu\iota_\mu^\dagger\) and
\(\iota_\mu^\dagger\iota_\lambda
=\delta_{\mu\lambda} I\), we obtain
\begin{align*}
 \Pi_\mu\widetilde Y_\lambda\Pi_\mu
 &=
 \iota_\mu\iota_\mu^\dagger
 \iota_\lambda
 \left(
  \frac{I_{V_\lambda}}{\dim V_\lambda}\otimes Y_\lambda
 \right)
 \iota_\lambda^\dagger
 \iota_\mu\iota_\mu^\dagger\\
 &=\delta_{\mu\lambda}\widetilde Y_\lambda.
\end{align*}
This proves Equation~\eqref{eq:embedded-normalized-block-compression}.
If \(\mu\neq\lambda\), the trace in
Equation~\eqref{eq:trace-embedded-normalized-block} is zero.  If
\(\mu=\lambda\), invariance of trace under isometric insertion gives
\begin{align*}
 \operatorname{Tr}(\Pi_\lambda\widetilde Y_\lambda)
 &=
 \operatorname{Tr}_{V_\lambda\otimes W_\lambda}
 \left(
  \frac{I_{V_\lambda}}{\dim V_\lambda}\otimes Y_\lambda
 \right)\\
 &=
 \frac{\operatorname{Tr}(I_{V_\lambda})}
      {\dim V_\lambda}
 \operatorname{Tr}(Y_\lambda)
 =\operatorname{Tr}(Y_\lambda).
\end{align*}
\end{proof}

\begin{proposition}[Trace preservation of the collective Haar twirl]
\label{prop:appendix-twirl-trace-preserving}
For every \(X\in\mathcal L(\mathcal H_d^{\otimes t})\),
\begin{equation}
\label{eq:appendix-twirl-trace-preserving}
    \operatorname{Tr}(\mathcal T^{(t)}(X))=\operatorname{Tr}(X).
\end{equation}
\end{proposition}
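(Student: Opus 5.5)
The plan is to prove trace preservation in two independent ways and keep whichever is cleaner in context. The first and most direct route goes through the definition of \(\mathcal T^{(t)}\) as a Haar integral. Since \(\Lambda(d,t)\) is finite and every integrand is continuous on the compact group \(\mathrm U(d)\), the trace (a finite linear functional) commutes with the Bochner/Riemann integral. So I would write
\[
\operatorname{Tr}\bigl(\mathcal T^{(t)}(X)\bigr)=\int_{\mathrm U(d)}\operatorname{Tr}\bigl(U^{\otimes t}X(U^\dagger)^{\otimes t}\bigr)\,\mathrm d\mu_{\mathrm{Haar}}(U).
\]
By cyclicity of the trace and \((U^\dagger)^{\otimes t}U^{\otimes t}=I\), the integrand equals \(\operatorname{Tr}(X)\) for every \(U\). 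Normalization of Haar measure then gives \(\operatorname{Tr}(X)\).

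As a consistency check, I would also derive the result from the blockwise formula \eqref{eq:appendix-twirl-direct-sum-form} of Lemma~\ref{lem:blockwise-collective-Haar-twirl}, which ties the statement to the Schur--Weyl machinery the paper uses. First, apply Lemma~\ref{lem:trace-embedded-normalized-block} with \(Y_\lambda=\operatorname{Tr}_{V_\lambda}(X_{\lambda\lambda})\). Summing \eqref{eq:trace-embedded-normalized-block} over \(\mu\) and using \(\sum_\mu\Pi_\mu=I\), this shows that each summand \(\iota_\lambda[\tfrac{I_{V_\lambda}}{\dim V_\lambda}\otimes Y_\lambda]\iota_\lambda^\dagger\) has trace \(\operatorname{Tr}_{W_\lambda}Y_\lambda=\operatorname{Tr}(X_{\lambda\lambda})\). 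Summing over \(\lambda\) then gives
\[
\sum_\lambda\operatorname{Tr}(X_{\lambda\lambda})=\sum_\lambda\operatorname{Tr}(\iota_\lambda X_{\lambda\lambda}\iota_\lambda^\dagger)=\sum_\lambda\operatorname{Tr}(\Pi_\lambda X\Pi_\lambda)=\sum_\lambda\operatorname{Tr}(\Pi_\lambda X)=\operatorname{Tr}(X).
\]
This chain uses \eqref{eq:appendix-compressed-block-identification}, \(\Pi_\lambda^2=\Pi_\lambda\), cyclicity, and completeness \eqref{eq:appendix-inclusion-relations}.

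There is no serious obstacle here. The only point needing a word of justification is interchanging the trace with the integral in the first route. This is immediate because the trace is a finite sum of matrix entries, and each entry of the integrand is a continuous, hence integrable, function on \(\mathrm U(d)\).
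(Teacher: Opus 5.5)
Both of your arguments are correct. Your second route is essentially the paper's proof. The paper starts from the blockwise formula of Lemma~\ref{lem:blockwise-collective-Haar-twirl} and computes the trace of each block $\frac{I_{V_\lambda}}{\dim V_\lambda}\otimes\operatorname{Tr}_{V_\lambda}(X_{\lambda\lambda})$ directly as $\operatorname{Tr}(X_{\lambda\lambda})$, rather than going through Lemma~\ref{lem:trace-embedded-normalized-block}. It then writes $\operatorname{Tr}(X_{\lambda\lambda})=\operatorname{Tr}(\iota_\lambda\iota_\lambda^\dagger X)=\operatorname{Tr}(\Pi_\lambda X)$ and sums using $\sum_\lambda\Pi_\lambda=I$.

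Your primary route is genuinely different and more elementary. You interchange trace and integral and use $\operatorname{Tr}\bigl(U^{\otimes t}X(U^\dagger)^{\otimes t}\bigr)=\operatorname{Tr}(X)$ pointwise. This needs only that $\mu_{\mathrm{Haar}}$ is a probability measure and that the integrand is a unitary conjugation. It therefore holds for any average of unitary channels, with no Schur--Weyl structure, no Schur's lemma, and not even invariance of the measure. It is also logically prior to the paper's argument: the proof of Lemma~\ref{lem:blockwise-collective-Haar-twirl} itself uses exactly this interchange-plus-conjugation-invariance step when it computes $\operatorname{Tr}(N^{(\lambda)}_{ab})=\operatorname{Tr}(Z^{(\lambda)}_{ab})$.

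What the paper's route buys is finer, block-by-block information. It shows that the $\lambda$-th output block has trace $\operatorname{Tr}(\Pi_\lambda X)$, which is the isotypic-weight conservation of Lemma~\ref{lem:appendix-isotypic-weight-conservation}. Your direct route recovers this as well once you note that $\Pi_\lambda$ commutes with $U^{\otimes t}$, so that $\operatorname{Tr}\bigl(\Pi_\lambda U^{\otimes t}X(U^\dagger)^{\otimes t}\bigr)=\operatorname{Tr}(\Pi_\lambda X)$ pointwise. One small point: the finiteness of $\Lambda(d,t)$, which you mention in the first route, plays no role there.
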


\begin{proof}
By Lemma~\ref{lem:blockwise-collective-Haar-twirl},
\begin{align*}
 \operatorname{Tr}(\mathcal T^{(t)}(X))
 &=
 \sum_{\lambda\in\Lambda(d,t)}
 \operatorname{Tr}_{V_\lambda\otimes W_\lambda}
 \left[
  \frac{\mathbf I_{V_\lambda}}{\dim V_\lambda}
  \otimes\operatorname{Tr}_{V_\lambda}(X_{\lambda\lambda})
 \right]\\
 &=
 \sum_{\lambda\in\Lambda(d,t)}
 \frac{\operatorname{Tr}(\mathbf I_{V_\lambda})}
      {\dim V_\lambda}
 \operatorname{Tr}_{W_\lambda}
 \left[\operatorname{Tr}_{V_\lambda}(X_{\lambda\lambda})\right]\\
 &=
 \sum_{\lambda\in\Lambda(d,t)}
 \operatorname{Tr}_{V_\lambda\otimes W_\lambda}
 (X_{\lambda\lambda}).
\end{align*}
Furthermore,
\begin{align*}
 \operatorname{Tr}(X_{\lambda\lambda})
 &=\operatorname{Tr}(\iota_\lambda^\dagger X\iota_\lambda)\\
 &=\operatorname{Tr}(\iota_\lambda\iota_\lambda^\dagger X)\\
 &=\operatorname{Tr}(\Pi_\lambda X).
\end{align*}
Therefore
\[
 \operatorname{Tr}(\mathcal T^{(t)}(X))
 =\sum_\lambda\operatorname{Tr}(\Pi_\lambda X)
 =\operatorname{Tr}
 \left[
  \left(\sum_\lambda\Pi_\lambda\right)X
 \right]
 =\operatorname{Tr}(X).
\]
\end{proof}

\begin{proposition}[Idempotence of the collective Haar twirl]
\label{prop:appendix-twirl-idempotent}
The collective Haar twirl is idempotent:
\begin{equation}
\label{eq:appendix-twirl-idempotent}
    \mathcal T^{(t)}\circ\mathcal T^{(t)}=\mathcal T^{(t)}.
\end{equation}
\end{proposition}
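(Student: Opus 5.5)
The most direct route is to use the blockwise formula of Lemma~\ref{lem:blockwise-collective-Haar-twirl} twice. Let $Y:=\mathcal T^{(t)}(X)$. By that lemma,
\[
Y=\sum_{\lambda}\iota_\lambda\Bigl[\tfrac{I_{V_\lambda}}{\dim_{\mathbb C}V_\lambda}\otimes\operatorname{Tr}_{V_\lambda}(X_{\lambda\lambda})\Bigr]\iota_\lambda^\dagger ,
\]
so $Y$ is a sum of embedded normalized blocks $\widetilde Y_\lambda$ in the notation of Lemma~\ref{lem:trace-embedded-normalized-block}, with $Y_\lambda=\operatorname{Tr}_{V_\lambda}(X_{\lambda\lambda})$. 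Applying the same lemma to $Y$ requires only the diagonal blocks $Y_{\mu\mu}=\iota_\mu^\dagger Y\iota_\mu$. The relation $\iota_\mu^\dagger\iota_\lambda=\delta_{\mu\lambda}I$ from \eqref{eq:appendix-cross-inclusion-relations} gives
\[
Y_{\mu\mu}=\tfrac{I_{V_\mu}}{\dim_{\mathbb C}V_\mu}\otimes\operatorname{Tr}_{V_\mu}(X_{\mu\mu}).
\]
This is essentially the compression statement \eqref{eq:embedded-normalized-block-compression}.

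Next, take the partial trace over $V_\mu$. It yields $\operatorname{Tr}_{V_\mu}(Y_{\mu\mu})=\tfrac{\operatorname{Tr}(I_{V_\mu})}{\dim V_\mu}\operatorname{Tr}_{V_\mu}(X_{\mu\mu})=\operatorname{Tr}_{V_\mu}(X_{\mu\mu})$. Substituting this into the blockwise formula for $\mathcal T^{(t)}(Y)$ reproduces exactly the expression for $Y$. Hence $\mathcal T^{(t)}(\mathcal T^{(t)}(X))=\mathcal T^{(t)}(X)$ for every $X$.

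A second, representation-free argument is available as a check. Write $\mathcal T^{(t)}\circ\mathcal T^{(t)}(X)$ as a double integral of $(VU)^{\otimes t}X((VU)^{\dagger})^{\otimes t}$. For fixed $V$, left invariance of Haar measure removes $V$, and the outer integral then has total mass one. Fubini's theorem applies because the integrand is continuous on a compact group.

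There is no real obstacle here. The only point needing care is the bookkeeping of the isometric inclusions, namely that cross terms $\iota_\mu^\dagger\iota_\lambda$ with $\lambda\neq\mu$ vanish. That is already recorded in the appendix relations.
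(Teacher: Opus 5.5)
Your proposal is correct and follows the paper's proof essentially step for step: you compute the diagonal blocks $Y_{\mu\mu}$ of $Y=\mathcal T^{(t)}(X)$ using $\iota_\mu^\dagger\iota_\lambda=\delta_{\mu\lambda}I$, observe that $\operatorname{Tr}_{V_\mu}(Y_{\mu\mu})=\operatorname{Tr}_{V_\mu}(X_{\mu\mu})$, and then reapply Lemma~\ref{lem:blockwise-collective-Haar-twirl}. Your second check, which uses left invariance of Haar measure inside the double integral, is also valid and avoids the Schur--Weyl bookkeeping altogether.
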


\begin{proof}
Let
\(
    Y:=\mathcal T^{(t)}(X).
\)
By Lemma~\ref{lem:blockwise-collective-Haar-twirl},
\[
 Y
 =
 \sum_{\nu\in\Lambda(d,t)}
 \iota_\nu
 \left[
   \frac{\mathbf I_{V_\nu}}{\dim V_\nu}
   \otimes\operatorname{Tr}_{V_\nu}(X_{\nu\nu})
 \right]
 \iota_\nu^\dagger.
\]
Its local \(\lambda\)-diagonal block is therefore
\begin{align*}
 Y_{\lambda\lambda}
 &=\iota_\lambda^\dagger Y\iota_\lambda\\
 &=
 \frac{\mathbf I_{V_\lambda}}{\dim V_\lambda}
 \otimes\operatorname{Tr}_{V_\lambda}(X_{\lambda\lambda}).
\end{align*}
Taking the partial trace over \(V_\lambda\),
\begin{align*}
 \operatorname{Tr}_{V_\lambda}(Y_{\lambda\lambda})
 &=
 \operatorname{Tr}_{V_\lambda}
 \left[
  \frac{\mathbf I_{V_\lambda}}{\dim V_\lambda}
  \otimes\operatorname{Tr}_{V_\lambda}(X_{\lambda\lambda})
 \right]\\
 &=
 \frac{\operatorname{Tr}(\mathbf I_{V_\lambda})}
      {\dim V_\lambda}
 \operatorname{Tr}_{V_\lambda}(X_{\lambda\lambda})\\
 &=\operatorname{Tr}_{V_\lambda}(X_{\lambda\lambda}).
\end{align*}
Applying Lemma~\ref{lem:blockwise-collective-Haar-twirl} once more gives
\begin{align*}
 \mathcal T^{(t)}(Y)
 &=
 \sum_\lambda
 \iota_\lambda
 \left[
  \frac{\mathbf I_{V_\lambda}}{\dim V_\lambda}
  \otimes\operatorname{Tr}_{V_\lambda}(Y_{\lambda\lambda})
 \right]
 \iota_\lambda^\dagger\\
 &=
 \sum_\lambda
 \iota_\lambda
 \left[
  \frac{\mathbf I_{V_\lambda}}{\dim V_\lambda}
  \otimes\operatorname{Tr}_{V_\lambda}(X_{\lambda\lambda})
 \right]
 \iota_\lambda^\dagger\\
 &=\mathcal T^{(t)}(X).
\end{align*}
\end{proof}

\begin{lemma}\label{lem:appendix-isotypic-weight-conservation}
For every \(X\in\mathcal L(\mathcal H_d^{\otimes t})\) and every
\(\mu\in\Lambda(d,t)\),
\begin{equation}
\label{eq:appendix-isotypic-weight-conservation}
    \operatorname{Tr}
    \bigl(\Pi_\mu\mathcal T^{(t)}(X)\bigr)
    =
    \operatorname{Tr}(\Pi_\mu X).
\end{equation}
In particular, if \(\rho\) is a density operator, both sides are the
\(\mu\)-isotypic weight of \(\rho\); if
\(\rho=|\psi\rangle\langle\psi|\), then
\begin{equation}
\label{eq:appendix-pure-isotypic-weight}
    \operatorname{Tr}(\Pi_\mu\rho)
    =\langle\psi|\Pi_\mu|\psi\rangle.
\end{equation}
\end{lemma}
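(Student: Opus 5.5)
The plan is to read both identities off the blockwise formula of Lemma~\ref{lem:blockwise-collective-Haar-twirl}, applied to \(\mathcal T^{(t)}(X)\), and then to compress with \(\Pi_\mu\).

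\textbf{Step 1 (write the twirl as embedded blocks).} Lemma~\ref{lem:blockwise-collective-Haar-twirl} gives
\[
\mathcal T^{(t)}(X)=\sum_{\lambda}\widetilde Y_\lambda,
\qquad
Y_\lambda:=\operatorname{Tr}_{V_\lambda}(X_{\lambda\lambda})\in\mathcal L(W_\lambda),
\]
where \(\widetilde Y_\lambda\) is the embedded normalized block of Lemma~\ref{lem:trace-embedded-normalized-block}.

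\textbf{Step 2 (compress with \(\Pi_\mu\)).} By linearity of the trace and Equation~\eqref{eq:trace-embedded-normalized-block} applied to each summand,
\[
\operatorname{Tr}\bigl(\Pi_\mu\mathcal T^{(t)}(X)\bigr)
=\sum_\lambda\delta_{\mu\lambda}\operatorname{Tr}_{W_\lambda}(Y_\lambda)
=\operatorname{Tr}_{W_\mu}\operatorname{Tr}_{V_\mu}(X_{\mu\mu}).
\]
Total trace is preserved under partial trace, so this equals \(\operatorname{Tr}_{V_\mu\otimes W_\mu}(\iota_\mu^\dagger X\iota_\mu)\). Cyclicity then turns it into \(\operatorname{Tr}(\iota_\mu\iota_\mu^\dagger X)=\operatorname{Tr}(\Pi_\mu X)\). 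This is the same computation as in Proposition~\ref{prop:appendix-twirl-trace-preserving}, carried out for a single block. Alternatively, \(\Pi_\mu\) lies in \(\mathfrak A_t\) and \(\mathcal T^{(t)}\) is Hilbert--Schmidt self-adjoint, so \(\operatorname{Tr}(\Pi_\mu\mathcal T^{(t)}(X))=\operatorname{Tr}(\mathcal T^{(t)}(\Pi_\mu)X)=\operatorname{Tr}(\Pi_\mu X)\), because \(\Pi_\mu\) commutes with every \(U^{\otimes t}\). I would mention this as a one-line cross-check.

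\textbf{Step 3 (the density-operator cases).} For a density operator \(\rho\), \(\operatorname{Tr}(\Pi_\mu\rho)\) is by definition the \(\mu\)-isotypic weight. For \(\rho=|\psi\rangle\langle\psi|\), cyclicity gives \(\operatorname{Tr}(\Pi_\mu|\psi\rangle\langle\psi|)=\langle\psi|\Pi_\mu|\psi\rangle\), which is \eqref{eq:appendix-pure-isotypic-weight}.

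No real obstacle arises. The only point needing care is the bookkeeping of the isometric inclusions in Step 2, namely using \(\iota_\mu^\dagger\iota_\lambda=\delta_{\mu\lambda}I\) so that the off-diagonal blocks drop out.
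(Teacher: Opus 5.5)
Your proposal is correct and follows essentially the paper's proof. The paper likewise applies Lemma~\ref{lem:blockwise-collective-Haar-twirl}, keeps only the $\mu$-block by orthogonality of the inclusions, collapses $\operatorname{Tr}_{W_\mu}\operatorname{Tr}_{V_\mu}(X_{\mu\mu})=\operatorname{Tr}(\iota_\mu^\dagger X\iota_\mu)=\operatorname{Tr}(\Pi_\mu X)$, and handles the pure-state case by cyclicity; your self-adjointness cross-check (using \eqref{eq:twirl-self-adjoint} and $\mathcal T^{(t)}(\Pi_\mu)=\Pi_\mu$) is also valid and gives a basis-free one-line proof.
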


\begin{proof}
By Lemma~\ref{lem:blockwise-collective-Haar-twirl} and orthogonality of the
isotypic inclusions,
\begin{align*}
 \operatorname{Tr}
 \bigl(\Pi_\mu\mathcal T^{(t)}(X)\bigr)
 &=
 \operatorname{Tr}_{V_\mu\otimes W_\mu}
 \left[
  \frac{\mathbf I_{V_\mu}}{\dim V_\mu}
  \otimes\operatorname{Tr}_{V_\mu}(X_{\mu\mu})
 \right]\\
 &=
 \operatorname{Tr}_{W_\mu}
 \left[\operatorname{Tr}_{V_\mu}(X_{\mu\mu})\right]\\
 &=\operatorname{Tr}(X_{\mu\mu})\\
 &=\operatorname{Tr}(\iota_\mu^\dagger X\iota_\mu)\\
 &=\operatorname{Tr}(\Pi_\mu X).
\end{align*}
For \(\rho=|\psi\rangle\langle\psi|\), cyclicity of trace gives
\[
    \operatorname{Tr}(\Pi_\mu\rho)
    =\operatorname{Tr}
      (\Pi_\mu|\psi\rangle\langle\psi|)
    =\langle\psi|\Pi_\mu|\psi\rangle.
\]
\end{proof}

\section{Representation-theoretic lemmas}
\begin{proposition}~\cite{fulton2013representation,simon1996representations,BroadbentGonzalezGuillenSchuknecht2022}\label{prop:appendix-symmetric-sector-specialization}
Let
\[
    \lambda_{\mathrm{sym}}:=(t),
    \qquad
    \Pi^{(d,t)}_{\mathrm{sym}}:=\Pi_{(t)},
    \qquad
    D_{\mathrm{sym}}
    :=\dim_{\mathbb C}\operatorname{Sym}^t(\mathcal H_d)
    =\binom{d+t-1}{t}.
\]
Then, for every \(X\in\mathcal L(\mathcal H_d^{\otimes t})\),
\begin{equation}
\label{eq:appendix-symmetric-compression-twirl}
 \Pi^{(d,t)}_{\mathrm{sym}}
 \mathcal T^{(t)}(X)
 \Pi^{(d,t)}_{\mathrm{sym}}
 =
 \operatorname{Tr}(\Pi^{(d,t)}_{\mathrm{sym}}X)
 \frac{\Pi^{(d,t)}_{\mathrm{sym}}}{D_{\mathrm{sym}}}.
\end{equation}
In particular, if
\(
\rho\in\mathcal D(\operatorname{Sym}^t(\mathcal H_d))
\), then
\begin{equation}
\label{eq:appendix-symmetric-state-twirl}
    \mathcal T^{(t)}(\rho)
    =
    \frac{\Pi^{(d,t)}_{\mathrm{sym}}}{D_{\mathrm{sym}}}.
\end{equation}
\end{proposition}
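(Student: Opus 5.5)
The plan is to read both claims off the blockwise form of the twirl in Lemma~\ref{lem:blockwise-collective-Haar-twirl}, specialized to the one-row partition \((t)\). The key input is that \(W_{(t)}\) is the trivial one-dimensional representation of \(\mathfrak S_t\). Hence \(V_{(t)}\otimes W_{(t)}\cong V_{(t)}\cong\operatorname{Sym}^t(\mathcal H_d)\), the image of \(\Pi_{(t)}\) is the symmetric subspace, and \(\dim_{\mathbb C}V_{(t)}=D_{\mathrm{sym}}\) by Proposition~\ref{prop:symmetrizer-corrected}.

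\textbf{Step 1 (compression of the twirl).} Since \(\Pi_\lambda\Pi_\mu=\delta_{\lambda\mu}\Pi_\lambda\), compressing the twirl formula by \(\Pi_{(t)}\) on both sides (equivalently, using Lemma~\ref{lem:trace-embedded-normalized-block}) kills every summand with \(\lambda\neq(t)\). This leaves
\[
\Pi_{(t)}\mathcal T^{(t)}(X)\Pi_{(t)}=\iota_{(t)}\Bigl[\tfrac{I_{V_{(t)}}}{D_{\mathrm{sym}}}\otimes\operatorname{Tr}_{V_{(t)}}(X_{(t)(t)})\Bigr]\iota_{(t)}^\dagger .
\]

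\textbf{Step 2 (evaluate the multiplicity factor).} Because \(W_{(t)}\) is one-dimensional, spanned by a unit vector \(s\), the operator \(\operatorname{Tr}_{V_{(t)}}(X_{(t)(t)})\) is the scalar \(\operatorname{Tr}(X_{(t)(t)})\,|s\rangle\langle s|\). Cyclicity gives \(\operatorname{Tr}(X_{(t)(t)})=\operatorname{Tr}(\iota_{(t)}^\dagger X\iota_{(t)})=\operatorname{Tr}(\Pi_{(t)}X)\). Finally \(\iota_{(t)}(I_{V_{(t)}}\otimes|s\rangle\langle s|)\iota_{(t)}^\dagger=\iota_{(t)}\iota_{(t)}^\dagger=\Pi_{(t)}\), since \(I_{V_{(t)}}\otimes|s\rangle\langle s|\) is the identity on the one-dimensional-multiplicity block. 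These three facts together give \eqref{eq:appendix-symmetric-compression-twirl}.

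\textbf{Step 3 (the state identity).} Let \(\rho\) be supported on the symmetric subspace. Then \(\Pi_\lambda\rho=0\) for every \(\lambda\neq(t)\), so \(X_{\lambda\lambda}=0\) for those \(\lambda\), and in the full (uncompressed) twirl formula only the \((t)\)-summand survives. Applying Step 2 with \(\operatorname{Tr}(\Pi_{(t)}\rho)=\operatorname{Tr}\rho=1\) gives \eqref{eq:appendix-symmetric-state-twirl}. This also agrees with Proposition~\ref{prop:symmetric-privacy-heredity} and Corollary~\ref{cor:schur-integral}.

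The main obstacle is bookkeeping rather than substance. One must identify \(\Pi_{(t)}\) with the symmetrizer \(\Pi^{(d,t)}_{\mathrm{sym}}\), which Corollary~\ref{cor:equalityofprojectors} and Theorem~\ref{thm:St-character-projector} supply with \(\chi\equiv1\). One must also check that the chosen phase of \(s\) does not matter, which holds because it drops out of \(|s\rangle\langle s|\).
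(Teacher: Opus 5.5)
Your proof is correct. Specializing Lemma~\ref{lem:blockwise-collective-Haar-twirl} to the block $(t)$, where $W_{(t)}$ is one-dimensional, gives both identities, and you rightly use the full uncompressed twirl formula for the state identity, since the compression alone would not give it. The paper states this proposition only with citations and gives no separate proof, but its argument for \eqref{eq:symmetric-Haar-replacer} in Proposition~\ref{prop:symmetric-privacy-heredity} is the same Schur averaging on the block with one-dimensional multiplicity space, so your route is essentially the paper's.
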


\section{Partial-trace lemma}
\begin{lemma}
Let
\[
    \vartheta \colon \mathcal{H}_{A} \longrightarrow \mathcal{H}_{B}
\]
be an isometry between finite-dimensional Hilbert spaces, and let
\(\mathcal{H}_{C}\) be another finite-dimensional Hilbert space.  Then, for every
\(
    Z\in\mathcal L(\mathcal{H}_{A} \otimes \mathcal{H}_{C})
\),
\begin{equation}
\label{eq:isometric-partial-trace-identity}
    \operatorname{Tr}_{B}
    \left[
        (\vartheta\otimes I_C)
        Z
        (\vartheta^\dagger\otimes I_C)
    \right]
    =
    \operatorname{Tr}_{A}(Z).
\end{equation}

Let
\(
    R\in\mathcal L(\mathcal{H}_C)
\)
be arbitrary.  By the defining property of the partial trace and
cyclicity of the ordinary trace,
\begin{align}
    &\operatorname{Tr}
    \left[
        R\,
        \operatorname{Tr}_{B}
        \left(
            (\vartheta\otimes I_C)
            Z
            (\vartheta^\dagger\otimes I_C)
        \right)
    \right]
    \notag\\
    &\qquad=
    \operatorname{Tr}
    \left[
        (I_B\otimes R)
        (\vartheta\otimes I_C)
        Z
        (\vartheta^\dagger\otimes I_C)
    \right]
    \notag\\
    &\qquad=
    \operatorname{Tr}
    \left[
        (\vartheta^\dagger\otimes I_C)
        (I_B\otimes R)
        (\vartheta\otimes I_C)
        Z
    \right]
    \notag\\
    &\qquad=
    \operatorname{Tr}
    \left[
        (\vartheta^\dagger \vartheta\otimes R)
        Z
    \right]
    \notag\\
    &\qquad=
    \operatorname{Tr}
    \left[
        (I_A\otimes R)
        Z
    \right]
    \notag\\
    &\qquad=
    \operatorname{Tr}
    \left[
        R\,\operatorname{Tr}_{A}(Z)
    \right].
\end{align}
Since the trace pairing on
\(
    \mathcal L(C)
\)
is nondegenerate and this equality holds for every
\(
    R\in\mathcal L(C)
\),
Equation~\eqref{eq:isometric-partial-trace-identity} follows.
\end{lemma}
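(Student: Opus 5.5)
The statement to be proved is the partial-trace lemma in the final appendix. For every \(Z\in\mathcal L(\mathcal H_A\otimes\mathcal H_C)\) and every isometry \(\vartheta\colon\mathcal H_A\to\mathcal H_B\), conjugating by \(\vartheta\otimes I_C\) and then tracing out \(B\) gives the same operator on \(\mathcal H_C\) as tracing out \(A\) directly.

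The plan is a duality argument using the characterizing property of the partial trace. For any \(Y\in\mathcal L(\mathcal H_X\otimes\mathcal H_C)\), the partial trace \(\operatorname{Tr}_X(Y)\) is the unique operator on \(\mathcal H_C\) satisfying
\[
\operatorname{Tr}\bigl[R\,\operatorname{Tr}_X(Y)\bigr]=\operatorname{Tr}\bigl[(I_X\otimes R)Y\bigr]\qquad\text{for all }R\in\mathcal L(\mathcal H_C).
\]
So it suffices to fix an arbitrary test operator \(R\in\mathcal L(\mathcal H_C)\), pair it with both sides of \eqref{eq:isometric-partial-trace-identity}, and show the two pairings agree. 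Nondegeneracy of the trace pairing on \(\mathcal L(\mathcal H_C)\) then gives equality of the operators.

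The computation runs in four steps.
\begin{enumerate}
\item Apply the defining property with \(X=B\) and \(Y=(\vartheta\otimes I_C)Z(\vartheta^\dagger\otimes I_C)\). This rewrites the left pairing as \(\operatorname{Tr}[(I_B\otimes R)(\vartheta\otimes I_C)Z(\vartheta^\dagger\otimes I_C)]\).
\item Use cyclicity of the full trace to move \(\vartheta^\dagger\otimes I_C\) to the front.
\item Combine the three factors using the mixed-product rule \((\vartheta^\dagger\otimes I_C)(I_B\otimes R)(\vartheta\otimes I_C)=\vartheta^\dagger\vartheta\otimes R\).
\item Invoke the isometry relation \(\vartheta^\dagger\vartheta=I_A\). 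This produces \(\operatorname{Tr}[(I_A\otimes R)Z]\), which by the defining property (now with \(X=A\)) equals \(\operatorname{Tr}[R\,\operatorname{Tr}_A(Z)]\).
\end{enumerate}

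There is no genuine obstacle here. The only point needing care is the ordering of factors in the cyclic move. Note also that it is \(\vartheta^\dagger\vartheta\) that appears, not \(\vartheta\vartheta^\dagger\), so only the isometry property is used and \(\vartheta\) need not be unitary.
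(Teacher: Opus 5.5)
Your proposal is correct and follows the paper's proof essentially step for step. Both pair the two sides with an arbitrary $R\in\mathcal L(\mathcal H_C)$ via the defining property of the partial trace, move $\vartheta^\dagger\otimes I_C$ cyclically (legitimate for the rectangular composable maps involved), collapse using $\vartheta^\dagger\vartheta=I_A$, and conclude by nondegeneracy of the trace pairing.
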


\end{appendices}


\begin{thebibliography}{24}
\ifx \bisbn   \undefined \def \bisbn  #1{ISBN #1}\fi
\ifx \binits  \undefined \def \binits#1{#1}\fi
\ifx \bauthor  \undefined \def \bauthor#1{#1}\fi
\ifx \batitle  \undefined \def \batitle#1{#1}\fi
\ifx \bjtitle  \undefined \def \bjtitle#1{#1}\fi
\ifx \bvolume  \undefined \def \bvolume#1{\textbf{#1}}\fi
\ifx \byear  \undefined \def \byear#1{#1}\fi
\ifx \bissue  \undefined \def \bissue#1{#1}\fi
\ifx \bfpage  \undefined \def \bfpage#1{#1}\fi
\ifx \blpage  \undefined \def \blpage #1{#1}\fi
\ifx \burl  \undefined \def \burl#1{\textsf{#1}}\fi
\ifx \doiurl  \undefined \def \doiurl#1{\url{https://doi.org/#1}}\fi
\ifx \betal  \undefined \def \betal{\textit{et al.}}\fi
\ifx \binstitute  \undefined \def \binstitute#1{#1}\fi
\ifx \binstitutionaled  \undefined \def \binstitutionaled#1{#1}\fi
\ifx \bctitle  \undefined \def \bctitle#1{#1}\fi
\ifx \beditor  \undefined \def \beditor#1{#1}\fi
\ifx \bpublisher  \undefined \def \bpublisher#1{#1}\fi
\ifx \bbtitle  \undefined \def \bbtitle#1{#1}\fi
\ifx \bedition  \undefined \def \bedition#1{#1}\fi
\ifx \bseriesno  \undefined \def \bseriesno#1{#1}\fi
\ifx \blocation  \undefined \def \blocation#1{#1}\fi
\ifx \bsertitle  \undefined \def \bsertitle#1{#1}\fi
\ifx \bsnm \undefined \def \bsnm#1{#1}\fi
\ifx \bsuffix \undefined \def \bsuffix#1{#1}\fi
\ifx \bparticle \undefined \def \bparticle#1{#1}\fi
\ifx \barticle \undefined \def \barticle#1{#1}\fi
\bibcommenthead
\ifx \bconfdate \undefined \def \bconfdate #1{#1}\fi
\ifx \botherref \undefined \def \botherref #1{#1}\fi
\ifx \url \undefined \def \url#1{\textsf{#1}}\fi
\ifx \bchapter \undefined \def \bchapter#1{#1}\fi
\ifx \bbook \undefined \def \bbook#1{#1}\fi
\ifx \bcomment \undefined \def \bcomment#1{#1}\fi
\ifx \oauthor \undefined \def \oauthor#1{#1}\fi
\ifx \citeauthoryear \undefined \def \citeauthoryear#1{#1}\fi
\ifx \endbibitem  \undefined \def \endbibitem {}\fi
\ifx \bconflocation  \undefined \def \bconflocation#1{#1}\fi
\ifx \arxivurl  \undefined \def \arxivurl#1{\textsf{#1}}\fi
\csname PreBibitemsHook\endcsname

\bibitem[\protect\citeauthoryear{Broadbent et~al.}{2022}]{BroadbentGonzalezGuillenSchuknecht2022}
\begin{barticle}
\bauthor{\bsnm{Broadbent}, \binits{A.}},
\bauthor{\bsnm{Gonz{\'a}lez-Guill{\'e}n}, \binits{C.E.}},
\bauthor{\bsnm{Schuknecht}, \binits{C.}}:
\batitle{Quantum private broadcasting}.
\bjtitle{Physical Review A}
\bvolume{105}(\bissue{2}),
\bfpage{022606}
(\byear{2022})
\doiurl{10.1103/PhysRevA.105.022606}
\end{barticle}
\endbibitem

\bibitem[\protect\citeauthoryear{Kretschmann et~al.}{2008}]{KretschmannKribsSpekkens2008}
\begin{barticle}
\bauthor{\bsnm{Kretschmann}, \binits{D.}},
\bauthor{\bsnm{Kribs}, \binits{D.W.}},
\bauthor{\bsnm{Spekkens}, \binits{R.W.}}:
\batitle{Complementarity of private and correctable subsystems in quantum cryptography and error correction}.
\bjtitle{Physical Review A}
\bvolume{78},
\bfpage{032330}
(\byear{2008})
\doiurl{10.1103/PhysRevA.78.032330}
\end{barticle}
\endbibitem

\bibitem[\protect\citeauthoryear{Knill and Laflamme}{1997}]{KnillLaflamme1997}
\begin{barticle}
\bauthor{\bsnm{Knill}, \binits{E.}},
\bauthor{\bsnm{Laflamme}, \binits{R.}}:
\batitle{Theory of quantum error-correcting codes}.
\bjtitle{Physical Review A}
\bvolume{55},
\bfpage{900}--\blpage{911}
(\byear{1997})
\doiurl{10.1103/PhysRevA.55.900}
\end{barticle}
\endbibitem

\bibitem[\protect\citeauthoryear{Kribs et~al.}{2005}]{KribsLaflammePoulin2005}
\begin{barticle}
\bauthor{\bsnm{Kribs}, \binits{D.}},
\bauthor{\bsnm{Laflamme}, \binits{R.}},
\bauthor{\bsnm{Poulin}, \binits{D.}}:
\batitle{Unified and generalized approach to quantum error correction}.
\bjtitle{Physical Review Letters}
\bvolume{94},
\bfpage{180501}
(\byear{2005})
\doiurl{10.1103/PhysRevLett.94.180501}
\end{barticle}
\endbibitem

\bibitem[\protect\citeauthoryear{Jochym-O'Connor et~al.}{2013}]{JochymOConnorKribsLaflammePlosker2013}
\begin{barticle}
\bauthor{\bsnm{Jochym-O'Connor}, \binits{T.}},
\bauthor{\bsnm{Kribs}, \binits{D.W.}},
\bauthor{\bsnm{Laflamme}, \binits{R.}},
\bauthor{\bsnm{Plosker}, \binits{S.}}:
\batitle{Private quantum subsystems}.
\bjtitle{Physical Review Letters}
\bvolume{111},
\bfpage{030502}
(\byear{2013})
\doiurl{10.1103/PhysRevLett.111.030502}
\end{barticle}
\endbibitem

\bibitem[\protect\citeauthoryear{Jochym-O'Connor et~al.}{2014}]{JochymOConnorKribsLaflammePlosker2014}
\begin{barticle}
\bauthor{\bsnm{Jochym-O'Connor}, \binits{T.}},
\bauthor{\bsnm{Kribs}, \binits{D.W.}},
\bauthor{\bsnm{Laflamme}, \binits{R.}},
\bauthor{\bsnm{Plosker}, \binits{S.}}:
\batitle{Quantum subsystems: Exploring the complementarity of quantum privacy and error correction}.
\bjtitle{Physical Review A}
\bvolume{90},
\bfpage{032305}
(\byear{2014})
\doiurl{10.1103/PhysRevA.90.032305}
\end{barticle}
\endbibitem

\bibitem[\protect\citeauthoryear{Bartlett et~al.}{2004}]{BartlettRudolphSpekkens2004}
\begin{barticle}
\bauthor{\bsnm{Bartlett}, \binits{S.D.}},
\bauthor{\bsnm{Rudolph}, \binits{T.}},
\bauthor{\bsnm{Spekkens}, \binits{R.W.}}:
\batitle{Decoherence-full subsystems and the cryptographic power of a private shared reference frame}.
\bjtitle{Physical Review A}
\bvolume{70},
\bfpage{032307}
(\byear{2004})
\doiurl{10.1103/PhysRevA.70.032307}
{\href{https://arxiv.org/abs/quant-ph/0403161}{{arXiv:quant-ph/0403161}}}
\end{barticle}
\endbibitem

\bibitem[\protect\citeauthoryear{Church et~al.}{2011}]{ChurchKribsPereiraPlosker2011}
\begin{barticle}
\bauthor{\bsnm{Church}, \binits{A.}},
\bauthor{\bsnm{Kribs}, \binits{D.W.}},
\bauthor{\bsnm{Pereira}, \binits{R.}},
\bauthor{\bsnm{Plosker}, \binits{S.}}:
\batitle{Private quantum channels, conditional expectations, and trace vectors}.
\bjtitle{Quantum Information and Computation}
\bvolume{11}(\bissue{9--10}),
\bfpage{774}--\blpage{783}
(\byear{2011})
\doiurl{10.26421/QIC11.9-10-4}
{\href{https://arxiv.org/abs/1210.6625}{{arXiv:1210.6625}}}
{[math.OA]}
\end{barticle}
\endbibitem

\bibitem[\protect\citeauthoryear{Gross et~al.}{2007}]{GrossAudenaertEisert2007}
\begin{barticle}
\bauthor{\bsnm{Gross}, \binits{D.}},
\bauthor{\bsnm{Audenaert}, \binits{K.}},
\bauthor{\bsnm{Eisert}, \binits{J.}}:
\batitle{Evenly distributed unitaries: On the structure of unitary designs}.
\bjtitle{Journal of Mathematical Physics}
\bvolume{48},
\bfpage{052104}
(\byear{2007})
\doiurl{10.1063/1.2716992}
\end{barticle}
\endbibitem

\bibitem[\protect\citeauthoryear{Dankert et~al.}{2009}]{dankert2009exact}
\begin{barticle}
\bauthor{\bsnm{Dankert}, \binits{C.}},
\bauthor{\bsnm{Cleve}, \binits{R.}},
\bauthor{\bsnm{Emerson}, \binits{J.}},
\bauthor{\bsnm{Livine}, \binits{E.}}:
\batitle{Exact and approximate unitary 2-designs and their application to fidelity estimation}.
\bjtitle{Physical Review A}
\bvolume{80}(\bissue{1}),
\bfpage{012304}
(\byear{2009})
\end{barticle}
\endbibitem

\bibitem[\protect\citeauthoryear{Roy and Scott}{2009}]{RoyScott2009}
\begin{barticle}
\bauthor{\bsnm{Roy}, \binits{A.}},
\bauthor{\bsnm{Scott}, \binits{A.J.}}:
\batitle{Unitary designs and codes}.
\bjtitle{Designs, Codes and Cryptography}
\bvolume{53},
\bfpage{13}--\blpage{31}
(\byear{2009})
\doiurl{10.1007/s10623-009-9290-2}
\end{barticle}
\endbibitem

\bibitem[\protect\citeauthoryear{Brandao et~al.}{2016}]{brandao2016local}
\begin{barticle}
\bauthor{\bsnm{Brandao}, \binits{F.G.}},
\bauthor{\bsnm{Harrow}, \binits{A.W.}},
\bauthor{\bsnm{Horodecki}, \binits{M.}}:
\batitle{Local random quantum circuits are approximate polynomial-designs}.
\bjtitle{Communications in Mathematical Physics}
\bvolume{346},
\bfpage{397}--\blpage{434}
(\byear{2016})
\end{barticle}
\endbibitem

\bibitem[\protect\citeauthoryear{Serre}{1977}]{serre1977linear}
\begin{bbook}
\bauthor{\bsnm{Serre}, \binits{J.-P.}}:
\bbtitle{Linear Representations of Finite Groups}
vol. \bseriesno{42}.
\bpublisher{Springer},
\blocation{New York}
(\byear{1977})
\end{bbook}
\endbibitem

\bibitem[\protect\citeauthoryear{Fulton and Harris}{2013}]{fulton2013representation}
\begin{bbook}
\bauthor{\bsnm{Fulton}, \binits{W.}},
\bauthor{\bsnm{Harris}, \binits{J.}}:
\bbtitle{Representation Theory: A First Course}.
\bsertitle{Graduate Texts in Mathematics},
vol. \bseriesno{129}.
\bpublisher{Springer},
\blocation{New York}
(\byear{2013})
\end{bbook}
\endbibitem

\bibitem[\protect\citeauthoryear{Simon}{1996}]{simon1996representations}
\begin{bbook}
\bauthor{\bsnm{Simon}, \binits{B.}}:
\bbtitle{Representations of Finite and Compact Groups}.
\bsertitle{Graduate Studies in Mathematics},
vol. \bseriesno{10}.
\bpublisher{American Mathematical Society},
\blocation{Providence, RI}
(\byear{1996})
\end{bbook}
\endbibitem

\bibitem[\protect\citeauthoryear{Sagan}{2001}]{sagan2001symmetric}
\begin{bbook}
\bauthor{\bsnm{Sagan}, \binits{B.E.}}:
\bbtitle{The Symmetric Group: Representations, Combinatorial Algorithms, and Symmetric Functions}.
\bsertitle{Graduate Texts in Mathematics},
vol. \bseriesno{203}.
\bpublisher{Springer},
\blocation{New York}
(\byear{2001})
\end{bbook}
\endbibitem

\bibitem[\protect\citeauthoryear{Macdonald}{1998}]{macdonald1998symmetric}
\begin{bbook}
\bauthor{\bsnm{Macdonald}, \binits{I.G.}}:
\bbtitle{Symmetric Functions and Hall Polynomials},
\bedition{2nd} edn.
\bpublisher{Oxford University Press},
\blocation{Oxford}
(\byear{1998})
\end{bbook}
\endbibitem

\bibitem[\protect\citeauthoryear{Goodman and Wallach}{1998}]{GoodmanWallach1998}
\begin{bbook}
\bauthor{\bsnm{Goodman}, \binits{R.}},
\bauthor{\bsnm{Wallach}, \binits{N.R.}}:
\bbtitle{Representations and Invariants of the Classical Groups}.
\bsertitle{Encyclopedia of Mathematics and its Applications},
vol. \bseriesno{68}.
\bpublisher{Cambridge University Press},
\blocation{Cambridge}
(\byear{1998})
\end{bbook}
\endbibitem

\bibitem[\protect\citeauthoryear{Folland}{2016}]{folland2016course}
\begin{bbook}
\bauthor{\bsnm{Folland}, \binits{G.B.}}:
\bbtitle{A Course in Abstract Harmonic Analysis},
\bedition{2nd} edn.
\bpublisher{CRC Press},
\blocation{Boca Raton, FL}
(\byear{2016})
\end{bbook}
\endbibitem

\bibitem[\protect\citeauthoryear{Br\"{o}cker and tom Dieck}{1985}]{BrockerTomDieck1985}
\begin{bbook}
\bauthor{\bsnm{Br\"{o}cker}, \binits{T.}},
\bauthor{\bsnm{Dieck}, \binits{T.}}:
\bbtitle{Representations of Compact Lie Groups}.
\bsertitle{Graduate Texts in Mathematics},
vol. \bseriesno{98}.
\bpublisher{Springer},
\blocation{Berlin}
(\byear{1985}).
\doiurl{10.1007/978-3-662-12918-0}
\end{bbook}
\endbibitem

\bibitem[\protect\citeauthoryear{Kaznatcheev}{2010}]{kaznatcheev2010structure}
\begin{botherref}
\oauthor{\bsnm{Kaznatcheev}, \binits{A.}}:
Structure of exact and approximate unitary t-designs.
Citeseer
(2010)
\end{botherref}
\endbibitem

\bibitem[\protect\citeauthoryear{Seymour and Zaslavsky}{1984}]{seymour1984averaging}
\begin{barticle}
\bauthor{\bsnm{Seymour}, \binits{P.D.}},
\bauthor{\bsnm{Zaslavsky}, \binits{T.}}:
\batitle{Averaging sets: a generalization of mean values and spherical designs}.
\bjtitle{Advances in Mathematics}
\bvolume{52}(\bissue{3}),
\bfpage{213}--\blpage{240}
(\byear{1984})
\end{barticle}
\endbibitem

\bibitem[\protect\citeauthoryear{Schuster et~al.}{2025}]{SchusterHaferkampHuang2025}
\begin{barticle}
\bauthor{\bsnm{Schuster}, \binits{T.}},
\bauthor{\bsnm{Haferkamp}, \binits{J.}},
\bauthor{\bsnm{Huang}, \binits{H.-Y.}}:
\batitle{Random unitaries in extremely low depth}.
\bjtitle{Science}
\bvolume{389}(\bissue{6755}),
\bfpage{92}--\blpage{96}
(\byear{2025})
\doiurl{10.1126/science.adv8590}
\end{barticle}
\endbibitem

\bibitem[\protect\citeauthoryear{Cui et~al.}{2025}]{CuiSchusterBrandaoHuang2025}
\begin{botherref}
\oauthor{\bsnm{Cui}, \binits{L.}},
\oauthor{\bsnm{Schuster}, \binits{T.}},
\oauthor{\bsnm{Brand{\~a}o}, \binits{F.}},
\oauthor{\bsnm{Huang}, \binits{H.-Y.}}:
Unitary designs in nearly optimal depth.
arXiv preprint arXiv:2507.06216
(2025)
\end{botherref}
\endbibitem

\end{thebibliography}


\end{document}